\documentclass[12pt]{article}
\pdfoutput=1
\usepackage{array}
\usepackage{longtable}%
\usepackage{amsthm,amsmath,amsfonts,bbm,bm,enumitem,tikz,verbatim,multirow,multicol,arydshln,amssymb,makecell,float}
\usepackage[most]{tcolorbox}
\usepackage[pass]{geometry}
\usepackage{graphicx}
\usepackage{enumerate}
\usepackage{natbib}
\usepackage{url}
\usepackage[flushleft]{threeparttable}
\usepackage{caption}
\usepackage{subcaption}

\tikzstyle{process} = [rectangle, 
text centered, 
minimum width=2cm, 
text width=1.5cm, 
draw=white, 
fill=none]

\usepackage{booktabs}
\usepackage{pgfplots}
\pgfplotsset{compat=1.16,every axis/.style={width=5cm}}
\usepgfplotslibrary{fillbetween}
\usetikzlibrary{pgfplots.groupplots} 

\newtheoremstyle{normal}%
  {3pt}
  {3pt}
  {\normalfont}
  {}
  {\bfseries}
  {.}
  {.4em}
  {}

\theoremstyle{normal}

\newtheorem{theorem}{Theorem}[section]
\newtheorem{lemma}[theorem]{Lemma}
\newtheorem{corollary}[theorem]{Corollary}

\newtheorem{definition}{Definition}
\newtheorem{prop}[theorem]{Proposition}
\newtheorem{example}{Example}
\newtheorem*{examplenonumber}{Example}

\newtheorem{remark}{Remark}
\newtheorem{condition}{Condition}
\newtheorem{assumption}{Assumption}
\usepackage[colorlinks,citecolor=blue,urlcolor=blue,hypertexnames=false]{hyperref}%

\newcommand{\vZero}{\bm{0}}

\newcommand{\vB}{\bm{B}}
\newcommand{\vD}{\bm{D}}
\newcommand{\vG}{\bm{G}}\newcommand{\vH}{\bm{H}}

\newcommand{\vL}{\bm{L}}\newcommand{\vM}{\bm{M}}
\newcommand{\vQ}{\bm{Q}}
\newcommand{\vT}{\bm{T}}
\newcommand{\vV}{\bm{V}}\newcommand{\vW}{\bm{W}}
\newcommand{\vX}{\bm{X}}\newcommand{\vY}{\bm{Y}}\newcommand{\vZ}{\bm{Z}}
\newcommand{\va}{\bm{a}}\newcommand{\vb}{\bm{b}}\newcommand{\vc}{\bm{c}}
\newcommand{\vd}{\bm{d}}\newcommand{\ve}{\bm{e}}

\newcommand{\vq}{\bm{q}}
\newcommand{\vu}{\bm{u}}
\newcommand{\vv}{\bm{v}}\newcommand{\vw}{\bm{w}}
\newcommand{\vz}{\bm{z}} 

\newcommand{\bR}{\mathbb{R}}

\newcommand{\cS}{\mathcal{S}}

\newcommand{\cA}{\mathcal{A}}
\newcommand{\cF}{\mathcal{F}}

\newcommand{\cM}{\mathcal{M}}
\newcommand{\cB}{\mathcal{B}}

\newcommand{\cD}{\mathcal{D}}

\newcommand{\vbeta}{\bm{\beta}}
\newcommand{\vOmega}{\bm{\Omega}}

\newcommand{\vGamma}{\bm{\gamma}}
\newcommand{\veta}{\bm{\eta}}

\newcommand{\var}{\operatorname{var}}

\newcommand{\cov}{\text{cov}}

\newcommand{\tCR}{\text{CR}}
\newcommand{\tDC}{\text{DCAR}}
\newcommand{\tCC}{\text{CCAR}}
\newcommand{\tMC}{\text{MixCAR}}

\newcommand{\bP}{\mathbf{P}}
\newcommand{\bI}{\mathbf{I}}
\newcommand{\bPi}{\boldsymbol{\Pi}}
\newcommand{\bell}{\boldsymbol{\ell}}
\newcommand{\bpi}{\boldsymbol{\pi}}
\newcommand{\be}{\mathbf{e}}

\newcommand{\bone}{\bm{1}}
\newcommand{\bigtimes}{\mathop{\vcenter{\hbox{\large$\times$}}}}
\newcommand{\vZt}{\widetilde{\bm{Z}}}
\newcommand{\SigZd}[1]{\bm{\Sigma}_{\bm{Z},#1}}
\newcommand{\SigZtd}[1]{\bm{\Sigma}_{\widetilde{Z},#1}}
\newcommand{\bin}{\mathsf{b}}
\newcommand{\stind}[1]{\chi_{#1}}

\newcommand{\bT}{\bm{T}}
\newcommand{\bW}{\bm{W}}
\newcommand{\bgamt}{\widetilde{\bm{\gamma}}}
\newcommand{\are}{\operatorname{ARE}}
\newcommand{\aZ}{\mathrm{Z}}
\newcommand{\aS}{\mathrm{S}}
\newcommand{\aCM}{\mathrm{CM}}
\newcommand{\bbetaW}{\bm{\beta}_{W}}

\newcommand{\PL}{P_{\Lambda}}
\newcommand{\piL}{\pi_{\Lambda}}

\begin{document}

\def\spacingset#1{\renewcommand{\baselinestretch}%
{#1}\small\normalsize} \spacingset{1}
\addtocontents{toc}{\protect\setcounter{tocdepth}{-1}}

\title{\bf Discretization in covariate-adaptive randomization: gains and losses}
\author{Zixuan Zhao and
  Feifang Hu\thanks{Corresponding author (feifang@gwu.edu)} \\
  Department of Statistics, George Washington University}
\maketitle

\bigskip
\begin{abstract}
Covariate-adaptive randomization (CAR) is widely implemented in clinical trials to balance prognostic covariates across treatment arms. Continuous covariates are often discretized into strata in practice, yet their consequences are not clearly understood. This paper provides a comprehensive study of the impact of discretization on both the CAR design process and the inferential results thereafter. We establish the asymptotic properties of both imbalance measures and treatment effect estimators under discretized and non-discretized settings. Practical recommendations are given on when and how discretization should be employed. We show that discretization in design is generally recommended, as it enhances robustness against model misspecification. However, if the true model is known, the most efficient strategy is to balance covariates according to that model in the design. The theoretical results are corroborated by extensive simulation studies and an empirical application to a diabetes trial dataset. Together, the results clarify the gains and losses of discretization in CAR and pave the way for learning impact of discretization to other designs and beyond.
\end{abstract}

\noindent%
{\it Keywords:} Covariate-adaptive randomization, discretization, covariate imbalance, treatment effect estimation, hypothesis testing efficiency

\vfill

\spacingset{1.9}
\section{Introduction}\label{sec:intro}

Randomization in clinical trials serves to balance both observed and unobserved confounders, with covariate-adaptive randomization (CAR) being a widely adopted method to improve comparability across treatment arms \cite{taves2010use,hu2014adaptive}. Existing CAR methods can be broadly categorized into three classes: discrete CAR (DCAR), continuous CAR (CCAR), and mixed CAR (MixCAR). DCAR approaches apply adaptive randomization to categorical or discretized versions of continuous covariates. In fact, a systematic review shows that $82\%$ of the randomized controlled trials reported in PubMed $2014$ used DCAR \cite[Table 2]{lin2015pursuit}. More recently, CCAR and MixCAR methods are developed. These methods offer more flexible approaches for complex trial designs. Table \ref{tab:CAR summary} summarizes representative CAR methods. For more details on the design of CAR procedures, see \cite{hu2014adaptive,rosenberger2015randomization} and discussion therein.

\begin{table}[H]
\centering
\caption{Summary of related literature in CAR design and analysis.}
{%
    \fontsize{9pt}{11pt}\selectfont
\begin{tabular}{cccc}
\toprule
\textbf{Design Name}& \textbf{Acronym}&\textbf{Reference}& \textbf{Design Type} \\
\midrule
Stratified permuted block design&STR-PB&\cite{zelen1974randomization}&DCAR\\
Stratified biased coin design&STR-BC&\cite{wei1978adaptive}&DCAR\\
Pocock and Simon's method&PS&\cite{pocock}&DCAR\\
Hu and Hu's design&HH&\cite{Hu2012}&DCAR\\
Adaptive Randomization via Mahalanobis distance & ARM&\cite{qin2022adaptive}&CCAR\\
Covariate means and covariance balancing&COV &\cite{ma2024new}&CCAR\\
Kernel balancing & KER &\cite{ma2024new}&CCAR\\
MIX procedure & MIX &\cite{liu2025} & MixCAR\\
\bottomrule
\end{tabular}}
\label{tab:CAR summary}
\end{table}

Baseline covariates, which are often collected prior to randomization, are encouraged for use in both design and analysis by regulatory agencies. The recent FDA (US Food and Drug Administration) and EMA (European Medicines Agency) guidelines \cite{guidance2018adaptive,guidelinefda,guidelineema} emphasize that covariates adjustment in both randomization design and analysis are beneficial for ``enhancing the credibility of the trial'' and ``result in a more efficient use of data''. Covariates are encouraged to be adjusted in randomization designs because CAR can be utilized to balance over baseline covariates \cite{lin2015pursuit}. In statistical analyses, covariates can be adjusted via statistical models, e.g., analysis of covariance (ANCOVA) \cite{ma2015testing,ma2020inference,ma2022regression,ye2023toward}, strata fixed effects (SFE) model \cite{bugni2018inference}, fully saturated models \cite{gu2023regression,bugni2019inference}, etc. 

Discretization of continuous covariates is an important and long standing topic, originally studied as an information-loss problem \citep{Cox1957NoteOG}. It has been discussed in various fields, including causal inference \citep{rosenbaum1984reducing}, data mining \citep{garcia2012survey}, econometrics \citep{cattaneo2024binscatter}, etc. The dominant pattern is also clear in the field of randomized experiments applications: \cite{sullivan2024categorisation} reports that $79 (75\%)$ trials between April $1,2022$ and July $31, 2022$ in leading medical journals use stratified randomization with discrete covariates. Furthermore, $32 (30\%)$ trials used discretization in design. Out of the $35$ continuous covariates discretized in design, $31$ are categorised into two levels and used in randomization. However, none of the articles adjusted for continuous covariates after stratification because of two main reasons: (i) a lack of theoretical tools for handling continuous covariates; and more importantly, (ii) substantive clinical considerations, where discretization better reflects clinically meaningful groupings. For example, in \cite{jeong2018association}, the total cholesterol level variable is discretized into tertiles. Since both too low and too high levels are not indications for better prognosis of coronary heart disease, mixing clinically abnormal patients with normal patients by a continuous variable is inappropriate. 

Nevertheless, with recent advances in CAR methodology for continuous covariates \cite{qin2022adaptive,ma2024new}, concerns have been brought up regarding the potential loss of efficiency due to discretization. Despite the prevalence of such discretization practice, its properties have been examined only empirically and many results advocate against na\"ive discretization
\cite{sullivan2024categorisation,royston2006dichotomize}. The theory for the broader class of CAR procedures remains largely unexplored.

Our contribution is multifold. This paper provides comprehensive theoretical results on discretization under CAR and offers principled guidelines on when discretization should, or should not, be used. It is shown that if the response-covariate model is well understood, strongly balancing functions that span this structure in the design achieves optimal efficiency. If not known, discretization in design with model \eqref{eq:combined working model} that fits discrete and continuous covariates together is recommended. To correct for possible conservativeness or inflation, we propose a bootstrap adjustment in Section \ref{sec:bootstrap} that is broadly valid across different CAR procedures.

The paper is organized as follows. In Section \ref{sec:CAR}, we define the problem setup and provide necessary assumptions. Section \ref{sec:model based} derives asymptotic properties related to covariate imbalance and treatment effect estimation. Section \ref{sec:impact discuss} gives an in-depth discussion of discretization. In Section \ref{sec:bootstrap}, a consistent bootstrap adjusted test is proposed. Section \ref{sec:model robust} discusses another important, model-robust inferential paradigm. Numerical studies, real data analysis are presented in Sections \ref{sec:numerical}, \ref{sec: real data}. Proofs and several important extension discussions are deferred to the Appendix. 

\textbf{Notation and conventions.} For a positive integer $n$, let $[n]=\{1,\ldots,n\}$, and let $\bigtimes_{j=1}^{p}[m_j]$ denote the Cartesian product of the sets $[m_j]$. We write $\rightsquigarrow$ and $\overset{P}{\to}$ for convergence in distribution and in probability, and $O_P(1)$ and $o_P(1)$ for boundedness and convergence to zero in probability. Throughout, $\xi$ denotes a standard normal random variable with distribution function $\Phi$ and density $\phi$; $\mathcal{N}_k(\bm{a},\bm{B})$ is the $k$-dimensional normal distribution with mean $\bm{a}$ and covariance $\bm{B}$; $\mathbbm{1}\{\cdot\}$ is the indicator function, $\bone$ (resp. $\vZero$) the one (resp. zero) vector; $\|\cdot\|$ is the Euclidean norm; $\operatorname{vec}$, $\operatorname{diag}$ and $\operatorname{sgn}$ denote the matrix vectorization, the diagonal matrix, and the sign function. For a vector $\bm{a}$ and an integer $d$, $\bm{a}_{1:d}=(a_1,\ldots,a_d)^{\top}$ denotes its leading $d$ components.

\section{Preliminaries}\label{sec:CAR}

\subsection{Randomization and discretization}\label{subsec:CAR and discretization}

Consider a randomization procedure for assigning $n$ experimental units to two treatment arms. Let $I_i$ denote the treatment indicator for the $i$-th unit, i.e., $I_i = 1$ if assigned to active treatment and $I_i = 0$ if assigned to control, where $i \in [n]$. Let $\vZ = ({Z_1,\ldots, Z_p})^\top \in \mathbb{R}^p$ be the $p$-dimensional vector of \textit{continuous} covariates and let $\vZ_i=({Z_{i,1},\ldots,Z_{i,p}})^\top$ be the realization of the $i$th unit. CAR methods aim to sequentially assign units while balancing a pre-specified function of the prognostic covariates across treatment groups. Let $\psi = (\psi_1,\ldots,\psi_{p'}): \mathbb{R}^p \to \mathbb{R}^{p'}$ be a measurable covariate map. Define a measure of imbalance as
$$
    \text{Imb}_n := \|\Lambda_n\|^2,\quad \Lambda_n := \sum_{i=1}^{n}(2I_i-1)\psi(\vZ_i)
$$
where $\|\cdot\|$ is the Euclidean norm. In this study, we consider a class of CAR procedures that follows the assignment process \citep{ma2024new}:

\noindent(Step 1) Assign the first subject to the treatment group with probability $1/2$.

\noindent(Step 2) For subsequent subjects, indexed by $i\geq 2$, compute the potential imbalance measures Imb$_i^{(1)}$ and Imb$_i^{(0)}$, assuming $I_i = 1$ and $I_i = 0$, respectively.
    
\noindent(Step 3) Assign the $i$th unit to the treatment group according to 
    $$P(I_i = 1\mid \vZ_1,\ldots,\vZ_i,I_1,\ldots, I_{i-1})=\begin{cases}
    \rho & \text{if }\text{Imb}_i^{(1)}<\text{Imb}_i^{(0)},\\
    0.5 & \text{if }\text{Imb}_i^{(1)}=\text{Imb}_i^{(0)},\\
    1-\rho & \text{if }\text{Imb}_i^{(1)}>\text{Imb}_i^{(0)},
    \end{cases}$$
    where $0.5<\rho\leq 1$ is some pre-specified biased-coin probability. 
    
\noindent(Step 4) Repeat \textbf{Steps 2,3} until all subjects are allocated.

\begin{remark}\label{remark:rho}
    Typically, one chooses $\rho\in[0.75,0.95]$ as suggested in \cite[Remark 2.1]{hu2020theory}. The procedure can potentially be applied to high-dimensional $\vZ$ or other data types like images by modifying the covariate mapping $\psi$. Further, the assignment rule accommodates two-arm randomization only. The extension to multi-arm randomization is relegated to Appendix \ref{app:multiarm}.
\end{remark}

The framework encompasses a wide range of randomization procedures. To further categorize these procedures, we call that one assignment procedure as DCAR (resp. CCAR) if $(\psi_2,\ldots,\psi_{p'})$ is a discrete (resp. continuous) random vector. If $(\psi_2,\ldots,\psi_{p'})$ is neither discrete nor continuous, the design is called MixCAR, which is short for mixed CAR. The exclusion of $\psi_1$ in above definition is used to account for overall imbalance, which will be a constant function, as will be illustrated in Examples \ref{example: COV}, \ref{example: HuHu}, or \ref{example: mixed} in Appendix \ref{app:discussions on MixCAR}. We now examine several representative examples for each type.

\begin{example}[COV procedure \cite{ma2024new}]\label{example: COV}
    The COV procedure is a CCAR procedure with feature map
    $\psi(\vZ_i) = (\sqrt{\omega_0},\sqrt{\omega_1} \vZ_i^\top, \sqrt{\omega_2}{\operatorname{vec}}(\vZ_i\vZ_i^\top)^\top)^\top,\omega_0,\omega_1,\omega_2\geq 0, i\in[n]$
    where ${\operatorname{vec}}(\cdot)$ stacks the columns of a matrix into a column vector.
\end{example}

It is common in practice to discretize continuous covariates and balance across the categories. The discretization is built from marginal binning maps $\bin_j:\mathbb{R}\to[m_j]$, which partition each margin $Z_{j}$ into $m_j$ bins, collected into $\bin=(\bin_1,\ldots,\bin_p):\mathbb{R}^p\to\bigtimes_{j=1}^{p}[m_j]$, where $\bigtimes$ denotes the Cartesian product. Composing $\bin$ with a fixed bijective enumeration of $\bigtimes_{j=1}^{p}[m_j]$ into $[m]$, where $m=\prod_{j=1}^pm_j$ is the total number of strata, yields the stratum label $S:\mathbb{R}^p\to[m]$. To illustrate, suppose $\vZ_i$ follows $\mathcal{N}_2(\va,\vB)$. Then,  the dichotomization of two Gaussian marginals at $0$ will give four strata.

\begin{example}[Hu and Hu's method \cite{Hu2012}]\label{example: HuHu}
    With non-negative weights $\omega_0$, $\{\omega_{j\ell}: j\in[p],\ell\in[m_j]\}$ and $\{\omega_s: s\in[m]\}$, consider the following feature map 
    \begin{equation}\label{eq:HuHu feature map}
        \psi(\vZ_i) = (\sqrt{\omega_0},\ldots,{\sqrt{\omega_{j\ell}}\mathbbm{1}\{\bin_{j}(Z_{i,j})=\ell\}},\ldots, \ldots,\sqrt{\omega_s}\mathbbm{1}\{S_i = s\},\ldots)^\top,\quad i\in[n],
    \end{equation}
    where $\mathbbm{1}\{\bin_{j}(Z_{i,j})=\ell\}$ is the margin indicator for the $i$th unit's $j$th margin equaling $\ell$, and $\mathbbm{1}\{S_i=s\}$ is the stratum indicator for the $i$th unit's stratum equaling $s$. 
\end{example}
If $\omega_0=\omega_s = 0$ for all $s$, the procedure reduces to PS, moreover, if $\rho=1$, then the procedure reduces to Tave's minimization \cite{taves}. If $\omega_0=\omega_{j\ell}=0$ for all $j$ and $\ell$, the procedure reduces to the stratified biased coin method \cite{efron1971forcing,baldi}. In practice, $\vZ_i$ may not be fully continuous as assumed. See Appendix \ref{app:discussions on MixCAR} for discussion on MixCAR.

\subsection{Assumptions}

The following assumptions regarding $\vZ$ and $\psi$ are assumed throughout the paper.

\begin{assumption}\label{assume:independent copy}
The covariates $\vZ_i$, $i\in[n]$, are independent copies of a generic vector
$\vZ=(Z_1,\ldots,Z_p)^{\top}$ with $E(\vZ)=\vZero$, and (1) the number of strata $m$ is fixed, and $q_s=P(S=s)>0$ for every $s\in[m]$; (2) $\bm{\Sigma}_{\vZt}=\var(\vZt)=E\{\var(\vZ\mid S)\}$ is positive definite, where $\vZt:=\vZ-E(\vZ\mid S)$ denotes the within-stratum centered covariate vector.
\end{assumption}

\begin{assumption}\label{assume:bounded moments}
    The feature map $\psi$ satisfies $E[\|\psi(\vZ)\|^{2+\upsilon}]<\infty$ for some $\upsilon>0$.
\end{assumption}
Without loss of generality, the variables for consideration are centered. Assumption \ref{assume:independent copy} (2) holds if $\vZ$ is non-constant in every strata and (2) implies $\vZ$ admits positive definite covariance. Assumption \ref{assume:bounded moments} is a mild assumption on the moment of $\psi$.

Covariate imbalances are defined to assess how balanced a design is. For a fixed function $f:\bR^{p}\to \bR$ and a randomization method $(r)\in\{\tCR,\tCC,\tDC,\tMC\}$, we define the (general) covariate imbalance as 
\begin{equation}\label{eq:general imbalance}
    D_n^{(r)}(f) = \sum_{i=1}^{n}(2I_i^{(r)}-1) f(\vZ_i).
\end{equation}
This type of imbalance is quite general and includes, e.g., the overall ($f\equiv 1$), within-strata ($f=\stind{s}= \mathbbm{1}\{{S}=s\}$), and mean imbalances ($f(\vz)=z_j$) as special cases. The following definition is used throughout the paper:
\begin{definition}\label{def:balanced by design}
If $D_n^{(r)}(f) = o_P(\sqrt{n})$ for some fixed $f$ and design $r$, then we say $f$ is strongly balanced by $r$.
\end{definition}
The notion of strong balance originates from \cite{bugni2018inference,bugni2019inference}. In theory, $o_P(\sqrt{n})$ is not the fastest rate. If a further condition on the distribution of $\psi$ can be assumed, then the design may strongly balance some feature at rate $O_P(1)$ \citep[Assumption 4 \& Theorem 3.4]{ma2024new}. For inferential considerations, $o_P(\sqrt{n})$ is fast enough. The following condition is a result of Assumptions \ref{assume:independent copy}, \ref{assume:bounded moments} and the assignment mechanism. 
\begin{condition}\label{assume: dgp}
    Given a design $r$ and a feature map $\psi:\bR^p\to \bR^{p'}$, $\psi_j$ will be strongly balanced by $r$ for any $j\in [p']$ and the overall imbalance $D_n=\sum_{i=1}^{n}(2I_i-1)=o_P(\sqrt{n})$. Additionally, all DCAR will balance $\stind{s}(\vz) = \mathbbm{1}\{S=s\}, s\in[m]$ strongly and all CCAR considered will balance $f(\vz)=z_j$ strongly for any $j\in[p]$.
\end{condition}
Condition \ref{assume: dgp} is required for the technical proofs and is satisfied by a broad collection of CAR procedures including but not limited to: DCARs like STR-PB \cite{zelen1974randomization}, STR-BC \cite{baldi}, HH with $\omega_s>0$ in Example \ref{example: HuHu}; CCARs like the COV procedure with $\omega_0>0$ procedure in Example \ref{example: COV}, the KER procedure in \cite[Section 2.3]{ma2024new}. CR does not satisfy Condition \ref{assume: dgp}, it is therefore considered separately as the benchmark method. Strictly speaking, only Assumption \ref{assume:independent copy} and Condition \ref{assume: dgp} are needed for technical results under DCAR and CCAR. Some other common DCAR variants satisfying Condition \ref{assume: dgp} (and hence applicable to our results) are catalogued in Appendix \ref{app:discuss on dgp}.

\subsection{Convergence rate for covariate imbalances}\label{sec:instability}
The impact of discretization in the design stage is examined by the convergence property of the general imbalance \eqref{eq:general imbalance}. The convergence rate of $D_n^{(r)}(f)$ is shown to be very crucial for the subsequent statistical inference, see, e.g., Theorem \ref{theorem:combined}.

\begin{theorem}\label{theorem: design property}
    For a fixed function $f:\bR^p\to \bR$ such that $E[f^2(\vZ)]<\infty$, and any design $r\in\{\tCR,\tDC,\tCC\}$, we have $n^{-1/2}D_n^{(r)}(f)\rightsquigarrow \sigma_f^{(r)}\xi$
    where $(\sigma_f^{(\tCR)})^2 = E[f^2(\vZ)]$. If $f$ is strongly balanced, then $\sigma^{(\tDC)}_f=\sigma^{(\tCC)}_f=0$. Otherwise, 
    \begin{align*}
        &(\sigma_f^{(\tDC)})^2 = E\{\var[f(\vZ)\mid {S}]\},\\
        &(\sigma_f^{(\tCC)})^2=
        E[f^2(\vZ)] + 2{\pi_{\Lambda}(\zeta_f)},
    \end{align*}
    where ${\pi_{\Lambda}(\zeta_f)}$ is defined in Equation \eqref{eq:pi_Lambda f} in Appendix \ref{app:notation}.
\end{theorem}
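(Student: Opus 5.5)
We prove the three cases separately, using the same device each time: split $f$ into a part the design strongly balances and a residual, and write the residual part of $D_n^{(r)}(f)$ as a martingale plus a drift.

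\textbf{Complete randomization.} The summands $(2I_i-1)f(\vZ_i)$ are i.i.d., have mean zero (since $2I_i-1$ is an independent Rademacher sign), and have variance $E[f^2(\vZ)]$. The classical CLT then gives $(\sigma_f^{(\tCR)})^2=E[f^2(\vZ)]$.

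\textbf{Strongly balanced $f$.} If $D_n^{(r)}(f)=o_P(\sqrt n)$ (Definition \ref{def:balanced by design}), then $n^{-1/2}D_n^{(r)}(f)\overset{P}{\to}0$. This is the degenerate limit $0\cdot\xi$.

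\textbf{DCAR, $f$ not strongly balanced.} Decompose
\[
f(\vZ)=\sum_{s=1}^m \mu_s\stind{s}(\vZ)+\tilde f(\vZ),\qquad \mu_s=E[f(\vZ)\mid S=s].
\]
Correspondingly,
\[
D_n^{(\tDC)}(f)=\sum_s\mu_s D_n(\stind{s})+\sum_i(2I_i-1)\tilde f(\vZ_i).
\]
By Condition \ref{assume: dgp}, each $D_n(\stind{s})=o_P(\sqrt n)$, so the first sum is negligible.

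For the second sum, use that under DCAR the assignment of unit $i$ depends on $\vZ_i$ only through $S_i$. Let $\mathcal F_{i}=\sigma(\vZ_1,\dots,\vZ_i,I_1,\dots,I_i)$ and $\mathcal G_{i}=\sigma(\mathcal F_{i-1},S_i,I_i)$. Conditional on $\mathcal G_i$, the residual $\tilde f(\vZ_i)$ has mean $0$ and conditional variance $\var[f\mid S_i]$. Hence $(2I_i-1)\tilde f(\vZ_i)$ is a martingale difference sequence with respect to the filtration $\mathcal F_{i-1}\subset\mathcal G_i\subset\mathcal F_i$.

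The summed conditional variances are $\sum_i\var[f\mid S_i]$. Dividing by $n$, this converges by the LLN to $E\{\var[f(\vZ)\mid S]\}$. The Lindeberg condition follows from $E f^2<\infty$ and identical distribution. The martingale CLT then gives $(\sigma_f^{(\tDC)})^2=E\{\var[f(\vZ)\mid S]\}$.

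\textbf{CCAR, $f$ not strongly balanced.} This is the main obstacle. Now $I_i$ depends on $\vZ_i$ through $\psi(\vZ_i)$ and the current imbalance $\Lambda_{i-1}$. So $(2I_i-1)f(\vZ_i)$ is no longer a martingale difference.

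The plan is to follow \cite{ma2024new} and use the Markov-chain structure of $\Lambda_n$. Under Condition \ref{assume: dgp}, $\Lambda_n$ is a positive recurrent (drift-stable) chain whose increments have conditional mean $E[(2I_i-1)f(\vZ_i)\mid\Lambda_{i-1}]=:g_f(\Lambda_{i-1})$. Write
\[
D_n(f)=\sum_i\big\{(2I_i-1)f(\vZ_i)-g_f(\Lambda_{i-1})\big\}+\sum_i g_f(\Lambda_{i-1}).
\]
The second sum is an additive functional of an ergodic chain. Solve the Poisson equation $\hat g-\PL\hat g=g_f$, which should be exactly where the quantity $\zeta_f$ and $\piL$ in \eqref{eq:pi_Lambda f} enter. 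This rewrites the whole sum as a martingale plus a boundary term that is $o_P(\sqrt n)$.

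The martingale CLT with stationary law $\piL$ then gives variance
\[
E[f^2(\vZ)]+2\sum_{k\ge1}\cov_{\piL}(\text{lag-}k\text{ cross terms}).
\]
I expect this series to collapse to $2\piL(\zeta_f)$ by the definition in Appendix \ref{app:notation}.

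The delicate points are two. First, existence and integrability of the Poisson solution; geometric ergodicity from the drift condition plus the moment Assumption \ref{assume:bounded moments} should give this. Second, the LLN for the conditional variances under $\piL$.
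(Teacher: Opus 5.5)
Your proposal is correct and follows the paper's route. For CR you use the i.i.d.\ CLT. For DCAR you use the same split of $f$ into $\sum_s E[f(\vZ)\mid S=s]\stind{s}$, which is negligible by Condition~\ref{assume: dgp}, plus a within-stratum residual handled by a CLT; your martingale formulation states this more precisely than the paper's brief conditioning argument. For CCAR you use the Poisson-equation/martingale argument of \cite{ma2024new}, which the paper does not reproduce and simply cites as \cite[Theorem 3.6]{ma2024new}.

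Two points in your CCAR sketch need to be made explicit. First, your uncentered Poisson equation $\hat g-\PL\hat g=g_f$ is solvable only because $\piL(\kappa_f)=0$, which holds since $g_f=\kappa_f$ is odd and $\piL$ is symmetric (Lemma~\ref{lemma:previous result1}). Second, the variance $E[f^2(\vZ)]+2\piL(\zeta_f)$ is obtained most directly by computing $E[\Delta M_{f,i}^2\mid\mathcal F_{i-1}]$ and applying the ergodic theorem, as in the proof of Lemma~\ref{lemma:find var}, rather than by summing lag covariances.
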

\begin{proof}
    See Appendix \ref{app:design prop}.
\end{proof}

The terms ${\pi_{\Lambda}(\zeta_f)}$ are associated with the invariant distributions of the Markov chains $(\Lambda_i)_{i\geq 1}$ under CCAR. In general, it may not admit closed form expression and can be either negative or positive. Result from Theorem \ref{theorem: design property} is important. Even when $f$ is not strongly balanced under DCAR, the asymptotic variance is guaranteed to reduce compared with CR. The result could be viewed as a generalization and extension of existing theoretical results in literature on DCAR \cite{liu2022unobserved,liu2024unobserved}. On the contrary, CCAR cannot outperform CR uniformly. Such adverse phenomenon is called the `variance inflation' issue. Similar issues are also observed for MixCAR as well. To quantify this phenomenon theoretically, we provide a specific example.

\begin{prop}[Less is more]\label{prop:scalar-ccar}
Let the CAR be run with a continuous covariate map $\psi:\mathbb{R}^p\to\mathbb{R}$ such that $P\{\psi(\vZ)=0\}=0$. Then the worst-case inflation is
$$
\sup_{f:0<E[f^2(\vZ)]<\infty}
\frac{\bigl(\sigma_f^{(\tCC)}\bigr)^2}{\bigl(\sigma_f^{(\tCR)}\bigr)^2}
= 1+\frac{\var\bigl(|\psi(\vZ)|\bigr)}{\{E|\psi(\vZ)|\}^{2}},
$$
and the supremum can be attained, uniquely up to nonzero constant multiples. Such $f$ with exists if and only if $|\psi(\vZ)|$ is not almost surely constant.
\end{prop}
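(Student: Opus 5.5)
The plan is to compute $\sigma_f^{(\tCC)}$ in closed form for a scalar $\psi$, using a martingale decomposition instead of the invariant-measure term $\pi_\Lambda(\zeta_f)$ of Theorem \ref{theorem: design property}. After that, the supremum becomes the top eigenvalue of a rank-two perturbation of the identity on $L^2(P_{\vZ})$.

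\textbf{Step 1 (reduction to sign variables).} Set $\epsilon_i=2I_i-1$ and $\eta_i=\epsilon_i\operatorname{sgn}(\psi(\vZ_i))$, which is well defined a.s. because $P\{\psi(\vZ)=0\}=0$. Since $p'=1$, comparing $\text{Imb}_i^{(1)}$ with $\text{Imb}_i^{(0)}$ only asks whether $\epsilon_i\psi(\vZ_i)$ has the sign of $-\Lambda_{i-1}$. Hence $\eta_i=-\operatorname{sgn}(\Lambda_{i-1})$ with probability $\rho$ (and is a fair coin when $\Lambda_{i-1}=0$). Consequently, given $\cF_{i-1}=\sigma(\vZ_1,\ldots,\vZ_{i-1},I_1,\ldots,I_{i-1})$, the variable $\eta_i$ is independent of $\vZ_i$.

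Write $g=\operatorname{sgn}(\psi)f$ and $a=|\psi|$. Then $D_n(f)=\sum_i\eta_i g(\vZ_i)$ and $\Lambda_n=\sum_i\eta_i a(\vZ_i)$, and $\Lambda_n=o_P(\sqrt n)$ by Condition \ref{assume: dgp}. Let $\mu=E a>0$ and $c=E g/\mu$, and put $u=g-ca$, so that $E u=0$. This gives
\[
D_n(f)=\sum_{i=1}^n\eta_i u(\vZ_i)+c\Lambda_n .
\]
The first sum is a martingale with differences $\eta_iu(\vZ_i)$. Their conditional mean is $E[\eta_i\mid\cF_{i-1}]\,E u=0$, and their conditional variance is $E u^2$ since $\eta_i^2=1$. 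The martingale CLT (Lindeberg from the $L^2$ integrability of $u$ and i.i.d. $\vZ_i$) gives
\[
\bigl(\sigma_f^{(\tCC)}\bigr)^2=E\bigl[(g-ca)^2\bigr]=E g^2-\frac{2E g\,E[ga]}{\mu}+\frac{(E g)^2E a^2}{\mu^2}.
\]
By uniqueness of weak limits this must coincide with the expression in Theorem \ref{theorem: design property}. It also correctly gives $0$ when $f\propto\psi$.

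\textbf{Step 2 (variational problem).} Since $E g^2=E f^2=(\sigma_f^{(\tCR)})^2$ and $f\mapsto g$ is an isometric bijection of $L^2$, the ratio equals $\langle g,Tg\rangle/\|g\|^2$. Here
\[
T=I-\mu^{-1}(\bone\otimes a+a\otimes\bone)+\mu^{-2}E a^2\,(\bone\otimes\bone).
\]
$T$ equals the identity on $\operatorname{span}\{\bone,a\}^\perp$, so the supremum is $\max\{1,\lambda_{\max}\}$, where $\lambda_{\max}$ is the largest eigenvalue of $T$ restricted to $\operatorname{span}\{\bone,a\}$.

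I will compute this eigenvalue by a direct choice. Take $g\perp a$, with $g$ equal to the residual $g^\ast=\bone-(\mu/E a^2)a$ of $\bone$ on $a$. A short calculation gives $(E g^\ast)^2/E(g^\ast)^2=1-\mu^2/E a^2$, hence the ratio $1+(E g^\ast)^2E a^2/(\mu^2E (g^\ast)^2)=E a^2/\mu^2=1+\var(|\psi|)/\mu^2$. For the upper bound and for uniqueness, I will write the $2\times2$ matrix of $T$ in the basis $\{g^\ast,a\}$ and check that its trace and determinant give eigenvalues $E a^2/\mu^2$ and $\le 1$. Concretely, I expect $T a=a-\mu^{-1}E a^2\,\bone+\ldots$ to produce one eigenvalue equal to the other root. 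This identifies $g^\ast$ as the unique top eigenvector up to scaling, so the maximizers are $f=\operatorname{sgn}(\psi)\,g^\ast$ up to nonzero multiples.

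\textbf{Step 3 (existence).} The value $E a^2/\mu^2$ exceeds $1$, and $g^\ast\neq0$, exactly when $|\psi(\vZ)|$ is not a.s. constant. If $|\psi|$ is a.s. constant, then $\operatorname{span}\{\bone,a\}$ is one-dimensional, $T$ annihilates it, and the supremum equals $1$. In that case it is not uniquely attained, since every $f$ in the orthogonal complement attains it. The main obstacle I expect is Step 1: making rigorous that $\eta_i$ is conditionally independent of $\vZ_i$, including ties at $\Lambda_{i-1}=0$, and reconciling the result with the $\pi_\Lambda(\zeta_f)$ formula. The eigenvalue bookkeeping in Step 2 is routine.
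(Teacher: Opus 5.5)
Your proposal is correct. Step 1 matches the paper's proof of Proposition~\ref{prop:scalar-ccar-detail}(i) up to notation: your $c=E[g]/\mu$ is the paper's $b_f$, your $u=g-ca$ equals $\operatorname{sgn}(\psi)(f-b_f\psi)$, and the split into a martingale plus $b_f\Lambda_n$ is identical. Both arrive at $(\sigma_f^{(\tCC)})^2=E[\{f-b_f\psi\}^2]$ without ever touching $\pi_\Lambda(\zeta_f)$.

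You diverge in the optimization.

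\textbf{The paper's route.} It normalizes $E[\tilde f]=1$ and writes $\tilde f=1+w$. It then bounds the cross term against $r_\psi=|\psi|/E|\psi|-1$ by Cauchy--Schwarz and maximizes $(t+\nu_\psi)^2/(1+t^2)$ over $t=\|w\|$. Uniqueness comes from the Cauchy--Schwarz equality case together with strict unimodality in $t$.

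\textbf{Your route.} You read the ratio as a Rayleigh quotient of the identity-plus-rank-two operator $T$, which is valid. Your eigenvalue bookkeeping closes more easily than you anticipate. Since $g^\ast\perp a$, the pair $\{g^\ast,a\}$ is an orthogonal basis of $\operatorname{span}\{\bone,a\}$, and direct substitution gives two facts:
\begin{itemize}
\item $Tg^\ast=(E[a^2]/\mu^2)\,g^\ast$;
\item $Ta=0$, which is just the statement that $f=\psi$ is strongly balanced.
\end{itemize}
So there is no need for trace and determinant. The other eigenvalue on the span is exactly $0$, not merely $\le 1$.

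\textbf{What your route gives.} The spectrum of $T$ is $\{E[a^2]/\mu^2,\,1,\,0\}$. The inflation ratio of any $f$ is a weighted average of these three values, weighted by its squared components along $\operatorname{sgn}(\psi)g^\ast$, along $\psi$, and orthogonal to both. From this one picture you get:
\begin{itemize}
\item the supremum;
\item uniqueness of the maximizer, from the gap $E[a^2]/\mu^2>1$;
\item the degenerate constant-$|\psi|$ case, where you correctly observe that the supremum $1$ is attained but not uniquely;
\item as a bonus, the fact that the minimum ratio $0$ is attained exactly on $\operatorname{span}\{\psi\}$, consistent with Proposition~\ref{prop:scalar-ccar-detail}(iii).
\end{itemize}
Your maximizer $\operatorname{sgn}(\psi)g^\ast$ is a nonzero multiple of the paper's $f^\ast$, since $\tilde f=\frac{E[a^2]}{E[a^2]-\mu^2}\,g^\ast$.

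\textbf{Comparison.} The paper's argument is more elementary and self-contained. Yours is more structural and makes the ``less is more'' geometry transparent.
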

\begin{proof}
See Appendix \ref{app:design prop}.
\end{proof}
Proposition \ref{prop:scalar-ccar} bounds the largest possible variance inflation of CCAR relative to CR for a special case. The source of the variance inflation originates because the assignment probability depends only on $\operatorname{sgn}(\Lambda\psi(\vZ))$. However, the updated covariate imbalance, $\Lambda$, is determined by the magnitude of $\psi(\vZ)$. We therefore conjecture that a smooth assignment function may mitigate this problem, but it is left for future research. Proposition \ref{prop:scalar-ccar} also says that, if and only if the scalar $\psi$ is almost surely constant (e.g., Efron's biased coin design \citep{efron1971forcing}), then there would never be an inflation problem. The reason is as follows: $f$ is uncentered, thus
$$D_n^{(r)}(f) = \sum_{i=1}^{n}\underbrace{(2I_i^{(r)}-1)\{f(\vZ_i)-E[f(\vZ)]\}}_{\text{Centered, $f$ direction}}+\underbrace{D_n^{(r)}E[f(\vZ)]}_{\text{Constant direction}}.$$
The constant direction is bounded in probability by the biased-coin method. In this view, balancing a non-constant direction $\psi$ is worse than balancing a `null' direction in terms of imbalance defined by $f$. Hence, `balancing less' is `balancing more'. We now provide a working example.

\begin{example}[Explicit variance inflation with a normal
covariate]\label{ex:gaussian-inflation}
Let $p=1$, $Z\sim N(0,1)$, $\psi(z)=z$, and let $\rho\in(1/2,1]$ be
arbitrary. Take $f^{\ast}(z)=\sqrt{\pi/2}\operatorname{sgn}(z)-z.$ Theorem~\ref{theorem: design property} together with the variance identity~\eqref{eq:scalar-var} give $\bigl(\sigma_{f^{\ast}}^{(\tCR)}\bigr)^2=\pi/2-1, \bigl(\sigma_{f^{\ast}}^{(\tCC)}\bigr)^2
=\pi(\pi/2-1)/2,$ showing a $57\%$ inflation of the limiting variance. Indeed in this example, $f^{\ast}(z)$ is, $(\pi/2-1)\sqrt{2/\pi}$ times
the unique maximizer of the inflation ratio defined in the proof of Proposition \ref{prop:scalar-ccar}. 
\end{example}

\begin{remark}\label{remark:inflation conjecture}
Whether the inflation persists with $p'\geq 2$ is not discussed by
Proposition~\ref{prop:scalar-ccar}. In higher dimensions, the associated vector Poisson equation is hard to solve. We conjecture that similar variance inflation still persists, and report supporting numerical evidence in Example \ref{example:instability} in Appendix \ref{app:additional sim}.
\end{remark}

\section{Model-based inference} \label{sec:model based}

\subsection{A general model}\label{subsec:setup}
For statistical inference, we assume the following general response model
\begin{equation}\label{eq:true model}
    Y_i = \theta_1 I_i +\theta_0(1-I_i) + g(\vZ_i) +\varepsilon_i,
\end{equation}
the error terms $\varepsilon_i'$s are independent and identically distributed, independent of all covariates and randomization variables, with $E(\varepsilon_i) = 0$, and $\var(\varepsilon_i)=\sigma_{\varepsilon}^2$. The function $g:\bR^p\to \bR$ is unknown and possibly non-linear in $\vZ_i$. Note that aside from $\vZ_i$, which is used in the randomization, it is very likely that $Y_i$ also depends on additional covariates $\vX_i$. The extension is straightforward, a brief discussion can be found in Appendix \ref{sec:additional covariates}.

We are interested in the estimation and hypothesis testing of the treatment effect $\theta_1-\theta_0$ based on the following two (possibly misspecified) linear working models
\begin{equation}\label{eq:working model 1}
    E(Y_i\mid I_i,\vZ_i)= \vG_{i,d_a}^{(r,a)}\vbeta_{d_a}^{(a)}, 
\end{equation}
whose fitted covariates and parameters are given by
\begin{align*}
    &\vG_{i,d_Z}^{(r,\aZ)} = (I_i^{(r)},1-I_i^{(r)},{Z_{i,1}},\ldots,{Z_{i,d_Z}})^\top, \quad\vbeta_{d_Z}^{(\aZ)} = (\theta_1,\theta_0,\vGamma^\top)^\top\\
    &\vG_{i,d_S}^{(r,\aS)} = (I_i^{(r)},1-I_i^{(r)},\stind{1}(\vZ_i),\ldots,\stind{d_S}(\vZ_i))^\top, \quad\vbeta_{d_S}^{(\aS)} = (\theta_1,\theta_0,\veta^\top)^\top
\end{align*}
for some $d_Z\leq p, d_S\leq m-1$, that is, not all covariates are fitted. The function $\stind{j}(\vZ_i) =\mathbbm{1}\{S(\vZ_i)=j\}$ is the strata indicator of some pre-specified discretization scheme $S$ and the superscript $a\in \{\aZ,\aS\}$ denotes analysis strategy, fitting either ANCOVA with $\vZ_i$ or SFE with $S_i$. Subscripts $d_Z,d_S$ are used to denote the number of covariates fitted in the working models. Without loss of generality, the fitted covariates are taken to be the first $d_Z, d_S$ coordinates.

After assignment, a Wald-type test is usually used to test $H_0:\theta_1-\theta_0=0$ versus $H_1:\theta_1-\theta_0\neq 0$ with the test statistic,
\begin{equation}\label{eq:test statistic}
    T^{(r,a)}_{n,d_a} = \left\{(\hat{\sigma}_{\varepsilon}^{(r,a)})^2{\vL_{d_a}}[(\vG_{d_a}^{(r,a)})^\top\vG_{d_a}^{(r,a)}]^{-1}(\vL_{d_a})^\top\right\}^{-1/2}{\vL_{d_a}}\hat{\vbeta}_{n,d_a}^{(r,a)},
\end{equation}
where $\vL_{d_a} = (1,-1,0,\ldots,0)$ is a row vector with $2+d_a$ entries, $\vG_{d_a}^{(r,a)} = (\vG^{(r,a)}_{1,d_a},\ldots, \vG^{(r,a)}_{n,d_a})^\top$ is the design matrix, $\hat{\vbeta}_{n,d_a}^{(r,a)}$ is the least squares (LS) estimator and $(\hat{\sigma}_{\varepsilon}^{(r,a)})^2 = (\vY-\vG_{d_a}^{(r,a)}\hat{\vbeta}_{n,d_a}^{(r,a)})^\top(\vY-\vG_{d_a}^{(r,a)}\hat{\vbeta}_{n,d_a}^{(r,a)})/(n-d_a-2)$ is the homoscedastic variance estimator for $\sigma^2_{\varepsilon}$. The test rejects the null hypothesis at level $\alpha$ if $|T^{(r,a)}_{n,d_a}|>\Phi^{-1}(1-\alpha/2)$, with $\Phi(\cdot)$ being the cumulative distribution function of $\xi$. In later presentation, we may suppress the subscript $d_a$ for simplicity unless strictly necessarily. 

There are mainly two frameworks for statistical inference after CAR. The model-based approach \cite{shao2010test,ma2015testing,ma2020inference,qin2022adaptive,ma2024new} and model-robust approach \cite{bugni2018inference,bugni2019inference,ye2023toward,gu2023regression,tu2024unified}. Customary model-based approaches in the literature assume a true model linear in both $I_i$ and $\vZ_i$ or $\psi(\vZ_i)$. Our approach is more general and sees them as a special case. The two working models \eqref{eq:working model 1} in this work are adopted as a general method for covariate adjustment \cite[Section III,B]{guidelinefda} and have been previously considered in \cite{ma2015testing,ma2020inference,bugni2018inference,ye2023toward}. 

\begin{remark}
    The response model \eqref{eq:true model} is a partially linear model, with linear components in $I_i$. There are extensive discussion on estimation of linear components effect under partially linear model, see, e.g., \cite{robinson1988root, liang2006estimation, chernozhukov2018double}. In this paper, we are investigating the impact of discretization and misspecification under different designs. 
\end{remark}

\subsection{Treatment effect estimation}\label{subsec:effect estimation}
This subsection examines the impact of covariate discretization on the estimation of treatment effects. The next theorem establishes the asymptotic behavior of LS estimators. 

\begin{theorem}\label{theorem:combined}
    Assume Assumption \ref{assume:independent copy} and Condition \ref{assume: dgp} holds. For fixed $r\in\{\tCR,\tDC,\tCC\}$ and $a\in\{\aZ,\aS\},$ $$\sqrt{n}\vL(\hat{\vbeta}_n^{(r,a)}-\vbeta^{(a)}) \rightsquigarrow 2\tau^{(r,a)}\xi$$
    where the asymptotic variances $(\tau^{(r,a)})^2=\sigma_{\varepsilon}^2+(\sigma^{(r,a)})^2$, $(\sigma^{(r,a)})^2$ are defined in Equations \eqref{eq:varCR}-\eqref{eq:varsACC} in Appendix \ref{app:proof inference}. Additionally, if $r=\tCC$ and $g$ is strongly balanced, then $\tau^{(\tCC,\aZ)} \equiv \sigma_{\varepsilon}$.
\end{theorem}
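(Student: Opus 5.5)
The plan is to write the least squares contrast in closed form via Frisch--Waugh--Lovell, split it into a term driven by the noise and a term driven by the design imbalance of the residualized signal, and then read off each term's limit from Theorem \ref{theorem: design property}.

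\textbf{Reduction to a partialled-out contrast.} Write $\vW_i$ for the fitted covariate vector (either $(Z_{i,1},\ldots,Z_{i,d_Z})$ or $(\stind{1}(\vZ_i),\ldots,\stind{d_S}(\vZ_i))$). Let $\bar{\vW}$ be its sample mean, $n_1=\sum_i I_i$, and $n_0=n-n_1$. By Frisch--Waugh--Lovell,
\[
\vL\hat{\vbeta}_n-(\theta_1-\theta_0)=\Bigl\{\tfrac{1}{n_1}\textstyle\sum_i I_i e_i-\tfrac{1}{n_0}\sum_i(1-I_i)e_i\Bigr\}-\hat{\vGamma}^\top\Bigl\{\tfrac{1}{n_1}\textstyle\sum_i I_i\vW_i-\tfrac{1}{n_0}\sum_i(1-I_i)\vW_i\Bigr\},
\]
where $e_i=g(\vZ_i)+\varepsilon_i$. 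The law of large numbers under each design gives $n_1/n\to1/2$, because $D_n=O_P(\sqrt n)$ in all three designs by Theorem \ref{theorem: design property} with $f\equiv1$. It also gives $\hat{\vGamma}\overset{P}{\to}\vGamma^*$, the population projection coefficient of $g(\vZ)$ on $(1,\vW)$.

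Setting $h=g-E g-\vGamma^{*\top}(\vW-E\vW)$ and replacing $n_1,n_0$ by $n/2$, I get
\[
\sqrt n\,\vL(\hat{\vbeta}_n-\vbeta)=\tfrac{2}{\sqrt n}\textstyle\sum_i(2I_i-1)\varepsilon_i+\tfrac{2}{\sqrt n}D_n(h)+o_P(1).
\]
The remainder terms involve $(\hat{\vGamma}-\vGamma^*)\cdot n^{-1/2}D_n(\vW)$ and $D_n\cdot n^{-1/2}\cdot(\text{means})$. Both are $o_P(1)\cdot O_P(1)$, using Theorem \ref{theorem: design property} for $D_n(\vW)$ and the fact that the overall imbalance is $O_P(\sqrt n)$.

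\textbf{Joint limit.} Since $\varepsilon_i$ is independent of $\{I_i,\vZ_i\}$ and $(2I_i-1)^2=1$, a conditional martingale CLT gives
\[
n^{-1/2}\textstyle\sum_i(2I_i-1)\varepsilon_i\mid\{I_i,\vZ_i\}\rightsquigarrow\sigma_\varepsilon\xi,
\]
and this limit is independent of the design. Theorem \ref{theorem: design property} applied to $f=h$ gives $n^{-1/2}D_n^{(r)}(h)\rightsquigarrow\sigma_h^{(r)}\xi$. A characteristic-function argument, conditioning on the design $\sigma$-field, shows the two limits are independent normals. Hence the limit is $2\tau^{(r,a)}\xi$ with $(\tau^{(r,a)})^2=\sigma_\varepsilon^2+(\sigma_h^{(r)})^2$. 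This identifies $(\sigma^{(r,a)})^2$ as, respectively, $E h^2$ for CR, $E\{\var[h\mid S]\}$ for DCAR, and $E h^2+2\piL(\zeta_h)$ for CCAR. Under DCAR, $h$ is strongly balanced whenever $h$ is a function of $S$.

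\textbf{Final claim.} Under CCAR with $a=\aZ$, Condition \ref{assume: dgp} makes each $Z_j$ and the overall count strongly balanced. Then $D_n(h)=D_n(g)-E g\,D_n-\vGamma^{*\top}D_n(\vW)+\vGamma^{*\top}E\vW\,D_n=o_P(\sqrt n)$ whenever $g$ is strongly balanced. So $\sigma_h=0$ and $\tau=\sigma_\varepsilon$.

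\textbf{Main obstacle.} I expect the hardest step to be the joint convergence under CCAR, where $D_n(h)$ is a functional of a Markov chain. My first attempt will be to show that the $\varepsilon$-sum converges conditionally on the entire design path, almost surely along subsequences, which is enough for independence. A second, more routine difficulty is controlling the $\hat{\vGamma}$ remainder uniformly across designs.
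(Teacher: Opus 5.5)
Your proposal is correct. It arrives at exactly the paper's key expansion \eqref{eq:Rn1Rn2}, namely $\sqrt n\,\vL(\hat\vbeta_n-\vbeta)=2n^{-1/2}\sum_i(2I_i-1)\{h(\vZ_i)+\varepsilon_i\}+o_P(1)$ with $h$ the centered projection residual. You reach it by a different route at both stages, and the differences are worth noting.

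\textbf{The expansion.} The paper writes $\sqrt n\,\vL(\hat\vbeta_n-\vbeta)=R_{n,1}+R_{n,2}+R_{n,3}$ around the limit $\vOmega^{(a)}$ of $n^{-1}\vG^\top\vG$ and uses $\vL(\vOmega^{(a)})^{-1}=(2,-2,\vZero^\top)$. It then has to carry the nonzero offset $\vv^{(a)}$, which is why $\hat\theta_1,\hat\theta_0$ are individually inconsistent. The resulting $\pm 4v_1^{(a)}D_n/\sqrt n$ terms cancel only after the partitioned-inverse computation of Lemma \ref{lemma:Rn1 expansion}. Your Frisch--Waugh--Lovell form works directly with the contrast, so constants cancel exactly. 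It also weights only $1$, $\vW$ and $g$ by the treatment indicators, while the cross-products $\vW\vW^\top$ and $\vW e$ enter unweighted. Hence Theorem \ref{theorem: design property} plus the ordinary law of large numbers suffice under second moments alone.

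\textbf{The limit.} The paper proves a CLT design by design, with $\varepsilon_i$ folded into the summands. For DCAR it uses a conditional CLT given the strata; for CCAR it reruns the Poisson-equation martingale CLT (Lemma \ref{lemma:find var}). As a result it needs no separate independence step. You instead split off the noise and use Theorem \ref{theorem: design property} as a black box for $D_n(h)$. This makes your argument design-agnostic: it applies to any design satisfying a marginal CLT for $n^{-1/2}D_n(f)$. The paper's matrix route, in return, is reused for Theorem \ref{theorem:var limit} and Lemma \ref{lemma:inconsistency}.

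\textbf{Your flagged obstacle is easy.} The joint-convergence step is straightforward, and for all three designs at once. Since $\varepsilon$ is independent of the covariates and of all randomization, conditional on the whole design path the characteristic function of $B_n=n^{-1/2}\sum_i(2I_i-1)\varepsilon_i$ is $\prod_i\varphi_\varepsilon\{(2I_i-1)t/\sqrt n\}$. The expansion $\varphi_\varepsilon(u)=1-\sigma_\varepsilon^2u^2/2+o(u^2)$ gives convergence to $e^{-\sigma_\varepsilon^2t^2/2}$ uniformly over all sign patterns. Hence, with $A_n=n^{-1/2}D_n(h)$,
\[
E\bigl[e^{isA_n+itB_n}\bigr]=E\bigl[e^{isA_n}\bigr]\{e^{-\sigma_\varepsilon^2t^2/2}+o(1)\},
\]
and no subsequence or Markov-chain argument is needed.

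\textbf{Bookkeeping.} Equations \eqref{eq:varACC} and \eqref{eq:varsACC} are stated in terms of $g$ and $g_{d_S}$, not your $h$. To match them, use Condition \ref{assume: dgp}:
\begin{itemize}
\item for $a=\aZ$, $D_n(h)-D_n(g)=o_P(\sqrt n)$;
\item for $a=\aS$, $h-g_{d_S}$ is constant, so $D_n(h)-D_n(g_{d_S})=o_P(\sqrt n)$.
\end{itemize}
Therefore the limiting variances coincide. This is the same expansion you already use for the final claim.
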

\begin{proof}
    See Appendix \ref{app:proof inference}.
\end{proof} 

The limiting variance $\sigma^{(r,a)}$ is the variability introduced by fitting a misspecified working model to $Y_i'= g(\vZ_i)+\varepsilon_i$ where $Y_i' = Y_i-\theta_1I_i-\theta_0(1-I_i)$. Different $(r,a)$ combinations result in different asymptotic variances. The DCAR procedure always achieves higher precision than CR in terms of lower asymptotic variance regardless of the choice of $a$ (see Equations \eqref{eq:varCR}-\eqref{eq:varsACC} in Appendix \ref{app:proof inference}). Moreover, when $a=\aS$ the limiting variance under DCAR is invariant to the number of strata $d_S$ included in the analysis. Due to the association with the invariant distribution $\pi_{\Lambda}$ defined in Theorem \ref{theorem: design property}, $\sigma^{(\tCC,\aZ)}$, $\sigma^{(\tCC,\aS)}$ may not admit closed forms (see Equations \eqref{eq:varACC}-\eqref{eq:varsACC} in Appendix \ref{app:proof inference}) but may also be $0$ when $g$ is strongly balanced by the design. Previous discussion motivate the definition of optimal precision of treatment effect estimators.
\begin{definition}[Optimal precision setup]\label{def:optimal}
    The treatment effect estimation achieves optimal precision if $\sigma^{(r,a)}=0$ for some $(r,a)$.    
\end{definition}
In this ideal case, the design and analysis strategy does not introduce any extra variance to the estimator, leaving only irreducible error variance $\sigma_{\varepsilon}^2$ and is therefore called \textbf{optimal precision setup}. The optimal precision setup can be attained in two ways: either fitting the correct model or controlling the covariate imbalances introduced by misspecification. Specifically, if a correct model $E(Y_i\mid I_i,\vZ_i) = \theta_1I_i+\theta_0(1-I_i)+g(\vZ_i)$ is fitted, then $\sigma^{(r,a)}\equiv 0$ and the optimal precision will always be attained. Alternatively, if the model misspecification terms can be controlled by the randomization and analysis strategy, in the sense that $g$ is strongly balanced by the design, then the optimal precision can be reached. Such approach allows the model to be arbitrarily misspecified, i.e., even an intercept-only model $E(Y_i\mid I_i) = \theta_1 I_i+\theta_0(1-I_i)$ will attain the optimal precision setup. However, both approaches are difficult to accomplish in practice since $g$ is usually unknown.

We now provide three different sufficient conditions for the attainment of optimal precision setup. Consider three different design/analysis strategies: (I) DCAR is used with SFE model and the function $g$ is strongly balanced. (II) CCAR is used with ANCOVA model and the function $g$ is strongly balanced. (III) Function $g$ is known and $E(Y_i\mid I_i,\vZ_i) = \theta_1I_i-\theta_0(1-I_i)-g(\vZ_i)$ is fitted as the working model.
\begin{prop}[Attainment of optimal precision setup]\label{prop:opt precision}
    Under Assumptions of Theorem \ref{theorem:combined}, if one of (I), (II), or (III) holds, then the proposed strategy achieves the optimal precision setup in the sense of Definition \ref{def:optimal}.
\end{prop}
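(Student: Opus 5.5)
The plan is to work directly from the decomposition behind Theorem \ref{theorem:combined}. Write $Y_i' = Y_i-\theta_1I_i-\theta_0(1-I_i) = g(\vZ_i)+\varepsilon_i$. By the Frisch--Waugh--Lovell theorem, the contrast $\vL(\hat{\vbeta}_n^{(r,a)}-\vbeta^{(a)})$ can be written as a linear functional of the residuals of $g(\vZ_i)+\varepsilon_i$ after projecting on the fitted covariates. Up to $o_P(n^{-1/2})$ terms, this gives
\[
\sqrt{n}\,\vL(\hat{\vbeta}_n^{(r,a)}-\vbeta^{(a)}) = \frac{2}{\sqrt n}\sum_{i=1}^n (2I_i-1)\{\varepsilon_i + g(\vZ_i) - \vGamma_a^{*\top}\vW_i\} + o_P(1).
\]
Here $\vW_i$ is the fitted covariate vector ($\vZ_{i,1:d_Z}$ or the strata indicators) and $\vGamma_a^*$ is the probability limit of the fitted covariate coefficient. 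To get this, I will use that $n^{-1}\sum I_i\to 1/2$ (from $D_n=o_P(\sqrt n)$ in Condition \ref{assume: dgp}) and that the within-arm sample means of $\vW$ are consistent. The $\varepsilon$ part contributes $\sigma_\varepsilon^2$. So $(\sigma^{(r,a)})^2$ is the limiting variance of $n^{-1/2}D_n^{(r)}(g-\vGamma_a^{*\top}\vW)$. By linearity, $D_n^{(r)}(g-\vGamma^{*\top}\vW)=D_n^{(r)}(g)-\vGamma^{*\top}D_n^{(r)}(\vW)$, so each case reduces to showing that both pieces are $o_P(\sqrt n)$.

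Case (III): the working model is correct, so the residual function is exactly $\varepsilon_i$. After the usual centering, the fitted coefficient on $g$ converges to $1$ with no misspecification term. The contrast is then $2n^{-1/2}\sum(2I_i-1)\varepsilon_i+o_P(1)$, because $\varepsilon$ is independent of the design, so $\sigma^{(r,a)}=0$ for every $r$ including CR.

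Case (I): under DCAR with SFE, Condition \ref{assume: dgp} gives that every $\stind{s}$ is strongly balanced, so $D_n^{(\tDC)}(\vW)=o_P(\sqrt n)$. By hypothesis $D_n^{(\tDC)}(g)=o_P(\sqrt n)$ as well. Since $\vGamma^*_{\aS}$ is a fixed finite vector, the whole misspecification term is $o_P(\sqrt n)$ and $\sigma^{(\tDC,\aS)}=0$.

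Case (II): under CCAR with ANCOVA, Condition \ref{assume: dgp} gives strong balance of each $z_j$, hence of $\vZ_{1:d_Z}$. Together with strong balance of $g$ this gives $\sigma^{(\tCC,\aZ)}=0$, which is exactly the last assertion of Theorem \ref{theorem:combined}. That assertion can simply be cited.

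The main obstacle is justifying the displayed linearization uniformly across designs. Assignments are dependent under CAR, so the cross moments $n^{-1}\sum I_i\vW_i$ cannot be handled by plain i.i.d. arguments. I would show they converge to $\tfrac12E\vW$ by writing $I_i=\tfrac12+\tfrac12(2I_i-1)$ and bounding $n^{-1}D_n(\vW)$. This ratio is $o_P(1)$ by Theorem \ref{theorem: design property}, since that theorem gives $D_n=O_P(\sqrt n)$ for square-integrable functions. Consistency of the coefficient limit $\vGamma^*$ and of the variance estimator then follows, and only the fact that $\vGamma^*$ is finite matters, not its exact value.
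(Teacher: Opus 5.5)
Your proposal is correct and follows essentially the same route as the paper: linearize the contrast as $\frac{2}{\sqrt n}\sum_{i=1}^n(2I_i-1)\{\text{working-model residual}\}+o_P(1)$ (the paper's expansion \eqref{eq:Rn1Rn2}), note that in (III) the residual is pure noise, and in (I) and (II) remove the misspecification part by combining strong balance of $g$ with Condition \ref{assume: dgp} (for the strata indicators and for the $z_j$, respectively). The differences are cosmetic: you obtain the linearization via Frisch--Waugh--Lovell and control $n^{-1}\sum_i I_i\vW_i$ through tightness from Theorem \ref{theorem: design property}, whereas the paper uses its $R_{n,1}+R_{n,2}+R_{n,3}$ decomposition and Lemma \ref{lemma:wlln for car}; your uncentered residual relies on $D_n=o_P(\sqrt n)$, which fails under CR, but this is harmless because CR only enters through case (III), where no $g$ term remains.
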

\begin{proof}
    See Appendix \ref{app:proof inference}.
\end{proof}
The sufficient conditions (I)--(II) advocate designs that balance over a large class of functions, so that the true association $g$ has a larger chance to be in the span of $\psi$ induced by design $r$. Despite the heterogeneity introduced by different randomization designs, we show in Lemma \ref{lemma:inconsistency} that, owing to model misspecification, $\hat{\theta}_1$ and $\hat{\theta}_0$ are in general inconsistent estimators for $\theta_1$ and $\theta_0$. Nevertheless, the treatment effect estimator $\hat{\theta}_1-\hat{\theta}_0$ is still consistent. The covariates effect estimator $\hat{\vGamma}$ (resp. $\hat{\veta}$) is consistent for estimating $\vGamma$ (resp. $\veta$), which could be viewed as the best linear projection of $g(\vZ_i)+\varepsilon_i$ onto the column span of the respective design matrices. Since the covariate effects are not of primary interest, we do not discuss their estimation properties in further depth in this paper.

\begin{remark}\label{remark:plr error}
    In the partially linear model literature \cite{hardle2000partially}, incorrectly estimating the non-parametric component $g$ may lead to asymptotic bias. In our setting, however, due to the balanced nature between treatment and control groups ($D_n^{(r)}(g)= O_P(\sqrt{n})$ by Theorem \ref{theorem: design property}), the asymptotic bias cancels out. To eliminate the asymptotic variance introduced by model misspecification, it must hold that $g$ is strongly balanced, i.e., $D_n^{(r)}(g)= o_P(\sqrt{n})$.
\end{remark}

\subsection{Testing hypothesis}\label{subsec:hypothesis test}
The distribution of the test statistic depends on the variance estimator, which is studied in the next result.
\begin{theorem}\label{theorem:var limit}
    For any fixed design and modeling strategy $(r,a)$ under Assumption \ref{assume:independent copy} and Condition \ref{assume: dgp}, we have $$n(\hat{\sigma}_{\varepsilon}^{(r,a)})^2 \vL[(\vG^{(r,a)})^\top\vG^{(r,a)}]^{-1}\vL^\top = 4(\tau^{(\tCR,a)})^2+o_P(1),$$
    where $\tau^{(\tCR,a)}$ is defined in Theorem \ref{theorem:combined}.
\end{theorem}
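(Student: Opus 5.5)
We split the statistic into the scaled inverse Gram entry $n\vL[(\vG^{(r,a)})^\top\vG^{(r,a)}]^{-1}\vL^\top$ and the residual variance $(\hat{\sigma}_{\varepsilon}^{(r,a)})^2$, and find the probability limit of each separately. The point is that both limits are design-free: they depend only on the marginal law of $\vZ$ and on $n^{-1}\sum_i I_i\to 1/2$, which every $r$ delivers. Under Condition \ref{assume: dgp} we even have $D_n=o_P(\sqrt n)$, and for CR this follows from the law of large numbers. The product therefore equals what one obtains under CR. Under CR the estimator is an ordinary i.i.d. least-squares estimator, so the product must equal $4(\tau^{(\tCR,a)})^2$. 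This identification uses the fact, from Theorem \ref{theorem:combined}, that $(\tau^{(\tCR,a)})^2$ is the CR sandwich variance, which coincides with the homoscedastic variance because the treatment column is independent of the covariates.

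\textbf{Gram matrix.} Write $\vW_i$ for the fitted covariates, that is, $(Z_{i,1},\ldots,Z_{i,d_Z})$ for $a=\aZ$ or $(\stind{1},\ldots,\stind{d_S})(\vZ_i)$ for $a=\aS$. The blocks of $n^{-1}\vG^\top\vG$ are $n^{-1}\sum I_i$, $n^{-1}\sum(1-I_i)$, $n^{-1}\sum I_i\vW_i$, $n^{-1}\sum(1-I_i)\vW_i$ and $n^{-1}\sum\vW_i\vW_i^\top$. Using $I_i=\{1+(2I_i-1)\}/2$, each term with $I_i$ equals half the full sum plus $\tfrac12 n^{-1}D_n^{(r)}(\cdot)$. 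By Theorem \ref{theorem: design property}, $D_n^{(r)}(f)=O_P(\sqrt n)$ for every square-integrable $f$, so these corrections are $O_P(n^{-1/2})$. The LLN then gives $n^{-1}\vG^\top\vG\overset{P}{\to}\vM$. Here $\vM$ has diagonal treatment block $\tfrac12\bI_2$, off-diagonal blocks $\tfrac12(E\vW,E\vW)^\top$, and covariate block $E\vW\vW^\top$. The matrix $\vM$ is nonsingular: Assumption \ref{assume:independent copy}(2) covers $a=\aZ$, and for $a=\aS$ we need $q_s>0$ with $d_S\le m-1$. A Schur complement computation then gives $\vL\vM^{-1}\vL^\top=4$. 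The reason is that the contrast $\theta_1-\theta_0$ is orthogonal to the covariate block in the limit, since both arms share the same limiting covariate moments.

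\textbf{Residual variance.} Write $Y_i=\vG_i^\top\vbeta^{*}+e_i$, where $\vbeta^{*}$ is the pseudo-true parameter: $\theta_1,\theta_0$ shifted by the intercept of the projection of $g$, together with the projection coefficients $\vGamma$ or $\veta$ (this is Lemma \ref{lemma:inconsistency}). The residual is $e_i=g(\vZ_i)-\text{proj}(g\mid 1,\vW)(\vZ_i)+\varepsilon_i$. Then $\hat\sigma^2=n^{-1}\sum e_i^2-n^{-1}e^\top\vG(\vG^\top\vG)^{-1}\vG^\top e+O(n^{-1})$. The cross term $n^{-1}\vG^\top e$ is $o_P(1)$, again by the $D_n^{(r)}$ decomposition, the orthogonality of the projection, and the independence of $\varepsilon$ from the design. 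Hence $\hat\sigma^2\overset{P}{\to}\sigma_\varepsilon^2+\var\{g(\vZ)\mid\text{proj onto }1,\vW\}$, with the residual variance computed under the marginal law of $\vZ$. We then check against \eqref{eq:varCR} in Appendix \ref{app:proof inference} that this residual variance is exactly $(\sigma^{(\tCR,a)})^2$. Combining the two limits with Slutsky's theorem gives the claim.

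\textbf{Main obstacle.} The delicate step is the cross terms $n^{-1}\sum I_i\, e_i$ and $n^{-1}\sum I_i\vW_ie_i$. Under CAR the $I_i$ are dependent on past covariates, so the LLN cannot be applied directly. The plan is to rewrite them as $\tfrac12 n^{-1}\sum e_i+\tfrac12 n^{-1}D_n^{(r)}(h)$ with $h=g-\text{proj}$ or $h=\vW(g-\text{proj})$, and to handle the $\varepsilon$ parts by a martingale argument conditional on $(\vZ,I)$. For the $h$ parts we invoke Theorem \ref{theorem: design property}, and this requires $E h^2<\infty$. That moment condition is the one place where we need a second-moment assumption on $g$ and on $\vW g$; for the strata indicators it is immediate, and for $\vZ g$ we would assume it alongside $E g^2<\infty$.
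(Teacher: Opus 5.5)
Your proposal is correct and shares the paper's skeleton. Both arguments show $n\vL[(\vG^{(r,a)})^\top\vG^{(r,a)}]^{-1}\vL^\top\overset{P}{\to}\vL(\vOmega^{(a)})^{-1}\vL^\top=4$, find a design-free probability limit for $(\hat{\sigma}_{\varepsilon}^{(r,a)})^2$, and finish with Slutsky.

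The difference is where the residual is centered. The paper expands $(\hat{\sigma}_{\varepsilon}^{(r,a)})^2=T_1+T_2-2T_3$ around the true $\vbeta^{(a)}=(\theta_1,\theta_0,(\bbetaW^{(a)})^\top)^\top$. The corresponding residual $g(\vZ)-(\vW^{(a)})^\top\bbetaW^{(a)}+\varepsilon$ has mean $2b^{(a)}\neq 0$. The paper therefore needs Lemma \ref{lemma:inconsistency}, that is, the limit $\hat{\vbeta}_n^{(r,a)}-\vbeta^{(a)}\overset{P}{\to}\bm\mu^{(a)}$ and the identity $\vv^{(a)}=\vOmega^{(a)}\bm\mu^{(a)}$. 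These give $T_2-2T_3\overset{P}{\to}-4(b^{(a)})^2$, which converts the second moment in $T_1$ into a variance.

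You instead absorb the shift $2b^{(a)}$ into $(\theta_1,\theta_0)$, so your residual is mean zero and orthogonal to $(1,\vW^{(a)})$. Then $n^{-1}e^\top e$ is an ordinary i.i.d.\ average that does not involve the $I_i$ at all. The projection term vanishes because $n^{-1}\vG^\top e=o_P(1)$ and $(n^{-1}\vG^\top\vG)^{-1}=O_P(1)$, and the explicit inconsistency limit is never needed. Both routes rest on the same design input, namely $n^{-1}\sum_i(2I_i-1)f(\vZ_i)\overset{P}{\to}0$ for square-integrable $f$. You take this from Theorem \ref{theorem: design property}, while the paper takes it from Lemma \ref{lemma:wlln for car}. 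Your route is slightly more economical.

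One correction to your ``main obstacle'': $\vG_ie_i=(I_ie_i,(1-I_i)e_i,(\vW_i^{(a)})^\top e_i)^\top$, so no $I_i\vW_i^{(a)}e_i$ term appears in $n^{-1}\vG^\top e$. The covariate block is an i.i.d.\ average with mean $E[\vW^{(a)}e]=\vZero$ by the normal equations. The WLLN therefore applies given $E\|\vW^{(a)}\|^2<\infty$ and $E[g^2(\vZ)]<\infty$ (via Cauchy--Schwarz), and the extra moment assumption on $\vZ g$ you propose is unnecessary. Only the scalar sum $n^{-1}\sum_i(2I_i-1)e_i$ involves the design.

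Also, your $\var\{g(\vZ)\mid\text{proj onto }1,\vW\}$ should be read as the variance of the projection residual, $\var(g(\vZ)-(\vW^{(a)})^\top\bbetaW^{(a)})$. This is exactly $(\sigma^{(\tCR,a)})^2$ in \eqref{eq:varCR}.
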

\begin{proof}
    See Appendix \ref{app:proof inference}.
\end{proof}

The probability limit of $\hat{\sigma}_{\varepsilon}^{(r,a)}$ is does not depend on $r$. In fact, the variance estimator is consistent for estimating the limiting variance under CR and tends to overestimate the limiting variance under DCAR. For most cases under CCAR, the variance is overestimated as well. However, if the covariates are as introduced in Example~\ref{ex:gaussian-inflation} and the true model \eqref{eq:true model} admits $g=f^{\ast}$ with $r=\tCC$,
$a=\aZ$, then combining Equations \eqref{eq:varCR},\eqref{eq:varACC} yields
$(\sigma^{(\tCC,\aZ)})^2 - (\sigma^{(\tCR,\aZ)})^2
\ge 2{\pi_{\Lambda}(\zeta_{f^{\ast}})}= (\pi/2-1)^{2}>0,$ where the first equality is Proposition~\ref{prop:scalar-ccar} evaluated at $g=f^{\ast}$. Hence, the variance will be underestimated in this case. The asymptotic distributions of the tests under different randomization schemes is established in the next corollary.

\begin{corollary}\label{cor: test under Rand}
     Suppose Assumption \ref{assume:independent copy} and Condition \ref{assume: dgp} hold. Under $H_0:\theta_1-\theta_0=0$, we have 
    $T_n^{(r,a)}\rightsquigarrow (\tau^{(r,a)}/\tau^{(\tCR,a)})\xi.$
    Furthermore, under $H_1:\theta_1-\theta_0=\delta/\sqrt{n}$ for some $\delta\neq 0$, we have
    $T_n^{(r,a)}\rightsquigarrow \delta/(2\tau^{(\tCR,a)})+ (\tau^{(r,a)}/\tau^{(\tCR,a)})\xi,$
    where $\tau^{(r,a)}$ are defined in Theorem \ref{theorem:combined} for $a\in \{\aZ,\aS\}$, $r\in\{\tCR,\tDC,\tCC\}$.
\end{corollary}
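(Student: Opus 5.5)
The corollary follows from Theorem \ref{theorem:combined}, Theorem \ref{theorem:var limit} and Slutsky's lemma. I will write the statistic \eqref{eq:test statistic} as a ratio of a $\sqrt{n}$-scaled numerator and a $\sqrt{n}$-scaled standard error:
$$
T_n^{(r,a)}=\frac{\sqrt{n}\,\vL\hat{\vbeta}_n^{(r,a)}}{\left\{n(\hat{\sigma}_{\varepsilon}^{(r,a)})^2\vL[(\vG^{(r,a)})^\top\vG^{(r,a)}]^{-1}\vL^\top\right\}^{1/2}}.
$$
By Theorem \ref{theorem:var limit}, the denominator converges in probability to $2\tau^{(\tCR,a)}$. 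This limit is strictly positive because $(\tau^{(\tCR,a)})^2=\sigma_\varepsilon^2+(\sigma^{(\tCR,a)})^2\ge\sigma_\varepsilon^2>0$. I am assuming here the nondegenerate error variance implicit in the model \eqref{eq:true model}.

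For the numerator, I decompose $\sqrt{n}\vL\hat{\vbeta}_n^{(r,a)}=\sqrt{n}\vL(\hat{\vbeta}_n^{(r,a)}-\vbeta^{(a)})+\sqrt{n}(\theta_1-\theta_0)$, using $\vL\vbeta^{(a)}=\theta_1-\theta_0$.

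\textbf{Null hypothesis.} Under $H_0$ the second term vanishes. Theorem \ref{theorem:combined} gives that the first term converges in distribution to $2\tau^{(r,a)}\xi$. Slutsky's lemma then yields $T_n^{(r,a)}\rightsquigarrow 2\tau^{(r,a)}\xi/(2\tau^{(\tCR,a)})=(\tau^{(r,a)}/\tau^{(\tCR,a)})\xi$.

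\textbf{Local alternatives.} Under $\theta_1-\theta_0=\delta/\sqrt{n}$ the second term equals $\delta$ exactly. So it suffices that $\sqrt{n}\vL(\hat{\vbeta}_n^{(r,a)}-\vbeta^{(a)})\rightsquigarrow2\tau^{(r,a)}\xi$ still holds along the sequence of local parameters. Slutsky then gives the limit $\{\delta+2\tau^{(r,a)}\xi\}/(2\tau^{(\tCR,a)})$, which is the claimed form.

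\textbf{Main obstacle.} The issue is that the parameter now depends on $n$, and we must check that Theorem \ref{theorem:combined} and Theorem \ref{theorem:var limit} apply uniformly along the local sequence. I will handle this through the linear structure of least squares. Because $(I_i,1-I_i)$ are columns of the design matrix, $\hat{\vbeta}_n^{(r,a)}-\vbeta^{(a)}=[\vG^\top\vG]^{-1}\vG^\top\vY'$, where $Y'_i=g(\vZ_i)+\varepsilon_i$ does not involve $\theta_1,\theta_0$. The design itself depends only on the $\vZ_i$ and the randomization, not on the responses. Hence the distribution of $\hat{\vbeta}_n-\vbeta$ is identical under every $(\theta_1,\theta_0)$, and the numerator limit carries over verbatim.

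Similarly, the residual vector $\vY-\vG\hat{\vbeta}_n=(\bI-\vG(\vG^\top\vG)^{-1}\vG^\top)\vY'$ is invariant to $(\theta_1,\theta_0)$. Therefore $\hat\sigma_\varepsilon^{(r,a)}$ and the whole denominator have the same law under $H_0$ and $H_1$, and Theorem \ref{theorem:var limit} applies unchanged. The same invariance means the conclusion holds under both null and local alternatives, and no further uniformity argument is needed.
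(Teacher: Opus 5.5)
Your proposal is correct and follows the paper's route: the paper's proof is simply Slutsky's lemma combining Theorem \ref{theorem:combined} (numerator) with Theorem \ref{theorem:var limit} (denominator). Your invariance argument usefully makes explicit why both theorems carry over to the local alternatives, a step the paper leaves implicit. One small precision: $\hat{\vbeta}_n^{(r,a)}-\vbeta^{(a)}$ is $[(\vG^{(r,a)})^\top\vG^{(r,a)}]^{-1}(\vG^{(r,a)})^\top\vY'$ minus the $\theta$-free covariate block of $\vbeta^{(a)}$, and $\vL$ annihilates that block, so your conclusion is unaffected.
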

\begin{proof}
    The proof follows from results of Theorem \ref{theorem:combined} and Theorem \ref{theorem:var limit}.
\end{proof}

The asymptotic variance of $T_n^{(r,a)}$ under $H_0$ comprises two components. The denominators differ only based on the analysis model employed. The numerators are influenced by both randomization design and model. To evaluate type I error, the test $T_n$ is said to be (asymptotically) conservative at level $\alpha$ if under $H_0$
\begin{equation}\label{eq:test validity}
    \lim_{n\to\infty}P[|T_n|>\Phi^{-1}(1-\alpha/2)]\leq \alpha_0<\alpha.
\end{equation}
If $\alpha_0=\alpha$, then the test is said to be (asymptotically) valid. Due to Corollary \ref{cor: test under Rand}, tests under CR remain valid with controlled type I error rates. Tests under DCAR are conservative due to the overestimation of asymptotic variances. This guaranteed conservativeness is attributed to the variance reduction from balanced covariate distributions across treatment groups. Tests under CCAR are most conservative under the optimal precision setup (II), see Proposition \ref{prop:opt precision}. However, they may also introduce type I error inflation due to possible underestimation of the asymptotic variance, see, e.g., Examples \ref{ex:gaussian-inflation}, \ref{example:instability} and discussion after Theorem \ref{theorem:var limit}.  We show such inflation in Appendix \ref{app:additional sim} both theoretically and by numerical simulation. The forthcoming theorem establishes the asymptotic relative efficiency (ARE) under different scenarios.

\begin{theorem}\label{theorem:compare tests}
    Suppose Assumption \ref{assume:independent copy} and Condition \ref{assume: dgp} hold. The Pitman's ARE between two tests is given by
    ${\are}(T_n^{(r,a)},T_n^{(r',a')}) =(\tau^{(r',a')})^2/(\tau^{(r,a)})^2,$ where $\tau^{(r,a)},\tau^{(r',a')}$ are defined as in Theorem \ref{theorem:combined} for $a\in\{\aZ,\aS\},r\in\{\tCR,\tDC,\tCC\}$. Furthermore,
    ${\are}(T_n^{(\tCR,\aS)},T_n^{(\tDC,\aS)})\leq 1$
    with equality holding if, in particular, $d_S=m-1$, i.e., all strata are fitted.
\end{theorem}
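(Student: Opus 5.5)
The plan has two parts. First I derive the Pitman ARE formula from Corollary \ref{cor: test under Rand}. Then I compare the explicit limiting variances under CR and DCAR with the SFE working model.

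\textbf{ARE formula.} Consider local alternatives $\theta_1-\theta_0=\delta/\sqrt{n}$. By Corollary \ref{cor: test under Rand}, $T_n^{(r,a)}$ converges to $\delta/(2\tau^{(\tCR,a)})+(\tau^{(r,a)}/\tau^{(\tCR,a)})\xi$. The null limit is $(\tau^{(r,a)}/\tau^{(\tCR,a)})\xi$, so the standardized slope (noncentrality per unit null standard deviation) is $\delta/(2\tau^{(r,a)})$.

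In the Pitman framework, two tests with slopes $c_1\delta$ and $c_2\delta$ at the $\sqrt n$ rate need sample sizes in ratio $c_2^2/c_1^2$ to attain the same limiting power. The required sample size scales as $(\tau^{(r,a)})^2$, so $\are(T_n^{(r,a)},T_n^{(r',a')})=(\tau^{(r',a')})^2/(\tau^{(r,a)})^2$.

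Some conservative tests use the $\Phi^{-1}(1-\alpha/2)$ cutoff and need not have exact size $\alpha$. To handle this, I will state the efficiency on the scale of the estimator or of the size-corrected test. This is the standard Pitman convention, and Theorem \ref{theorem:combined} directly gives $\sqrt n\,\vL(\hat\vbeta-\vbeta)\rightsquigarrow 2\tau\xi$.

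\textbf{Comparison for SFE.} It suffices to show $(\tau^{(\tCR,\aS)})^2\ge(\tau^{(\tDC,\aS)})^2$, since then the ARE is at most $1$. Since $\tau^2=\sigma_\varepsilon^2+\sigma^2$, this reduces to $(\sigma^{(\tCR,\aS)})^2\ge(\sigma^{(\tDC,\aS)})^2$.

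I will unpack the proof of Theorem \ref{theorem:combined}. The LS treatment contrast with SFE is asymptotically $\tfrac{2}{n}D_n^{(r)}(h)$ plus a term involving the overall imbalance $D_n$. Here $h=g-\sum_{s\le d_S}\eta_s\stind{s}-c$ is the residual of $g$ after projecting onto the constant and the fitted strata indicators, with projection coefficients consistent by Lemma \ref{lemma:inconsistency}. Under both CR and DCAR the $D_n$ term is negligible or absorbed: under CR, $h$ is centered and uncorrelated with the intercept; under DCAR, $D_n=o_P(\sqrt n)$ by Condition \ref{assume: dgp}.

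By Theorem \ref{theorem: design property}, $(\sigma^{(\tCR,\aS)})^2=E[h^2(\vZ)]$ and $(\sigma^{(\tDC,\aS)})^2=E\{\var[h(\vZ)\mid S]\}$. The law of total variance gives $E[h^2]=E\{\var(h\mid S)\}+E\{(E[h\mid S])^2\}$, using that $h$ is mean zero. The inequality follows.

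\textbf{Equality case.} If $d_S=m-1$, the constant together with the indicators $\stind{1},\ldots,\stind{m-1}$ spans all functions of $S$. Hence $h=g-E[g\mid S]$ and $E[h\mid S]=0$, so the two variances coincide and the ARE is $1$.

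The invariance of the DCAR variance in $d_S$ is also visible here. $\var(h\mid S)=\var(g\mid S)$ for any $d_S$, because $h$ and $g$ differ only by a function of $S$.

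\textbf{Main obstacle.} The hardest step is the bookkeeping: identifying the SFE limiting variances in \eqref{eq:varCR}--\eqref{eq:varsACC} exactly as $E[h^2]$ and $E\{\var(h\mid S)\}$ under each design. This requires checking that the intercept and strata-indicator imbalances, which are strongly balanced under DCAR, drop out. In particular, $D_n^{(\tDC)}(\stind{s})=o_P(\sqrt n)$ ensures that the coefficients $\eta_s$ do not contribute variance under DCAR. Under CR, the orthogonality of $h$ to $\stind{s}$ and to $1$ plays the same role. I would first verify the expansion $\vL(\hat\vbeta-\vbeta)=\tfrac{2}{n}\sum_i(2I_i-1)(h(\vZ_i)+\varepsilon_i)+o_P(n^{-1/2})$ via the Frisch–Waugh partialling of the strata columns.
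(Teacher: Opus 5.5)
Your proposal is correct and follows essentially the paper's route: the ARE formula comes from Corollary~\ref{cor: test under Rand} via the standard Pitman slope argument (the paper simply cites van der Vaart's Theorem 14.19). The inequality is the law of total variance applied to \eqref{eq:varCR}--\eqref{eq:varDC} with $a=\aS$, with equality at $d_S=m-1$ because the residual's conditional mean given $S$ is then constant. Your planned re-derivation of the LS expansion via Frisch--Waugh partialling is unnecessary, since Theorem~\ref{theorem:combined} already supplies those variance formulas.
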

\begin{proof}
    See Appendix \ref{app:proof inference}.
\end{proof}

\begin{remark}\label{remark:CR}
   Although, via Theorem \ref{theorem:compare tests}, covariate imbalance under CR can be addressed through full post-stratification, relying on such adjustment becomes increasingly unattractive when the number of strata is large relative to the sample size \citep{maratrix2013post}. In fact, empirical reviews have documented analyses that adjust for only a subset of the balancing covariates \citep{kahan2012reporting} and `no more than a few covariates should be included in the primary analysis' \citep{guidelineema}. Furthermore, some commonly used stratification factors, e.g., randomization cites, are seldomly adjusted in practice. 
\end{remark}

If ${\are}(T_n,T'_n)\in(1,\infty)$, then $T_n$ is more efficient. If ${\are}(T_n,T'_n)\in(0,1)$, then $T'_n$ is more efficient. In our case, the ARE is determined solely by the limiting variances ${\tau^{(r,a)}}$ introduced by model misspecification and achieving the highest statistical efficiency is equivalent to achieving optimal precision setup.

\section{Impact of discretization}\label{sec:impact discuss}
 In this section, the impact of discretization is investigated assuming different types of $g$ functions. Linear, quadratic and exponential functions are of most interest, as discussed in \cite{calabrese2001u}, \cite[Section 3.1]{sullivan2024categorisation}.  In practice, the function $g$ may exhibit complicated, non-linear relationships. We now present examples with more detailed discussion.
 \subsection{Known $g$ and additive}\label{subsec: additive functions}
 Suppose $g(\vz)=\sum_{j=1}^{{J}}a_jg_j(\vz), {J}>0$ is a known additive functional with coefficients $\va = (a_1,\ldots,a_{{J}})$ and components $g_j:\bR^p\to \bR$. Picking $\psi = (1,g_1,\ldots, g_{{J}})$ results in strongly balanced $g$ and thus the optimal precision, see Proposition \ref{prop:opt precision}. 
 
A linear $g$ is the case where $g(\vz) = \sum_{j=1}^{{J}}a_jz_j, {J}\leq p$ with $g_j(\vz)=z_j$. An optimal choice of feature mapping is $\psi(\vz) = (1,\vz^{\top})^{\top}$, which reduces to the CCAR procedure balancing covariate means \cite{li2019testing}. For quadratic functions $g(\vz) = \sum_{j=1}^{{J}}a_jz_j^2, {J}\leq p$, an optimal feature map is given in Example \ref{example: COV} which reduces to the COV procedure. Since the covariates are assumed continuous, balancing known $g$ would be considered a CCAR. Another special case different from above is the class of step functions $g(\vz) = \sum_{j=1}^{{J}}a_j\stind{j}(\vz), {J}\leq m-1$. An optimal feature map would be Equation \eqref{eq:HuHu feature map} and the best design is DCAR.

In summary, when $g$ is known and strongly balanced in the design with the correct strategy (I) or (II), the power will be maximized. This observation favors designs that can balance over a large family of functions, some of which are developed recently, such as the COV procedure for balancing higher moments in Example \ref{example: COV} or the kernel procedure \cite[Section 2.3]{ma2024new}. 

 \subsection{Unknown $g$}\label{subsec: other functions}
     In this example, we assume that $g$ is unknown and not strongly balanced, otherwise the discussion would be trivial and a model with intercept only achieves optimal precision \ref{prop:opt precision}.

    Under CR, due to Lemma \ref{lemma: compare TZ, TS under CR,D,C quadratic}, ANCOVA beats SFE if and only if the linear signal in $\vZ$ explains as much variance in $g(\vZ)$ as the between stratum means of $g$.  As a toy example, we compute the limiting variances under CR for a special case. Consider $Z\overset{d} = \xi$, $g(Z)= bZ^3$, the randomization is CR and $Z$ is discretized at $\{-c,c\}$ in the analysis. Then, $\gamma = E(b Z^4)/E(Z^2)=3b$, so for ANCOVA $(\sigma^{(\tCR,\aZ)}_1)^2= 6b^2.$ On the other hand, for SFE, some calculations yield
    $(\sigma^{(\tCR,\aS)}_{2})^2 = 15b^2-2\{1-{\Phi}(c)]\left[(2+c^2)\phi(c)/[1-\Phi(c)]\right\}^2b^2.$ 
    For $c=1$, we have $(\sigma^{(\tCR,\aS)}_{2})^2\approx 8.35b^2>6b^2$, so ANCOVA is preferable. A more thorough discussion on the interplay of model uncertainty introduced by misspecification can be found in, e.g.,  \cite{buja2019models}.

Under DCAR, the preference between ANCOVA and SFE is also not one-sided and it is governed by Lemma \ref{lemma: compare TZ, TS under CR,D,C quadratic}, in fact,
$$(\sigma_{d_Z}^{(\tDC,\aZ)})^2 - (\sigma_{d_S}^{(\tDC,\aS)})^2 =E\left[\var\left(\sum_{j=1}^{d_Z}{\gamma_j}{Z_j}\mid {S}\right)-2\cov\left(g(\vZ),\sum_{j=1}^{d_Z}{\gamma_j}{Z_j}\mid {S}\right)\right].$$
Consider three instructive cases: (1) If $g$ is $\sigma({S})$-measurable (which reduces to discussion of stepwise function in Subsection \ref{subsec: additive functions}), then $(\sigma_{d_Z}^{(\tDC,\aZ)})^2 - (\sigma_{d_S}^{(\tDC,\aS)})^2=E\{\var[\sum_j{\gamma_j}{Z_j}\mid {S}]\}\geq 0$ and thus SFE is preferred. (2) If $g$ is linear in $\vZ$ (which reduces to discussion of linear function in Subsection \ref{subsec: additive functions}), then $(\sigma_{d_Z}^{(\tDC,\aZ)})^2 - (\sigma_{d_S}^{(\tDC,\aS)})^2<0$ and thus ANCOVA is preferred. (3) If $E[\cov(g(\vZ),\sum_{j=1}^{d_Z}{\gamma_j}{Z_j}\mid {S})]$ is small or negative, for instance, the unconditional linear trend is positive while $g$ tends to decrease within many strata. Then, SFE still beats ANCOVA. This can be viewed as a within-stratum Simpson-Paradox pattern, we refer to this as subpopulation effect. For more numerical results on this, see Case 4 in Section \ref{sec:numerical}. 

In practice, $g$ is unknown and hard to balance strongly, also, the working models usually tend to be misspecified. In this case, CCAR's performance can be unstable and may lead to empirically inflated type I errors, see Subsection \ref{subsec:hypothesis test}. For this reason, we recommend using DCAR as the design strategy.

\subsection{A combined model}\label{subsec:combined}
From previous discussion, DCAR is preferred for unknown $g$. However, the working model selection depends heavily on our prior knowledge of the shape of $g$. Consider the following working model
\begin{equation}\label{eq:combined working model}
    E(Y_i\mid I_i,\vZ_i) = \vG_{i,d_Z,d_S}^{(r,\aCM)}\vbeta_{d_Z,d_S}^{(\aCM)},
\end{equation}
where superscript `CM' stands for combined and $\vZt=\vZ-E(\vZ\mid S)$ is as in Assumption~\ref{assume:independent copy}, 
\begin{align*}
    &\vG_{i,d_Z,d_S}^{(r,\aCM)} = (I_i^{(r)},1-I_i^{(r)},\stind{1}(\vZ_i),\ldots, \stind{d_S}(\vZ_i),{\widetilde{Z}_{i,1},\ldots, \widetilde{Z}_{i,d_Z}})^\top,\\
    &\vbeta_{d_Z,d_S}^{(\aCM)} = (\theta_1,\theta_0,\eta_1,\ldots,\eta_{d_S},\tilde{\gamma}_1,\ldots, \tilde{\gamma}_{d_Z})^\top.
\end{align*}
Then, similar to subsection \ref{subsec:setup}, $\hat{\vbeta}^{(r,\aCM)}_n$ is the usual OLS estimator and $T_{n,d_Z,d_S}^{(r,\aCM)}$ is the test statistic for testing treatment effect. Asymptotic results Theorem \ref{theorem:combined}-\ref{theorem:compare tests} hold with $\sigma^{(\tCR,\aCM)},\sigma^{(\tDC,\aCM)}$ defined in Equations \eqref{eq:varCR},\eqref{eq:varDC} in Appendix \ref{app:proof inference}. The asymptotic results will be omitted in the main paper for simplicity but proofs are given in the Appendix. 

Intuitively, the combined model captures between-stratum differences through the
stratum indicators and within-stratum smooth effects through ${\vZt}$. The next
proposition shows that this model attains the smallest asymptotic variance achievable by any linear
adjustment of this form.

\begin{prop}\label{prop:better model}
Under Assumptions of Theorem \ref{theorem:combined}. Fix $d_Z \le p$ and $d_S \le m-1$. Then, $\min_{\mathbf{b}\in\mathbb{R}^{d_Z}}E\bigl\{\var\bigl[g(\vZ)-\mathbf{b}^{\top}{\vZt}_{1:d_Z}\big| S\bigr]\bigr\},$ is attained uniquely at $\mathbf{b}=\bgamt$. Further,
\begin{align}
\bigl(\sigma^{(\tDC,\aS)}\bigr)^2-\bigl(\sigma^{(\tDC,{\aCM})}\bigr)^2
   &= \bgamt^{\top}{\SigZtd{d_Z}}\bgamt\geq 0, \label{eq:gain-S}\\
\bigl(\sigma^{(\tDC,\aZ)}\bigr)^2-\bigl(\sigma^{(\tDC,{\aCM})}\bigr)^2
   &= (\vGamma-\bgamt)^{\top}{\SigZtd{d_Z}}(\vGamma-\bgamt)\geq 0. \label{eq:gain-Z}
\end{align}
\end{prop}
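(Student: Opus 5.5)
We first identify the DCAR limiting variances, which the appendix formulas \eqref{eq:varDC} give in the conditional-variance form suggested by Theorem \ref{theorem: design property}. Under DCAR, Condition \ref{assume: dgp} makes every stratum indicator strongly balanced. Theorem \ref{theorem: design property} then shows that the residual imbalance of any function $h$ has limiting variance $E\{\var[h(\vZ)\mid S]\}$. Between-stratum components therefore contribute nothing, whichever strata are fitted. We would write the misspecification residuals for the three working models:
\begin{itemize}
\item $g-\vGamma^\top\vZ_{1:d_Z}$ for ANCOVA;
\item $g-\sum_s\eta_s\stind{s}$ for SFE;
\item $g-\sum_s\eta_s\stind{s}-\bgamt^\top\vZt_{1:d_Z}$ for the combined model.
\end{itemize}
This yields
\[
\bigl(\sigma^{(\tDC,\aZ)}\bigr)^2=E\{\var[g-\vGamma^\top\vZ_{1:d_Z}\mid S]\},\qquad
\bigl(\sigma^{(\tDC,\aS)}\bigr)^2=E\{\var[g\mid S]\},
\]
\[
\bigl(\sigma^{(\tDC,\aCM)}\bigr)^2=E\{\var[g-\bgamt^\top\vZt_{1:d_Z}\mid S]\}.
\]
If the appendix expressions are stated differently, the first step is to reduce them to this form, using that stratum indicators have zero conditional variance given $S$.

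Next we identify $\bgamt$. It is the probability limit of the OLS coefficient on $\vZt_{1:d_Z}$ in the combined model. We apply Frisch--Waugh--Lovell, partialling out the treatment indicators and stratum indicators; the treatment indicators are asymptotically orthogonal to covariates because $D_n(\cdot)=O_P(\sqrt n)$. Since $\vZt$ is already within-stratum centered, its population projection onto the stratum indicators is zero, although $d_S<m-1$ needs care with the baseline stratum. We therefore expect
\[
\bgamt=\SigZtd{d_Z}^{-1}E[\vZt_{1:d_Z}g(\vZ)]=\SigZtd{d_Z}^{-1}E\{\cov(\vZt_{1:d_Z},g\mid S)\}.
\]
The objective $Q(\mathbf b)=E\{\var[g-\mathbf b^\top\vZt_{1:d_Z}\mid S]\}$ expands as
\[
Q(\mathbf b)=E\var(g\mid S)-2\mathbf b^\top E\cov(\vZt_{1:d_Z},g\mid S)+\mathbf b^\top\SigZtd{d_Z}\mathbf b .
\]
This is strictly convex by Assumption \ref{assume:independent copy}(2), since a principal submatrix of a positive definite matrix is positive definite. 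Its unique minimizer is $\bgamt$ by the normal equations, and completing the square gives
\[
Q(\mathbf b)=Q(\bgamt)+(\mathbf b-\bgamt)^\top\SigZtd{d_Z}(\mathbf b-\bgamt).
\]

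Both displayed identities follow from this square. Setting $\mathbf b=\vZero$ gives \eqref{eq:gain-S}, since $Q(\vZero)=(\sigma^{(\tDC,\aS)})^2$ and $Q(\bgamt)=(\sigma^{(\tDC,\aCM)})^2$. Setting $\mathbf b=\vGamma$ gives \eqref{eq:gain-Z}. Here we use $\var(\vGamma^\top\vZ_{1:d_Z}\mid S)$-type terms, noting that $\vZ-\vZt$ is $\sigma(S)$-measurable, so $\var[g-\vGamma^\top\vZ_{1:d_Z}\mid S]=\var[g-\vGamma^\top\vZt_{1:d_Z}\mid S]$. Hence $Q(\vGamma)=(\sigma^{(\tDC,\aZ)})^2$.

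The main obstacle is the first step: confirming from the appendix that the DCAR variances for all three models, including partially fitted strata $d_S<m-1$, reduce exactly to $E\var[\cdot\mid S]$ of the residual. This needs the invariance to $d_S$ noted after Theorem \ref{theorem:combined}. A related difficulty is making sure $\bgamt$ is the within-stratum coefficient rather than an ANCOVA-type $\vGamma$ contaminated by unfitted strata. We would handle this by noting that $\vZt$ is orthogonal to all stratum indicators, so the estimated coefficient on $\vZt$ is unaffected by how many indicators are included.
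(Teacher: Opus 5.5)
Your proposal is correct and follows the paper's proof essentially step for step. Both arguments read the three DCAR variances off \eqref{eq:varDC} as $Q(\vZero)$, $Q(\vGamma)$ and $Q(\bgamt)$ with $Q(\mathbf b)=E\{\var[g-\mathbf b^\top\vZt_{1:d_Z}\mid S]\}$. Both then expand $Q$ and complete the square using $E[\cov(\vZt_{1:d_Z},g\mid S)]=E[\vZt_{1:d_Z}g]=\SigZtd{d_Z}\bgamt$.

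The ``obstacle'' you flag is already settled in the paper: $\bgamt$ is defined directly as $\SigZtd{d_Z}^{-1}E[\vZt_{1:d_Z}g]$, and $\vOmega^{(\aCM)}$ is block diagonal in Lemma \ref{lemma:block matrix}, so no special care with the baseline stratum is needed for any $d_S$.
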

\begin{proof}
    See Appendix \ref{app:proof inference}.
\end{proof}

Proposition \ref{prop:better model} shows that CM model achieves the minimal model-based limiting variance under DCAR (see Equations \eqref{eq:varDC}). Additionally, Equations \eqref{eq:gain-S} (resp. \eqref{eq:gain-Z}) show that the variance gain is negligible if and only if ${\vZt}$ has no linear impact on $Y$ (resp. adjusting for strata does not affect the association between $\vZ$, $Y$). In addition, the ARE are the same for CR and DCAR if all strata are fitted in the working model. The proof is the same as Theorem \ref{theorem:compare tests}. A summary of our recommendations is given in Table \ref{tab:summary}.

\section{Bootstrap adjustment}\label{sec:bootstrap}
The usual Wald test is conservative under DCAR and can be either conservative or anti-conservative under CCAR. We therefore use the following bootstrap variance adjustment. For a fixed randomization rule $r$ and working model $a$, generate each replicate by resampling $n$ observed pairs $(\vZ_i,Y_i)$, re-randomizing the resampled covariates with rule $r$, and refitting model $a$. Discard a replicate if the smallest eigenvalue of the design matrix is too small, as specified in Appendix~\ref{app:bootstrap}. Let $\hat{v}_{*,n,B}^{(r,a)}$ be the sample variance of the retained treatment-effect estimates. Define $T_{n,B}^{(r,a)}=
\bigl[\hat{v}_{*,n,B}^{(r,a)}\bigr]^{-1/2}(\hat{\theta}_1-\hat{\theta}_0)
,$
with $T_{n,B}^{(r,a)}=0$ if the sample variance is zero. The adjusted level-$\alpha$ test rejects when $|T_{n,B}^{(r,a)}|>\Phi^{-1}(1-\alpha/2)$.

\begin{theorem}\label{theorem:bootstrap consistency}
Fix $a\in\{\aZ,\aS,\aCM\}$, $r\in\{\tCR,\tDC\}$, and $\delta\in\bR$. Under Assumptions of Theorem \ref{theorem:combined} and suppose Assumption~\ref{assume:bootstrap}(1) holds, together with Assumption~\ref{assume:bootstrap}(2) when $r=\tDC$. Set $\theta_1-\theta_0=\delta/\sqrt n$. Along every sequence for which $\min\{n,B\}\to\infty$,
$n\hat{v}_{*,n,B}^{(r,a)}\overset{P}\to 4(\tau^{(r,a)})^2.$ Further,  $T_{n,B}^{(r,a)}\rightsquigarrow
\mathcal{N}\left(\delta/(2\tau^{(r,a)}),1\right).$
\end{theorem}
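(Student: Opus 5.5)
The proof splits into two parts: a bootstrap central limit theorem for a single replicate, conditional on the data, and a Monte Carlo law of large numbers as $B\to\infty$. The final step is Slutsky's lemma.

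\textbf{Step 1: conditional limit for one replicate.} Conditional on the data $\mathcal{D}_n=\{(\vZ_i,Y_i)\}_{i\le n}$, a replicate draws $(\vZ_i^*,Y_i^*)$ i.i.d.\ from the empirical distribution $\hat P_n$ and re-randomizes them with rule $r$. I would write $Y_i^*=\hat\mu+\hat g(\vZ_i^*)+\varepsilon_i^*$, where $\hat g(\vz)$ is the conditional mean of $Y$ given $\vZ=\vz$ under $\hat P_n$. This is the bootstrap analogue of model \eqref{eq:true model}. The true treatment effect in the bootstrap world is zero, because the assignments are independent of the resampled outcomes given the covariates.

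The OLS estimator of $\vL\vbeta$ then has the same algebraic decomposition as in the proof of Theorem \ref{theorem:combined}. The leading term is $n^{-1}\sum_i(2I_i^*-1)\{Y_i^*-\text{projection of } Y_i^* \text{ onto the fitted covariates}\}$, plus $o_P(n^{-1/2})$ remainders.
\begin{enumerate}
\item Under $r=\tCR$, a conditional Lindeberg CLT gives $\sqrt n\,\vL(\hat\vbeta^*-\hat\vbeta)\rightsquigarrow 2\tau^{(\tCR,a)}\xi$ in $P$-probability. The Lindeberg condition follows from the $2+\upsilon$ moments in Assumption \ref{assume:bounded moments} together with consistency of the empirical second moments.
\item Under $r=\tDC$, I need the bootstrap analogue of Theorem \ref{theorem: design property}. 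The within-stratum imbalances $D_n^*(\stind{s})$ must be $o_P(\sqrt n)$ conditionally, and the centered residual $f-E[f\mid S]$ must contribute variance $E\{\var[f\mid S]\}$. This is where Assumption \ref{assume:bootstrap}(2) enters, presumably as a uniform-in-distribution version of Condition \ref{assume: dgp} for DCAR driven by stratum counts. Since the DCAR assignment depends only on the stratum labels $S(\vZ_i^*)$, the argument can condition on the stratum sequence. Within strata the residuals $f(\vZ_i^*)-\hat E[f\mid S]$ are then i.i.d.\ and independent of the assignments. A martingale CLT then gives conditional variance $\sum_s q_s\var(f\mid S=s)$, using $\hat q_s\to q_s$.
\end{enumerate}
Plugging in $f=g$ minus the fitted projection reproduces $(\sigma^{(\tDC,a)})^2$ from \eqref{eq:varDC}. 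Adding $\sigma_\varepsilon^2$, which comes from the empirical residual variance, yields $4(\tau^{(r,a)})^2$.

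\textbf{Step 2: convergence of the variance, not just the distribution.} Convergence in distribution does not by itself give convergence of $n\,\mathrm{var}^*$, so I need conditional uniform integrability of $n(\hat\theta^*-\bar\theta^*)^2$. I expect this to be the main obstacle. The discard rule of Assumption \ref{assume:bootstrap}(1) is what makes it workable: on retained replicates the smallest eigenvalue of $n^{-1}(\vG^*)^\top\vG^*$ is bounded below, so the estimator is a bounded linear functional of $n^{-1/2}\sum_i(2I_i^*-1)\vG_i^*Y_i^*$. For this sum I would bound conditional $(2+\upsilon)$-th moments by Rosenthal or Burkholder inequalities, using $\hat E|Y|^{2+\upsilon}=O_P(1)$. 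I also need to show the discard probability is $o_P(1)$, which follows from a conditional LLN for $n^{-1}(\vG^*)^\top\vG^*$ toward a positive definite limit (Assumption \ref{assume:independent copy}(2)). Together these give $n\,\mathrm{var}^*(\hat\theta_1^*-\hat\theta_0^*)\to 4(\tau^{(r,a)})^2$ in probability.

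\textbf{Step 3: Monte Carlo error and conclusion.} Given $\mathcal{D}_n$, the $B$ retained replicates are i.i.d. By the conditional $(2+\upsilon)$ moment bound from Step 2 and Chebyshev's inequality, $n\hat v^{(r,a)}_{*,n,B}-n\,\mathrm{var}^*=o_P(1)$ whenever $B\to\infty$, uniformly along any joint sequence with $\min\{n,B\}\to\infty$. This proves the first claim.

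For the second claim, write $\hat\theta_1-\hat\theta_0=\delta/\sqrt n+\vL(\hat\vbeta-\vbeta)$. Theorem \ref{theorem:combined} gives $\sqrt n(\hat\theta_1-\hat\theta_0)\rightsquigarrow \delta+2\tau^{(r,a)}\xi$. The local alternative does not disturb the bootstrap limit, since $\delta/\sqrt n$ changes only the $O(n^{-1/2})$ level of the outcomes and is absorbed in $\hat g$. Dividing by $(n\hat v)^{1/2}\overset{P}\to 2\tau^{(r,a)}>0$ and applying Slutsky's lemma gives $T^{(r,a)}_{n,B}\rightsquigarrow\mathcal N(\delta/(2\tau^{(r,a)}),1)$. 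The convention $T=0$ applies only on an event of vanishing probability.
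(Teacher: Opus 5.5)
Your overall route is the paper's. You prove a conditional CLT for one replicate: Lindeberg under CR, and under DCAR a within-stratum martingale part plus $n^{-1/2}\sum_s D_n^*(\chi_s)\times$(stratum mean), with the latter controlled by Assumption~\ref{assume:bootstrap}(2), which you read correctly. You then use uniform integrability on the acceptance event $\cA_n$, where $\lambda_{\min}(\vQ_n^*)\ge\epsilon_0$, a Monte Carlo LLN over the retained replicates, and Slutsky with Theorem~\ref{theorem:combined}. However, several steps are stated incorrectly and need the paper's repairs.

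\emph{The bootstrap-world variance.} The bootstrap world has no separate noise. With continuous $\vZ$ the atoms of $\hat{P}_n$ are distinct, so your $\hat g(\vZ_j)=Y_j-\hat\mu$ and $\varepsilon^*\equiv 0$; there is nothing to ``add'' as $\sigma_\varepsilon^2$. What must be proved is that the empirical projection residual $e_{n,j}$, noise included, satisfies $\hat{P}_n e_n^2\overset{P}{\to}(\tau^{(\tCR,a)})^2$ and $\sum_s\hat q_s\widehat{\var}_n(e_n\mid S=s)\overset{P}{\to}(\tau^{(\tDC,a)})^2$. This needs three things:
\begin{itemize}
\item $\vb_n-\bbetaW^{(a)}=O_P(n^{-1/2})$;
\item an $L^2(\hat{P}_n)$ bound applied within each stratum;
\item a transfer from $\delta=0$ to $\delta\neq 0$ (Lemma~\ref{lemma:empirical residual moments}).
\end{itemize}
The fact $\hat q_s\to q_s$ alone does not give it.

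\emph{Centering.} Replicates should be centered at the empirical projection $\vbeta_n^\dag$, for which $\vL\vbeta_n^\dag=0$, not at $\hat{\vbeta}$. The quantity $\sqrt{n}\,\vL\hat{\vbeta}$ is an $O_P(1)$ random shift, so your displayed conditional limit is false as written. This is harmless for the variance.

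\emph{The bound in Step 2.} The contrast is not a linear functional of $n^{-1/2}\sum_i(2I_i^*-1)\vG_i^*Y_i^*$. Its treatment coordinates are $O_P(\sqrt{n})$, and its $\vW$-block is not the OLS score. The right bound on $\cA_n$ is $|X_n^*|\le\|\vL\|\epsilon_0^{-1}\|n^{-1/2}\sum_i\vG_i^*e_{n,J_i^*}\|$. Its moments are controlled only after three further steps:
\begin{itemize}
\item use the empirical normal equations $\hat{P}_n e_n=0$ and $\hat{P}_n(\vW e_n)=\vZero$, so that the unweighted coordinates are centered i.i.d.\ sums;
\item split $I_i^*e^*=\tfrac12 e^*+\tfrac12(2I_i^*-1)e^*$;
\item under DCAR, invoke Assumption~\ref{assume:bootstrap}(2) again for the imbalance part, because Rosenthal/Burkholder only covers the martingale part.
\end{itemize}

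\emph{The assumptions you invoke.} Assumption~\ref{assume:bootstrap}(1) is not the discard rule; that rule is built into $\cA_n$. It is the fourth-moment condition on $\vW^{(a)}$, $e^{(a)}$ and $\vW^{(a)}e^{(a)}$. Assumption~\ref{assume:bounded moments} concerns $\psi$. It yields neither $\hat{E}|Y|^{2+\upsilon}=O_P(1)$ nor control of $\vW e$, which you need for $a\in\{\aZ,\aCM\}$ (and for the CR Lindeberg step).

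\emph{The Monte Carlo step.} Using the fourth moments gives $E_*(|X_n^*|^4\mid\cA_n)=O_P(1)$, which is exactly what your Chebyshev step for the sample second moment requires. With only $2+\upsilon$ moments, Chebyshev does not suffice, and you would need a von Bahr--Esseen-type weak law instead.
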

\begin{proof}
See Appendix \ref{app:bootstrap}.
\end{proof}

The idea of bootstrap adjustment is similar in spirit to \cite{shao2010test}, in which they developed a bootstrap adjusted two sample $t$-test under STR-BC. In their work, bootstrap consistency follows from classical bootstrap theory \cite{Bicke1981bootstrap} under the null because the omitted covariate effect is linear in the stratification levels. Under the alternatives, the distribution of outcome shifts and the statements cannot be inherited. Such complications are nontrivial and lead to a modified bootstrap proof with some additional assumptions in Appendix \ref{app:bootstrap}. 

\begin{remark}
    Bootstrap consistency for CCAR works empirically in Section \ref{sec:numerical}. Although it is conjectured that bootstrap can be applied to CCAR \citep[Remark 4.4]{ma2024new}, the extension requires conditional control of the invariant-distribution $\pi_{\Lambda}$. In addition, other non-bootstrap type adjustment for CCAR has been proposed in, e.g., \cite{zhang2026asymptoticpropertiesmultitreatmentcovariate}. 
\end{remark}

\section{Model-robust inference}\label{sec:model robust}

The preceding sections work under the response model \eqref{eq:true model}. The model-robust approach does not assume such a model and targets the treatment effect directly. In the potential outcomes framework \citep{imbens2010rubin}, $Y_i(k)$ denotes the potential outcome of unit $i$ under treatment $k\in\{0,1\}$, so that the observed outcome is $Y_i=I_iY_i(1)+(1-I_i)Y_i(0)$. Model-robust inference targets the average treatment effect (ATE) $\vartheta=E\{Y(1)-Y(0)\}.$

We first define the model-robust estimator. Take $\vX_i\in \bR^{p_x}$ to be the additional prognostic covariates not used in randomization and let $\hat m_{s,k}:\bR^{p_x}\to\bR$, for $k=0,1$ and
$s\in[m]$, be the estimated adjustment functions. Write $N_s$ (resp. $N_{s,k}$) as the number of units in strata $s$ (resp. stratum $s$, treatment arm $k$). The model-robust estimator is defined as
\begin{equation}\label{eq:master}
\begin{aligned}
&\hat\vartheta(\hat m)=\sum_{s\in[m]}\frac{N_s}{n}
  \bigl\{\hat\vartheta_{s,1}(\hat m)-\hat\vartheta_{s,0}(\hat m)\bigr\},\\
&\hat\vartheta_{s,k}(\hat m)=\frac{1}{N_{s,k}}\sum_{i:S_i=s,I_i^{(r)}=k}
  \bigl\{Y_i-\hat m_{s,k}(\vX_i)\bigr\}
  +\frac{1}{N_s}\sum_{i:S_i=s}\hat m_{s,k}(\vX_i),
\end{aligned}
\end{equation}
It has been shown that $\sqrt{n}(\hat{\theta}(\hat{m})-\vartheta)$ converges to a normal distribution with variance depending on the in-probability limit of $\hat{m}$. Different choices of $\hat m$ produce the different estimators proposed in the literature
\citep{bannick2023general,gu2023regression,tu2024unified,liu2023lasso,ye2021inference}. 

We emphasize that at least two features of the existing theory limit the discussion about discretization. Firstly, theoretical results for model-robust approach relies on a critical assumption that the covariates and potential outcomes are independent from assignment conditioning on strata \citep[Assumption 2.2(a)]{bugni2019inference}. A CCAR or MixCAR design does not in general satisfy this condition. Secondly, model-robust estimators of the form \eqref{eq:master} are \textbf{post-stratified}. Model-robust inference relies heavily on the strata and collapses quickly when the number of strata is relatively large \citep{ye2023toward,gu2025assumption}. Discretization in the model-robust literature is therefore exogenous. Nevertheless, there are some connections to our work. The strata only adjustment $a=\aS$ is essentially the strata fixed effect estimator in \cite{bugni2018inference}. The proposed combined model is similar to the general ANHECOVA model in \cite{ye2023toward} with stratum dummies and centered covariates. Although the motivations and theory are completely different.

\section{Numerical Studies}\label{sec:numerical}

In this section, our theoretical results on test efficiency are corroborated with simulation studies. The simulated data sets are generated according to the true model \eqref{eq:true model} with $\varepsilon_i\overset{d}=\xi$ for $i\in [n], n\in\{400,500,600\}$. The simulation results are repeated $1200$ times. We also consider different randomization procedures. More specifically, Hu and Hu's method with $\omega_s>0$ (HH) are implemented as the DCAR procedures, see R package ``carat'' \citep{ma2023carat}; COV procedures with $\omega_0=\omega_2=1, \omega_1 = p$ is implemented as the CCAR procedure. The biased coin probabilities are taken as $\rho=0.8$ for all procedures. To introduce dependence, we use Gaussian Copula. Let $\vV_i\in \bR^p$ be random variables following $\mathcal{N}_p(\vZero,\vB_{\rho_Z})$ where $\vB_{\rho_Z}$ has unit diagonal and off diagonal $\rho_Z$. Set $F_j$ to be the specified marginal distribution of $Z_j$, define $U_{i,j} = \Phi(V_{i,j})$, $Z_{i,j}=F_j^{-1}(U_{i,j})$. Here, $\rho_Z\in\{0,0.3,0.6\}$ characterizes the strength of correlation. We do not fit CM after CCAR methods because there is no obvious reason to discretize after balancing for continuous variables. 

\noindent\textbf{Case 1.} (Threshold-effect model, $n=400$) Consider two exponential random variables ${Z_{i,1}, Z_{i,2}}$ each following $\text{Exp}(1)-1, \text{Exp}(2)-1/2$ distribution, where $\text{Exp}(j)$ is an exponential distribution with rate $j$. The variables are discretized at their medians, so that $S_i:\bR^2 \to \{1,\ldots, 4\}$. The response covariate relationship is a $\sigma(S)$-measurable function $g(\vz) = 1.5\mathbbm{1}\{z_1>\log(2)-1\} + \mathbbm{1}\{z_2>\log(2)/2-1/2\}+\mathbbm{1}\{z_1>\log(2)-1\} \mathbbm{1}\{z_2>\log(2)/2-1/2\}$. For ANCOVA model, $d_Z=2$; for SFE model, $d_S=3$; for CM model, $(d_Z,d_S) = (2,3)$.

\noindent\textbf{Case 2.} (Linear-quadratic model, $n=600$) Consider two continuous covariates $({Z_{i,1}},{Z_{i,2}})$ following a two dimensional multivariate normal distribution $\mathcal{N}_2(\vZero,\vB_{\rho_Z})$. Each ${Z_{i,j}}$ is discretized at $2/5, 3/4$ quantiles so that $S:\mathbb{R}^2\to\{1,\ldots,9\}$. The response-covariate relationship $g(\vz) = 1.8z_1+0.8(z_1^2-1)+0.8(z_2^2-1)$.  For ANCOVA model, $d_Z=2$; for SFE model, $d_S=8$; for CM model, $(d_Z,d_S) = (2,8)$.

\noindent \textbf{Case 3.} (Subpopulation effects, $n=500$) Consider two continuous covariates $({Z_{i,1}},{Z_{i,2}})$ following a two dimensional multivariate normal distribution $\mathcal{N}_2(\vZero,\vB_{\rho_Z})$. Both variables are discretized marginally at $1/3,2/3$ quantiles, so that $S:\mathbb{R}^2\to\{1,\ldots,9\}$. The response-covariate relationship $g(\vz) = 2.6\{\Phi[8(z_1-\Phi^{-1}(1/3))]+\Phi[8(z_1-\Phi^{-1}(2/3))]\}-1.5z_1+0.8z_2$. For ANCOVA model, $d_Z=2$; for SFE model, $d_S=8$; for CM model, $(d_Z,d_S) = (2,8)$.

\noindent \textbf{Case 4.} (Unstable model, $n=500$) Consider two continuous covariates $({Z_{i,1}},{Z_{i,2}})$ following a two dimensional multivariate normal distribution $\mathcal{N}_2(\vZero,\vB_{\rho_Z})$. The two variables are discretized marginally at two thresholds $0$, so that $S:\mathbb{R}^2\to\{1,\ldots,4\}$. The response-covariate relationship $g(\vz) = 1.5 [\sqrt{\pi/2}\operatorname{sgn}(z_1)-z_1]+0.8z_2$. The first term is marginally orthogonal to $z_1$, similar as in Example \ref{ex:gaussian-inflation}. For ANCOVA model, $d_Z=2$; for SFE model, $d_S=3$; for CM model, $(d_Z,d_S) = (2,3)$.

\begin{table}[t!]
\centering
\caption{Type I error under \textbf{Cases} 1--5 for $\rho_Z\in\{0,0.3,0.6\}$. The units are $10^{-2}$. Entries for HH and COV are unadjusted, with the bootstrap-adjusted size in parentheses below.}
\label{tab:size main}
{\fontsize{10pt}{10.5pt}\selectfont
\renewcommand{\arraystretch}{0.92}
\setlength{\tabcolsep}{1.8pt}
\begin{tabular}{lcccccccccccc}
\toprule
 & \multicolumn{3}{c}{Case 1} & \multicolumn{3}{c}{Case 2} & \multicolumn{3}{c}{Case 3} & \multicolumn{3}{c}{Case 4} \\
\cmidrule(lr){2-4} \cmidrule(lr){5-7} \cmidrule(lr){8-10} \cmidrule(lr){11-13} 
 & $\rho_Z = 0$ & 0.3 & 0.6 & 0 & 0.3 & 0.6 & 0 & 0.3 & 0.6 & 0 & 0.3 & 0.6 \\
\midrule
    $T^{(\tCR,\aZ)}_{n,d_Z}$ & 4.2 & 5.1 & 6.0 & 4.9 & 5.8 & 5.3 & 4.8 & 4.8 & 4.8 & 5.2 & 5.1 & 5.8\\
    $T^{(\tCR,\aS)}_{n,d_S}$ & 4.5 & 4.9 & 5.2 & 4.7 & 6.2 & 4.5 & 5.2 & 5.3 & 5.3 & 4.5 & 5.2 & 6.0\\
    $T^{(\tCR,\aCM)}_{n,d_Z,d_S}$ & 4.5 & 5.1 & 5.4 & 5.1 & 5.7 & 4.5  & 4.8 & 5.2 & 5.0 & 4.6 & 4.9 & 5.0\\\hline
    $T^{(\text{HH},\aZ)}_{n,d_Z}$ & \makecell{1.7\\(3.7)} & \makecell{2.3\\(4.2)} & \makecell{2.8\\(5.5)} & \makecell{2.7\\(4.2)} & \makecell{2.7\\(4.8)} & \makecell{2.0\\(5.2)}  & \makecell{5.6\\(6.3)} & \makecell{4.1\\(4.9)} & \makecell{3.9\\(5.2)} & \makecell{2.6\\(4.5)} & \makecell{2.5\\(5.2)} & \makecell{1.6\\(4.3)}\\
    $T^{(\text{HH},\aS)}_{n,d_S}$ & \makecell{4.7\\(4.8)} & \makecell{4.7\\(4.8)} & \makecell{5.0\\(4.9)} & \makecell{4.9\\(4.8)} & \makecell{5.1\\(5.3)} & \makecell{4.6\\(4.8)} & \makecell{5.7\\(6.0)} & \makecell{4.6\\(5.2)} & \makecell{5.0\\(5.4)} & \makecell{5.1\\(5.0)} & \makecell{5.2\\(5.4)} & \makecell{4.2\\(4.6)}\\
    $T^{(\text{HH},\aCM)}_{n,d_Z,d_S}$ & \makecell{4.5\\(4.5)} & \makecell{4.7\\(4.8)} & \makecell{5.2\\(5.1)} & \makecell{4.5\\(5.1)} & \makecell{4.4\\(4.7)} & \makecell{4.6\\(5.2)} & \makecell{4.8\\(5.0)} & \makecell{4.9\\(5.0)} & \makecell{5.2\\(5.5)} & \makecell{4.7\\(4.6)} & \makecell{4.3\\(4.4)} & \makecell{4.7\\(4.8)}\\\hline
    $T^{(\text{COV},\aZ)}_{n,d_Z}$ & \makecell{5.0\\(5.2)} & \makecell{5.9\\(6.0)} & \makecell{5.1\\(5.8)} & \makecell{0.0\\(4.6)} & \makecell{0.1\\(5.1)} & \makecell{0.0\\(5.8)} & \makecell{\underline{\textbf{6.4}}\\(5.2)} & \makecell{\underline{\textbf{6.8}}\\(5.2)} & \makecell{\underline{\textbf{6.0}}\\(4.5)}& \makecell{\underline{\textbf{6.7}}\\(5.2)} & \makecell{\underline{\textbf{6.3}}\\(5.2)} & \makecell{\underline{\textbf{7.0}}\\(4.6)}\\
    $T^{(\text{COV},\aS)}_{n,d_S}$ & \makecell{5.2\\(5.2)} & \makecell{5.9\\(5.9)} & \makecell{5.3\\(5.4)} & \makecell{1.3\\(4.2)} & \makecell{2.4\\(5.4)} & \makecell{2.8\\(4.8)} & \makecell{5.4\\(5.4)} & \makecell{5.4\\(5.7)} & \makecell{5.2\\(5.6)} & \makecell{4.2\\(5.8)} & \makecell{3.8\\(5.0)} & \makecell{2.9\\(4.8)}\\
\bottomrule
\end{tabular}}
\end{table}

\begin{figure}[t!]
    \centering
    
    \begin{subfigure}[b]{\textwidth}
        \centering
        \includegraphics[width=\linewidth]{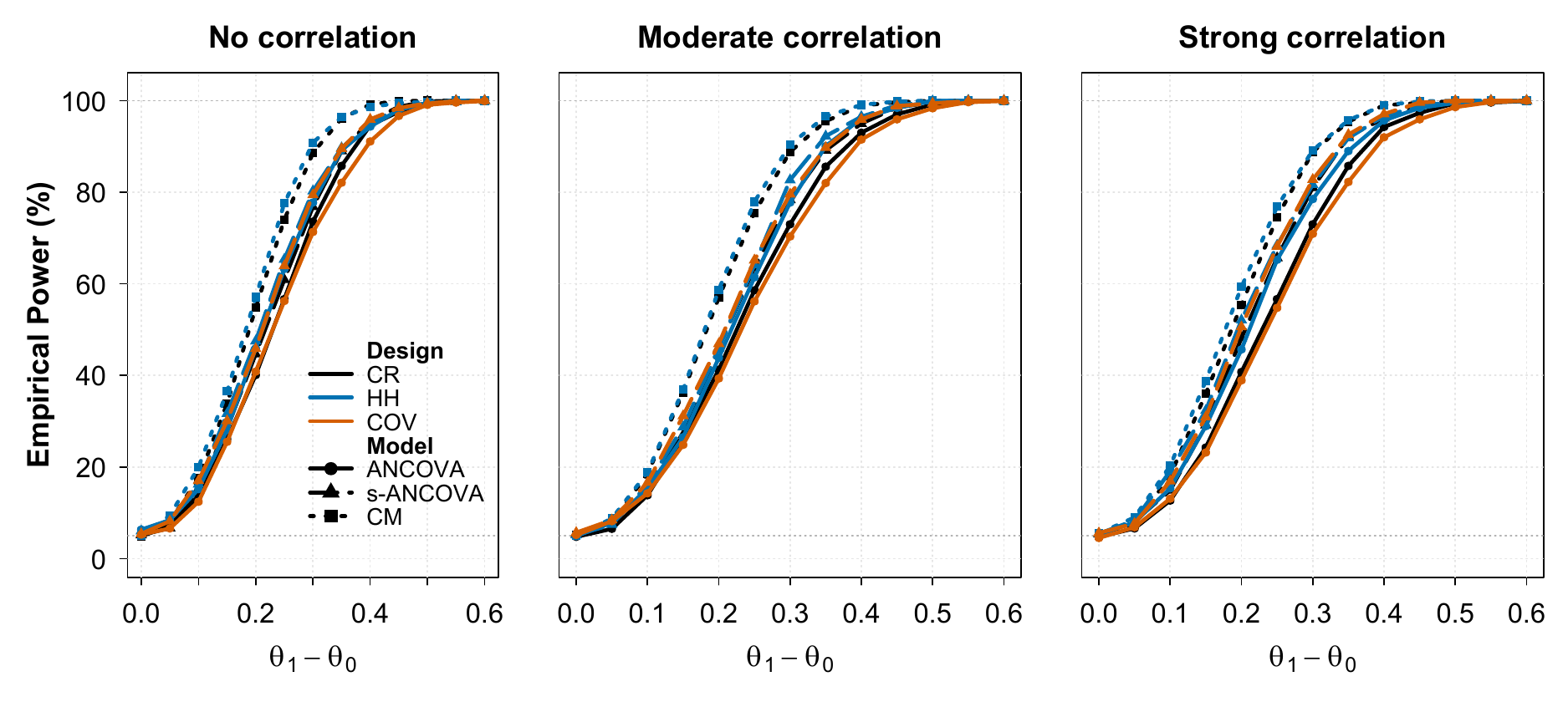}
        
    \end{subfigure}
    
    \bigskip
    
    \begin{subfigure}[b]{\textwidth}
        \centering
        \includegraphics[width=\linewidth]{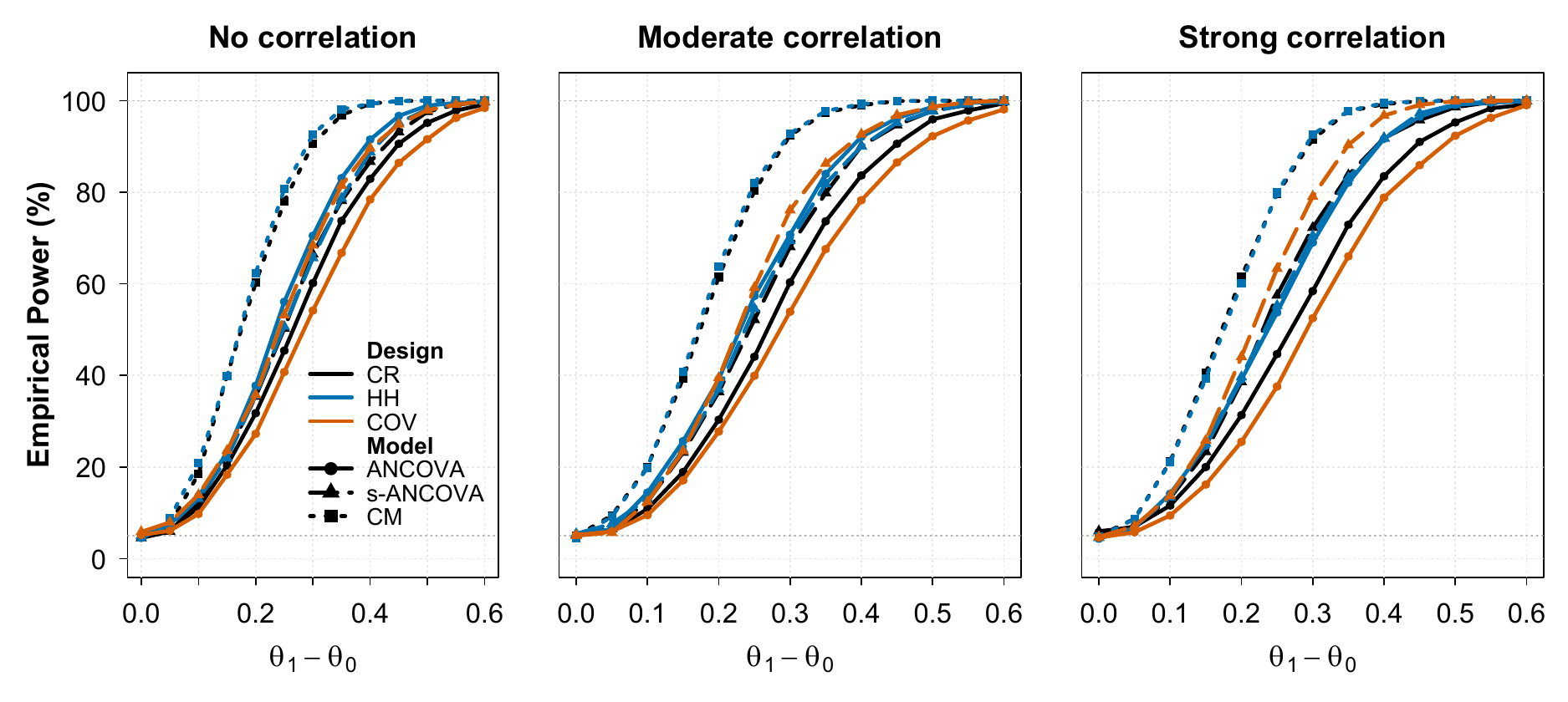}
        
    \end{subfigure}

    \caption{Bootstrap adjusted empirical power curves for \textbf{Case} 3 (Top) and \textbf{Case} 4 (Bottom). }
    \label{fig:case 45}
\end{figure}
Cases $1$--$4$ represent different forms of $g$. Case $3$ illustrates a scenario of possible subpopulation effect across different margins/stratum of $\vZ_i$ (see Section \ref{sec:impact discuss}). Cases $1,3,4$, especially $4$, depict scenarios under which CCAR is not stable, and DCAR may be more robust. The type I error rate of the unadjusted and bootstrap adjusted test (see Appendix \ref{app:bootstrap} for a detailed construction) under Cases $1$--$4$ are reported in Table \ref{tab:size main}. Further, we report the empirical power curves of Cases $3,4$ in Figure \ref{fig:case 45}. The perfect linear case is not studied in this simulation since it is studied extensively in the literature \cite{ma2015testing,ma2024new,li2019testing}. Note that in above simulations, all available covariates are fitted. In Appendix \ref{app:additional sim}, we also consider models with omitted variables, and additiona Cases $5,6$ studying approximately linear and non-additive non-linear effects.

Table \ref{tab:size main} shows that type I error is always controlled under CR, no matter the true/analysis model or covariates' correlation. DCAR methods are in general conservative due to the overestimation of limiting variance. In contrast, CCAR may be empirically inflated because of the variance inflation issue (see Cases $3,4$). The bootstrap adjustment restores the validity of conservative or inflated tests.

Figure \ref{fig:case 45} depicts the empirical power curves for Cases $3,4$. The result provides rich interpretations. For Case 4, subpopulation effects discussed in Subsection \ref{subsec: other functions} makes SFE and CM models outperform ANCOVA models under DCAR, even though the covariates are continuous and the relationship is smooth. In both cases, the better method is DCAR with CM model, which consistently outperforms CCAR methods but cannot distinguish from CR. This is because of the ARE comparison in Theorem \ref{theorem:compare tests}, there is no efficiency difference between CR and DCAR if all strata are fitted. We emphasize that such similarity is, in fact, too ideal for CR, because it requires the correct identification of all prognostic strata. In practice, omission of strata are usually common. Numerical studies on the impact of such omission can be located in Section \ref{sec: real data} and Appendix \ref{app:additional sim}.

\section{Real data example}\label{sec: real data}
Empirical evidence is provided based on a randomized clinical trial to compare sitagliptin with conventional treatment on antherosclerosis in Type $2$ diabetes patients \citep{oyama2016effect}. The data is publicly available in \citep{shao2026should}. One of the secondary endpoints is the 24 month high-molecular-weight adiponectin concentration ($\text{HMW}$). HMW helps muscle and liver tissues use glucose and burn fat and a lower HMW adiponectin concentration predicts metabolic syndrome. The covariates used in
randomization are the baseline HMW concentration, systolic blood pressure and body-weights. All three variables are continuous and prognostic for the response. We include these variables because the inclusion mimics the true data generating process. The data is analyzed via a spline model that includes the baseline covariates
\begin{equation}\label{eq:dgp model real data}
    \log(\text{HMW}_i) = \beta_0+(\theta_1-\theta_0)I_i+\beta_1\text{weight}^0_i+s(\text{log HMW}^0_i)+s(\text{SBP}^0_i)+\varepsilon_i
\end{equation}
where $\varepsilon_i \overset{d} = 0.5\xi$ and the superscript $0$ denotes baseline values. The estimated smooth effects are plotted in Figure \ref{fig:smooth effects}. All variable and smoothed effects are highly significant. The treatment effect of sitagliptin over conventional treatment is $\theta_1-\theta_0=-0.184$, the covariate effects are $(\beta_0,\beta_1) = (1.66,-0.01)$. Synthetic trials of size $n=400$ are created by resampling the covariates from the real data, assigning via different randomization methods and generating responses via \eqref{eq:dgp model real data}. In the following discussion, whenever strata are needed, we note that baseline HMW is discretized at $(2.37,4.44)\mu\text{g/mL}$, blood pressure is discretized at $135\text{mmHg}$ and weight is discretized at $64.3$kg, resulting in $3\times2\times 2=12$ strata.

\begin{figure}
    \centering
    \includegraphics[width=\linewidth]{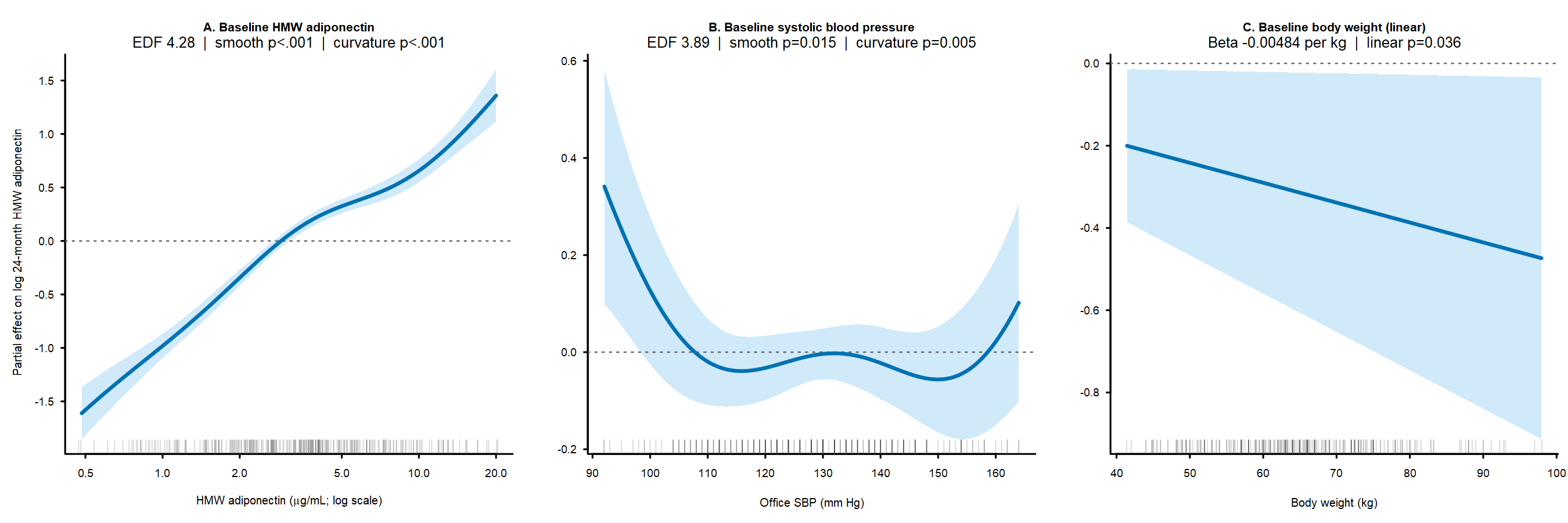}
    \caption{Estimated smooth effects of the continuous baseline covariates, HMW, blood pressure and weight, under model \eqref{eq:dgp model real data} for the Type-$2$ diabetes data. }
    \label{fig:smooth effects}
\end{figure}

We consider detecting the treatment effect with the proposed tests under
$a=\aZ,\aS,\aCM$. Table \ref{tab:realdata_trial6} summarizes the estimation and
adjusted test power. All design--model pairs estimate the treatment effect
without visible bias. If all strata are fitted, then CR and HH achieve similar effects, but as emphasized in Remark \ref{remark:CR}, the full adjustment under CR is unrealistic and could lead to too few observations in each strata. However, if important strata are omitted, the statistical power collapses under CR. The COV procedure, which balances the continuous covariates themselves, is the most efficient on average. At the analysis stage, replacing the ANCOVA by SFE results in a drop of statistical power. This is anticipated because among the two dominant prognostic effects, weight has completely linear effects while baseline HMW adiponectin is approximately linear. The gain of CM over ANCOVA is negligible because the within-stratum slopes nearly coincide with the global ones, so the quadratic form in \eqref{eq:gain-Z} is small.

We also observe interesting behaviors within each randomized method. For HH, SFE model does not lose efficiency from omitting strata ($87.2\%$ vs $86.3\%$). This is the full strata is readily balanced by HH. For COV, the inclusion of more strata leads to a decrease of power ($88.6\%$ to $94.7\%$). This is because the residual produced by the full SFE model is less aligned with the features balanced by COV directly, thereby introducing additional variability. The observation is somehow counter-intuitive compared with the consensus that fitting more variables will not make the model worse. We discuss this issue in details in  Appendix \ref{app:counter}.

\begin{table}[htbp]
\centering
\caption{Estimation and bootstrap-adjusted test power for the treatment effect, reps$=1200$, $B=500$. If there is omission, only $3$ out of $11$ strata dummies (blood pressure$\times$weight effects), otherwise all variables are fitted. \textbf{Boldface}: best model under a fixed design.}
{\fontsize{10pt}{10.5pt}\selectfont
\renewcommand{\arraystretch}{0.92}
\setlength{\tabcolsep}{1.8pt}
\begin{tabular}{llcccccccc}
\toprule
\multirow{2}{*}{Design}
& All covariates
& & \multicolumn{3}{c}{Estimation}
& & \multicolumn{3}{c}{Adjusted test power (\%)} \\
\cline{4-6}\cline{8-10}
& fitted & & ANCOVA & SFE & CM
& & ANCOVA & SFE & CM \\
\midrule
\multirow{2}{*}{CR}
& Yes
& & -0.187 & -0.186 & -0.187
& & 97.3& 87.2 & \textbf{97.5} \\
& No
& &  -0.187 &  -0.187 &  -0.187
& & 66.9 & 65.8 & \textbf{67.3} \\
\addlinespace
\multirow{2}{*}{HH}
& Yes
& &  -0.184 & -0.184 & -0.184
& & 96.3 & 87.2 & \textbf{96.5} \\
& No
& & -0.184 & -0.184 & -0.183
& & 86.2 & 86.3 & \textbf{86.9} \\
\addlinespace
\multirow{2}{*}{COV}
& Yes
& & -0.186 & -0.185 & --
& & \textbf{96.7} & 88.6 & -- \\
& No
& & -0.186 & -0.186 & --
& & \textbf{96.3} & \underline{94.7} & -- \\
\bottomrule
\end{tabular}}
\label{tab:realdata_trial6}
\end{table}

\section{Conclusion}\label{sec:conclusion}

This paper provides a theoretical investigation of discretization under CAR and its implications for subsequent statistical inference. Our main findings can be summarized as follows. If the response-covariate relationship is known, then strongly balancing the relationship in the design yields the highest efficiency. When $g$ is unknown and/or potentially non-linear, discretization in the design is preferable due to its robustness. A combined model \eqref{eq:combined working model} should be used with DCAR in analysis. For partially known structures, mixed designs that balance the known components and discretize the unknown ones represent a practical choice (Appendix \ref{app:discussions on MixCAR}). In any case, bootstrap adjustment is recommended to address the conservativeness of DCAR and the possible type I error issues under CCAR or MixCAR. The recommendations are given in Table \ref{tab:summary}.

\begin{table}[t!]
\centering
\caption{A summary of recommendations. \checkmark: the preferable design and the preferable model provided that the preferable design has already been used. \dag: CR is not recommended because of worse covariate balance and the requirement of full post-stratification, which is unrealistic in many cases (See SubSection \ref{subsec:hypothesis test}, Remark \ref{remark:CR}).}
{%
    \fontsize{9pt}{11pt}\selectfont
\begin{tabular}{lcccc:ccc|c}
\toprule
$g$ \textbf{Function} & $g_j$& CR$^\dag$ & DCAR & CCAR  & ANCOVA & SFE& CM&\textbf{Where to discretize}\\
\midrule
Known, additive& Not stepwise &&&\checkmark&\checkmark &&& Do not discretize\\
Known, additive&Stepwise&&\checkmark&&&\checkmark&&In design and analysis\\
Unknown&-&&\checkmark&&&&\checkmark&In design and analysis\\
\bottomrule
\end{tabular}}
\label{tab:summary}
\end{table}

Discretization is conventionally related to loss of information \citep{Cox1957NoteOG,royston2006dichotomize}. We show in this paper that discretization is not only loss of information. It is a form of `regularization' in randomized experiments. It determines at what resolution should the covariates enter treatment balancing. The most aggressive type of discretization has only $2$ units in each strata \citep{bai2022optimality}. Thus, many practical randomized experiment rely on discretization, but lacks theoretical justification. From a methodological standpoint, the framework developed in this paper provides a perspective for analyzing a wide class of designs. 

Several directions remain open for future research. First, this work focused on equal allocation; extending the framework to unequal allocation ratios would increase its applicability. Second, our asymptotic results assume fixed covariate dimension and fixed strata counts, but in practice these may grow with sample size. Understanding non-asymptotic performance in high-dimensional or diverging-strata regimes is an important next step. Third, as discussed in Section \ref{sec:model robust}, the extension of model-robust inference after CAR to incorporate continuous variables merit further study. Finally, extending the framework to other outcome types, e.g., binary/count with generalized linear models \citep{zhai2026validtestmultiarmtrials} or survival data could broaden its practical relevance. 

\section*{Gen AI declaration}
The original proofs and code were written by the authors, generative AI (ChatGPT 5.6-soL and Claude Fable 5) were used for checking the proofs, revising codes and language polishing.

\newgeometry{left=1in,right=1in,top=0.75in,bottom=0.75in}
\bibliographystyle{plain}
\bibliography{r1}
\restoregeometry

\clearpage

\appendix
\addtocontents{toc}{\protect\setcounter{tocdepth}{2}}

\spacingset{1}

\begin{center}
{\LARGE\bf Supplementary materials for\\[3pt]
Discretization in covariate-adaptive randomization: gains and losses}
\end{center}
\medskip

This Appendix provides supplementary materials and technical proofs for the paper ``Discretization in covariate-adaptive randomization: gains and losses''. In Appendix \ref{app:technical}, \ref{app:bootstrap} we present the technical details for the main results and the bootstrap adjustment. In Appendix \ref{app:discussions on MixCAR}, \ref{app:discuss on dgp}, we discuss some additional topics not addressed in the main paper. Appendix \ref{sec:additional covariates} considers the extension of the topic to include additional covariates. In Appendix \ref{app:additional sim}, we include some additional simulation results on the empirical power of the unadjusted Wald-test under difference cases studied in Section \ref{sec:numerical} in the main paper. Finally, Appendix \ref{app:monte carlo} justifies the consistency of the Monte-Carlo variance estimator used in Example \ref{example:instability}.

\tableofcontents

\section{Technical proofs}\label{app:technical}

\subsection{Previous results}\label{app:notation}
In this section, we present the definitions of $\pi_\Lambda$ in Theorems \ref{theorem: design property}, \ref{theorem:combined}. First, we recall some theoretical results in \cite{ma2024new} that are needed for the proof for CCAR. Recall 
$$\Lambda_n^{(r)} = \sum_{i=1}^{n}(2I_i^{(r)}-1)\psi(\vZ_i),$$
where $\psi:\mathbb{R}^p\to\mathbb{R}^{p'}$ is the pre-specified covariate map and we omit the dependence on the design type $r$ in the following proofs if the context is clear. Let $\mathcal{S}_{\Lambda}\subseteq \mathbb{R}^{p'}$ be a general state space. For any function $w:\mathcal{S}_{\Lambda}\to \mathbb{R}$, symmetric invariant distribution $\pi_{\Lambda}$ (see \cite{ma2024new} for the proof of symmetricity under CAR) on $\mathcal{S}_{\Lambda}$ and transition probability kernel ${\PL}(x,{\cB})$ with $\cB$ a Borel subset of $\mathcal{S}_{\Lambda}$, denote by $\pi_{\Lambda}(w)=\int_{x\in \mathcal{S}_{\Lambda}}w(x)\pi_{\Lambda}(dx), \PL(w(\lambda)) = \int w(y)\PL(\lambda,dy)$ whenever the integrations are well-defined. The chain is a Markov chain on $\bR^{p'}$ and $\Lambda_n^{(r)}=o_P(\sqrt{n})$ \cite[Theorem 3.1]{ma2024new}. Throughout the proofs, we use $a\wedge b$ (resp. $a \vee b$) to denote $\min\{a,b\}$ (resp. $\max\{a,b\}$).

Let $\cF_0$ be the trivial sigma-algebra and for $i\geq 1$, $\cF_i=\sigma(\cF_{i-1},I_i,\vZ_i)$ the sigma-algebra generated by all the information upon the arrival of the $i$th subject. Note that $\cF_i$ may depend on the design type, we omit the extra notation if the context is made clear. For a function $h:\bR^p\to \bR$, define
$$\kappa_h(\Lambda_{i-1}) := E\left\{(2{I_i}-1)[h(\vZ_i)+\varepsilon_i]\mid \mathcal{F}_{i-1}\right\} = (1-2{\rho})E[h(\vZ){\operatorname{sgn}}({\langle \Lambda_{i-1},\psi(\vZ)\rangle})],$$
where $\langle\cdot,\cdot\rangle$ is the inner product on $\mathbb{R}^{p'}$, ${\operatorname{sgn}}(a) = \mathbbm{1}(a>0)-\mathbbm{1}(a<0)$ and the second equality follows directly from \cite[Proof of Theorem 3.6]{ma2024new} and the independence of $\varepsilon_i$ from all other variables.

\begin{lemma}\label{lemma:previous result1}
    Consider the general CAR framework in Section \ref{subsec:CAR and discretization}, for any type of design, assume $h:\bR^p\to \bR$ is some square-integrable function. Then, the following holds

    \noindent (i) $|\kappa_h(\Lambda)|\leq E(|h(\vZ)|)<\infty$ and $\kappa_h(-\Lambda) =-\kappa_h(\Lambda)$.

    \noindent (ii) One has $\pi_{\Lambda}(\kappa_h)=0$, and there is a unique function $\hat{\kappa}_h:\mathcal{S}_{\Lambda}\to\mathbb{R}$ with $\hat{\kappa}_h({\vZero})=0$ solving the Poisson equation
    $${\hat{\kappa}_h-\PL\hat{\kappa}_h=\kappa_h-\pi_{\Lambda}(\kappa_h)}.$$

    \noindent (iii) The solution $\hat{\kappa}_h$ satisfies the bound $|\hat{\kappa}_h(\Lambda)|\leq C(\|\Lambda\|^2+1)$ for some constant $C>0$. Also, $\sup_n E|\hat{\kappa}_h(\Lambda_n)|^{{k}}<\infty$ for each $k$$>0$.

    \noindent (iv) For each $i\geq 1$, define $\Delta M_{h,i,1}=(2I_i-1)[h(\vZ_i)+\varepsilon_i]-E\left\{(2{I_i}-1)[h(\vZ_i)+\varepsilon_i]\mid \mathcal{F}_{i-1}\right\}$, $\Delta M_{h,i,2} = \hat{\kappa}_h(\Lambda_i)-{\PL}\hat{\kappa}_h(\Lambda_{i-1})$. Then, the sequence
    $$\Delta M_{h,i} = \Delta M_{h,i,1}+\Delta M_{h,i,2} = (2I_i-1)[h(\vZ_i)+\varepsilon_i]+\hat{\kappa}_h(\Lambda_i)-\hat{\kappa}_h(\Lambda_{i-1})$$
    is a martingale difference sequence with respect to the filtration $\{\mathcal{F}_i:i\geq 1\}$.
\end{lemma}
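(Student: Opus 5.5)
The plan is to prove the four items in order, each feeding the next, relying on \cite{ma2024new} for the Markov chain structure of $(\Lambda_i)$. That paper gives a symmetric invariant distribution $\pi_\Lambda$ and a drift condition of the form $\PL V\le (1-c)V+b\mathbbm{1}_C$ for a quadratic-type Lyapunov function $V(\lambda)=\|\lambda\|^2+1$. It also gives uniform moment bounds $\sup_n E\|\Lambda_n\|^k<\infty$, valid under the bounded-moment Assumption \ref{assume:bounded moments}.

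\textbf{Item (i).} I would start from the explicit formula for $\kappa_h$. Given $\cF_{i-1}$, the unit $\vZ_i$ is independent of the past. Also $I_i=1$ with probability $\rho$ when $\langle\Lambda_{i-1},\psi(\vZ_i)\rangle<0$, and with probability $1-\rho$ when it is $>0$. Hence
\[
E[(2I_i-1)\mid \cF_{i-1},\vZ_i]=(1-2\rho)\operatorname{sgn}\langle\Lambda_{i-1},\psi(\vZ_i)\rangle .
\]
Independence of $\varepsilon_i$ kills the error term. This gives $|\kappa_h(\Lambda)|\le |1-2\rho|\,E|h(\vZ)|\le E|h(\vZ)|$. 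Oddness of $\kappa_h$ follows from $\operatorname{sgn}$ being odd.

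\textbf{Item (ii).} The identity $\pi_\Lambda(\kappa_h)=0$ is immediate from symmetry of $\pi_\Lambda$ under $\lambda\mapsto-\lambda$ together with oddness of $\kappa_h$. For the Poisson equation I would invoke the standard $V$-uniform ergodicity theory (Meyn--Tweedie, Theorem 17.4.2, or Glynn--Meyn). Since $\kappa_h$ is bounded, it has finite $V$-norm, so
\[
\hat\kappa_h=\sum_{k\ge0}\bigl(\PL^k\kappa_h-\pi_\Lambda(\kappa_h)\bigr)
\]
converges and solves the equation. Any two solutions differ by a $\PL$-harmonic function in $L^\infty_V$. Such a function is constant by ergodicity, and the normalisation $\hat\kappa_h(\vZero)=0$ then fixes uniqueness. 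I would subtract the value at $\vZero$ to enforce this normalisation.

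\textbf{Item (iii).} The bound $|\hat\kappa_h|\le C V$ is part of the same theorem, since $\|\PL^k\kappa_h-\pi_\Lambda\kappa_h\|_V\le R r^k$ for some $r<1$. The moment bound then follows from $\sup_n E\|\Lambda_n\|^{2k}<\infty$ in \cite{ma2024new}.

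\textbf{Item (iv).} This is algebra. $\Delta M_{h,i,1}$ is a martingale difference by construction. $\Delta M_{h,i,2}$ is one because $E[\hat\kappa_h(\Lambda_i)\mid\cF_{i-1}]=\PL\hat\kappa_h(\Lambda_{i-1})$, with integrability from (iii). Adding the two and substituting $\PL\hat\kappa_h=\hat\kappa_h-\kappa_h$ gives
\[
\Delta M_{h,i}=(2I_i-1)[h(\vZ_i)+\varepsilon_i]-\kappa_h(\Lambda_{i-1})+\hat\kappa_h(\Lambda_i)-\hat\kappa_h(\Lambda_{i-1})+\kappa_h(\Lambda_{i-1}).
\]
The $\kappa_h(\Lambda_{i-1})$ terms cancel, giving the displayed form.

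\textbf{Main obstacle.} The hard step is (ii)--(iii): verifying that the general-state-space chain is $\psi$-irreducible and aperiodic, that compact sets are small, and that the drift condition holds with quadratic $V$. I would import these properties from \cite{ma2024new} rather than reprove them, and then check that bounded $\kappa_h$ fits the hypotheses.
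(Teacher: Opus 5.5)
Your overall route is the paper's: (i) from the explicit form of $\kappa_h$; (ii)--(iii) by importing the chain structure from \cite{ma2024new} and applying Meyn--Tweedie Poisson-equation theory; (iv) by the telescoping algebra with $\pi_\Lambda(\kappa_h)=0$. Items (i) and (iv) are fine. The gap is the ergodicity input you use for (ii)--(iii). The chain does not satisfy a geometric drift $\PL V\le (1-c)V+b\mathbbm{1}_C$ with $V(\lambda)=\|\lambda\|^2+1$. The computation behind \eqref{eq:onestep2} gives, in any dimension,
\[
\PL V(\lambda)=V(\lambda)-2(2\rho-1)E|\langle\lambda,\psi(\vZ)\rangle|+E\|\psi(\vZ)\|^2 .
\]
So the decrement is only linear in $\|\lambda\|$, and $\PL V/V\to 1$ at infinity. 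Hence the $V$-uniform bound $\|\PL^k\kappa_h-\pi_\Lambda(\kappa_h)\|_V\le Rr^k$ is not available. Moreover, when $\|\psi(\vZ)\|$ has only the polynomial tails permitted by Assumption~\ref{assume:bounded moments}, one large jump starts an excursion of polynomially tailed length, so no geometric rate holds for any $V$. Three of your steps rest on this false premise: the geometric-series construction of $\hat\kappa_h$, the uniqueness argument via $\PL^k u\to\pi_\Lambda(u)$ in $V$-norm, and the derivation of $|\hat\kappa_h|\le CV$.

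What does hold is the weaker drift (V3), $\PL V\le V-f+b\mathbbm{1}_C$ with $V(\lambda)=\|\lambda\|^2$ and $f(\lambda)\asymp 1+\|\lambda\|$. This is exactly why the paper cites \cite[Theorems 14.3.7, 17.4.2]{meyn2012markov}. Since $|\kappa_h|\le E|h(\vZ)|$ is bounded, you have $|\kappa_h|\le cf$. Theorem 17.4.2, which you cite but use as if it were a $V$-uniform result, then gives a solution with $|\hat\kappa_h|\le C(V+1)$. That solution is built from hitting times of a petite set, not from $\sum_k\{\PL^k\kappa_h-\pi_\Lambda(\kappa_h)\}$. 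Uniqueness, in the class of $\pi_\Lambda$-integrable solutions, comes from the companion result that two such solutions differ by a constant, and the normalisation at $\vZero$ fixes that constant. With this substitution your proof coincides with the paper's. One caveat applies equally to the paper's statement: $\sup_nE|\hat\kappa_h(\Lambda_n)|^k<\infty$ for every $k$ needs $\sup_nE\|\Lambda_n\|^{2k}<\infty$. Because $\Lambda_n$ contains the fresh increment $\pm\psi(\vZ_n)$, this requires correspondingly high moments of $\psi(\vZ)$, not merely Assumption~\ref{assume:bounded moments}.
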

\begin{proof}
    Claim (i) follows directly by definition. Claim (ii), (iii) utilizes Poisson equation arguments from \cite[Theorem 14.3.7, 17.4.2]{meyn2012markov} which follows from \cite[proof of Theorem 3.6]{ma2024new}. Claim (iv) follows from the Poisson equation and by noting $\varepsilon_i$ is independent from other variables with mean $0$.
\end{proof}

To explicitly define ${\pi_{\Lambda}(\zeta_{f})},{\pi_{\Lambda}(\zeta_{g})},{\pi_{\Lambda}(\zeta_{g_{d_S}})}$ in Theorems \ref{theorem: design property} and \ref{theorem:combined}, we require the following notation. Denote by ${g_{d_S}}(\vZ_i) = g(\vZ_i) - \sum_{j=1}^{{d_S}}\eta_j\stind{j}(\vZ_i)$. For any $\lambda\in \mathcal{S}_{\Lambda}$ and some function $h:\bR^p\to\bR$, define
$$\zeta_{h}(\lambda) = E[h(\vZ)\hat{\kappa}_h(\lambda+\psi(\vZ))\{1+(1-2\rho){\operatorname{sgn}}(\langle\lambda,\psi(\vZ)\rangle)\}].$$
Hence, in Theorem \ref{theorem: design property}, \begin{equation}
    {\pi_{\Lambda}(\zeta_f)} = \int_{\lambda\in \mathcal{S}_{\Lambda}} \zeta_f(\lambda)d\pi_{\Lambda}.\label{eq:pi_Lambda f}
\end{equation} 
In Theorem \ref{theorem:combined},
\begin{align}
    &{\pi_{\Lambda}(\zeta_g)} = \int_{\lambda\in \mathcal{S}_{\Lambda}} \zeta_g(\lambda)d\pi_{\Lambda},\label{eq:pi_Lambda g}\\
    &{\pi_{\Lambda}(\zeta_{g_{d_S}})} = \int_{\lambda\in \mathcal{S}_{\Lambda}} \zeta_{{g_{d_S}}}(\lambda)d\pi_{\Lambda}.\label{eq:pi_Lambda gstar}
\end{align}

\subsection{Proof of imbalance properties}\label{app:design prop}
In this Appendix, Theorem \ref{theorem: design property} and Proposition \ref{prop:scalar-ccar} are proved in order.

\begin{proof}[Proof of Theorem \ref{theorem: design property}]
    The results for CR follows directly by applying the CLT. Under DCAR for any function $f$ that is not balanced, note the following decomposition
    \begin{align*}
        \frac{1}{\sqrt{n}}\sum_{i=1}^{n}(2I_i-1)f(\vZ_i)& = \frac{1}{\sqrt{n}}\sum_{i=1}^{n}(2I_i-1)\left\{f(\vZ_i)-E[f(\vZ_i)\mid S_i]\right\} + \frac{1}{\sqrt{n}}\sum_{i=1}^{n}(2I_i-1)E[f(\vZ_i)\mid S_i]\\&=
        \frac{1}{\sqrt{n}}\sum_{i=1}^{n}(2I_i-1)\left\{f(\vZ_i)-E[f(\vZ_i)\mid S_i]\right\}\\&\hspace{7em}+\frac{1}{\sqrt{n}}\sum_{s\in[\prod_j m_j]}D_n({\stind{s}})E[f(\vZ)\mid {S}=s],
    \end{align*}
    where $\stind{s}(\vZ_i)=\mathbbm{1}\{S_i=s\}$ is the stratum indicator. The second term is $o_P(1)$ due to Condition \ref{assume: dgp}. Moreover, conditioning on $\{S_1,\ldots,S_i\}$, the difference $f(\vZ_i)-E[f(\vZ_i)\mid S_i]$ is mean $0$ and independent of $I_i$. Therefore, by CLT
    $$\frac{1}{\sqrt{n}}\sum_{i=1}^{n}(2I_i-1)f(\vZ_i)\rightsquigarrow \left(E\{\var[f(\vZ)\mid {S}]\}\right)^{1/2}\xi.$$
    Under CCAR, for $f$ that is not strongly balanced, the proof is established by directly applying \cite[Theorem 3.6]{ma2024new}. The case for directly blanaced $f$ is trivial by Condition \ref{assume: dgp}. The proof is finished.
\end{proof}

Recall the set up of CAR with scalar feature map discussed in Section \ref{sec:instability}. We now explain and quantify the variance inflation phenomenon theoretically. For
$f:\mathbb{R}^p\to\mathbb{R}$ with $E[f^2(\vZ)]<\infty$, set
$$b_f:=\frac{E[f(\vZ)\operatorname{sgn}\psi(\vZ)]}{E|\psi(\vZ)|},
\quad
b_f^{\ast}:=\frac{E[f(\vZ)\psi(\vZ)]}{E[\psi^2(\vZ)]},$$
which are well defined because $P\{\psi(\vZ)=0\}=0$, and $E[\psi^2(\vZ)]<\infty$ by Assumption~\ref{assume:bounded moments}. Note that $E\{f(\vZ)[\operatorname{sgn}\psi(\vZ)]\}=b_fE|\psi(\vZ)|$ by the definition of $b_f$. Since $\psi(z)$ is a scalar, we have for every $\lambda\in\mathbb{R}$ and $\vz\in\mathbb{R}^p$,
\begin{equation}\label{eq:sign-id}
\operatorname{sgn}\{\lambda\psi(\vz)\}=\operatorname{sgn}(\lambda)\operatorname{sgn}\psi(\vz),\qquad \psi(\vz)\operatorname{sgn}\{\lambda\psi(\vz)\}=\operatorname{sgn}(\lambda)|\psi(\vz)|,
\end{equation}
with both sides equaling $0$ whenever $\lambda=0$ or $\psi(\vz)=0$. Recall in Appendix~\ref{app:notation}, $E[(2I_i-1)\mid\mathcal{F}_{i-1},\vZ_i]=(1-2\rho)\operatorname{sgn}\{\Lambda_{i-1}\psi(\vZ_i)\}$. Since $\vZ_i$ is independent of $\mathcal{F}_{i-1}$, Equations \eqref{eq:sign-id} yield, for every $\lambda\in\cS_{\Lambda}$,
\begin{align}
E[\Lambda_i\mid\Lambda_{i-1}=\lambda]&=\lambda+(1-2\rho)\operatorname{sgn}(\lambda)E|\psi(\vZ)|,\label{eq:onestep1}\\
E[\Lambda_i^{2}\mid\Lambda_{i-1}=\lambda]&=\lambda^{2}-2(2\rho-1)E|\psi(\vZ)||\lambda|+E[\psi^2(\vZ)].\label{eq:onestep2}
\end{align}
The following lemma derives the expectation of $|\Lambda|$, which is useful for establishing Proposition \ref{prop:scalar-ccar}.
\begin{lemma}\label{lem:piL-abs}
Under the conditions of Proposition~\ref{prop:scalar-ccar}, the distribution of $\Lambda$ satisfies
\[
\piL(|\Lambda|)=\frac{E[\psi^2(\vZ)]}{2(2\rho-1)E|\psi(\vZ)|}<\infty .
\]
\end{lemma}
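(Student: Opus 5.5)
The plan is to integrate the one-step second-moment identity \eqref{eq:onestep2} against the invariant distribution $\piL$. If $\Lambda_{i-1}\sim\piL$, then $\Lambda_i\sim\piL$ as well, so provided $\piL(\Lambda^2)<\infty$ we may take $\piL$-expectations on both sides of \eqref{eq:onestep2}. This gives
\[
\piL(\Lambda^2)=\piL(\Lambda^2)-2(2\rho-1)E|\psi(\vZ)|\,\piL(|\Lambda|)+E[\psi^2(\vZ)].
\]
Cancelling $\piL(\Lambda^2)$ and dividing by $2(2\rho-1)E|\psi(\vZ)|$ then yields the claimed formula. The division is legitimate because $\rho>1/2$, and $E|\psi(\vZ)|>0$ since $P\{\psi(\vZ)=0\}=0$.

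The first step, then, is to justify that $\piL(\Lambda^2)<\infty$, so that the cancellation is not of the form $\infty-\infty$.

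\textbf{Route 1.} The quickest route is to appeal to the results recalled in Appendix~\ref{app:notation}. Lemma~\ref{lemma:previous result1}(iii) gives $\sup_nE|\hat\kappa_h(\Lambda_n)|^k<\infty$. The underlying drift argument of \cite{ma2024new} (via \cite[Theorem 14.3.7]{meyn2012markov}) in fact gives $\sup_n E\|\Lambda_n\|^k<\infty$ for every $k$ and shows that the chain is positive Harris recurrent. By ergodicity $\Lambda_n$ converges in distribution to $\piL$, and Fatou's lemma then gives $\piL(\Lambda^2)\le\liminf_n E\Lambda_n^2<\infty$.

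\textbf{Route 2.} Alternatively, I would establish the moment bound directly from \eqref{eq:onestep2}, which is a Foster--Lyapunov drift condition with $V(\lambda)=\lambda^2$:
\[
PV(\lambda)-V(\lambda)=-2(2\rho-1)E|\psi(\vZ)|\,|\lambda|+E\psi^2(\vZ).
\]
This is $\le -|\lambda|$ outside a compact set.

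\begin{itemize}
\item By \cite[Theorem 14.3.7]{meyn2012markov}, this drift yields $\piL(|\Lambda|)<\infty$ immediately.
\item To obtain the exact identity, I would instead apply the same comparison theorem with $V=\lambda^2$, $f=c|\lambda|$ and $s=E\psi^2$. It gives the inequality $2(2\rho-1)E|\psi|\,\piL(|\Lambda|)\le E\psi^2$.
\item The reverse inequality, needed for equality, requires $\piL(\Lambda^2)<\infty$. For this I would use the cubic drift $V=|\lambda|^3$. Assumption~\ref{assume:bounded moments} guarantees $E|\psi|^{2+\upsilon}<\infty$, and if only $2+\upsilon<3$ moments are available I would use $V=|\lambda|^{2+\upsilon}$ instead. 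Expanding the one-step increment then shows a drift of order $-|\lambda|^{1+\upsilon}$, which gives $\piL(|\Lambda|^{1+\upsilon})<\infty$.
\end{itemize}

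Route 2 by itself does not quite reach $\piL(\Lambda^2)<\infty$ when $\upsilon<1$. In that case I would instead truncate: apply invariance to $V_M=\Lambda^2\wedge M$, let $M\to\infty$, and use monotone convergence together with $\piL(|\Lambda|)<\infty$ to pass to the limit. This truncation step is the main technical obstacle, because the increments of $\Lambda^2\wedge M$ must be controlled near the truncation boundary. I expect to handle it via $|\Lambda_i|\le|\Lambda_{i-1}|+|\psi(\vZ_i)|$, which bounds the boundary error by $E[\psi^2\mathbbm{1}\{\cdot\}]$-type terms that vanish as $M\to\infty$ by dominated convergence.

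Finally, finiteness of $\piL(|\Lambda|)$ follows from the formula itself, since $E\psi^2(\vZ)<\infty$ by Assumption~\ref{assume:bounded moments}.
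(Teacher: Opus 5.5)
Your headline argument has a genuine gap. Cancelling $\piL(\Lambda^2)$ from both sides requires $\piL(\Lambda^2)<\infty$. The cited results do not supply this, and it is false in general under the standing assumption $E|\psi(\vZ)|^{2+\upsilon}<\infty$.

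Route~1's claim that the drift argument of \cite{ma2024new} yields $\sup_nE\|\Lambda_n\|^k<\infty$ for every $k$ fails in this scalar setting:
\begin{itemize}
\item Put $X_i=|\psi(\vZ_i)|$ and let $\epsilon_i=\pm1$ be i.i.d.\ with $P(\epsilon_i=-1)=\rho$. The assignment rule lets you write $|\Lambda_i|=\bigl||\Lambda_{i-1}|+\epsilon_iX_i\bigr|$.
\item This pathwise dominates the Lindley recursion $L_i=\max\{L_{i-1}+\epsilon_iX_i,0\}$ with $L_0=0$. Hence, under stationarity, $\piL(\Lambda^2)\ge\sup_iE(L_i^2)$.
\item By the Kiefer--Wolfowitz moment criterion for the supremum of a negatively drifted random walk, $\sup_iE(L_i^2)<\infty$ if and only if $EX^3<\infty$ (when $\rho<1$).
\end{itemize}
So with, e.g., $P(|\psi(\vZ)|>x)\asymp x^{-5/2}$, Assumption~\ref{assume:bounded moments} holds for every $\upsilon<1/2$, yet $\piL(\Lambda^2)=\infty$. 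The lemma itself, which needs only $E[\psi^2(\vZ)]<\infty$, is still true.

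Route~2 rightly notices that the $|\lambda|^{2+\upsilon}$ drift only reaches $\piL(|\Lambda|^{1+\upsilon})$. The real misconception is your sentence that the reverse inequality ``requires $\piL(\Lambda^2)<\infty$''.

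It does not: only integrability of the one-step increment is needed. Take $\Lambda_0\sim\piL$, $\Lambda_1=\Lambda_0+(2I_1-1)\psi(\vZ_1)$, and the $1$-Lipschitz map $\varphi_M(x)=x\wedge M$. Then
\[ |\varphi_M(\Lambda_1^2)-\varphi_M(\Lambda_0^2)|\le|\Lambda_1^2-\Lambda_0^2|\le 2|\Lambda_0||\psi(\vZ_1)|+\psi^2(\vZ_1), \]
which is integrable because $\vZ_1$ is independent of $\Lambda_0$ and $\piL(|\Lambda|)<\infty$. The argument then runs as follows:
\begin{itemize}
\item Stationarity gives $E[\varphi_M(\Lambda_1^2)-\varphi_M(\Lambda_0^2)]=0$ for every $M$.
\item Dominated convergence gives $E(\Lambda_1^2-\Lambda_0^2)=0$. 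Monotone convergence does not apply, since the integrand is a signed difference.
\item Conditioning on $\Lambda_0$ via \eqref{eq:onestep2} turns this into $2(2\rho-1)E|\psi(\vZ)|\,\piL(|\Lambda|)=E[\psi^2(\vZ)]$.
\end{itemize}
This is exactly the second half of the paper's proof. Your truncation ``fallback'' should therefore be the main argument for every $\upsilon$, and Route~1 and the higher-power drifts should be dropped.

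For the input $\piL(|\Lambda|)<\infty$, your comparison-theorem step with $V(\lambda)=\lambda^2$ is acceptable. It even gives the sharp upper bound at once, though \cite[Theorem 14.3.7]{meyn2012markov} is stated for positive recurrent chains. The paper instead obtains finiteness from invariance alone, by applying $E[\varphi_M(X)\mid\mathcal{G}]\le\varphi_M(E[X\mid\mathcal{G}])$ to $X=\Lambda_1^2$. That loses a factor $2$ in the bound but needs no irreducibility.

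Finally, drop your closing sentence: finiteness of $\piL(|\Lambda|)$ is an input to the identity, not a consequence of it.
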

\begin{proof}
Let $\Lambda_0$ follow $\piL$ and $\Lambda_1=\Lambda_0+(2I_1-1)\psi(\vZ_1)$ be one stationary transition, so that $\Lambda_1$ follows $\piL$ as well. Define $$c(\lambda)=2(2\rho-1)E|\psi(\vZ)||\lambda|-E[\psi^2(\vZ)],$$ so that \eqref{eq:onestep2} implies $E[\Lambda_1^{2}\mid\Lambda_0=\lambda]=\lambda^{2}-c(\lambda)$. By definition, $\lambda^{2}-c(\lambda)\ge 0$ and $c(\lambda)\ge -E[\psi^2(\vZ)]$ for every $\lambda$.

The first step is to show boundedness. Fix $M>0$ and let $\varphi_M(x)=x\wedge M$ for $x\ge 0$. For any nonnegative random variable $X$ and any $\sigma$-field $\mathcal{G}$,
\[
E[\varphi_M(X)\mid\mathcal{G}]\le\min\bigl\{E[X\mid\mathcal{G}],M\bigr\}=\varphi_M\bigl(E[X\mid\mathcal{G}]\bigr).
\]
Applying this with $X=\Lambda_1^{2}$ and $\mathcal{G}=\sigma(\Lambda_0)$, taking expectations, and using the stationarity of $\pi$, we obtain
$$
0\le E\bigl[\varphi_M\bigl(\Lambda_0^{2}-c(\Lambda_0)\bigr)-\varphi_M(\Lambda_0^{2})\bigr].
$$
We bound the integrand pointwise, writing $a=\Lambda_0^{2}$ and $c=c(\Lambda_0)$. When $c\ge 0$ and $a\le M$, both $a-c\le a\le M$, so the difference in above display equals $-c$. When $c\ge 0$ and $a>M$, monotonicity of $\varphi_M$ makes the difference nonpositive. When $c<0$, the $1$-Lipschitz property of $\varphi_M$ bounds the difference by $|c|\le E[\psi^2(\vZ)]$. Combining the three cases,
$$
E\bigl[c^{+}(\Lambda_0)\mathbbm{1}\{\Lambda_0^{2}\le M\}\bigr]\le -c^+ \mathbbm{1}\{\Lambda_0^{2}\le M\} + E[\psi^2(\vZ)]\mathbbm{1}\{c<0\} \le E[\psi^2(\vZ)],
$$
with $c^+= c\vee 0$. Monotone convergence theorem as $M\to\infty$ gives $E[c^{+}(\Lambda_0)]\le E[\psi^2(\vZ)]$. Since $2(2\rho-1)E|\psi(\vZ)||\lambda|=c(\lambda)+E[\psi^2(\vZ)]\le c^{+}(\lambda)+E[\psi^2(\vZ)]$ for each $\lambda$, taking expectations on both sides with respect to $\piL$ yields $$\piL(|\Lambda|)\le \frac{E[\psi^2(\vZ)]}{\{(2\rho-1)E|\psi(\vZ)|\}}<\infty.$$

Now we calculate the exact value of $\piL(|\Lambda|)$. For fixed $M>0$, stationarity gives
\begin{equation}\label{eq:trunc}
E\bigl[\varphi_M(\Lambda_1^{2})-\varphi_M(\Lambda_0^{2})\bigr]=0.
\end{equation}
Moreover, the $1$-Lipschitz property and $\Lambda_1=\Lambda_0+(2I_1-1)\psi(\vZ_1)$ gives
$$
\bigl|\varphi_M(\Lambda_1^{2})-\varphi_M(\Lambda_0^{2})\bigr|\le|\Lambda_1^{2}-\Lambda_0^{2}|\le 2|\Lambda_0||\psi(\vZ_1)|+\psi^{2}(\vZ_1),
$$
whose expectation is at most $2\piL(|\Lambda|)E|\psi(\vZ)|+E[\psi^2(\vZ)]<\infty$ by above derivations. Since $\varphi_M(\Lambda_1^{2})-\varphi_M(\Lambda_0^{2})\to\Lambda_1^{2}-\Lambda_0^{2}$ as $M\to\infty$, dominated convergence applied to \eqref{eq:trunc} gives $E(\Lambda_1^{2}-\Lambda_0^{2})=0$. On the other hand, $E(\Lambda_1^{2}-\Lambda_0^{2}\mid \Lambda_0)=-c(\Lambda_0)$, so applying the law of total expectation, we get
$$
0=E(\Lambda_1^{2}-\Lambda_0^{2})=E[-c(\Lambda_0)]=-2(2\rho-1)E|\psi(\vZ)|\piL(|\Lambda|)+E[\psi^2(\vZ)],
$$
where $E[c(\Lambda_0)]$ is finite by the boundedness step. Solving for $\piL(|\Lambda|)$ proves the lemma.
\end{proof}

The next proposition contains some detailed, intermediate results that reveals the mechanism under variance inflation, which provides intuition on why this inflation may happen. 

\begin{prop}\label{prop:scalar-ccar-detail}
Under the conditions of Proposition~\ref{prop:scalar-ccar}, the following hold.

\noindent(i) $n^{-1/2}D_n^{(\tCC)}(f)\rightsquigarrow\sigma_f^{(\tCC)}\xi$ with
\begin{equation}\label{eq:scalar-var}
\bigl(\sigma_f^{(\tCC)}\bigr)^2= E\bigl[\{f(\vZ)-b_f\psi(\vZ)\}^2\bigr].
\end{equation}

\noindent(ii) The solution of the Poisson equation in Lemma~\ref{lemma:previous result1}(ii) is has a closed form $\widehat{\kappa}_f(\lambda)=-b_f\lambda$, for all $\lambda\in\cS_{\Lambda}$. Consequently, $2{\piL(\zeta_f)}=-2b_fE[f(\vZ)\psi(\vZ)]+b_f^2E[\psi^2(\vZ)]$.

\noindent(iii) Furthermore, for scalar feature maps, $f$ is strongly balanced if and only if $f\in\operatorname{span}\{\psi\}$ and 
\begin{equation}\label{eq:gain-loss}
\bigl(\sigma_f^{(\tCC)}\bigr)^2=\underbrace{\bigl(\sigma_f^{(\tCR)}\bigr)^2-\frac{\{E[f(\vZ)\psi(\vZ)]\}^{2}}{E[\psi^{2}(\vZ)]}}_{\text{best possible after projecting out }\operatorname{span}\{\psi\}}+\underbrace{(b_f-b_f^{\ast})^{2}E[\psi^{2}(\vZ)]}_{\text{misprojection penalty}}.
\end{equation}
\end{prop}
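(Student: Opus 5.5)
The plan is to prove (ii) first, since (i) and (iii) follow from it by algebra. In the scalar case, $\kappa_f(\lambda)=(1-2\rho)E[f(\vZ)\operatorname{sgn}\{\lambda\psi(\vZ)\}]$. By the sign identity \eqref{eq:sign-id} this equals $(1-2\rho)\operatorname{sgn}(\lambda)\,b_fE|\psi(\vZ)|$. I take the candidate $\widehat\kappa_f(\lambda)=-b_f\lambda$ and apply \eqref{eq:onestep1}:
$$\PL\widehat\kappa_f(\lambda)=-b_f\lambda-b_f(1-2\rho)\operatorname{sgn}(\lambda)E|\psi(\vZ)|.$$
Hence $\widehat\kappa_f-\PL\widehat\kappa_f=\kappa_f$. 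Symmetry of $\piL$ gives $\piL(\kappa_f)=0$, and clearly $\widehat\kappa_f(0)=0$. The candidate grows only linearly, so the uniqueness part of Lemma~\ref{lemma:previous result1}(ii) identifies it as the solution. I expect the only subtle point to be checking that uniqueness holds within the growth class allowed by Lemma~\ref{lemma:previous result1}(iii); a linear function is well inside the $C(\|\lambda\|^2+1)$ bound.

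Next I plug $\widehat\kappa_f$ into $\zeta_f$:
$$\zeta_f(\lambda)=-b_fE\bigl[f(\vZ)\{\lambda+\psi(\vZ)\}\{1+(1-2\rho)\operatorname{sgn}(\lambda)\operatorname{sgn}\psi(\vZ)\}\bigr].$$
Expanding this product gives four terms.
\begin{itemize}
\item $-b_f\lambda E f$ integrates to zero under the symmetric $\piL$.
\item $-b_fE[f\psi]$ is constant in $\lambda$.
\item $-b_f(1-2\rho)|\lambda|\,b_fE|\psi|$: here $\lambda\operatorname{sgn}(\lambda)=|\lambda|$ and the sign factor produces $b_fE|\psi|$. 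By Lemma~\ref{lem:piL-abs}, $\piL(|\Lambda|)=E\psi^2/\{2(2\rho-1)E|\psi|\}$, so this term integrates to $b_f^2E[\psi^2]/2$.
\item $-b_f(1-2\rho)\operatorname{sgn}(\lambda)E[f|\psi|]$ is odd in $\lambda$ and vanishes.
\end{itemize}
Doubling the surviving terms gives $2\piL(\zeta_f)=-2b_fE[f\psi]+b_f^2E\psi^2$, which is (ii).

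For (i), I invoke Theorem~\ref{theorem: design property}, which gives $(\sigma_f^{(\tCC)})^2=E f^2+2\piL(\zeta_f)$. Substituting (ii), this becomes $Ef^2-2b_fE[f\psi]+b_f^2E\psi^2=E[(f-b_f\psi)^2]$, which is \eqref{eq:scalar-var}. The strongly balanced case is handled separately in (iii).

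For (iii), I expand $E[(f-b_f\psi)^2]$ by completing the square around $b_f^\ast$. This yields $Ef^2-(E f\psi)^2/E\psi^2+(b_f-b_f^\ast)^2E\psi^2$, which is \eqref{eq:gain-loss}.

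The "if" direction of the characterisation is immediate: if $f=c\psi$, then $D_n(f)=c\Lambda_n=o_P(\sqrt n)$ by Condition~\ref{assume: dgp}. For "only if", I note that strong balance forces $\sigma_f^{(\tCC)}=0$, so $f=b_f\psi$ almost surely by \eqref{eq:scalar-var}, i.e. $f\in\operatorname{span}\{\psi\}$. This uses the fact that the limit formula of Theorem~\ref{theorem: design property} (from \cite[Theorem 3.6]{ma2024new}) holds for every square-integrable $f$, not only unbalanced ones. Conversely, for $f\in\operatorname{span}\{\psi\}$ we get $b_f=b_f^\ast$ and both terms in \eqref{eq:gain-loss} vanish, which is consistent.
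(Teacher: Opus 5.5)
Your proof is correct, but it reaches part (i) by a different route from the paper.

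The paper proves (i) first and directly. Setting $h=f-b_f\psi$ gives $E[h(\vZ)\operatorname{sgn}\psi(\vZ)]=0$, so $\kappa_h\equiv0$. Hence $(2I_i-1)h(\vZ_i)$ is already a martingale difference sequence with constant conditional variance $E[h^2(\vZ)]$. The martingale CLT, the identity $D_n^{(\tCC)}(f)=\cM_n+b_f\Lambda_n$ and $\Lambda_n=o_P(\sqrt n)$ then give (i) for every square-integrable $f$. Part (ii) is computed afterwards, essentially as you do.

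You instead solve the Poisson equation first and substitute into the general CCAR variance $E[f^2(\vZ)]+2\piL(\zeta_f)$. The two routes use the same decomposition seen from opposite ends: inserting $\widehat\kappa_f(\lambda)=-b_f\lambda$ into the martingale difference of Lemma~\ref{lemma:previous result1}(iv) gives exactly $(2I_i-1)\{f(\vZ_i)-b_f\psi(\vZ_i)+\varepsilon_i\}$.

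The paper's route buys self-containment:
\begin{itemize}
\item Part (i) needs no Poisson-equation theory, ergodic theorem or external CLT, only $\Lambda_n=o_P(\sqrt n)$.
\item It covers strongly balanced $f$ at the same time, so the \emph{only if} half of (iii) is immediate.
\item It sidesteps the fact that a non-constant scalar $\psi$ has no constant feature. Such a design therefore need not satisfy Condition~\ref{assume: dgp}, under which Theorem~\ref{theorem: design property} is formulated.
\end{itemize}

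Your route is shorter given that machinery. As you yourself flag, though, it needs the variance formula for every square-integrable $f$, while Theorem~\ref{theorem: design property} as stated supplies it only in the non-strongly-balanced case. To close the strongly balanced case, cite the CCAR part of the proof of Lemma~\ref{lemma:find var}. That argument establishes $E[h^2(\vZ)]+2\piL(\zeta_h)$ for arbitrary square-integrable $h$ via the Poisson-equation martingale of Lemma~\ref{lemma:previous result1}.
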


\begin{proof}[Proof of Propositions~\ref{prop:scalar-ccar} and \ref{prop:scalar-ccar-detail}]

\noindent\textbf{Part (i).} Set $h:=f-b_f\psi$, so that $E[h^2(\vZ)]<\infty$ and by the definition of $b_f$,
\[
E[h(\vZ)\operatorname{sgn}\psi(\vZ)]=E[f(\vZ)\operatorname{sgn}\psi(\vZ)]-b_fE[\psi(\vZ)\operatorname{sgn}\psi(\vZ)]=0.
\]
For $i\ge 1$ let $\Delta\cM_i:=(2I_i-1)h(\vZ_i)$ and $\cM_n:=\sum_{i=1}^{n}\Delta\cM_i$. Conditioning on $(\mathcal{F}_{i-1},\vZ_i)$ first and then using the independence of $\vZ_i$ from $\mathcal{F}_{i-1}$, the function $\kappa_h$ of Appendix~\ref{app:notation} gives $E[\Delta\cM_i\mid\mathcal{F}_{i-1}]=\kappa_h(\Lambda_{i-1})$, and \eqref{eq:sign-id} gives $\kappa_h(\lambda)=(1-2\rho)\operatorname{sgn}(\lambda)E[h(\vZ)\operatorname{sgn}\psi(\vZ)]=0$ for every $\lambda$. Hence $(\cM_i)_{i\ge1}$ is a martingale with respect to $\mathcal{F}_i$. Further, since $(2I_i-1)^2=1$ and $\vZ_i$ is independent of $\mathcal{F}_{i-1}$, we get $E[\Delta\cM_i^{2}\mid\mathcal{F}_{i-1}]$$=E[h^{2}(\vZ_i)\mid\mathcal{F}_{i-1}]=E[h^{2}(\vZ)]$, a constant. Thus, $n^{-1}\sum_{i=1}^{n}E[\Delta\cM_i^{2}\mid\mathcal{F}_{i-1}]$$=E[h^{2}(\vZ)]$ for every $n$. For the conditional Lindeberg condition, for every $\epsilon>0$,
\[
\frac{1}{n}\sum_{i=1}^{n}E\bigl[{\Delta\cM_i^{2}}\mathbbm{1}\{|{\Delta\cM_i}|>\epsilon\sqrt{n}\}\mid\mathcal{F}_{i-1}\bigr]=E\bigl[h^{2}(\vZ)\mathbbm{1}\{|h(\vZ)|>\epsilon\sqrt{n}\}\bigr]\to 0
\]
as $n\to\infty$, by dominated convergence. The martingale central limit theorem \cite[Corollary~3.1]{hall2014martingale}, applied to the array $n^{-1/2}\Delta\cM_i$ with filtration $\mathcal{F}_i$, gives $n^{-1/2}\cM_n$$\rightsquigarrow \mathcal{N}\bigl(0,E[h^{2}(\vZ)]\bigr)$. By the definition of $h$,
$$
D_n^{(\tCC)}(f)=\sum_{i=1}^{n}(2I_i-1)\bigl\{h(\vZ_i)+b_f\psi(\vZ_i)\bigr\}={\cM_n}+b_f\Lambda_n .
$$
Since $\psi$ is strongly balanced by the design, we have $n^{-1/2}\Lambda_n=o_P(1)$. Slutsky's theorem now yields part (i).

\noindent\textbf{Part (ii).} Recall $\kappa_f$ from Appendix~\ref{app:notation} and by \eqref{eq:sign-id},
\[
\kappa_f(\lambda)=(1-2\rho)E\bigl[f(\vZ)\operatorname{sgn}\{\lambda\psi(\vZ)\}\bigr]=(1-2\rho)b_fE|\psi(\vZ)|\operatorname{sgn}(\lambda),
\]
which is an odd step function. This shows that $\piL(\kappa_f)=0$ by the symmetry of $\piL$. Consider the candidate function $\widetilde{\kappa}(\lambda)=-b_f\lambda$. The mean $\PL\widetilde{\kappa}(\lambda)=-b_fE[\Lambda_1\mid\Lambda_0=\lambda]$ is finite because $E[|\Lambda_1|\mid\Lambda_0=\lambda]\le|\lambda|+E|\psi(\vZ)|$, and \eqref{eq:onestep1} gives
\[
\widetilde{\kappa}(\lambda)-\PL\widetilde{\kappa}(\lambda)=-b_f\lambda+b_f\bigl\{\lambda+(1-2\rho)\operatorname{sgn}(\lambda)E|\psi(\vZ)|\bigr\}=(1-2\rho)b_fE|\psi(\vZ)|\operatorname{sgn}(\lambda)=\kappa_f(\lambda)-\piL(\kappa_f)
\]
for every $\lambda\in\cS_{\Lambda}$ while $\widetilde{\kappa}(0)=0$. By the uniqueness in Lemma~\ref{lemma:previous result1}(ii), the solution to the Poisson equation is essentially $\widehat{\kappa}_f=\widetilde{\kappa}$. 

Substituting $\widehat{\kappa}_f(\lambda+\psi(z))=-b_f\{\lambda+\psi(z)\}$ into the definition of $\zeta_f$ in Appendix~\ref{app:notation} and using \eqref{eq:sign-id},
\begin{align*}
\zeta_f(\lambda)&=-b_fE\Bigl[f(\vZ)\{\lambda+\psi(\vZ)\}\bigl\{1+(1-2\rho)\operatorname{sgn}(\lambda)\operatorname{sgn}\psi(\vZ)\bigr\}\Bigr]\\
&=-b_f\Bigl[\lambda E[f(\vZ)]+E[f(\vZ)\psi(\vZ)]+(1-2\rho)\operatorname{sgn}(\lambda)\bigl\{\lambda b_fE|\psi(\vZ)|+E[f(\vZ)|\psi(\vZ)|]\bigr\}\Bigr].
\end{align*}
We integrate against $\piL$ term by term. The law $\piL$ is symmetric and $\piL(|\Lambda|)<\infty$ by Lemma~\ref{lem:piL-abs}, so $\piL(\lambda)=0$; because $\operatorname{sgn}$ is odd and bounded with $\operatorname{sgn}(0)=0$, also $\piL\{\operatorname{sgn}(\lambda)\}=0$; and finally $\piL\{\lambda\operatorname{sgn}(\lambda)\}=\piL(|\Lambda|)$. Thus
\[
\piL(\zeta_f)=-b_fE[f(\vZ)\psi(\vZ)]-b_f(1-2\rho)b_fE|\psi(\vZ)|\piL(|\Lambda|)=-b_fE[f(\vZ)\psi(\vZ)]+\tfrac{1}{2}b_f^{2}E[\psi^2(\vZ)],
\]
where the last equality uses Lemma~\ref{lem:piL-abs}, Part (ii) is proved.

\noindent\textbf{Part (iii).} Because $b_f^{\ast}$ minimizes $b\to E[\{f(\vZ)-b\psi(\vZ)\}^2]$ and the minimizer satisfies the normal equation $E[\{f-b_f^{\ast}\psi\}\psi]=0$, we have
\[
E[\{f-b_f\psi\}^{2}]=E[\{f-b_f^{\ast}\psi\}^{2}]+(b_f-b_f^{\ast})^{2}E[\psi^2(\vZ)]=E[f^{2}(\vZ)]-\frac{\{E[f(\vZ)\psi(\vZ)]\}^{2}}{E[\psi^2(\vZ)]}+(b_f-b_f^{\ast})^{2}E[\psi^2(\vZ)],
\]
which is \eqref{eq:gain-loss}. For the equivalence, suppose first that $f=c\psi$ almost surely for some constant $c$. Then $b_f=cE[\psi\operatorname{sgn}\psi]/E|\psi(\vZ)|=c$, so $h\equiv 0$ and $D_n^{(\tCC)}(f)=c\Lambda_n=O_P(1)=o_P(\sqrt{n})$ by the tightness established in Part (i); thus $f$ is strongly balanced. Conversely, if $f$ is strongly balanced, then $n^{-1/2}D_n^{(\tCC)}(f)=o_P(1)$, which together with Part (i) forces $E[\{f-b_f\psi\}^{2}]=0$, that is, $f=b_f\psi$ almost surely, so $f\in\operatorname{span}\{\psi\}$. The proof is finished.
\end{proof}

Proposition \ref{prop:scalar-ccar-detail} reveals important intuitions. It solves the Poisson equation under a specific case and quantifies the loss of using CCAR compared with CR. The variance inflation comes from the difference between $b_f$ and $b_f^*$, which can be viewed as projections of $f(\vZ)$ onto different spaces. This is, in fact, connected to the based-coin probability in \textbf{Step 3.} of the general CAR procedure in Subsection \ref{subsec:CAR and discretization}. The biased-coin assignment decides to assign patients via the sign of $\langle \Lambda, \psi\rangle$, while the update of imbalance is $\pm \psi$. Thus, the magnitude of $\psi$ is not fully absorbed into the adaptive, biased-coin decision. In fact, relevant discussion has already shown that the utilization of continuous allocation functions may mitigate this issue \citep[Equation (2.3)]{lixin2026covariateadaptiverandomizationclinicaltrials}. 

Finally, we prove Proposition \ref{prop:scalar-ccar}.
\begin{proof}[Proof of Proposition \ref{prop:scalar-ccar}]
Write $\tilde{f}:=f\operatorname{sgn}\psi$, so that $E[\tilde{f}^{2}]=E[f^{2}(\vZ)]$ because $P\{\psi(\vZ)=0\}=0$, $b_f=E(\tilde{f})/E|\psi(\vZ)|$, and $\tilde{f}-b_f|\psi|=\operatorname{sgn}(\psi)\{f-b_f\psi\}$. Because multiplication by $\operatorname{sgn}\psi$ is an isometric involution on $L^{2}$ up to null sets, maximizing the inflation ratio is essentially maximizing
\[
{R(\tilde{f}):=\frac{E\bigl[\{\tilde{f}-(E[\tilde{f}]/E|\psi(\vZ)|)|\psi|\}^{2}\bigr]}{E[\tilde{f}^{2}]}}.
\]
Set ${r_\psi}:=|\psi|/E|\psi(\vZ)|-1$ and $\nu_\psi^{2}:=E[r_\psi^{2}]=\var(|\psi(\vZ)|)/\{E|\psi(\vZ)|\}^{2}$, and note that $E[{r_\psi}]=0$. If $E[\tilde{f}]=0$ then $R(\tilde{f})=1$. If $E[\tilde{f}]\neq 0$, by homogeneity of $R$ we may normalize $E[\tilde{f}]=1$ and write $\tilde{f}=1+w$ with $E[w]=0$; then $\tilde{f}-(E[\tilde{f}]/E|\psi(\vZ)|)|\psi|=w-r_\psi$ and, with $t:=(E[w^{2}])^{1/2}$,
\[
{R(\tilde{f})=\frac{E[w^{2}]-2E[wr_\psi]+\nu_\psi^{2}}{1+E[w^{2}]}\le\frac{(t+\nu_\psi)^{2}}{1+t^{2}}},
\]
by the Cauchy--Schwarz inequality $E[wr_\psi]\ge -t\nu_\psi$, with equality exactly when $w$ is a nonpositive multiple of ${r_\psi}$. If $\nu_\psi=0$, the right-hand side equals $t^{2}/(1+t^{2})<1$, so $R\le 1$ for every $f$, with equality whenever $E[\tilde{f}]=0$; this proves the `only if' part of the final claim. If $\nu_\psi>0$, differentiating in $t\ge 0$ gives
\[
{\frac{d}{dt}\frac{(t+\nu_\psi)^{2}}{1+t^{2}}=\frac{2(t+\nu_\psi)(1-t\nu_\psi)}{(1+t^{2})^{2}}},
\]
which is positive for $t<1/\nu_\psi$ and negative for $t>1/\nu_\psi$, so the maximum over $t\ge0$ is attained at $t=1/\nu_\psi$ with value $1+\nu_\psi^{2}$, and is achieved exactly by $w=-r_\psi/\nu_\psi^{2}$, that is, by
\[
{{f^{\ast}}:=\operatorname{sgn}(\psi)\tilde{f} = \operatorname{sgn}(\psi)\Bigl\{1+\frac{1}{\nu_\psi^{2}}\Bigl(1-\frac{|\psi|}{E|\psi(\vZ)|}\Bigr)\Bigr\}}
\]
up to nonzero scalar multiples, where the second equality is due to $(\operatorname{sgn}\psi)^2\equiv 1$. Hence $\sup R=1+\nu_\psi^{2}$, attained uniquely up to scale, which is the worst-case inflation of Proposition~\ref{prop:scalar-ccar}.
\end{proof}

\subsection{Proof of inference properties}\label{app:proof inference}
Some additional notations are needed for the proof. Denote by $\vq:= \{q_s:s\in[\prod_j m_j]\}^\top$ the stratum probability $P({S}=s)$, by $\vq_{1:d_S}$ the first $d_S$ entries, and let $q_{\text{rem}}=1-\sum_{s\le d_S}q_s>0$ be the remainder. Further, for $a\in\{\aZ,\aS,\aCM\}$ let
\begin{align*}
    &\vW_i^{(\aZ)} = ({Z_{i,1}},\ldots, {Z_{i,d_Z}}),\\
    &\vW_i^{(\aS)} = (\stind{1}(\vZ_i),\ldots,\stind{d_S}(\vZ_i)),\\
    &\vW_i^{(\aCM)} = (\stind{1}(\vZ_i),\ldots,\stind{d_S}(\vZ_i),{\widetilde{Z}_{i,1},\ldots, \widetilde{Z}_{i,d_Z}}),
\end{align*}
be the design vectors with treatment removed and set $\bar{\vW}_1^{(r,a)} = n_1^{-1}\sum_{i=1}^{n}I_i^{(r)} \vW_i^{(a)}, \bar{\vW}_0^{(r,a)} = n_0^{-1}\sum_{i=1}^{n}(1-I_i^{(r)}) \vW_i^{(a)}$, $n_1,n_0$ are the number of subjects in treatment arm $1,0$, respectively.

\begin{remark}
    In practice, $\mu_s=E(\vZ\mid S=s)$ is not known and needs to be estimated before model fitting. We use the within-stratum sample mean $\bar{\vZ}_s:=N_s^{-1}\sum_{i:S_i=s}\vZ_i$, $N_s$ is the number of subjects in stratum $s$ to estimate. For fixed $m$, $\max_s\|\bar{\vZ}_s-\mu_s\|=O_p(n^{-1/2})$ by the Central Limit Theorem. Then, viewing $\hat{\theta}$ as a function of the population, or sample means, it follows by an application of the Mean-Value Theorem, that $\hat{\theta}(\bar{\vZ}_s:s\in [m]) - \hat{\theta}(\mu_s:s\in[m]) = o_p(n^{-1/2})$. Therefore the replacement induces no asymptotic difference. 
\end{remark}

Before proving Theorem \ref{theorem:combined}, we present two lemmas. The OLS estimator is decomposed into three components via the following lemma.
\begin{lemma}\label{lemma:OLS decomposition}
    For any design and analysis strategy $r\in \{\tCR,\tDC,\tCC\}, a\in\{\aZ,\aS\}$, the following decomposition for OLS estimator holds
    $$
        \sqrt{n}\vL(\hat{\vbeta}_{n}^{(r,a)}-\vbeta^{(a)}) =R_{n,1}^{(r,a)}+R_{n,2}^{(r,a)}+o_P(1)
    $$
    where $R_{n,1}^{(r,a)}, R_{n,2}^{(r,a)}$ are defined throughout the proof.
\end{lemma}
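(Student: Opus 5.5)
The plan is to use the Frisch--Waugh--Lovell structure of the working model \eqref{eq:working model 1}. Write $Y_i=\theta_1I_i+\theta_0(1-I_i)+\vW_i^{(a)}\vGamma^{(a)}+e_i$, where $\vGamma^{(a)}$ is the population best linear projection coefficient of $g(\vZ)$ onto $(1,\vW^{(a)})$; this is $\vGamma$ for $a=\aZ$ and $\veta$ for $a=\aS$. The residual is
\begin{equation*}
e_i=u^{(a)}(\vZ_i)+\varepsilon_i,\qquad u^{(a)}:=g-c^{(a)}-\vW^{(a)}\vGamma^{(a)}.
\end{equation*}
The constant $c^{(a)}$ is absorbed into both arm intercepts, so $\vL\vbeta^{(a)}$ is unaffected. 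Since both arm dummies are in the model, the standard partitioned-regression identity gives
\begin{equation*}
\vL\hat{\vbeta}_n^{(r,a)}-\vL\vbeta^{(a)}=(\bar{e}_1-\bar{e}_0)-(\bar{\vW}_1^{(r,a)}-\bar{\vW}_0^{(r,a)})(\hat{\vGamma}^{(a)}-\vGamma^{(a)}),
\end{equation*}
with $\bar e_k$ the arm means of $e_i$.

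First, I would rewrite the arm means in terms of the imbalance functionals \eqref{eq:general imbalance}. Using $n_1=(n+D_n)/2$ and $n_0=(n-D_n)/2$ together with $D_n=o_P(\sqrt n)$ from Condition \ref{assume: dgp},
\begin{equation*}
\sqrt n(\bar e_1-\bar e_0)=\frac{2}{\sqrt n}\sum_{i=1}^n(2I_i-1)e_i-\frac{2D_n}{\sqrt n}\cdot\frac1n\sum_{i=1}^n e_i+o_P(1).
\end{equation*}
The second term is $o_P(1)\cdot O_P(1)$ by the LLN, because $E e=0$ by construction of the projection. So I would set
\begin{equation*}
R_{n,1}^{(r,a)}:=\frac{2}{\sqrt n}\Bigl\{D_n^{(r)}(u^{(a)})+\sum_{i=1}^n(2I_i-1)\varepsilon_i\Bigr\}.
\end{equation*}
The same algebra applied to the covariate block gives $\sqrt n(\bar{\vW}_1-\bar{\vW}_0)=2n^{-1/2}D_n^{(r)}(\vW^{(a)})+o_P(1)$. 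By Theorem \ref{theorem: design property} applied coordinatewise, this quantity is $O_P(1)$ for every $r$.

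Second, I would show $\hat{\vGamma}^{(a)}-\vGamma^{(a)}=o_P(1)$. Partialling out the arm dummies leaves within-arm centered Gram matrices. By the LLN for i.i.d.\ $\vZ_i$, together with $D_n^{(r)}(\cdot)/n\to0$ (again from Theorem \ref{theorem: design property}), these converge to $\var(\vW^{(a)})$. That matrix is positive definite: for $a=\aZ$ this follows from Assumption \ref{assume:independent copy}(2), and for $a=\aS$ from $q_s>0$ with $d_S\le m-1$. The cross term $n^{-1}\sum(\vW_i-\bar{\vW}_{k})e_i$ tends to $E[\vW^{(a)}e]=\vZero$ by the normal equations. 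Putting these together, the second piece is $R_{n,2}^{(r,a)}:=-2n^{-1/2}D_n^{(r)}(\vW^{(a)})(\hat{\vGamma}^{(a)}-\vGamma^{(a)})$. This term is $O_P(1)\cdot o_P(1)$, but I would keep it explicit as in the lemma, in case later proofs need it, for instance under CCAR with $\vW=\vZ$, where it vanishes faster.

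The main obstacle is the careful handling of the $n_1,n_0$ random denominators and the partialled-out Gram matrix under dependent assignments. This matters especially for CCAR, where $I_i$ is not independent of $\vZ_i$ given strata. I would handle it only through the imbalance functionals $D_n^{(r)}(f)$ for square-integrable $f$ (e.g.\ $f=W_jW_k$, $W_je$), each of which is $O_P(\sqrt n)$ by Theorem \ref{theorem: design property}. That keeps every remainder bounded by $O_P(n^{-1/2})$ uniformly over $r$. The step needs square integrability of such products, which holds under the moment conditions (Assumption \ref{assume:bounded moments} together with finite second moments of $g$). If $f=W_jW_k$ fails square integrability, I would fall back on a weak LLN for martingale differences to get $n^{-1}D_n(f)=o_P(1)$.
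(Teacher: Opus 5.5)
Your proposal is correct, but it takes a different route from the paper. You absorb the projection intercept $c^{(a)}$ into both arm intercepts, which makes the residual $e_i=u^{(a)}(\vZ_i)+\varepsilon_i$ centered. You then use the exact partitioned-regression identity. This gives, in one step, $\sqrt n\vL(\hat{\vbeta}_n^{(r,a)}-\vbeta^{(a)})=2n^{-1/2}\sum_i(2I_i-1)e_i+o_P(1)$, which is exactly the paper's final expansion \eqref{eq:Rn1Rn2}. Your $R_{n,2}$ is itself $o_P(1)$.

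The paper proceeds differently:
\begin{itemize}
\item It keeps the true $(\theta_1,\theta_0)$ in $\vbeta^{(a)}$. Its residual therefore has mean $2v_1^{(a)}\neq0$, and $\vv^{(a)}\neq\vZero$.
\item It expands $n[(\vG^{(r,a)})^\top\vG^{(r,a)}]^{-1}$ around $(\vOmega^{(a)})^{-1}$, with $R_{n,1}=\vL\{n[(\vG^{(r,a)})^\top\vG^{(r,a)}]^{-1}-(\vOmega^{(a)})^{-1}\}\sqrt n\vv^{(a)}$ and $R_{n,2}=\vL(\vOmega^{(a)})^{-1}\vH_n^{(r,a)}$.
\item These carry a $-4v_1^{(a)}D_n/\sqrt n$ and a $+4v_1^{(a)}D_n/\sqrt n$ piece respectively. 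The pieces are not negligible under CR and cancel only after Lemma \ref{lemma:Rn1 expansion}.
\end{itemize}

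Your route is more elementary and needs only second moments: consistency of $\hat{\vGamma}$ and $\sqrt n(\bar{\vW}_1-\bar{\vW}_0)=O_P(1)$. It also avoids the remainder $R_{n,3}$. Controlling that remainder in the paper actually needs $\vH_n^{(r,a)}-\sqrt n\vv^{(a)}=O_P(1)$, not merely $n^{-1/2}\vH_n^{(r,a)}\overset{P}\to\vv^{(a)}$.

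The paper's route, in turn, yields the $\vOmega^{(a)}$-based representation it reuses in two places: Theorem \ref{theorem:var limit} and Lemma \ref{lemma:inconsistency}. Note also that your $R_{n,1},R_{n,2}$ are not the paper's. Lemma \ref{lemma:Rn1 expansion} would be false for your $R_{n,1}$ under CR, but with your approach that lemma becomes unnecessary.

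There are three small slips, none fatal:
\begin{itemize}
\item You cite Condition \ref{assume: dgp} for $D_n=o_P(\sqrt n)$, but the lemma includes $r=\tCR$. The argument survives because $n^{-1}\sum_ie_i=o_P(1)$ by centering, while $D_n/\sqrt n=O_P(1)$ under CR.
\item For $a=\aS$ under CR, the correct expansion is $\sqrt n(\bar{\vW}_1-\bar{\vW}_0)=2n^{-1/2}\sum_i(2I_i-1)(\vW_i^{(\aS)}-\vq_{1:d_S})+o_P(1)$, not $2n^{-1/2}D_n^{(r)}(\vW^{(a)})+o_P(1)$, because $E\vW^{(\aS)}\neq\vZero$ (compare Lemma \ref{lemma:Rn1 converge}). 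You only use the $O_P(1)$ bound, so this is harmless.
\item No imbalance functional of $W_jW_k$ is needed. The within-arm Gram matrix and the cross term involve only the i.i.d.\ averages $n^{-1}\sum_i\vW_i\vW_i^\top$ and $n^{-1}\sum_i\vW_ie_i$, plus the arm means of $\vW$ and $e$.
\end{itemize}
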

\begin{proof}
    Direct calculation yields
    \begin{align*}
    \sqrt{n}\vL(\hat{\vbeta}_{n}^{(r,a)}-\vbeta^{(a)})&=\sqrt{n}\vL\left(\frac{(\vG^{(r,a)})^\top\vG^{(r,a)}}{n}\right)^{-1}\frac{1}{n}(\vG^{(r,a)})^\top(\vY-\vG^{(r,a)}\vbeta^{(a)})\\&= \vL\left\{n[(\vG^{(r,a)})^\top\vG^{(r,a)}]^{-1}-(\vOmega^{(a)})^{-1}\right]\sqrt{n}\vv^{(a)}\\&\quad+\vL(\vOmega^{(a)})^{-1}{\vH_{n}^{(r,a)}}+\vL\left\{n[(\vG^{(r,a)})^\top\vG^{(r,a)}]^{-1}-(\vOmega^{(a)})^{-1}\right\}({\vH_{n}^{(r,a)}}-\sqrt{n}\vv^{(a)})\\&:=R_{n,1}^{(r,a)}+R_{n,2}^{(r,a)}+R_{n,3}^{(r,a)},
\end{align*}
where $\vOmega^{(a)}$ are constant matrices defined in Lemma \ref{lemma:block matrix},  ${\vH_n^{(r,a)}} := n^{-1/2}\sum_{i=1}^{n}\vG_i^{(r,a)}e_i^{(a)}$ and $\vv^{(a)} = E[\vG^{(r,a)}e^{(a)}]$ with $e^{(a)}:= Y-(\vG^{(r,a)})^\top \vbeta^{(a)}$ the generic working model residuals. Specifically,  
\begin{align*}
    \vv^{(\aZ)}&:=\tfrac12 E[g(\vZ)](1,1,0,\ldots,0)^\top, 
    \\\vv^{(\aS)} &:= {\tfrac12 E[g(\vZ)\mid {S}>d_S]\left(1,1, 2q_1,\ldots, 2q_{d_S}\right)^\top}
    \\\vv^{(\aCM)} &:= {\tfrac12 E[g(\vZ)\mid {S}>d_S]\left(1,1, 2q_1,\ldots, 2q_{d_S},0,\ldots,0\right)^\top}.
\end{align*}
We now claim that $n^{-1/2}\vH_n^{(r,a)}=\vv^{(a)}+o_P(1)$. To show this, decompose $e^{(a)} = h^{(a)} + \varepsilon$, where $h^{(a)}$ is the residual function (e.g., $h^{(\aZ)} = g-\vGamma^\top \vZ_{1:d_Z}$). Note that $\vG^{(r,a)}h^{(a)}$ is a function of $\vZ$ and thus, $n^{-1}\sum_{i=1}^{n}I_i^{(r)}\vG^{(r,a)}_ih^{(a)}(\vZ_i)\overset{P}\to \vv^{(a)}$ by Lemma \ref{lemma:wlln for car}. Moreover, $\{I_i\vG_i^{(r,a)}\varepsilon_i\}$ is a martingale difference sequence with respect to the filtration $\cF_i$, so $n^{-1}\sum_{i=1}^{n}I_i\vG_i^{(r,a)}\varepsilon_i\overset{P}\to {\vZero}$. This proves the claim and hence $R_{n,3}^{(r,a)} = o_P(1)$ by Lemma \ref{lemma:block matrix}. The calculation of $\vv^{(a)}$ is straightforward.
\end{proof}

\begin{lemma}\label{lemma:Rn1 expansion}
    For any fixed $(r,a)$, 
    $${R_{n,1}^{(r,a)} = -\frac{4v^{(a)}_1}{\sqrt{n}}D_n+o_P(1)},$$
    where $\vv^{(a)} = (v^{(a)}_1,v^{(a)}_2,\vw^\top)^\top$, $D_n$ is the overall imbalance.
\end{lemma}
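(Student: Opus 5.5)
The plan is to avoid expanding $n[(\vG^{(r,a)})^\top\vG^{(r,a)}]^{-1}-(\vOmega^{(a)})^{-1}$ entrywise. Instead, I would exploit the special structure of $\vv^{(a)}$. The key observation is that $\vv^{(a)}=\vOmega^{(a)}\vu$ with $\vu:=(2v_1^{(a)},2v_1^{(a)},0,\ldots,0)^\top$.

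For $a=\aZ$ this is immediate. The limit $\vOmega^{(\aZ)}$ has upper-left block $\operatorname{diag}(1/2,1/2)$, and its off-diagonal blocks are proportional to $E[\vW^{(\aZ)}]=\vZero$. For $a\in\{\aS,\aCM\}$, I would try a vector of the form $\vu=(x,x,y\bone^\top,\vZero^\top)^\top$. Solving the block equations of $\vOmega^{(a)}$ (from Lemma \ref{lemma:block matrix}) gives $(c-x)(\sum_{s\le d_S}q_s-1)=0$, where $c=E[g(\vZ)\mid S>d_S]$. Since $q_{\text{rem}}>0$, this forces $x=c=2v_1^{(a)}$ and $y=0$. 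For the CM block, $E[\vZt]=\vZero$ and $E[\stind{s}\vZt]=\vZero$, so those coordinates also vanish.

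The crucial consequence is that every row satisfies $(\vG_i^{(r,a)})^\top\vu=2v_1^{(a)}\{I_i+(1-I_i)\}=2v_1^{(a)}$. Hence
$$(\vG^{(r,a)})^\top\vG^{(r,a)}\vu=2v_1^{(a)}(\vG^{(r,a)})^\top\bone .$$
Writing $\hat{\vOmega}_n:=n^{-1}(\vG^{(r,a)})^\top\vG^{(r,a)}$ and $\bar{\vG}_n:=n^{-1}(\vG^{(r,a)})^\top\bone=(n_1/n,\,n_0/n,\,\bar{\vW}^\top)^\top$, I obtain the exact identity
$$R_{n,1}^{(r,a)}=\sqrt n\,\vL\hat{\vOmega}_n^{-1}(\vv^{(a)}-\hat{\vOmega}_n\vu)=2v_1^{(a)}\sqrt n\,\vL\hat{\vOmega}_n^{-1}\Bigl\{\tfrac12(1,1,0,\ldots)^\top+(0,0,E[\vW^{(a)}]^\top)^\top-\bar{\vG}_n\Bigr\}.$$
Using $n_1-n_0=D_n$, the bracket equals $-\tfrac{D_n}{2n}(1,-1,0,\ldots,0)^\top+(0,0,(E[\vW^{(a)}]-\bar{\vW})^\top)^\top$.

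The second step is to pass to the limit. By the weak law for CAR (Lemma \ref{lemma:wlln for car}) and Lemma \ref{lemma:block matrix}, $\hat{\vOmega}_n^{-1}\overset{P}\to(\vOmega^{(a)})^{-1}$. Both $n^{-1/2}D_n$ and $\sqrt n(\bar{\vW}-E[\vW^{(a)}])$ are $O_P(1)$: the first by Theorem \ref{theorem: design property} (indeed $o_P(1)$ under Condition \ref{assume: dgp}, though tightness suffices here), the second by the i.i.d. CLT. So replacing $\hat{\vOmega}_n^{-1}$ with $(\vOmega^{(a)})^{-1}$ costs only $o_P(1)$.

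It remains to evaluate $\vL(\vOmega^{(a)})^{-1}$ on the two directions. The matrix $\vOmega^{(a)}$ is invariant under swapping its first two coordinates, because the treatment and control blocks both equal one half of the same moments. Hence $(\vOmega^{(a)})^{-1}$ maps vectors of the form $(0,0,\vb^\top)^\top$ to vectors with equal first two entries, and $\vL=(1,-1,0,\ldots)$ annihilates them. This kills the $\bar{\vW}$ term. For the antisymmetric direction, $\vOmega^{(a)}(t,-t,\vZero^\top)^\top=(t/2,-t/2,\vZero^\top)^\top$, since the cross terms cancel. Therefore $(\vOmega^{(a)})^{-1}(1,-1,\vZero^\top)^\top=(2,-2,\vZero^\top)^\top$, and $\vL$ of it equals $4$. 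Collecting terms gives
$$R_{n,1}^{(r,a)}=2v_1^{(a)}\cdot\Bigl(-\frac{D_n}{2\sqrt n}\Bigr)\cdot 4+o_P(1)=-\frac{4v_1^{(a)}}{\sqrt n}D_n+o_P(1).$$

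The main obstacle is the first step: verifying $\vOmega^{(a)}\vu=\vv^{(a)}$ for the SFE and CM parametrizations. This relies on the exact form of $\vOmega^{(a)}$ from Lemma \ref{lemma:block matrix}, and in particular on the omitted reference stratum ($q_{\text{rem}}>0$) forcing $y=0$. A secondary point is the swap-symmetry of $\vOmega^{(a)}$, which is what makes the non-treatment CLT fluctuation $\sqrt n(\bar{\vW}-E\vW)$ disappear under $\vL$. Without it, that term would contribute at order one even under DCAR and CCAR.
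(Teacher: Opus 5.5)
Your proof is correct, and it takes a genuinely different route from the paper's.

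\textbf{The paper's route.} The paper writes out the partitioned inverse of $(\vG^{(r,a)})^\top\vG^{(r,a)}$ using the Schur complement $n\hat{\bm\Sigma}_{\vW}^{(r,a)}$. This gives an exact finite-sample expression in arm-specific means,
$R_{n,1}^{(r,a)}=v_1^{(a)}\sqrt n\,(n/n_1-n/n_0)+\sqrt n\,(\bar\vW_1-\bar\vW_0)^\top(\hat{\bm\Sigma}_{\vW})^{-1}\{v_1^{(a)}(\bar\vW_1+\bar\vW_0)-\vw\}$.
It then uses Lemma \ref{lemma:Rn1 converge} and the identity $\vw=2v_1^{(a)}E[\vW^{(a)}]$ to show the second term is $O_P(1)\cdot o_P(1)$. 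The tightness of $\sqrt n(\bar\vW_1-\bar\vW_0)$ needed there comes from the design's imbalance limit (Theorem \ref{theorem: design property}) applied to each component of $\vW^{(a)}$.

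\textbf{Your route.} You combine $\vv^{(a)}=\vOmega^{(a)}\vu$ with the row identity $\vG^{(r,a)}\vu=2v_1^{(a)}\bone$. This yields an exact identity involving only the pooled column means $\bar\vG_n$, which are controlled by $D_n$ and by the i.i.d.\ average $\bar\vW$. You therefore need neither block inversion nor any rate for $\bar\vW_1-\bar\vW_0$. Your swap-symmetry step is equivalent to $\vL(\vOmega^{(a)})^{-1}=(2,-2,\vZero^\top)$, which the paper derives in the proof of Theorem \ref{theorem:combined}. Your key identity $\vv^{(a)}=\vOmega^{(a)}\vu$ is exactly the one the paper itself uses in Lemma \ref{lemma:inconsistency} and in the proof of Theorem \ref{theorem:var limit}. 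So your argument is arguably more economical within the paper's own toolkit.

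\textbf{What each buys.} The paper's route gives an explicit formula that isolates the $n/n_1-n/n_0$ contribution from the arm-wise covariate means. Yours gives a more conceptual explanation. The working residual is the constant $2v_1^{(a)}$ plus a part orthogonal to the regressors, so only the fluctuation of the column means of $\vG^{(r,a)}$ matters. Your argument also goes through under weaker design requirements: only $D_n=O_P(\sqrt n)$ and consistency of $n^{-1}(\vG^{(r,a)})^\top\vG^{(r,a)}$.
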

\begin{proof}
    Recall that $R^{(r,a)}_{n,1}=\vL n[(\vG^{(r,a)})^{\top}\vG^{(r,a)}]^{-1}\sqrt{n}\vv^{(a)}$ with $\vL=(1,-1,{\vZero}^\top)$. Partition the regressors as $\vG^{(r,a)}=[\vT\mid\vW]$, where $\vT$ collects the two treatment indicators $(I^{(r)}_i,1-I^{(r)}_i)$ and $\vW$ collects the treatment-free regressors $\vW^{(a)}_i$. Write
\[
n\hat{\bm{\Sigma}}_{\vW}^{(r,a)}=\sum_{i=1}^{n}\vW^{(a)}_i(\vW^{(a)}_i)^{\top}
 -n_1\bar\vW^{(r,a)}_1(\bar\vW^{(r,a)}_1)^{\top}
 -n_0\bar\vW^{(r,a)}_0(\bar\vW^{(r,a)}_0)^{\top}
\]
for the residual cross-product of $\vW$ after projecting out $\vT$. Suppressing superscripts $r,a$, the matrix inversion formula gives
$$
(\vG^{\top}\vG)^{-1}
=\begin{pmatrix}
(\vT^{\top}\vT)^{-1}+(\vT^{\top}\vT)^{-1}\vT^{\top}\vW{(n\hat{\bm{\Sigma}}_{\vW})^{-1}}\vW^{\top}\vT(\vT^{\top}\vT)^{-1}
 & -(\vT^{\top}\vT)^{-1}\vT^{\top}\vW{(n\hat{\bm{\Sigma}}_{\vW})^{-1}}\\[2pt]
-{(n\hat{\bm{\Sigma}}_{\vW})^{-1}}\vW^{\top}\vT(\vT^{\top}\vT)^{-1} & {(n\hat{\bm{\Sigma}}_{\vW})^{-1}}
\end{pmatrix},
$$
where $\vT^{\top}\vT={\operatorname{diag}}\{n_1,n_0\}$ and the treatment--regressor block is
$\vT^{\top}\vW=(n_1\bar\vW^{(r,a)}_1,n_0\bar\vW^{(r,a)}_0)^{\top}$.

 We find the limit of $\hat{\bm{\Sigma}}_{\vW}^{(r,a)}$ by treating each piece. By the weak law
$n^{-1}\sum_i\vW^{(a)}_i(\vW^{(a)}_i)^{\top}\overset{P}{\to}E[\vW^{(a)}(\vW^{(a)})^{\top}]$.
By Lemma~\ref{lemma:wlln for car} with $f\equiv1$, $n_1/n$ and $n_0/n$ converge to
$\tfrac12$ in probability; and each arm mean satisfies $\bar\vW^{(r,a)}_1\overset{P}{\to}\vu^{(a)},\bar\vW^{(r,a)}_0\overset{P}{\to}\vu^{(a)}$, for every design $r$. Combining above, one derives
$$
\hat{\bm{\Sigma}}_{\vW}^{(r,a)}\overset{P}{\to}
E[\vW^{(a)}(\vW^{(a)})^{\top}]-\tfrac12\vu^{(a)}(\vu^{(a)})^{\top}
-\tfrac12\vu^{(a)}(\vu^{(a)})^{\top}
=\var(\vW^{(a)}),
$$
where $\var(\vW^{(a)})$ is the trailing principal submatrix of $\vOmega^{(a)}$ in Lemma \ref{lemma:block matrix}. Since matrix
inversion is continuous at any positive-definite matrix, the continuous mapping
theorem yields $(\hat{\bm{\Sigma}}_{\vW}^{(r,a)})^{-1}\overset{P}{\to}[\var(\vW^{(a)})]^{-1}$. It remains to identify $[\var(\vW^{(a)})]^{-1}$ and its inverse in each case which we show next.

For $a=\aZ$, $(\hat{\bm{\Sigma}}_{\vW}^{(r,\aZ)})^{-1}\overset{P}{\to}{\SigZd{d_Z}}^{-1}. $
For $a=\aS$, $\vW^{(\aS)}$ has the multinomial covariance matrix $\var(\vW^{(\aS)})={\operatorname{diag}}(\vq_{1:d_S})
-\vq_{1:d_S}\vq_{1:d_S}^{\top}$. Because $q_{\text{rem}}>0$, \cite[Eq (21)]{tanabe1992exact} gives
\begin{equation}
(\hat{\bm{\Sigma}}_{\vW}^{(r,\aS)})^{-1}\overset{P}\to{\operatorname{diag}}(\vq_{1:d_S}^{-1})+q_{\text{rem}}^{-1}{\bone}{\bone}^{\top}.
\end{equation}
Finally for $a=\aCM$, the stratum--covariate cross-covariance vanishes,
$\cov(\mathbbm{1}\{{S}=s\},{\widetilde{Z}_{k}})=q_s E[{\widetilde{Z}_{k}}\mid {S}=s]=0$, so $\var(\vW^{(\aCM)})$
is block diagonal. Hence,
\begin{equation}
(\hat{\bm{\Sigma}}_{\vW}^{(r,\aCM)})^{-1}\overset{P}\to\begin{pmatrix}{\operatorname{diag}}(\vq_{1:d_S}^{-1})+q_{\text{rem}}^{-1}{\bone}{\bone}^{\top} & {\vZero}\\ {\vZero} & {\SigZtd{d_Z}}^{-1}\end{pmatrix}. 
\end{equation}
Using $n_1-n_0=D_n$ and the partitioned
inverse above,
\[
{R^{(r,a)}_{n,1}
=v^{(a)}_1\sqrt{n}\Bigl(\frac{n}{n_1}-\frac{n}{n_0}\Bigr)
+\sqrt{n}
   \bigl(\bar\vW^{(r,a)}_1-\bar\vW^{(r,a)}_0\bigr)^{\top}(\hat{\bm{\Sigma}}_{\vW}^{(r,a)})^{-1}
   \{v^{(a)}_1\bigl(\bar\vW^{(r,a)}_1+\bar\vW^{(r,a)}_0\bigr)-\vw^{(a)}\}.}
\]
By Lemma~\ref{lemma:Rn1 converge} and the fact that $\vw^{(a)}=2v_1^{(a)}\vu^{(a)}$, we have $v_1^{(a)}[\bar\vW^{(r,a)}_1+\bar\vW^{(r,a)}_0]-\vw^{(a)}=o_P({\bone})$ and
$\sqrt{n}(\bar\vW^{(r,a)}_1-\bar\vW^{(r,a)}_0)
=2n^{-1/2}\sum_i(2I^{(r)}_i-1)(\vW^{(a)}_i-\vu^{(a)})+o_P(1) = O_P(1)$. Furthermore, $\sqrt{n}(\tfrac{n}{n_1}-\tfrac{n}{n_0})=-4D_n/\sqrt{n}+o_P(1)$. Recall the probability limit of $\hat{\bm{\Sigma}}_{\vW}^{(r,a)}$, results above yield the displayed expression. 
\end{proof}

We now present the proof of Theorem \ref{theorem:combined}. 

\begin{proof}[Proof of Theorem \ref{theorem:combined}]
    Recall $R_{n,2}^{(r,a)}=\bm L(\bm\Omega^{(a)})^{-1}\bm H_n^{(r,a)}$. Let $\vc=(c_1,c_2,\vc_{\vW}^\top)^\top$ be a constant vector, solving $\bm\Omega^{(a)}\bm c=\bm L^\top$ gives $c_1-c_2=4$ and $(c_1+c_2)q_{\text{rem}}/2=0$. Hence, $\vc = (2,-2,{\vZero}^\top)^\top$. This implies $\bm L(\bm\Omega^{(a)})^{-1}=(2,-2,{\vZero}^\top)$ for $a\in\{\aZ,\aS,\aCM\}$. Therefore,
$$R_{n,2}^{(r,a)}
 ={\frac{2}{\sqrt n}\sum_{i=1}^{n}(2I_i^{(r)}-1)e_i^{(a)}
 =\frac{2}{\sqrt n}\sum_{i=1}^{n}(2I_i^{(r)}-1)\bigl(e_i^{(a)}-2v_1^{(a)}\bigr)
  +\frac{4 v_1^{(a)}}{\sqrt n}D_n},$$
where the last step uses $\sum_{i=1}^{n}(2I_i^{(r)}-1)=D_n$. Per Lemma~\ref{lemma:Rn1 expansion}, 
\begin{equation}\label{eq:Rn1Rn2}
    {R_{n,1}^{(r,a)}+R_{n,2}^{(r,a)}
 =\frac{2}{\sqrt n}\sum_{i=1}^{n}(2I_i^{(r)}-1)\bigl(e_i^{(a)}-2v_1^{(a)}\bigr)+o_P(1).}
\end{equation}
Define the covariate-coefficient block of the pseudo-true $\vbeta^{(a)}$, i.e., the projection of $Y-\theta_1I_i^{(r)} - \theta_0(1-I_i)$ onto the column span of the working-model regressors, as
\begin{align*}
    {\bbetaW^{(\aZ)}} &= \vGamma= \SigZd{d_Z}^{-1}E[\vZ_{1:d_Z}g(\vZ)],\\
    {\bbetaW^{(\aS)}} &= \veta = \{E[g(\vZ)\mid {S}=s]:s\in [d_S]\}^\top-E[g(\vZ)\mid {S}>d_S],\\
    {\bbetaW^{(\aCM)}}&=((\bbetaW^{(\aS)})^\top, {\bgamt}^\top)^\top
\end{align*}
with ${\bgamt} = {\SigZtd{d_Z}}^{-1}E[{\vZt}_{1:d_Z}g(\vZ)]$. Then, substitute $e_i^{(a)}-2v_1^{(a)}
 =\{g(\bm Z_i)-E[g(\vZ)]\}-(\vW_i^{(a)}-\vu^{(a)})^\top{\bbetaW^{(a)}}+\varepsilon_i$. Lemma \ref{lemma:find var} evaluates the limiting variance of the leading term for each $r,a$, so that $R^{(r,a)}_{n,1}+R^{(r,a)}_{n,2}\rightsquigarrow
2[\sigma_\varepsilon^2+(\sigma^{(r,a)})^2]^{1/2}\xi$. Specifically, for $a\in\{\aZ,\aS,\aCM\},$ 
\begin{align}
(\sigma^{(\tCR,a)})^2 &= \var\bigl(g(\vZ)
   {-(\vW^{(a)})^\top{\bbetaW^{(a)}}\bigr)}, \label{eq:varCR}\\
(\sigma^{(\tDC,a)})^2 &= E\bigl\{\var[g(\vZ)
   {-(\vW^{(a)})^\top{\bbetaW^{(a)}}\mid {S}]}\bigr\}, \label{eq:varDC}\\
(\sigma^{(\tCC,\aZ)})^2 &= E[g^2(\vZ)]+2{\pi_{\Lambda}(\zeta_{g})}, \label{eq:varACC}\\
(\sigma^{(\tCC,\aS)})^2 &= E[g^2_{d_S}(\vZ)]+2{\pi_{\Lambda}(\zeta_{g_{d_S}})}, \label{eq:varsACC}
\end{align}
The explicit forms of $\pi_{\Lambda}(\cdot)$ are in Equations \eqref{eq:pi_Lambda g}-\eqref{eq:pi_Lambda gstar} and $g_{d_S}$ is defined before Equation \eqref{eq:pi_Lambda g}. Finally, if
$r=\tCC$ and $g$ is strongly balanced, then $\sigma^{(\tCC,\aZ)}\equiv 0$
\citep[Theorem 3.6]{ma2024new}. The theorem is established.
\end{proof}

\begin{lemma}\label{lemma:find var}
    The exact mathematical forms of $\sigma^{(r,a)}$ are derived in this lemma.
\end{lemma}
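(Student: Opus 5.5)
The lemma asks for the limiting variance of the leading term in \eqref{eq:Rn1Rn2}. That term is
$$\frac{2}{\sqrt n}\sum_{i=1}^{n}(2I_i^{(r)}-1)\bigl\{h^{(a)}(\vZ_i)+\varepsilon_i\bigr\},\qquad h^{(a)}(\vz):=g(\vz)-E[g(\vZ)]-(\vW^{(a)}(\vz)-\vu^{(a)})^\top\bbetaW^{(a)}.$$
We split the sum into the $\varepsilon$ part and the covariate part $D_n^{(r)}(h^{(a)})$. Since $\varepsilon_i$ is independent of everything else and $(2I_i-1)^2=1$, the sequence $\{(2I_i-1)\varepsilon_i\}$ is a martingale difference sequence with conditional variance $\sigma_\varepsilon^2$. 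It is also asymptotically orthogonal to the covariate part: the cross conditional covariance involves $E[\varepsilon_i]=0$.

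Next we apply the martingale CLT jointly, using the Cram\'er--Wold device. For CR and DCAR we write $D_n(h)$ as a martingale plus $o_P(\sqrt n)$, as in the proof of Theorem \ref{theorem: design property}. For CCAR we use the Poisson-equation martingale of Lemma \ref{lemma:previous result1}(iv) with $h=h^{(a)}$; since that martingale already carries $\varepsilon_i$, it gives directly the sum $\sigma_\varepsilon^2+E[h^2]+2\pi_\Lambda(\zeta_h)$. The total variance is therefore $4\{\sigma_\varepsilon^2+(\sigma^{(r,a)})^2\}$, with $(\sigma^{(r,a)})^2$ equal to the $\sigma^2_{h^{(a)}}$ of Theorem \ref{theorem: design property}.

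We then evaluate the cases.
\begin{itemize}
\item For CR, $\sigma_h^2=E[(h^{(a)})^2]=\var(g-(\vW^{(a)})^\top\bbetaW^{(a)})$, because $h^{(a)}$ is centered. This gives \eqref{eq:varCR}.
\item For DCAR, Theorem \ref{theorem: design property} gives $E\{\var[h^{(a)}\mid S]\}$. Constants and $\vu^{(a)}$ drop out under conditioning on $S$, which gives \eqref{eq:varDC}. This covers $a=\aS,\aCM$ as well: the stratum indicators are $\sigma(S)$-measurable, so they vanish in the conditional variance.
\item For CCAR with $a=\aZ$, the linear part $\vGamma^\top\vZ_{1:d_Z}$ and the overall imbalance are strongly balanced by Condition \ref{assume: dgp}. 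So $D_n(h^{(\aZ)})=D_n(g)+o_P(\sqrt n)$, and Theorem \ref{theorem: design property} yields $E[g^2]+2\pi_\Lambda(\zeta_g)$, which is \eqref{eq:varACC}.
\item For CCAR with $a=\aS$, the strata are not strongly balanced, so we keep $g_{d_S}=g-\sum_{j\le d_S}\eta_j\stind{j}$ and drop only the constant via $D_n=o_P(\sqrt n)$. This gives \eqref{eq:varsACC}.
\end{itemize}

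The main obstacle is the CCAR case. There we must verify that removing the strongly balanced components and constants is legitimate: the centering constants multiply $D_n$, and the linear pieces multiply coordinates of $\Lambda_n$, so both are $o_P(\sqrt n)$ and Slutsky applies. We must also check that the joint martingale with the $\varepsilon$ term produces no cross term. That follows because $\hat\kappa$ depends only on $\Lambda$ and $\varepsilon_i$ is independent with mean zero. The conditional Lindeberg condition follows from Assumption \ref{assume:bounded moments} and the moment bounds in Lemma \ref{lemma:previous result1}(iii).
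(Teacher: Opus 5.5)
Your proposal is correct and follows essentially the same route as the paper: a plain CLT for CR; for DCAR, the within-stratum/stratum-mean split, with the stratum imbalances removed by Condition~\ref{assume: dgp}; and for CCAR, a Slutsky reduction that drops the strongly balanced constant and linear pieces, followed by the Poisson-equation martingale of Lemma~\ref{lemma:previous result1}(iv), which already contains $\varepsilon_i$ and so produces no cross term. The differences are cosmetic: for CR and DCAR you add the $\varepsilon$-part through a joint Cram\'er--Wold martingale argument instead of absorbing $\varepsilon_i$ into the summands and conditioning on the strata, and for CCAR with ANCOVA you reduce to $g$ itself, which matches \eqref{eq:varACC} literally, whereas the paper states the reduction with $h=g-E[g]$.
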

\begin{proof}
    Under each design we treat $a=\aZ$; the cases $a=\aS,\aCM$ are identical after changing the residual term $e_i^{(a)}-2v_1^{(a)}$. Under $\tCR$ the assignments $(2I^{(\tCR)}_i-1)$ are independent of $(\vZ_i,\varepsilon_i)$, so the central limit theorem gives the limit. We now consider different randomization designs.

\emph{DCAR.} Write $\tilde{h}^{(a)}:=h^{(a)}-2v_1^{(a)}$ for the centred residual function, so that $e^{(a)}-2v_1^{(a)}=\tilde{h}^{(a)}(\vZ)+\varepsilon$ with $E[\tilde{h}^{(a)}(\vZ)]=0$, and split $\tilde{h}^{(a)}$ into its within-stratum fluctuation and its stratum-level mean,
\begin{align*}
R^{(\tDC,\aZ)}_{n,1}+R^{(\tDC,\aZ)}_{n,2}
&=\frac{2}{\sqrt n}\sum_{i=1}^{n}(2I^{(\tDC)}_i-1)\bigl\{{\tilde{h}^{(a)}}(\vZ_i)-E[{\tilde{h}^{(a)}}(\vZ_i)\mid S_i]{+\varepsilon_i}\bigr\}\\
&\quad+\frac{2}{\sqrt n}\sum_{s\in[m]}D_n(\stind{s})E[{\tilde{h}^{(a)}}(\vZ)\mid {S}=s]+o_P(1).
\end{align*}
Conditioning on $\{S_1,\dots,S_n\}$, the summands of the first term are
mean-zero given the strata. The conditional central limit theorem therefore yields a normal
limit with variance $4\{\sigma_\varepsilon^2+E[\var({\tilde{h}^{(a)}}\mid {S})]\}$. The second term is $o_P(1)$ because of Condition \ref{assume: dgp}.

\emph{CCAR.} By Condition \ref{assume: dgp}, the linear adjustment is asymptotically
negligible for any fixed slope:
\[
\frac{1}{\sqrt n}\sum_{i=1}^{n}(2I^{(\tCC)}_i-1)\vGamma^{\top}\vZ_{i,1:d_Z}
=\sum_{j=1}^{d_Z}\gamma_j\frac{1}{\sqrt n}D_n(Z_j)=o_P(1),
\]
which holds regardless of the correlations among the covariates. It therefore suffices to prove, for any $h:\mathbb R^p\to
\mathbb R$ with $E[h^2(\vZ)]<\infty$, that
\[
\frac{1}{\sqrt n}\sum_{i=1}^{n}(2I^{(\tCC)}_i-1)[h(\vZ_i)+\varepsilon_i]
\rightsquigarrow\bigl[\sigma_\varepsilon^2+E[h^2(\vZ)]+2\pi_{\Lambda}(\zeta_h)\bigr]^{1/2}\xi;
\]
the claim then follows with $h=g-E[g]$ for $a=\aZ$ and $h=g_{d_S}$ for $a=\aS$.

Recall the martingale difference sequence $\Delta M_{h,i}$ defined in Lemma \ref{lemma:previous result1}(iv) ($r=\tCC$), by the telescoping sum and the Poisson equation,
    $$\sum_{i=1}^{n}(2{I_i^{(\tCC)}}-1)[h(\vZ_i)+\varepsilon_i]={M_{h,n}}+{\hat{\kappa}_h}(\Lambda_0)-\hat{\kappa}_h(\Lambda_n),$$
    where $M_{h,n} = \sum_{i=1}^{n}\Delta M_{h,i}$ is a martingale. Notice $\hat{\kappa}_h(\Lambda_0)-\hat{\kappa}_h(\Lambda_n)=o_P(\sqrt{n})$ by Lemma \ref{lemma:previous result1}, thus only the martingale term contributes to the final limiting distribution. By Martingale Central Limit Theorem \cite[Corollary 3.1]{hall2014martingale}, to show the convergence in distribution, it suffices to verify the following Lindeberg's condition and the convergence of predictive variance:
    \begin{align}
        &\sum_{i=1}^{n}E\left[({\Delta M_{h,i}}/\sqrt{n})^2\mathbbm{1}(|{\Delta M_{h,i}}/\sqrt{n}|>\epsilon)\mid \mathcal{F}_{i-1}\right]\overset{P}\to 0, \quad \text{ for all }\epsilon>0.\label{eq:Lindeberg condition}\\
        &\sum_{i=1}^{n}E\left[({\Delta M_{h,i}}/\sqrt{n})^2\mid \mathcal{F}_{i-1}\right] \overset{P}\to \sigma^2_{\varepsilon}+(\sigma^{(\tCC,\aZ)})^2.\label{eq:predictive var}
    \end{align}
    We first consider Condition \eqref{eq:Lindeberg condition}. Since $E({\Delta M_{h,i}})^2\leq E[h(\vZ_i)+\varepsilon_i]^2+E[\hat{\kappa}_h(\Lambda_i)]^2+E[\hat{\kappa}_h(\Lambda_{i-1})]^2<\infty$, for any $\epsilon>0$, we have
    $$\frac{1}{n}\sum_{i=1}^{n}E[{\Delta M_{h,i}}\mathbbm{1}(|{\Delta M_{h,i}}/\sqrt{n}|>\epsilon)]^2\leq \max_{i\in [n]}E[{\Delta M_{h,i}}\mathbbm{1}(|{\Delta M_{h,i}}/\sqrt{n}|>\epsilon)]^2\to 0.$$
    Hence, Condition \eqref{eq:Lindeberg condition} holds in view of the Markov's inequality. Next, we verify Condition \eqref{eq:predictive var}. Note the following decomposition 
    \begin{align*}
        &E[({\Delta M_{h,i}})^2\mid \mathcal{F}_{i-1}] \\&= E[h(\vZ)+\varepsilon]^2 + 2E\{(2I_i-1)[h(\vZ_i)+\varepsilon_i]\hat{\kappa}_h(\Lambda_{i-1})\mid \mathcal{F}_{i-1}\}\\&\quad+E[{\hat{\kappa}_h^2}(\Lambda_{i-1})\mid \mathcal{F}_{i-1}]-{\hat{\kappa}_h^2}(\Lambda_{i-1})\\&= E[h(\vZ)+\varepsilon]^2+[\zeta_h(\Lambda_{i-1})+\zeta_h(-\Lambda_{i-1})]+{\PL}{\hat{\kappa}_h^2}(\Lambda_{i-1})-{\hat{\kappa}_h^2}(\Lambda_{i-1})
    \end{align*}
    where $\zeta_h(\lambda) = E\{h(\vZ)\hat{\kappa}_h(\lambda+\psi(\vZ))[1+(1-2{\rho}){\operatorname{sgn}}({\langle \lambda,\psi(\vZ)\rangle})]\}$. Further, by the Ergodic theorem \cite[Theorem 1.10.2]{norris1998markov},
    $$\frac{1}{n}\sum_{i=1}^{n}E[({\Delta M_{h,i}})^2\mid \mathcal{F}_{i-1}]\overset{P}\to E[h(\vZ)+\varepsilon]^2+2{\pi_{\Lambda}(\zeta_h)}=:\sigma^2_{\varepsilon}+(\sigma_h^{(\tCC,\aZ)})^2.$$
    The proof is finished by recalling the forms of $R_{n,1}^{(\tCC,\aZ)}+R_{n,2}^{(\tCC,\aZ)}$ in \eqref{eq:Rn1Rn2} and define the corresponding $h$ function in view of Condition \ref{assume: dgp}.
\end{proof}

\begin{proof}[Proof of Theorem \ref{theorem:var limit}]
    Following \cite[Proof of Theorem 3.1]{ma2015testing} and using $\vL(\vOmega^{(a)})^{-1}=(2,-2,{\vZero}^\top)$ from the proof of Theorem \ref{theorem:combined}, we have
    $$\vL[(\vG^{(r,a)})^\top \vG^{(r,a)}]^{-1}\vL^\top = \frac{1}{n}\vL(\vOmega^{(a)})^{-1}\vL^\top+o_P\!\left(\frac{1}{n}\right) = \frac{4}{n}+o_P\!\left(\frac{1}{n}\right),$$
    where $\vOmega^{(a)}$ is defined in Lemma \ref{lemma:block matrix}. It remains to identify the probability limit of $(\hat{\sigma}_{\varepsilon}^{(r,a)})^2$. Write
    \begin{align*}
        (\hat{\sigma}_{\varepsilon}^{(r,a)})^2 &= \underbrace{\tfrac{1}{n}(\vY-\vG^{(r,a)}\vbeta^{(a)})^\top(\vY-\vG^{(r,a)}\vbeta^{(a)})}_{T_1}
        + \underbrace{\tfrac{1}{n}(\hat{\vbeta}^{(r,a)}_n-\vbeta^{(a)})^\top(\vG^{(r,a)})^\top \vG^{(r,a)}(\hat{\vbeta}^{(r,a)}_n-\vbeta^{(a)})}_{T_2}\\
        &\quad - 2\underbrace{\tfrac{1}{n}(\hat{\vbeta}^{(r,a)}_n-\vbeta^{(a)})^\top (\vG^{(r,a)})^\top(\vY-\vG^{(r,a)}\vbeta^{(a)})}_{T_3}.
    \end{align*}
    The weak law of large numbers gives, for every $r$,
    $$T_1 \overset{P}\to \sigma_\varepsilon^2 + E\big[(g(\vZ)-(\vW^{(a)})^\top\vbeta_{\vW}^{(a)})^2\big].$$
    For the remaining two terms, set $\bm\mu^{(a)}=2b^{(a)}(1,1,0,\ldots,0)^\top$. By Lemma \ref{lemma:inconsistency}, $\hat{\vbeta}^{(r,a)}_n-\vbeta^{(a)}=\bm\mu^{(a)}+o_P(\bm 1)$ for every $(r,a)$, and the same lemma gives the identity $\vv^{(a)}=\vOmega^{(a)}\bm\mu^{(a)}$. By Lemma \ref{lemma:block matrix}, $\tfrac1n(\vG^{(r,a)})^\top\vG^{(r,a)}\overset{P}\to\vOmega^{(a)}$ and $\tfrac1n(\vG^{(r,a)})^\top(\vY-\vG^{(r,a)}\vbeta^{(a)})\overset{P}\to\vv^{(a)}$. Hence,
    $$T_2 \overset{P}\to (\bm\mu^{(a)})^\top\vOmega^{(a)}\bm\mu^{(a)}, \qquad
      T_3 \overset{P}\to (\bm\mu^{(a)})^\top\vv^{(a)}=(\bm\mu^{(a)})^\top\vOmega^{(a)}\bm\mu^{(a)}.$$
    Since $\bm\mu^{(a)}$ is supported on its first two coordinates and the top $2\times2$ block of $\vOmega^{(a)}$ equals ${\operatorname{diag}}(1/2,1/2)$, this quadratic form is $(\bm\mu^{(a)})^\top\vOmega^{(a)}\bm\mu^{(a)}=4(b^{(a)})^2$. Therefore $T_2-2T_3\overset{P}\to-4(b^{(a)})^2$, and
    $$(\hat{\sigma}_{\varepsilon}^{(r,a)})^2 \overset{P}\to \sigma_\varepsilon^2 + E\big[(g(\vZ)-(\vW^{(a)})^\top\vbeta_{\vW}^{(a)})^2\big]-(2b^{(a)})^2.$$
    Because $E[g(\vZ)-(\vW^{(a)})^\top\vbeta_{\vW}^{(a)}]=2v_1^{(a)}=2b^{(a)}$ by the proof of Theorem \ref{theorem:combined}, the last two terms combine into $\var\!\big(g(\vZ)-(\vW^{(a)})^\top\vbeta_{\vW}^{(a)}\big)$, which equals $(\sigma^{(\tCR,a)})^2$ by Equation \eqref{eq:varCR}. Multiplying by $n\vL[(\vG^{(r,a)})^\top\vG^{(r,a)}]^{-1}\vL^\top\to4$ gives $4(\tau^{(\tCR,a)})^2$, as claimed.
    
\end{proof}

Now, we prove Proposition \ref{prop:opt precision} and Theorem \ref{theorem:compare tests}.

\begin{proof}[Proof of Proposition \ref{prop:opt precision}]
    For a correctly specified model, we can write
    \begin{equation}\label{eq:opt expansion}
        \sqrt{n}(\hat{\theta}_1-\hat{\theta}_0-\theta_1+\theta_0)=\frac{2}{\sqrt{n}}\sum_{i=1}^{n}(2I_i^{(r)}-1)\varepsilon_i+o_P(1)
    \end{equation}
    for any $r$. Similar CLT type arguments as in the proof of Lemma \ref{lemma:find var} implies (III) attains optimal precision.

    We now show (I)--(II). For (I), $r=\tDC,a=\aS$, since $g$ is strongly balanced, we have by Condition \ref{assume: dgp} that
    $$\frac{2}{\sqrt{n}}\sum_{i=1}^{n}(2I_i^{(\tDC)}-1)\left[g(\vZ_i)-\sum_{j=1}^{{d_S}}\eta_j\stind{j}(\vZ_i)\right]=o_P(1).$$
    Therefore, $R_{n,1}^{(\tDC,\aS)}+R_{n,2}^{(\tDC,\aS)}$ has the same distribution as Equation \eqref{eq:opt expansion} under $r=\tDC$, which implies (I) can attain optimal precision. (II) follows by analogous argument, using Condition \ref{assume: dgp} again. The proof is finished.
\end{proof}

\begin{proof}[Proof of Theorem \ref{theorem:compare tests}]
    The first result follows from Corollary \ref{cor: test under Rand} and \cite[Theorem 14.19]{vandevaart}. For the second result, recall from Equations \eqref{eq:varCR},\eqref{eq:varDC} that when $a=\aS$, 
    $$(\sigma^{(\tCR,\aS)})^2 = (\sigma^{(\tDC,\aS)})^2 + \var\{E[g(\vZ)-\sum_{j=1}^{{d_S}}\eta_j\mathbbm{1}\{{S}=j\}\mid {S}]\},$$
    so the inequality follows from the law of total variance and the statement is proved except the case with $d_S=m-1.$ Denote by $a_j= \eta_j-\eta_m$, we have
    \begin{align*}
        \var\{E[g(\vZ)-\sum_{j=1}^{m-1}a_j\mathbbm{1}\{{S}=j\}\mid S]\}&= \var\{E[g(\vZ)\mid {S}]-\sum_{j=1}^{m-1}\eta_j\mathbbm{1}\{{S}=j\}+\eta_m\sum_{j=1}^{m-1}\mathbbm{1}\{{S}=j\}\}\\&= \var[{\eta_m}\mathbbm{1}\{{S}=m\}+\eta_m\sum_{j=1}^{m-1}\mathbbm{1}\{{S}=j\}],
    \end{align*}
    where the random variable $\eta_m\mathbbm{1}\{{S}=m\}+\eta_m\sum_{j=1}^{m-1}\mathbbm{1}\{{S}=j\}=\eta_m$ almost surely. Hence, the equality holds when ${d_S=m-1}$.
\end{proof}

Finally, we prove Proposition \ref{prop:better model}, which gives quantification of when and how much the CM model is better than ANCOVA or SFE models. 

\begin{proof}[Proof of Proposition \ref{prop:better model}]
    Recall by Equations \eqref{eq:varDC}, for $a\in\{\aZ,\aS,\aCM\}$, 
    \begin{align*}
        &(\sigma^{(\tDC,\aS)})^2 = E[\var(g\mid S)],\\
        &(\sigma^{(\tDC,a)})^2 = Q(\mathbf{b}^{(a)}) \quad \text{for }a\in\{\aZ,\aCM\},
    \end{align*}
    where $Q(\mathbf{b}):=E\{\var[g-\mathbf{b}^\top{\vZt}_{1:d_Z}\mid S]\}$, $\mathbf{b}^{(\aZ)}=\vGamma, \mathbf{b}^{(\aCM)}={\bgamt}$.
    
    Expanding the conditional variance we have
    $$Q(\mathbf{b}) = E[\var(g\mid S)] - 2\mathbf{b}^\top E[\cov({\vZt}_{1:d_Z},g\mid S)]+\mathbf{b}^\top E[\var({\vZt}_{1:d_Z}\mid S)]\mathbf{b}.$$
    Notice that $E({\vZt}\mid S)=0$, we have that $E[\cov({\vZt}_{1:d_Z},g\mid S)] = E({\vZt}_{1:d_Z}g)$. Also, ${\SigZtd{d_Z}}{\bgamt} = E({\vZt}_{1:d_Z}g)$, it follows that
    $$Q(\mathbf{b}) =  E[\var(g\mid S)] - {\bgamt}^\top {\SigZtd{d_Z}}{\bgamt}+(\mathbf{b}-{\bgamt})^\top {\SigZtd{d_Z}}(\mathbf{b}-{\bgamt}).$$
    Taking $\mathbf{b} = \mathbf{b}^{(\aZ)}$ gives Equations \eqref{eq:gain-Z}. The minimum is achieved when $\mathbf{b}=\mathbf{b}^{(\aCM)}$, Equation \eqref{eq:gain-S} follows from recalling $(\sigma^{(\tDC,\aS)})^2$. The proof is finished.

\end{proof}

\subsection{Auxiliary Lemma}\label{app:aux lemma}
A novel law of large number result for CAR procedures is introduced, which is used multiple times in the technical proof.
\begin{lemma}\label{lemma:wlln for car}
    Let $f:\bR^{p}\to\bR$ be such that $E[f^2(\vZ)]<\infty$. Then, for any randomization design $r$,
    $$\frac{1}{n}\sum_{i=1}^{n}I_i^{(r)} f(\vZ_i) \overset{P}\to \frac{1}{2}E[f(\vZ)], \quad \frac{1}{n}\sum_{i=1}^{n}(1-I_i^{(r)})f(\vZ_i) \overset{P}\to \frac{1}{2}E[f(\vZ)].$$
\end{lemma}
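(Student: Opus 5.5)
The plan is to write $I_i = \tfrac12 + \tfrac12(2I_i-1)$, which splits the sum into an i.i.d.\ part and an imbalance part:
$$
\frac1n\sum_{i=1}^n I_i^{(r)}f(\vZ_i)=\frac{1}{2n}\sum_{i=1}^n f(\vZ_i)+\frac{1}{2n}D_n^{(r)}(f).
$$
The first term converges in probability to $\tfrac12 E[f(\vZ)]$ by the ordinary weak law of large numbers. This uses only Assumption~\ref{assume:independent copy}, since the $\vZ_i$ are i.i.d.\ and $E|f(\vZ)|\le \{E[f^2(\vZ)]\}^{1/2}<\infty$. The second statement, for $1-I_i^{(r)}$, then follows by subtraction, because $\frac1n\sum_i (1-I_i)f(\vZ_i)=\frac1n\sum_i f(\vZ_i)-\frac1n\sum_i I_if(\vZ_i)$. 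So everything reduces to showing $n^{-1}D_n^{(r)}(f)\overset{P}\to 0$ for every design $r$.

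For this I will invoke Theorem~\ref{theorem: design property}. For any $r\in\{\tCR,\tDC,\tCC\}$ and square-integrable $f$, it gives $n^{-1/2}D_n^{(r)}(f)\rightsquigarrow\sigma_f^{(r)}\xi$, with $\sigma_f^{(r)}$ finite, or zero when $f$ is strongly balanced. Convergence in distribution implies tightness, so $D_n^{(r)}(f)=O_P(\sqrt n)$, and therefore $n^{-1}D_n^{(r)}(f)=O_P(n^{-1/2})=o_P(1)$. Slutsky's theorem then finishes the argument.

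The main obstacle is whether Theorem~\ref{theorem: design property} genuinely covers every $f$ and every design, which needs checking in two places. First, its CCAR part requires $\sigma_f^{(\tCC)}$ finite, and this rests on the Poisson-equation machinery of Lemma~\ref{lemma:previous result1}. Second, the phrase ``any design'' in the lemma may be meant to include MixCAR or other rules satisfying Condition~\ref{assume: dgp}.

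To avoid relying on a CLT, I will use a more elementary fallback that gives a direct $L^2$ bound. Set $\mu_f=E[f(\vZ)]$, write $D_n(f)=\sum_i(2I_i-1)\{f(\vZ_i)-\mu_f\}+\mu_f D_n$, and decompose
$$
(2I_i-1)\{f(\vZ_i)-\mu_f\}=\Delta_i+\kappa_i,\qquad \kappa_i=E[(2I_i-1)\{f(\vZ_i)-\mu_f\}\mid\cF_{i-1}],
$$
so that $\Delta_i$ is a martingale difference.

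The martingale part satisfies $E[(\sum_i\Delta_i)^2]\le n\var(f(\vZ))$, so it is $O_P(\sqrt n)$. Controlling the predictable part $\sum_i\kappa_i$ is the hard step. It is bounded termwise, since $|\kappa_i|\le 2E|f(\vZ)|$, but this only gives $O(n)$, which is not enough. Here I will use the Poisson-equation representation of Lemma~\ref{lemma:previous result1}(ii)--(iv): it rewrites $\sum_i\kappa_h(\Lambda_{i-1})$ as a martingale plus the boundary terms $\hat\kappa_h(\Lambda_0)-\hat\kappa_h(\Lambda_n)$. These boundary terms have bounded moments by Lemma~\ref{lemma:previous result1}(iii), and the $\mu_fD_n$ term is $o_P(\sqrt n)$ by Condition~\ref{assume: dgp}.
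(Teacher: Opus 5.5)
Your proof is correct and follows the paper's argument. Both write $2I_i^{(r)}f(\vZ_i)=(2I_i^{(r)}-1)f(\vZ_i)+f(\vZ_i)$, apply the ordinary weak law to the i.i.d.\ part, reduce everything to $n^{-1}D_n^{(r)}(f)\overset{P}{\to}0$, and get the second claim by the same split or by subtraction.

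The only difference is where that last step comes from. The paper cites the weak law of \cite[Theorem 3.5]{ma2024new} directly, while you deduce it from the tightness of $n^{-1/2}D_n^{(r)}(f)$ in Theorem~\ref{theorem: design property}. That is legitimate, because that theorem's proof does not use this lemma. Your Poisson-equation fallback is therefore unnecessary, and it is not more elementary: it rests on the same machinery of Lemma~\ref{lemma:previous result1} that underlies the CCAR case of Theorem~\ref{theorem: design property}.
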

\begin{proof}
    We prove the first statement, the proof for the second statement is similar. If $r = \tCR$, then the lemma follows directly by weak law of large numbers. If $r = \tDC,\tCC, \tMC$, then by \cite[Theorem 3.5]{ma2024new},
    $$\frac{1}{n}\sum_{i=1}^{n}(2I_i^{(r)}-1)f(\vZ_i) \overset{P}\to 0.$$
    Furthermore,
    $$
        \frac{1}{n}\sum_{i=1}^{n}I_i^{(r)} f(\vZ_i) = \frac{1}{n}\sum_{i=1}^{n}(2I_i^{(r)}-1)f(\vZ_i)+\frac{1}{n}\sum_{i=1}^{n}(1-I_i^{(r)})f(\vZ_i).
    $$
    This implies
    $$\frac{2}{n}\sum_{i=1}^{n}I_i^{(r)} f(\vZ_i) = E[f(\vZ)] +o_P(1)$$ 
    by the weak law of large numbers. 
\end{proof}

The following lemma finds the probability limits of the design matrices under different randomization and analysis procedures. 
\begin{lemma}\label{lemma:block matrix}
    Let $\vOmega^{(\aZ)},\vOmega^{(\aS)},\vOmega^{(\aCM)}$ be block diagonal matrices of the form 
     \begin{align*}
        &\vOmega^{(\aZ)} = 
        \begin{pmatrix}
            \frac{1}{2}  & 0&\vZero^\top\\ 0& \frac{1}{2}  & \vZero^\top\\
            \vZero &\vZero& {{\SigZd{d_Z}}}
        \end{pmatrix},\quad {\vOmega^{(\aS)} = \begin{pmatrix}
            \frac{1}{2}  & 0 &\frac{1}{2}\vq_{1:d_S}^\top\\ 0& \frac{1}{2} & \frac{1}{2}\vq_{1:d_S}^\top\\
           \frac{1}{2}\vq_{1:d_S} &\frac{1}{2}\vq_{1:d_S}& {\operatorname{diag}}(\vq_{1:d_S})
        \end{pmatrix}},\\
        &\vOmega^{(\aCM)} = \begin{pmatrix}
            \bm{\Omega}^{(\aS)}&\vZero^\top\\
            \vZero & {{\SigZtd{d_Z}}}
        \end{pmatrix}
     \end{align*}
 Then, $n^{-1}(\vG^{(r,a)})^\top\vG^{(r,a)} \overset{P}\to\vOmega^{(a)}, n[(\vG^{(r,a)})^\top\vG^{(r,a)}]^{-1} \overset{P}\to (\vOmega^{(a)})^{-1}$ for any fixed $(r,a)$.
\end{lemma}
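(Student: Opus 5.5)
The plan is entrywise: write $n^{-1}(\vG^{(r,a)})^\top\vG^{(r,a)}$ block by block, identify each limit with Lemma~\ref{lemma:wlln for car}, and then get the inverse by continuous mapping.

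\textbf{Treatment block.} Since $I_i^2=I_i$, $(1-I_i)^2=1-I_i$ and $I_i(1-I_i)=0$, the $2\times2$ block is $\operatorname{diag}(n_1/n,n_0/n)$. Lemma~\ref{lemma:wlln for car} with $f\equiv1$ sends it to $\operatorname{diag}(1/2,1/2)$.

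\textbf{Cross blocks.} These have entries $n^{-1}\sum_i I_i\vW_i^{(a)}$ and $n^{-1}\sum_i(1-I_i)\vW_i^{(a)}$. Applying Lemma~\ref{lemma:wlln for car} coordinatewise (each coordinate is square integrable by Assumption~\ref{assume:independent copy}) gives the limit $\tfrac12E[\vW^{(a)}]$ in each row.
\begin{itemize}
\item For $a=\aZ$ this is $\vZero$, since $E(\vZ)=\vZero$.
\item For $a=\aS$ it is $\tfrac12\vq_{1:d_S}$.
\item For $a=\aCM$ it is $(\tfrac12\vq_{1:d_S}^\top,\vZero^\top)$, because $E(\vZt)=E\{E(\vZt\mid S)\}=\vZero$.
\end{itemize}
The convergence is exactly of the form Lemma~\ref{lemma:wlln for car} covers, so no design-specific argument is needed.

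\textbf{Covariate block.} Here $n^{-1}\sum_i\vW_i^{(a)}(\vW_i^{(a)})^\top$ involves no assignments, so the ordinary WLLN applies.
\begin{itemize}
\item For $a=\aZ$ the limit is $E[\vZ_{1:d_Z}\vZ_{1:d_Z}^\top]=\SigZd{d_Z}$.
\item For $a=\aS$, the products $\stind{j}\stind{k}=\mathbbm{1}\{j=k\}\stind{j}$ give $\operatorname{diag}(\vq_{1:d_S})$.
\item For $a=\aCM$, the additional blocks are $E[\vZt\vZt^\top]=\SigZtd{d_Z}$ and $E[\stind{s}\vZt]=q_sE[\vZt\mid S=s]=\vZero$.
\end{itemize}
Together these give the displayed $\vOmega^{(a)}$.

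\textbf{Inverse.} The claim $n[(\vG^{(r,a)})^\top\vG^{(r,a)}]^{-1}\overset{P}\to(\vOmega^{(a)})^{-1}$ follows from the continuous mapping theorem, once $\vOmega^{(a)}$ is shown positive definite. This is the step that needs a real argument; the rest is bookkeeping. I would take a Schur complement with respect to the treatment block $\operatorname{diag}(1/2,1/2)$. It leaves $\var(\vW^{(a)})$, as already computed in the proof of Lemma~\ref{lemma:Rn1 expansion}:
\begin{itemize}
\item For $a=\aZ$ this is $\SigZd{d_Z}$, which is positive definite because Assumption~\ref{assume:independent copy}(2) implies $\var(\vZ)$ is positive definite.
\item For $a=\aS$ it is $\operatorname{diag}(\vq_{1:d_S})-\vq_{1:d_S}\vq_{1:d_S}^\top$. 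This is positive definite because $d_S\le m-1$ and $q_{\text{rem}}>0$; its inverse is given explicitly by Tanabe's formula.
\item For $a=\aCM$ it is block diagonal in the previous block and $\SigZtd{d_Z}$, which is positive definite by Assumption~\ref{assume:independent copy}(2).
\end{itemize}
Hence $\vOmega^{(a)}\succ0$ in all cases. Since matrix inversion is continuous at positive definite matrices, the invertibility of the sample matrix holds with probability tending to one, and the inverse converges.
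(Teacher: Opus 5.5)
Your proposal is correct and follows the paper's proof. The blockwise limits come from Lemma~\ref{lemma:wlln for car} and the ordinary WLLN, and the inverse follows from positive definiteness of $\vOmega^{(a)}$ plus continuous mapping. The only difference is cosmetic: you take the Schur complement with respect to the treatment block, which yields $\var(\vW^{(a)})$ uniformly in $a$. The paper instead argues case by case, using block-diagonality for $\aZ$ and, for $\aS$, the Schur complement of $\operatorname{diag}(\vq_{1:d_S})$, namely $\tfrac12\bI_2-\tfrac14(\sum_{s\le d_S}q_s)\bone_2\bone_2^\top$.
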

\begin{proof}
    Throughout, convergence of random matrices means entry-wise convergence in probability. Since all dimensions are fixed, this is equivalent to convergence in any matrix norm.

    Recall $\vG_i^{(r,a)}=(I_i^{(r)},1-I_i^{(r)},(\vW_i^{(a)})^{\top})^{\top}$ and partition $$(\vG^{(r,a)})^{\top}\vG^{(r,a)}=\begin{pmatrix}\vG_{11}&\vG_{12}\\ \vG_{21}&\vG_{22}\end{pmatrix},$$
    where
    $$\vG_{11}=\begin{pmatrix}n_1&0\\0&n_0\end{pmatrix},\quad
\vG_{12}=\vG_{21}^{\top}
=\begin{pmatrix}\sum_{i=1}^{n}I_i^{(r)}(\vW_i^{(a)})^{\top}\\[2pt]
\sum_{i=1}^{n}(1-I_i^{(r)})(\vW_i^{(a)})^{\top}\end{pmatrix},\quad
\vG_{22}=\sum_{i=1}^{n}\vW_i^{(a)}(\vW_i^{(a)})^{\top}.
$$
For the blocks containing the treatment indicators we apply
Lemma~\ref{lemma:wlln for car} entrywise, for every design $r$ considered: with $f\equiv1$ it gives $n_1/n\overset{P}\to\tfrac12$ and
$n_0/n\overset{P}\to\tfrac12$, and with $f$ equal to a component of
$\vW^{(a)}$ it gives $n^{-1}\vG_{12}\overset{P}{\to}
{\bone}_2E[\vW^{(a)}]^{\top}/2$. The weak law of large numbers gives
$n^{-1}\vG_{22}\overset{P}{\to}E[\vW^{(a)}(\vW^{(a)})^{\top}]$, and it
remains to evaluate this matrix. For $a=\aZ$, $E[\vZ_{1:d_Z}\vZ_{1:d_Z}^{\top}]={\SigZd{d_Z}}$. For
$a=\aS$, so the block is
$\operatorname{diag}(\vq_{1:d_S})$. For $a=\aCM$, the first block is $\operatorname{diag}(\vq_{1:d_S})$, and $E[{\vZt}_{1:d_Z}{\vZt}_{1:d_Z}^{\top}]={\SigZtd{d_Z}}$; for the cross block, for each $s\in[d_S]$,
$$
E\big[\mathbbm{1}\{{S}=s\}{\vZt}_{1:d_Z}^{\top}\big]
=q_s E\big[{\vZt}_{1:d_Z}\mid {S}=s\big]^{\top}
=q_s \big(E[\vZ_{1:d_Z}\mid {S}=s]-E[\vZ_{1:d_Z}\mid {S}=s]\big)^{\top}
={\vZero}^{\top}.
$$
Collecting the blocks proves the
first claim for every $a$.

To prove the second statement, we show that the matrices are invertible. The matrix $\vOmega^{(\aZ)}$ is block diagonal with ${\SigZd{d_Z}}$ a principal submatrix of $\bm{\Sigma}_{\bm{Z}}$, so $\vOmega^{(\aZ)}\succ 0$. For $\vOmega^{(\aS)}$, the block $\operatorname{diag}(\vq_{1:d_S})$ is positive definite because $q_{{s}}>0$ for each ${s\in[d_S]}$, and the Schur complement of this block in $\vOmega^{(\aS)}$ equals $\tfrac12\mathbf{I}_2-\tfrac14(\textstyle\sum_{{s}\le d_S}q_{{s}}){\bone}_2{\bone}_2^{\top}$, whose eigenvalues are positive; thus $\vOmega^{(\aS)}\succ0$. Hence, $\vOmega^{(\aCM)}$ is block diagonal and positive definite. The smallest eigenvalue of
$n^{-1}(\vG^{(r,a)})^{\top}\vG^{(r,a)}$ converges in probability to
$\lambda_{\min}(\vOmega^{(a)})>0$, so $(\vG^{(r,a)})^{\top}\vG^{(r,a)}$ is
nonsingular with probability tending to one. Since matrix inversion is
continuous, with respect to the operator norm,
the continuous mapping theorem applied at $\vOmega^{(a)}$ yields
$n[(\vG^{(r,a)})^{\top}\vG^{(r,a)}]^{-1}\overset{P}{\to}(\vOmega^{(a)})^{-1}$. This finishes the proof.
\end{proof}

Lemma \ref{lemma:Rn1 converge} is an important lemma used to derive the asymptotic expansion of $R_{n,1}^{(r,a)}$ in the proof of Theorem \ref{theorem:combined}.

\begin{lemma}\label{lemma:Rn1 converge}
    Recall the definitions of $\vW_i^{(a)}, \vq$ in Appendix \ref{app:proof inference}, we have
    \begin{align*}
        \bar{\vW}_1^{(r,a)}+\bar{\vW}_0^{(r,a)} &= 2\vu^{(a)}+o_P({\bone}),
        \\\sqrt{n}(\bar{\vW}_1^{(r,a)}-\bar{\vW}_0^{(r,a)})&=
            \frac{2}{\sqrt{n}}\sum_{i=1}^{n}(2I_i^{(r)}-1)(\vW_i^{(a)}-\vu^{(a)})+o_P({\bone}),
    \end{align*}
    where $\vu^{(\aZ)} = (0,\ldots,0)^\top$ is an $d_Z$-dimensional vector, $\vu^{(\aS)} = \vq_{1:d_S}$ is a $d_S$-dimensional vector, $\vu^{(\aCM)} = (\vq_{1:d_S},{\vZero})^\top$ is a $(d_S+d_Z)$-dimensional vector.
\end{lemma}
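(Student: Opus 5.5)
The plan is to write each arm mean as a ratio of a weighted sum to an arm size. Then we rely on two facts: the arm proportions converge to $\tfrac12$ (Lemma~\ref{lemma:wlln for car} with $f\equiv1$), and the overall imbalance $D_n=n_1-n_0$ is $o_P(\sqrt n)$, either by Condition~\ref{assume: dgp} for DCAR and CCAR or $O_P(\sqrt n)$ by the CLT for CR.

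For the first claim, apply Lemma~\ref{lemma:wlln for car} entrywise to each coordinate of $\vW^{(a)}$. Each coordinate is square integrable: components of $\vZ$ have finite second moments by Assumption~\ref{assume:independent copy}, indicators are bounded, and $\vZt$ is a centered version of $\vZ$. The lemma gives $n^{-1}\sum_i I_i\vW_i^{(a)}\overset{P}\to \tfrac12E[\vW^{(a)}]$ and $n_1/n\overset{P}\to\tfrac12$, so $\bar\vW_1^{(r,a)}\overset{P}\to E[\vW^{(a)}]$. The same holds for arm $0$. It then remains to identify $\vu^{(a)}=E[\vW^{(a)}]$:
\begin{itemize}
\item for $a=\aZ$, $E(\vZ)=\vZero$;
\item for $a=\aS$, $E\stind{s}=q_s$;
\item for $a=\aCM$, the indicators give $\vq_{1:d_S}$ and $E(\vZt)=E[\vZ-E(\vZ\mid S)]=\vZero$.
\end{itemize}

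For the second claim, center first by writing $\vV_i=\vW_i^{(a)}-\vu^{(a)}$. Then $\bar\vW_k-\vu^{(a)}=n_k^{-1}\sum_i \mathbbm{1}\{I_i=k\}\vV_i$. Using $I_i=\{1+(2I_i-1)\}/2$ and $1-I_i=\{1-(2I_i-1)\}/2$, let $A_n=\sum_i\vV_i$ and $B_n=\sum_i(2I_i-1)\vV_i$, so that
\[
\bar\vW_1-\bar\vW_0=\frac{A_n+B_n}{2n_1}-\frac{A_n-B_n}{2n_0}
=\frac{B_n}{2}\Bigl(\frac1{n_1}+\frac1{n_0}\Bigr)+\frac{A_n}{2}\Bigl(\frac1{n_1}-\frac1{n_0}\Bigr).
\]
Multiply by $\sqrt n$. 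In the first term, $n(1/n_1+1/n_0)/2\overset{P}\to2$. Also $n^{-1/2}B_n=O_P(1)$, which follows from Theorem~\ref{theorem: design property} applied coordinatewise with $f$ a coordinate of $\vV$. Slutsky then gives $2n^{-1/2}B_n+o_P(1)$. In the second term, $\tfrac1{n_1}-\tfrac1{n_0}=-D_n/(n_1n_0)=O_P(n^{-3/2})$ since $D_n=O_P(\sqrt n)$ for every design, while $A_n=O_P(\sqrt n)$ by the ordinary CLT for i.i.d. mean-zero $\vV_i$. So that term is $\sqrt n\cdot O_P(\sqrt n)\cdot O_P(n^{-3/2})=O_P(n^{-1/2})=o_P(1)$.

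The step most likely to need care is the tightness $n^{-1/2}B_n=O_P(1)$ under CCAR when a coordinate of $\vV$ is not strongly balanced, for example stratum indicators under CCAR. Here Theorem~\ref{theorem: design property} supplies a nondegenerate normal limit, which implies tightness and is all that is needed. Everything else is algebra and Slutsky. Finally, since the dimensions are fixed, entrywise convergence yields the vector statements with the $o_P(\bone)$ remainders.
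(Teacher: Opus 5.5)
Your proof is correct and follows essentially the paper's route: both rewrite the arm means through $(2I_i-1)$ and $D_n=n_1-n_0$, then invoke $n/n_k\overset{P}{\to}2$ together with Lemma~\ref{lemma:wlln for car}. The differences are only bookkeeping. You center at $\vu^{(a)}$ before splitting, so the $D_n$ term multiplies the centered i.i.d.\ sum and is $O_P(n^{-1/2})$ outright, even under CR; the paper instead carries the uncentered term $-2n^{-1/2}D_n\vu^{(a)}$ and absorbs it into the centering at the end. You also justify explicitly, via Theorem~\ref{theorem: design property}, the tightness of $n^{-1/2}\sum_i(2I_i-1)(\vW_i^{(a)}-\vu^{(a)})$, which the paper uses implicitly when it replaces $n/n_1$ by $2$.
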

\begin{proof}
    Note that
    $$\bar{\vW}_1^{(r,a)}+\bar{\vW}_0^{(r,a)} = \frac{n}{n_1}\frac{1}{n}\sum_{i=1}^{n}I_i^{(r)} \vW_i^{(a)} + \frac{n}{n_0}\frac{1}{n}\sum_{i=1}^{n}(1-I_i^{(r)}) \vW_i^{(a)}.$$
    The first claim follows since $n/n_a\overset{P}\to 2$ for $a\in\{0,1\}$ and using Lemma \ref{lemma:wlln for car}. Further, notice
    \begin{align*}
        \sqrt{n}(\bar{\vW}_1^{(r,a)}-\bar{\vW}_0^{(r,a)})&=\frac{n}{n_1}\sum_{i=1}^{n}(2I_i^{(r)}-1)\vW_i^{(a)}+\frac{n}{n_1}\sum_{i=1}^{n}(1-I_i^{(r)})\vW_i^{(a)}-\frac{n}{n_0}\sum_{i=1}^{n}(1-I_i^{(r)})\vW_i^{(a)}\\
        &=\frac{2}{\sqrt{n}}\sum_{i=1}^{n}(2I_i^{(r)}-1)\vW_i^{(a)}-\frac{n^2}{n_1n_0}\frac{D_n}{\sqrt{n}}\frac{1}{n}\sum_{i=1}^{n}I_i^{(r)}\vW_i^{(a)}+o_P({\bone}).
    \end{align*}
    The second statement follows by applying $n^2/(n_1n_0)\overset{P}\to 4$ and Lemma \ref{lemma:wlln for car} again. The proof is finished.
\end{proof}

The next lemma shows the inconsistency of the single treatment effect estimators $\hat{\theta}_1,\hat{\theta}_0$, it also proves the consistency of covariate effect and the contrast effect $\hat{\theta}_1-\hat{\theta}_0$. 

\begin{lemma}[Inconsistency of the OLS estimators]\label{lemma:inconsistency}
    Suppose Condition \ref{assume: dgp} holds. Then, for any $(r,a)$,
    $$\hat{\vbeta}^{(r,a)}_n - \vbeta^{(a)} = {(\vOmega^{(a)})^{-1}\vv^{(a)}+o_P({\bone})=2{b^{(a)}}(1,1,0,\ldots,0)^\top+o_P({\bone})},$$
    where $\vv^{(a)}$ is given in Lemma \ref{lemma:OLS decomposition},
    $$b^{(\aZ)} = \frac{1}{2}E[g(\vZ)],\quad b^{(\aS)}=b^{(\aCM)} = \frac{1}{2}E[g(\vZ)\mid {S}>d_S].$$
    In particular $\hat{\theta}_t-\theta_t\overset{P}\to 2b^{(a)}$, $\hat{\theta}_1-\hat{\theta}_0\overset{P}\to 0$.
\end{lemma}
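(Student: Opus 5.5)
The plan is to reduce the claim to two facts already available: the probability limit of the normalized Gram matrix and the probability limit of the normalized cross-product with the population residuals. The OLS normal equations give
\[
\hat{\vbeta}^{(r,a)}_n-\vbeta^{(a)}=\Bigl(\tfrac1n(\vG^{(r,a)})^\top\vG^{(r,a)}\Bigr)^{-1}\tfrac1n(\vG^{(r,a)})^\top(\vY-\vG^{(r,a)}\vbeta^{(a)}),
\]
where we set $\vbeta^{(a)}=(\theta_1,\theta_0,(\bbetaW^{(a)})^\top)^\top$. By Lemma~\ref{lemma:block matrix}, the inverse factor converges in probability to $(\vOmega^{(a)})^{-1}$. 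The second factor equals $n^{-1/2}\vH_n^{(r,a)}$. We then verify $n^{-1/2}\vH_n^{(r,a)}\overset{P}\to\vv^{(a)}$ as in the proof of Lemma~\ref{lemma:OLS decomposition}. We write $e^{(a)}=h^{(a)}(\vZ)+\varepsilon$ and apply Lemma~\ref{lemma:wlln for car} entrywise to the products $I_i\vW_i h^{(a)}(\vZ_i)$ and $(1-I_i)\vW_ih^{(a)}(\vZ_i)$, and the plain WLLN to $\vW_ih^{(a)}(\vZ_i)$. The $\varepsilon$ terms are martingale differences with bounded second moments and vanish. The continuous mapping theorem then yields $\hat{\vbeta}^{(r,a)}_n-\vbeta^{(a)}=(\vOmega^{(a)})^{-1}\vv^{(a)}+o_P(\bone)$.

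Next we identify the limits $\vv^{(a)}$. The first two coordinates are $E[I\,h^{(a)}]\to\tfrac12E[h^{(a)}]$ and $\tfrac12E[h^{(a)}]$. The covariate block is $E[\vW^{(a)}h^{(a)}]$, which vanishes by the normal equations defining $\bbetaW^{(a)}$ whenever those equations include a centering. We must be careful here, because $\bbetaW^{(a)}$ is a projection coefficient with an intercept: $E[\vW h^{(a)}]=E[\vW]E[h^{(a)}]$ rather than $0$. For $a=\aZ$ we have $E[\vZ]=\vZero$, so $E[h^{(\aZ)}]=E[g]$ and the covariate block vanishes. For $a=\aS$, with $\veta$ as defined, we get $h^{(\aS)}=g-\sum_{j\le d_S}\eta_j\stind{j}$. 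This gives $E[\stind{s}h^{(\aS)}]=q_s(E[g\mid S=s]-\eta_s)=q_sE[g\mid S>d_S]$, and $E[h^{(\aS)}]=E[g\mid S>d_S]$ after summing over strata. For $a=\aCM$, the $\vZt$ block contributes $E[\vZt h]=0$ by the definition of $\bgamt$ and the orthogonality $E[\stind{s}\vZt]=0$. These computations reproduce the stated $\vv^{(a)}$, namely $\tfrac12E[g\mid S>d_S](1,1,2\vq_{1:d_S}^\top,\vZero^\top)^\top$.

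The remaining step, which is the main algebraic point, is to show $(\vOmega^{(a)})^{-1}\vv^{(a)}=2b^{(a)}(1,1,0,\ldots,0)^\top$. Rather than inverting, we check the equivalent identity $\vOmega^{(a)}\bm\mu=\vv^{(a)}$ with $\bm\mu=2b^{(a)}(1,1,\vZero^\top)^\top$, and then use invertibility of $\vOmega^{(a)}$ from Lemma~\ref{lemma:block matrix}. The first two rows give $\tfrac12\cdot2b^{(a)}=b^{(a)}=v_1^{(a)}=v_2^{(a)}$. For $\aS$ and $\aCM$, the stratum rows give $\tfrac12q_s\cdot2b^{(a)}\cdot2=2q_sb^{(a)}$, which matches $v_{2+s}^{(a)}=q_sE[g\mid S>d_S]$ with $b^{(a)}=\tfrac12E[g\mid S>d_S]$. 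The $\vZ$ and $\vZt$ rows are zero on both sides because the cross blocks of $\vOmega^{(a)}$ vanish.

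Reading off coordinates, we obtain $\hat\theta_t-\theta_t\overset{P}\to2b^{(a)}$ for $t=0,1$, and taking the difference gives $\hat\theta_1-\hat\theta_0\overset{P}\to0$. The main obstacle is bookkeeping: one must be consistent about the pseudo-true $\vbeta^{(a)}$, which is chosen with the true $\theta$'s rather than the projection intercept. That choice is exactly why the bias $2b^{(a)}$ appears equally in both arm coefficients and cancels in the contrast.
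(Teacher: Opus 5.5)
Your proposal is correct and follows the paper's proof. You use the least-squares identity, Lemma~\ref{lemma:block matrix} for the Gram factor, and the limit $n^{-1/2}\vH_n^{(r,a)}\overset{P}\to\vv^{(a)}$ from Lemma~\ref{lemma:OLS decomposition}, and your check $\vOmega^{(a)}\bm\mu=\vv^{(a)}$ supplies the second equality that the paper asserts without verification (and later invokes in the proof of Theorem~\ref{theorem:var limit}). One small slip: $\vG_i^{(r,a)} e_i^{(a)}=(I_i e_i^{(a)},(1-I_i) e_i^{(a)},(\vW_i^{(a)})^\top e_i^{(a)})^\top$ contains no treatment-by-covariate products, so Lemma~\ref{lemma:wlln for car} is needed only for $I_i h^{(a)}(\vZ_i)$ and $(1-I_i) h^{(a)}(\vZ_i)$, not for $I_i\vW_i h^{(a)}(\vZ_i)$, which would call for stronger moment conditions than those assumed.
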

\begin{proof}
    Fix $(r,a)$ and recall the definitions of $\vG^{(r,a)}, e^{(a)}$ from Lemma \ref{lemma:OLS decomposition}, the LS identity gives 
    $${\hat{\vbeta}_n^{(r,a)}-\vbeta^{(a)}= \left[n((\vG^{(r,a)})^\top\vG^{(r,a)})^{-1}\right]\frac{1}{n}\sum_{i=1}^{n}G_i^{(r,a)}e_i^{(a)}}.$$
    By Lemma \ref{lemma:block matrix}, the first factor converges to $(\vOmega^{(a)})^{-1}$ in probability. The second factor is $n^{-1/2}\vH_n^{(r,a)}$, which we have shown in Lemma \ref{lemma:OLS decomposition} that it converges to $\vv^{(a)}$ in probability. The first statement is proved. The treatment effect $\hat{\theta}_1-\hat{\theta}_0  = \vL\hat{\vbeta}$ is consistent because $\vL (1,1,0,\ldots,0)^\top = 0$. The proof is finished.
\end{proof}

Next, we state and prove Lemma \ref{lemma: compare TZ, TS under CR,D,C quadratic}, which is primarily used to compare the limiting variances of treatment effect estimation under CR and DCAR.

\begin{lemma}\label{lemma: compare TZ, TS under CR,D,C quadratic}
    Suppose Assumption \ref{assume:independent copy} holds, and recall $\vGamma={\SigZd{d_Z}}^{-1}E[\vZ_{1:d_Z}g(\vZ)], \eta_j= E[g(\vZ)\mid {S}=j]-E[g(\vZ)\mid {S}>d_S]$. Then,
    \begin{align}
    &(\sigma^{(\tCR,\aS)}_{d_S})^2 = (\sigma^{(\tCR,\aZ)}_{d_Z})^2+\var(\vGamma^\top\vZ_{1:d_Z})-\var(\sum_{j=1}^{d_S}\eta_j\stind{j}), \label{eq:cr relation}\\
    &(\sigma^{(\tDC,\aZ)}_{d_Z})^2=(\sigma^{(\tDC,\aS)}_{d_S})^2 + E\left[\var\left(\vGamma^\top\vZ_{1:d_Z} \mid {S}\right)\right]-2E\left\{\cov(g(\vZ),\vGamma^\top\vZ_{1:d_Z}\mid {S})\right\}. \label{eq:dcar relation}
    \end{align}
\end{lemma}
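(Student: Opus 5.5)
The plan is to start from the explicit variance formulas \eqref{eq:varCR} and \eqref{eq:varDC}, specialized to $a=\aZ$ and $a=\aS$. We have $\vW^{(\aZ)}=\vZ_{1:d_Z}$ with $\bbetaW^{(\aZ)}=\vGamma$, and $\vW^{(\aS)}=(\stind{1},\ldots,\stind{d_S})$ with $\bbetaW^{(\aS)}=\veta$. Both identities then reduce to expanding a variance of a difference. There are two orthogonality facts, which are the normal equations of the respective pseudo-true projections, and these collapse the cross terms.

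\emph{CR identity \eqref{eq:cr relation}.} Expand $\var(g-\vGamma^\top\vZ_{1:d_Z})=\var(g)-2\cov(g,\vGamma^\top\vZ_{1:d_Z})+\var(\vGamma^\top\vZ_{1:d_Z})$. Since $E\vZ=\vZero$ and $\vGamma=\SigZd{d_Z}^{-1}E[\vZ_{1:d_Z}g]$, we get $\cov(g,\vGamma^\top\vZ_{1:d_Z})=\vGamma^\top\SigZd{d_Z}\vGamma=\var(\vGamma^\top\vZ_{1:d_Z})$. Hence $(\sigma^{(\tCR,\aZ)})^2=\var(g)-\var(\vGamma^\top\vZ_{1:d_Z})$.

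For SFE, set $L:=\sum_{j\le d_S}\eta_j\stind{j}$. I will check that $\cov(g,L)=\var(L)$. Write $\mu_j=E[g\mid S=j]$ and $\mu_{\rm r}=E[g\mid S>d_S]$, so that $\eta_j=\mu_j-\mu_{\rm r}$. Then $\cov(g,L)=\sum_j\eta_j q_j(\mu_j-E g)$. Also $\var(L)=\sum_j\eta_j^2q_j-(\sum_j\eta_jq_j)^2$. Using $Eg=\sum_{j\le d_S}q_j\mu_j+q_{\rm rem}\mu_{\rm r}$, one gets $\mu_j-Eg=\eta_j-\sum_k q_k\eta_k$, which gives the equality. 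Equivalently, this is the normal equation $\cov(g-L,\stind{j})=0$ for the best linear projection onto the stratum dummies, whose coefficients are exactly $\veta$ by the covariance inverse (Tanabe's formula) already used in Lemma \ref{lemma:Rn1 expansion}. So $(\sigma^{(\tCR,\aS)})^2=\var(g)-\var(L)$, and subtracting the two expressions yields \eqref{eq:cr relation}.

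\emph{DCAR identity \eqref{eq:dcar relation}.} Here $L$ is $\sigma(S)$-measurable, so $\var(g-L\mid S)=\var(g\mid S)$ and $(\sigma^{(\tDC,\aS)})^2=E[\var(g\mid S)]$, as noted in the proof of Proposition \ref{prop:better model}. For ANCOVA, expand the conditional variance to get $E\var(g\mid S)-2E\cov(g,\vGamma^\top\vZ_{1:d_Z}\mid S)+E\var(\vGamma^\top\vZ_{1:d_Z}\mid S)$. Substituting gives \eqref{eq:dcar relation} directly.

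The only step needing care is the SFE normal-equation algebra, i.e.\ keeping the reference stratum (the remainder $S>d_S$) straight in the CR case. Everything else is routine expansion.
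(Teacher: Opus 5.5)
Your proposal is correct and follows essentially the same route as the paper. The normal equations give $\cov(g,\vGamma^\top\vZ_{1:d_Z})=\var(\vGamma^\top\vZ_{1:d_Z})$ and $\cov(g,\sum_j\eta_j\stind{j})=\var(\sum_j\eta_j\stind{j})$, so each CR variance is $\var(g)$ minus the explained variance; the paper phrases the second fact as $\cov(g-\sum_j\eta_j\stind{j},\stind{k})=0$, and your reference-stratum algebra correctly fills in a step the paper only sketches. The DCAR identity then follows, as in the paper, from the $\sigma(S)$-measurability of the SFE fit together with expanding the conditional variance.
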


\begin{proof}[Proof of Lemma \ref{lemma: compare TZ, TS under CR,D,C quadratic}]
    We first prove Equation \eqref{eq:cr relation}. By the normal equations ${\SigZd{d_Z}}\vGamma = E[\vZ_{1:d_Z}g(\vZ)]$ and Assumption \ref{assume:independent copy},
    $$\cov\left(g(\vZ),\vGamma^\top\vZ_{1:d_Z}\right) = \vGamma^\top E[\vZ_{1:d_Z}g(\vZ)] = \vGamma^\top {\SigZd{d_Z}}\vGamma = \var\left(\vGamma^\top \vZ_{1:d_Z}\right).$$
    Analogously, note that $E[g(\vZ)]-\sum_{j\leq d_S}\eta_jq_j=E[g(\vZ)\mid {S}>d_S]$, whence, for every $k\leq d_S$,
    $$
    \cov\left(g-\sum_{j=1}^{d_S}\eta_j\stind{j},\stind{k}\right) = q_k\left\{E[g\mid {S}=k]-\eta_k-E[g\mid {S}>d_S]\right\}.$$
    So that, by Equation \eqref{eq:varCR}, $$\sigma^{(\tCR,a)}=\var[g(\vZ)]-\var\left((\vW^{(a)})^\top {\bbetaW^{(a)}}\right).$$
    Taking the difference of the above display for different $a$ yields Equation \eqref{eq:cr relation}. 

    We now prove Equation \eqref{eq:dcar relation}. Because $\sum_j \eta_j\stind{j}$ is $\sigma(S)-$measurable, $\var(g-\sum_j\eta_j\stind{j}\mid {S}) = \var(g\mid {S})$. Expanding Equation \eqref{eq:varDC} for different $a$, we have
    $$(\sigma^{(\tDC,\aZ)})^2=E[\var(g\mid {S})]+E[\var(\vGamma^\top \vZ_{1:d_Z}\mid {S})] - 2E\{\cov(g(\vZ),\vGamma^\top \vZ_{1:d_Z}\mid {S})\}.$$
    This is Equation \eqref{eq:dcar relation}. The proof is finished.
    
\end{proof}

\section{Bootstrap Adjusted tests}\label{app:bootstrap}

\subsection{Bootstrap consistency}
We use the following moment conditions.

\begin{assumption}\label{assume:bootstrap}
For each working model $a$:

\noindent(1) $E\|\vW^{(a)}\|^4<\infty$, $E|e^{(a)}|^4<\infty$, and $E\|\vW^{(a)}e^{(a)}\|^4<\infty$, where $e^{(a)}=Y-(\vG^{(r,a)})^\top\vbeta^{(a)}$ is the working-model residual.

\noindent(2) If $r=\tDC$, then, for every $q_0>0$, some $C_{q_0}<\infty$ satisfies
\[
\sup_{Q:\min_s Q(S=s)\geq q_0}\ \sup_{n\geq1}
E_Q\!\left\{\max_{s\leq m}|D_n(\chi_s)|^4\right\}\leq C_{q_0}.
\]
\end{assumption}

We treat the null and local alternatives together by writing the observed response as
$$
Y_i=\theta+g(\vZ_i)+\varepsilon_i+\frac{\delta}{\sqrt n}I_i,
\qquad \delta\in\bR.
$$
Thus, $\delta=0$ is the null and $(\theta_1,\theta_0)=(\theta+\delta/\sqrt n,\theta)$ under the local alternative. Throughout this appendix, $P_*$, $E_*$, and $\var_*$ are conditional on the full observed sample $\cD_n=\{(\vZ_i,I_i,Y_i):i\leq n\}$ and refer only to bootstrap randomness.

\begin{proof}[Proof of Theorem \ref{theorem:bootstrap consistency}]
By Lemma~\ref{lemma:conditional moments},
$$
n\var_*(\hat\theta_1^*-\hat\theta_0^*\mid\cA_n^{(a)})
=\var_*(X_n^*\mid\cA_n^{(a)})
\overset P\to4(\tau^{(r,a)})^2.
$$
Lemma~\ref{lemma:bootstrap variance consistency}, applied with $k=4$, shows that $n\hat v_{*,n,B}^{(r,a)}$ has the same limit. Lemma~\ref{lemma:empirical residual moments} ensures that this limit is unchanged under every fixed local alternative. Finally, Theorem~\ref{theorem:combined} gives
$$
\sqrt n(\hat\theta_1-\hat\theta_0)
\rightsquigarrow\mathcal N\left(\delta,4(\tau^{(r,a)})^2\right).
$$
The asserted studentized limit follows from Slutsky's theorem when $\tau^{(r,a)}>0$.
\end{proof}

Within model~\eqref{eq:true model}, the adjustment does not require knowing $g$ or correctly specifying the working regression. When $\tau^{(r,a)}>0$, it yields a valid test for CR and the covered DCAR procedures, while preserving the ARE comparisons in Theorem~\ref{theorem:compare tests}. The CCAR results reported in Section~\ref{sec:numerical} remain empirical.

\subsection{Auxiliary Lemmas}\label{app:bootstrap auxiliary}

Fix $a$ and suppress this superscript unless strictly needed. Let $\hat P_n$ be the empirical law of $(\vW_j,Y_j)$, $j\leq n$, and put
\begin{align*}
\bar{\vW}&=\hat P_n\vW, \qquad \bar Y=\hat P_nY,\\
\hat\Sigma_{\vW}&=\hat P_n\{(\vW-\bar{\vW})(\vW-\bar{\vW})^\top\},\\
\vb_n&=\hat\Sigma_{\vW}^{-1}
\hat P_n\{(\vW-\bar{\vW})(Y-\bar Y)\}.
\end{align*}
The empirical projection coefficient and residual (minimizing the LS error defined by $\hat{P}_n$) are
\[
\vbeta_n^\dag=
\bigl(\bar Y-\bar{\vW}^\top\vb_n,\bar Y-\bar{\vW}^\top\vb_n,\vb_n^\top\bigr)^\top,
\qquad
e_{n,j}=Y_j-\bar Y-(\vW_j-\bar{\vW})^\top\vb_n.
\]
For a bootstrap draw, set $\vW_i^*=\vW_{J_i^*}$,
$\vG_i^*=(I_i^*,1-I_i^*,(\vW_i^*)^\top)^\top$, and
\[
\vQ_n^*=\frac1n\sum_{i=1}^n\vG_i^*(\vG_i^*)^\top,\qquad
\vQ_n^0=
\begin{pmatrix}
1/2&0&\bar{\vW}^\top/2\\
0&1/2&\bar{\vW}^\top/2\\
\bar{\vW}/2&\bar{\vW}/2&\hat P_n(\vW\vW^\top)
\end{pmatrix}.
\]
The replicate is accepted on
$\cA_n=\{\lambda_{\min}(\vQ_n^*)\geq\epsilon_0\}$, where
$0<\epsilon_0<\lambda_{\min}(\vOmega^{(a)})$, $\hat{\vbeta}_n^*$ denotes its LS coefficient. If $\hat\Sigma_{\vW}$ is singular, define the inverses and residuals arbitrarily. The probability of this event tends to zero by the weak law of large numbers. Finally, set $X_n^*=\sqrt n\vL(\hat{\vbeta}_n^*-\vbeta_n^\dag)$ on $\cA_n$ and $X_n^*=0$ otherwise. We write $\rightsquigarrow_*$ for conditional weak convergence.

\begin{lemma}[Empirical projection identities]\label{lemma:empirical product projection}
If $\hat\Sigma_{\vW}$ is nonsingular, then
\[
\vL\vbeta_n^\dag=0,\qquad \hat P_ne_n=0,\qquad
\hat P_n(\vW e_n)=\vZero,\qquad
\vL(\vQ_n^0)^{-1}=(2,-2,0,\ldots,0).
\]
Consequently, $X_n^*=\sqrt n(\hat\theta_1^*-\hat\theta_0^*)$ on $\cA_n$.
\end{lemma}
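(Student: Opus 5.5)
The four identities all come from the normal equations of the centered least-squares problem under the empirical law $\hat P_n$. The final claim then follows from the structure of $\vL$.

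\textbf{Step 1.} By construction the first two coordinates of $\vbeta_n^\dag$ are both equal to $\bar Y-\bar{\vW}^\top\vb_n$. Since $\vL=(1,-1,0,\ldots,0)$, this gives $\vL\vbeta_n^\dag=0$ immediately.

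\textbf{Step 2.} Averaging $e_{n,j}=Y_j-\bar Y-(\vW_j-\bar{\vW})^\top\vb_n$ under $\hat P_n$ gives $\hat P_ne_n=0$, because each centered term averages to zero.

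\textbf{Step 3.} Write $\hat P_n(\vW e_n)=\hat P_n\{(\vW-\bar{\vW})e_n\}+\bar{\vW}\,\hat P_ne_n$. The second term vanishes by Step 2. The first term equals
$$\hat P_n\{(\vW-\bar{\vW})(Y-\bar Y)\}-\hat\Sigma_{\vW}\vb_n,$$
which is zero by the definition of $\vb_n$ (nonsingularity of $\hat\Sigma_{\vW}$ is used here).

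\textbf{Step 4.} For $\vL(\vQ_n^0)^{-1}$, I would mimic the computation in the proof of Theorem \ref{theorem:combined}. It suffices to verify $\vQ_n^0\vc=\vL^\top$ for $\vc=(2,-2,\vZero^\top)^\top$ and then use symmetry of $\vQ_n^0$. The first row gives $\tfrac12\cdot2-0=1$ and the second row gives $0-\tfrac12\cdot 2=-1$. The $\vW$-block gives $\tfrac12\bar{\vW}\cdot2-\tfrac12\bar{\vW}\cdot2=\vZero$. Hence $\vQ_n^0\vc=\vL^\top$.

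Invertibility of $\vQ_n^0$ also has to be checked. Its Schur complement with respect to the top $2\times2$ block $\tfrac12\bI_2$ is $\hat P_n(\vW\vW^\top)-\bar{\vW}\bar{\vW}^\top=\hat\Sigma_{\vW}$, which is nonsingular by assumption. So $\vQ_n^0$ is invertible, $\vc=(\vQ_n^0)^{-1}\vL^\top$, and transposing yields $\vL(\vQ_n^0)^{-1}=(2,-2,0,\ldots,0)$.

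\textbf{Step 5.} On $\cA_n$, linearity gives $X_n^*=\sqrt n\vL\hat{\vbeta}_n^*-\sqrt n\vL\vbeta_n^\dag$. The second term vanishes by Step 1, and $\vL\hat{\vbeta}_n^*=\hat\theta_1^*-\hat\theta_0^*$, so $X_n^*=\sqrt n(\hat\theta_1^*-\hat\theta_0^*)$.

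The only place needing care is Step 4, specifically confirming that $\vQ_n^0$ is invertible. Once that is reduced to the nonsingularity of $\hat\Sigma_{\vW}$ via the Schur complement, everything else is bookkeeping with the normal equations.
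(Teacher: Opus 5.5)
Your proof is correct and follows the paper's argument. The first identity holds by inspection, the next two are the empirical normal equations, and the last comes from checking $\vQ_n^0(2,-2,\vZero^\top)^\top=\vL^\top$ and using symmetry. Your Schur-complement check that $\vQ_n^0$ is invertible, which reduces it to the nonsingularity of $\hat\Sigma_{\vW}$, fills in a step the paper leaves implicit.
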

\begin{proof}
The first identity is immediate, and the next two are the normal equations for the empirical LS projection. For the last, multiply $\vQ_n^0$ by $(2,-2,0,\ldots,0)^\top$ and use symmetry.
\end{proof}

The next lemma connects the empirical product-projection residuals with the limiting variances of Theorem \ref{theorem:combined} and provides the moment bounds used repeatedly below. Recall from Assumption \ref{assume:bootstrap} that $e^{(a)}=Y-(\vG^{(r,a)})^\top\vbeta^{(a)}$. Under the null it equals $g(\vZ)-(\vW^{(a)})^\top\bbetaW^{(a)}+\varepsilon$, free of $r$ and of the treatment indicator. Equations \eqref{eq:varCR}--\eqref{eq:varsACC} give
\begin{equation}\label{eq:residual variance identification}
\var(e^{(a)})=(\tau^{(\tCR,a)})^2,\qquad E\{\var(e^{(a)}\mid S)\}=(\tau^{(\tDC,a)})^2.
\end{equation}
Moreover, under the null, $\var(\vW^{(a)})\bbetaW^{(a)}=\cov(\vW^{(a)},Y)$, that is, $\bbetaW^{(a)}$ is the slope of the population projection of $Y$ onto $(1,(\vW^{(a)})^\top)$.

\begin{lemma}[Empirical residual moments]\label{lemma:empirical residual moments}
Suppose Assumption \ref{assume:bootstrap}(1) holds and let $\hat{P}_n$ be the empirical law of $\{(\vW_j^{(a)},Y_j):j\leq n\}$, where $Y_j$ follows the local alternative with fixed $\delta\in\bR$. Let $\hat{q}_s=\hat{P}_n\{S=s\}$ and let $\widehat{\var}_n(\cdot\mid S=s)$ denote the empirical variance within stratum $s$. Then,
\begin{align}
&\hat{P}_n e_n^2 \overset{P}\to (\tau^{(\tCR,a)})^2,\qquad \sum_{s=1}^{m}\hat{q}_s\widehat{\var}_n(e_n\mid S=s)\overset{P}\to (\tau^{(\tDC,a)})^2,\label{eq:residual second moments}\\
&\hat{P}_n|e_n|^4+\hat{P}_n\|\vW e_n\|^4 = O_P(1).\label{eq:residual fourth moments}
\end{align}
\end{lemma}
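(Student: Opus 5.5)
Since $e_n$ is the residual of an empirical least-squares projection, the strategy is to show that the empirical projection coefficients converge to their population counterparts. Once that is in hand, $e_{n,j}$ differs from the population residual by a term that is uniformly negligible in $L^2(\hat P_n)$, and the limits in \eqref{eq:residual second moments} then follow from the weak law of large numbers. Concretely, write $Y_j=\theta+g(\vZ_j)+\varepsilon_j+\delta I_j/\sqrt n$. The empirical moments $\bar{\vW}$, $\hat\Sigma_{\vW}$ and $\hat P_n\{(\vW-\bar{\vW})(Y-\bar Y)\}$ are sample averages of i.i.d. functions of $(\vZ_j,\varepsilon_j)$, plus a perturbation from the alternative that is bounded by $|\delta|n^{-1/2}\hat P_n(\|\vW\|+1)=O_P(n^{-1/2})$. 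The weak law of large numbers together with Lemma \ref{lemma:block matrix} (invertibility of $\var(\vW^{(a)})$) therefore gives $\vb_n\overset P\to\bbetaW^{(a)}$ and $\bar Y-\bar{\vW}^\top\vb_n\overset P\to E[Y]-E[\vW]^\top\bbetaW^{(a)}$. The identification of the limit uses the remark preceding the lemma: $\bbetaW^{(a)}$ is the population projection slope of $Y$ onto $(1,\vW^{(a)})$.

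Next, I would decompose $e_{n,j}=\bar e_j+r_{n,j}$, where $\bar e_j=e_j^{(a)}-E[e^{(a)}]$ is the centered population residual (this centering absorbs the $2b^{(a)}$ offset) and
$$r_{n,j}=-(\bar Y-E Y)+(\bar{\vW}-E\vW)^\top\vb_n-(\vW_j-\bar{\vW})^\top(\vb_n-\bbetaW^{(a)})+\delta I_j/\sqrt n .$$
From this I obtain $\hat P_n r_n^2\le C\{o_P(1)+\|\vb_n-\bbetaW^{(a)}\|^2\hat P_n\|\vW\|^2+\delta^2/n\}=o_P(1)$. The law of large numbers gives $\hat P_n\bar e^2\to\var(e^{(a)})$, and Cauchy--Schwarz on the cross term then yields $\hat P_n e_n^2\overset P\to\var(e^{(a)})=(\tau^{(\tCR,a)})^2$ by \eqref{eq:residual variance identification}. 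For the second claim I argue stratum by stratum. We have $\hat q_s\to q_s>0$, and the within-stratum empirical mean and second moment of $\bar e$ converge to $E[\bar e\mid S=s]$ and $E[\bar e^2\mid S=s]$, since these are averages of $\bar e\,\chi_s$ and $\bar e^2\chi_s$. The perturbation contributes at most $\hat P_n(r_n^2\chi_s)/\hat q_s=o_P(1)$. Summing over the fixed $m$ strata gives $\sum_s\hat q_s\widehat{\var}_n(e_n\mid S=s)\overset P\to E\{\var(e^{(a)}\mid S)\}=(\tau^{(\tDC,a)})^2$.

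The fourth-moment bound \eqref{eq:residual fourth moments} comes from $|e_{n,j}|^4\le C(|\bar e_j|^4+|r_{n,j}|^4)$ and $|r_{n,j}|^4\le C\{o_P(1)+\|\vb_n-\bbetaW^{(a)}\|^4\|\vW_j\|^4+\delta^4/n^2\}$. Averaging and invoking Assumption \ref{assume:bootstrap}(1) through the law of large numbers for $\|\vW\|^4$ and $|e^{(a)}|^4$ gives $O_P(1)$. For $\|\vW e_n\|^4$, the term $\|\vW\|^4|\bar e|^4$ is controlled by $E\|\vW^{(a)}e^{(a)}\|^4<\infty$ plus $\|\vW\|^4|E e|^4$. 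The main obstacle is the cross term $\|\vW_j\|^8\|\vb_n-\bbetaW^{(a)}\|^4$, which needs an eighth moment of $\vW$ that is not assumed. For $a=\aS$ it is harmless because $\vW$ is bounded. For $a\in\{\aZ,\aCM\}$ I would first try Markov's inequality, $\max_j\|\vW_j\|=o_P(n^{1/4})$ from the fourth moment, combined with the rate $\|\vb_n-\bbetaW^{(a)}\|=O_P(n^{-1/2})$ from the CLT under the finite second moments. This bounds $\hat P_n\|\vW\|^8\|\vb_n-\bbetaW^{(a)}\|^4$ by $\max_j\|\vW_j\|^4\,\hat P_n\|\vW\|^4\cdot O_P(n^{-2})=o_P(1)$, which suffices.
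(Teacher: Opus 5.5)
Your proposal is correct and takes essentially the same route as the paper. You decompose $e_{n,j}$ into the centered population residual plus a remainder that is small in $L^2(\hat P_n)$, apply the weak law of large numbers overall and within each stratum, and control the fourth moments with the $c_r$ inequality, using $\max_{j\le n}\|\vW_j\|^4=o_P(n)$ together with $\|\vb_n-\bbetaW^{(a)}\|=O_P(n^{-1/2})$ for the $\|\vW\|^8$ term.

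There are only two minor differences. First, you fold the shift $\delta I_j/\sqrt n$ directly into the remainder, whereas the paper proves the case $\delta=0$ and then transfers it to the local alternative by a perturbation argument. Second, in your $r_{n,j}$ the term $(\bar{\vW}-E\vW)^\top\vb_n$ should read $(\bar{\vW}-E\vW)^\top\bbetaW^{(a)}$. This slip is harmless, because the discrepancy $(\vb_n-\bbetaW^{(a)})^\top(\bar{\vW}-E\vW)$ does not depend on $j$ and is $o_P(1)$.
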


\begin{proof}
Consider first $\delta=0$ and write $\breve e=e^{(a)}-E(e^{(a)})$. Because $\bbetaW^{(a)}$ is the population projection slope,
$E\{(\vW-E\vW)\breve e\}=\vZero$. The empirical normal equations give
\begin{equation}\label{eq:slope expansion}
\vb_n-\bbetaW^{(a)}
=\hat\Sigma_{\vW}^{-1}\hat P_n\{(\vW-\bar{\vW})\breve e\},
\qquad
e_{n,j}=\breve e_j-\hat P_n\breve e
-(\vW_j-\bar{\vW})^\top(\vb_n-\bbetaW^{(a)}).
\end{equation}
Chebyshev's inequality yields
\[
\vb_n-\bbetaW^{(a)}=O_P(n^{-1/2}),
\qquad \hat P_n\breve e=O_P(n^{-1/2}).
\]
It follows from \eqref{eq:slope expansion} that
\[
\hat P_n(e_n-\breve e)^2
\leq2(\hat P_n\breve e)^2
+2\|\vb_n-\bbetaW^{(a)}\|^2\hat P_n\|\vW-\bar{\vW}\|^2
=O_P(n^{-1}).
\]
The weak law of large numbers and \eqref{eq:residual variance identification} now give both limits in \eqref{eq:residual second moments}; for the conditional variance, apply the same $L^2$ bound within each of the finitely many strata.

For \eqref{eq:residual fourth moments}, apply the $c_r$ inequality to \eqref{eq:slope expansion}. All resulting empirical fourth moments are $O_P(1)$ by Assumption~\ref{assume:bootstrap}(1), except possibly the term
\[
\|\vb_n-\bbetaW^{(a)}\|^4\hat P_n\|\vW\|^8
\leq O_P(n^{-2})\max_{j\leq n}\|\vW_j\|^4\hat P_n\|\vW\|^4=o_P(1),
\]
where $n^{-1}\max_{j\leq n}\|\vW_j\|^4\overset P\to0$. This proves the fourth-moment bound.

It remains only to transfer these conclusions to a fixed local alternative. Let $\vb_n(\delta)$ and $e_{n,j}(\delta)$ denote the empirical projection slope and residual when the treatment coefficient is $\delta/\sqrt n$. Since the response changes pointwise by at most $|\delta|/\sqrt n$, linearity of LS projection and the Cauchy--Schwarz inequality give
\[
\|\vb_n(\delta)-\vb_n(0)\|=O_P(n^{-1/2}),\qquad
\hat P_n\{e_n(\delta)-e_n(0)\}^2=O_P(n^{-1}).
\]
The same expansion, together with $n^{-1}\max_{j\leq n}\|\vW_j\|^4=o_P(1)$, yields
\[
\hat P_n|e_n(\delta)-e_n(0)|^4+
\hat P_n\|\vW\{e_n(\delta)-e_n(0)\}\|^4=o_P(1).
\]
The $L^2$ triangle inequality, applied also within each of the finitely many strata, and the $c_r$ inequality now transfer \eqref{eq:residual second moments}--\eqref{eq:residual fourth moments} from $\delta=0$ to every fixed $\delta$.
\end{proof}

\begin{lemma}\label{lemma:signed sums}
    Let $(S_{n,i},V_{n,i}), i\leq n$ be i.i.d. from some distribution $Q_n$ and let the assignments $I_1,\ldots,I_n$ be generated sequentially by CR or by a DCAR procedure applied to $(S_{n,i})_{i\leq n}$. For DCAR, assume $\inf_{n,s}Q_n(S=s)\geq q_0>0$ and Assumption~\ref{assume:bootstrap}(2). Take $\bar{V}_{n,s}=Q_n(V_n\mid S=s)$ and $U_{n,i} = V_{n,i}-\bar{V}_{n,S_{n,i}}$. Then, under CR, $n^{-1/2}\sum_i (2I_i-1)V_{n,i}\rightsquigarrow \mathcal{N}(0,v)$ whenever $Q_nV_n^2\to v$ and $Q_n\{V_n^2\mathbbm{1}(|V_n|>\epsilon\sqrt{n})\}\to0$ for every $\epsilon>0$. Moreover, for a positive constant $C$,
    \begin{equation}\label{eq:cr fourth}
        E_{Q_n}\left|\frac{1}{\sqrt{n}}\sum_{i=1}^{n}(2I_i-1)V_{n,i}\right|^4\leq C\left\{(Q_nV_n^2)^2+n^{-1}Q_nV_n^4\right\}.
    \end{equation}
    For DCAR, $\sum_{i=1}^{n}(2I_i-1)V_{n,i}=\sum_{i=1}^{n}(2I_i-1)U_{n,i}+\sum_{s=1}^{m}D_n(\chi_s)\bar{V}_{n,s}$, the summands $(2I_i-1)U_{n,i}$ form a martingale difference sequence with
    \begin{equation}\label{eq:mds fourth}
        E_{Q_n}\left|\frac{1}{\sqrt{n}}\sum_{i=1}^{n}(2I_i-1)U_{n,i}\right|^4\leq C\left\{(Q_nU_n^2)^2+n^{-1}Q_nU_n^4\right\},
    \end{equation}
    and $$\frac{1}{\sqrt{n}}\sum_{i=1}^{n}(2I_i-1)U_{n,i}\rightsquigarrow \mathcal{N}(0,v)$$
    provided that $v_n =\sum_{s=1}^{m}Q_n(S=s)\var_{Q_n}(V_n\mid S=s)\to v$ and $Q_n\{U_n^2\mathbbm{1}(|U_n|>\epsilon\sqrt{n})\}\to0$ for every $\epsilon>0$. For every $n$, $$E_{Q_n}\left|\frac{1}{\sqrt{n}}\sum_{s=1}^{m}D_n(\chi_s)\bar{V}_{n,s}\right|^4 \leq \frac{m^4C_{q_0}}{n^2}\max_{s\leq m}|\bar{V}_{n,s}|^4\leq \frac{m^4C_{q_0}}{{q_0} n^2}Q_nV_n^4.$$
\end{lemma}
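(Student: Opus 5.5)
The proof has three parts: a CLT and fourth-moment bound under CR, the same for the within-stratum martingale term under DCAR, and a direct fourth-moment bound for the stratum-imbalance term.

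\textbf{CR.} Under CR the signs $\epsilon_i:=2I_i-1$ are i.i.d.\ Rademacher variables independent of the triangular array $(S_{n,i},V_{n,i})$. Hence $\epsilon_iV_{n,i}$ are i.i.d.\ with mean zero and variance $Q_nV_n^2$. The stated Lindeberg condition on $V_n$, together with $Q_nV_n^2\to v$, gives the normal limit by the Lindeberg--Feller CLT for triangular arrays.

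For \eqref{eq:cr fourth}, expand the fourth power of the sum. Only the diagonal terms and the pairs $i\neq j$ survive. This gives
\[
E\Bigl|\sum_i\epsilon_iV_{n,i}\Bigr|^4=nQ_nV_n^4+3n(n-1)(Q_nV_n^2)^2,
\]
and dividing by $n^2$ yields the bound with $C=3$.

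\textbf{DCAR decomposition.} The identity $V_{n,i}=U_{n,i}+\bar V_{n,S_{n,i}}$ gives the decomposition by summing over strata, since $\sum_i\epsilon_i\bar V_{n,S_{n,i}}=\sum_sD_n(\chi_s)\bar V_{n,s}$.

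To show the martingale difference property, take the filtration $\mathcal F_i=\sigma(S_{n,1},\dots,S_{n,n},V_{n,1},\dots,V_{n,i},I_1,\dots,I_i)$. This conditions on all strata in advance, which is legitimate because DCAR assigns using the strata only. Given $\mathcal F_{i-1}$, the vector $(I_i,S_{n,i})$ is then measurable modulo independent coin randomness. Also, $U_{n,i}$ is independent of $(I_j)_{j\le n}$ and of the $V_{n,j}$ with $j\neq i$ given the strata, and has conditional mean zero. So $E[\epsilon_iU_{n,i}\mid\mathcal F_{i-1}]=0$.

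The conditional variance is $E[U_{n,i}^2\mid S_{n,i}]=\var_{Q_n}(V_n\mid S=S_{n,i})$, using $\epsilon_i^2=1$. Its average converges to $v_n$ by the LLN applied to the finitely many stratum frequencies, and $v_n\to v$. The conditional Lindeberg condition reduces to $Q_n\{U_n^2\mathbbm 1(|U_n|>\epsilon\sqrt n)\}\to0$ after taking expectations, since stratum frequencies are bounded. The martingale CLT of \cite[Corollary~3.1]{hall2014martingale} then gives the normal limit.

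For \eqref{eq:mds fourth}, condition on the strata and the assignments. The $\epsilon_iU_{n,i}$ are then independent and mean zero, so the same expansion gives
\[
\sum_iE\bigl[U_{n,i}^4\mid S\bigr]+3\Bigl(\sum_iE\bigl[U_{n,i}^2\mid S\bigr]\Bigr)^2 .
\]
Taking expectations and using $E\bigl(\sum_iE[U_{n,i}^2\mid S]\bigr)^2\le n^2\max_s\var(V_n\mid s)^2$ is awkward. Instead, bound the square by $n\sum_iE[U_{n,i}^2\mid S]^2$ via Cauchy--Schwarz, and then use Jensen: $E\{E[U^2\mid S]^2\}\le Q_nU_n^4$. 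This gives a bound of order $n^2Q_nU_n^4$, which is not the stated form.

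The sharper route is to write $\bigl(\sum_iE[U_{n,i}^2\mid S]\bigr)^2=\sum_{i\neq j}+\sum_i$. The off-diagonal terms have expectation $n(n-1)(Q_nU_n^2)^2$ by independence of the strata, and the diagonal terms are at most $nQ_nU_n^4$. This yields \eqref{eq:mds fourth}. Getting this bound in the stated form is the main obstacle: one must use independence across $i$ of the strata rather than crude bounds.

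\textbf{Stratum-imbalance term.} By the $c_r$ inequality, $|\sum_sD_n(\chi_s)\bar V_{n,s}|^4\le m^3\sum_s|D_n(\chi_s)|^4|\bar V_{n,s}|^4\le m^4\max_s|D_n(\chi_s)|^4\max_s|\bar V_{n,s}|^4$. Taking expectations and applying Assumption~\ref{assume:bootstrap}(2) with $q_0$ gives the first inequality after dividing by $n^2$. For the second, Jensen gives $|\bar V_{n,s}|^4\le Q_n(V_n^4\mid S=s)\le Q_nV_n^4/Q_n(S=s)\le Q_nV_n^4/q_0$.
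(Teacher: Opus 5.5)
Your proof is correct. The CR part and the bound on the stratum-imbalance term match the paper's argument: Lindeberg--Feller plus the independent-sum fourth-moment formula for CR, and the $c_r$/H\"older bound followed by Jensen for the imbalance term. Your treatment of the within-stratum martingale term takes a different route.

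\textbf{What the paper does.} It uses the sequential filtration generated by the past $(S_{n,j},V_{n,j},I_j)_{j<i}$ and the design randomizers. Because $(S_{n,i},V_{n,i})$ is independent of that past, the conditional second moment of $(2I_i-1)U_{n,i}$ is exactly $Q_nU_n^2$. So the normalized predictable variation equals $v_n$ identically, and the conditional Lindeberg term is the deterministic $Q_n\{U_n^2\mathbbm{1}(|U_n|>\epsilon\sqrt n)\}$. The bound \eqref{eq:mds fourth} then follows from Rosenthal's inequality for martingales.

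\textbf{What your route costs.} You put all strata into $\mathcal F_0$, so the predictable variation becomes the random quantity $\sum_s (N_s/n)\var_{Q_n}(V_n\mid S=s)$. This needs an extra law-of-large-numbers step. To close it, state explicitly that $\var_{Q_n}(V_n\mid S=s)\le v_n/q_0$ is bounded. The fact $N_s/n-Q_n(S=s)=O_P(n^{-1/2})$ only controls bounded weights.

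\textbf{What your route buys.} Conditioning on strata and assignments uses the fact that a DCAR sees $V$ only through the strata. The summands are then conditionally independent and centered, so the fourth moment can be computed by hand. If you keep only the off-diagonal terms $i\neq j$, their expectation is $n(n-1)(Q_nU_n^2)^2$ by independence of the strata. This gives exactly $nQ_nU_n^4+3n(n-1)(Q_nU_n^2)^2$, the CR formula with $U$ in place of $V$.

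\textbf{Comparison.} Your approach is more elementary and gives explicit constants without a martingale moment inequality. The paper's is shorter and does not rely on that conditional-independence structure. Drop the abandoned Cauchy--Schwarz attempt from a final write-up.
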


\begin{proof}
    Under CR, the summands $(2I_i-1)V_{n,i}$ are independent, centered conditional on $Q_n$ and have variance $Q_nV_n^2$. The convergence holds by the Lindeberg--Feller Central Limit Theorem for triangular arrays, whose Lindeberg condition is assumed; Equation \eqref{eq:cr fourth} is the fourth-moment formula for sums of independent centered variables, $E|\sum_iX_i|^4=\sum_iEX_i^4+3\sum_{i\neq j}EX_i^2EX_j^2$.

    For DCAR, the decomposition is algebraic. Let $\cF_{n,i-1}$ be the $\sigma$-field generated by $\{(S_{n,j},V_{n,j},I_j):j\leq i-1\}$ and by the design randomizers used through step $i-1$. The pair $(S_{n,i},V_{n,i})$ is independent of $\cF_{n,i-1}$. Iterated conditioning therefore gives
    \begin{align*}
        E[(2I_i-1)U_{n,i}\mid \cF_{n,i-1}] &= \sum_{s}Q_n(S=s)E[2I_i-1\mid \cF_{n,i-1},S_{n,i}=s]E[U_{n,i}\mid S_{n,i}=s]=0,\\
        E[(2I_i-1)^2U_{n,i}^2\mid \cF_{n,i-1}]&=Q_nU_n^2,
    \end{align*}
    since $E[U_{n,i}\mid S_{n,i}=s]=0$ by construction. Hence $\{(2I_i-1)U_{n,i}\}$ is a martingale difference sequence whose normalized predictable variation equals $v_n$ exactly and whose conditional Lindeberg quantity is $Q_n\{U_n^2\mathbbm{1}(|U_n|>\epsilon\sqrt{n})\}\to0$. The martingale central limit theorem \cite[Corollary 3.1]{hall2014martingale} gives the stated normal limit, and Rosenthal's inequality for martingales \cite[Theorem 2.12]{hall2014martingale} gives \eqref{eq:mds fourth}. Finally, H\"older's inequality in the form $|\sum_{s\leq m}D_n(\chi_s)\bar{V}_{n,s}|^4\leq m^3\sum_{s\leq m}|D_n(\chi_s)|^4|\bar{V}_{n,s}|^4$ and Assumption \ref{assume:bootstrap}(2) indicates
     $$E_{Q_n}\left|\frac{1}{\sqrt{n}}\sum_{s=1}^{m}D_n(\chi_s)\bar{V}_{n,s}\right|^4\leq \frac{m^4}{n^2}\max_s|\bar{V}_{n,s}|^4E_{Q_n}\max_s|D_n(\chi_s)|^4 \leq \frac{m^4{C_{q_0}}}{n^2}\max_s|\bar{V}_{n,s}|^4;$$
     Jensen's inequality bounds $\max_s|\bar{V}_{n,s}|^4\leq\max_sQ_n(V_n^4\mid S=s)\leq Q_nV_n^4/{q_0}$.
\end{proof}

Lemma \ref{lemma:signed sums} is important because it proves convergence of summation under bootstrap distribution for DCAR, typically with martingale decomposition. The result for CCAR, however, does not hold using the same technique. To show under CCAR, one generally requires the imbalance Markov Chain's invariant distribution under bootstrap distribution to converge to the true invariant distribution, which is out-of-scope for this discussion. This is also related to the bootstrap conjecture in \cite[Remark 4.4]{ma2024new}.

\begin{lemma}[Bootstrap matrix convergence]\label{lemma:bootstrap matrix}
    Recall the working covariate vector $\vW_j$ on the empirical bootstrap population and recall $e_{n,j}$ to be its empirical product projection residual. For CR or DCAR satisfying Assumption \ref{assume:bootstrap}(2), conditionally on $\cD_n$,
    $$\vQ_n^*-\vQ_n^0=o_{P_*}(1), \quad E_*\left\|\frac{1}{\sqrt{n}}\sum_{i=1}^{n}\vG_i^* e_{n,J_i^*}\right\|^4=O_P(1),\quad \vQ_n^0\overset{P}\to \vOmega^{(a)}\succ 0.$$
\end{lemma}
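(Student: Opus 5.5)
We prove the three claims in the order stated. Conditionally on $\cD_n$, the bootstrap draws $(\vW_i^*)$ are i.i.d. from $\hat P_n$, and the assignments $I_i^*$ come from rule $r$ applied to these draws. We therefore apply Lemma~\ref{lemma:signed sums} with $Q_n=\hat P_n$ and $V_{n,i}$ chosen coordinatewise.

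\textbf{Step 1: the limit $\vQ_n^0\overset P\to\vOmega^{(a)}$.} This is the easiest claim. By the weak law of large numbers, $\bar{\vW}\overset P\to\vu^{(a)}$ and $\hat P_n(\vW\vW^\top)\overset P\to E[\vW^{(a)}(\vW^{(a)})^\top]$. Substituting these limits into the definition of $\vQ_n^0$ gives exactly $\vOmega^{(a)}$ as in Lemma~\ref{lemma:block matrix}. That lemma already shows $\vOmega^{(a)}\succ0$ via the Schur-complement argument. For $a=\aCM$, $\vW$ uses the population centering $\vZt$ (or sample centering; by the remark in Appendix~\ref{app:proof inference} this changes nothing at the $O_P(n^{-1/2})$ level).

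\textbf{Step 2: $\vQ_n^*-\vQ_n^0=o_{P_*}(1)$.} We treat $\vQ_n^*$ block by block. The lower block $n^{-1}\sum_i\vW_i^*(\vW_i^*)^\top$ is an i.i.d.\ average under $\hat P_n$. Its conditional variance is at most $n^{-1}\hat P_n\|\vW\|^4=O_P(n^{-1})$ by Assumption~\ref{assume:bootstrap}(1), so Chebyshev gives closeness to $\hat P_n(\vW\vW^\top)$. The treatment blocks are $n^{-1}\sum_iI_i^*=\tfrac12+\tfrac{1}{2n}\sum_i(2I_i^*-1)$ and $n^{-1}\sum_iI_i^*\vW_i^*=\tfrac{1}{2n}\sum_i\vW_i^*+\tfrac1{2n}\sum_i(2I_i^*-1)\vW_i^*$. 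The first average is again handled by Chebyshev. For the signed sums, we apply Lemma~\ref{lemma:signed sums} with $V_{n,i}=1$ or a coordinate of $\vW_i^*$. The fourth-moment bounds \eqref{eq:cr fourth}, \eqref{eq:mds fourth} and the $D_n(\chi_s)$ bound give $E_*|n^{-1/2}\sum_i(2I_i^*-1)V_{n,i}|^4=O_P(1)$, so the signed sum divided by $n$ is $O_{P_*}(n^{-1/2})$. For DCAR we need $\min_s\hat q_s\ge q_0$, with, e.g., $q_0=\tfrac12\min_sq_s$. This holds with probability tending to one, and on that event the uniform constant $C_{q_0}$ from Assumption~\ref{assume:bootstrap}(2) applies, since the bootstrap law $\hat P_n$ is one of the admissible $Q$.

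\textbf{Step 3: the fourth-moment bound, which is the main obstacle.} Write $n^{-1/2}\sum_i\vG_i^*e_{n,J_i^*}$ in coordinates. The treatment coordinates are $n^{-1/2}\sum_iI_i^*e^*_i=\tfrac12n^{-1/2}\sum_ie_i^*+\tfrac12n^{-1/2}\sum_i(2I_i^*-1)e_i^*$, and the covariate coordinates are $n^{-1/2}\sum_i\vW_i^*e_i^*$. By Lemma~\ref{lemma:empirical product projection}, $\hat P_ne_n=0$ and $\hat P_n(\vW e_n)=\vZero$, so the unsigned sums are centered i.i.d.\ sums under $\hat P_n$. The independent-sum fourth-moment formula then bounds them by $C\{(\hat P_ne_n^2)^2+n^{-1}\hat P_ne_n^4\}$ and its analogue for $\vW e_n$. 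These are $O_P(1)$ by Lemma~\ref{lemma:empirical residual moments}, \eqref{eq:residual second moments}--\eqref{eq:residual fourth moments}. The signed sum is handled by Lemma~\ref{lemma:signed sums} with $V_{n,i}=e_{n,J_i^*}$. The delicate point is DCAR: there the within-stratum means $\bar V_{n,s}=\hat P_n(e_n\mid S=s)$ need not vanish. The lemma's last display bounds their contribution by $m^4C_{q_0}q_0^{-1}n^{-2}\hat P_ne_n^4$, which is $o_P(1)$, and $\hat P_nU_n^2\le\hat P_ne_n^2$. Combining the coordinates with the $c_r$ inequality yields the stated $O_P(1)$ bound. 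Finally, each step is needed only on the event $\{\min_s\hat q_s\ge q_0\}$, whose probability tends to one, so the conclusions hold in $P$-probability.
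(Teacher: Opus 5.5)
Your proposal is correct and follows essentially the same route as the paper's proof. It controls $\vQ_n^*$ block by block via the split $I_i^*=\tfrac12+\tfrac12(2I_i^*-1)$, uses conditional Chebyshev for the unsigned i.i.d.\ averages, and applies Lemma~\ref{lemma:signed sums} to the signed sums on the event $\min_s\hat q_s\ge\tfrac12\min_s q_s$, which is exactly the event the paper uses; the fourth-moment bound then comes from the projection identities of Lemma~\ref{lemma:empirical product projection} and Lemma~\ref{lemma:empirical residual moments}, and $\vQ_n^0\overset{P}\to\vOmega^{(a)}\succ0$ from the WLLN and Lemma~\ref{lemma:block matrix}, with only cosmetic differences (you prove the third claim first and give a more explicit variance bound for the $\vW\vW^\top$ block).
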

\begin{proof}
    The lower-right block of $\vQ_n^*$, similar to Lemma \ref{lemma:block matrix}, is the bootstrap average of $\vW\vW^\top$ and therefore converges conditionally to $\hat{P}_n\vW\vW^\top$. For a treatment-covariate cross block, use
    $$\frac{1}{n}\sum_i I_i^*\vW_{J_i^*} = \frac{1}{2n}\sum_i \vW_{J_i^*}+\frac{1}{2n}\sum_i(2I_i^*-1)\vW_{J_i^*}.$$
    The first term converges to $\hat{P}_n \vW/2$ by conditional Chebyshev's inequality, since its conditional variance is bounded by $n^{-1}\hat{P}_n\|\vW\|^2=O_P(n^{-1})$. Under CR, the second term is $O_{P_*}(n^{-1/2})$ by \eqref{eq:cr fourth} applied conditionally with $Q_n=\hat{P}_n$ and using Lemma \ref{lemma:signed sums} with $V_n$ as a component of $\vW$. Under DCAR, work on the event $\min_s\hat P_n(S=s)\geq\tfrac12\min_sP(S=s)$, whose probability tends to one. On this event, apply Lemma \ref{lemma:signed sums} componentwise, conditionally on $\cD_n$ with $Q_n=\hat P_n$.The martingale term is $O_{P_*}(n^{-1/2})$ by \eqref{eq:mds fourth} and the imbalance term is $o_{P_*}(1)$ by Lemma \ref{lemma:signed sums} and Markov's inequality. Taking $V_n\equiv1$, so that $U_{n,i}=0$, we get $n_1^*/n\overset{P_*}\to 1/2$. This proves the first statement. For the second statement, note that $\vG_i^*e_{n,J_i^*}$ contains sums of $\vW_{J_i^*}e_{n,J_i^*}, e_{n,J_i^*}, I_i^*e_{n,J_i^*}, (1-I_i^*)e_{n,J_i^*}$. By Lemma \ref{lemma:empirical product projection}, the unweighted sums of $e_{n,J_i^*}$ and $\vW_{J_i^*}e_{n,J_i^*}$ are conditionally i.i.d and centered, with normalized fourth moments bounded in probability by the independent-sum fourth-moment formula in \eqref{eq:cr fourth} together with \eqref{eq:residual second moments}--\eqref{eq:residual fourth moments}. Use the decomposition
    \begin{align*}
        &I_i^*e_{n,J_i^*} = \frac{1}{2}e_{n,J_i^*}+\frac{1}{2}(2I_i^*-1)e_{n,J_i^*},\\
        &(1-I_i^*)e_{n,J_i^*} = \frac{1}{2}e_{n,J_i^*}-\frac{1}{2}(2I_i^*-1)e_{n,J_i^*}.
    \end{align*}
    The second statement is proved by Lemma \ref{lemma:signed sums} by substituting $V_{n,i}=e_{n,J_i^*}$. All terms are $O_P(1)$ by \eqref{eq:residual second moments}--\eqref{eq:residual fourth moments}. The third statement follows from the Weak Law of Large Numbers applied to $\bar{\vW}$ and $\hat{P}_n\vW\vW^\top$. The limit $\vOmega^{(a)}$ is positive definite by Lemma \ref{lemma:block matrix}. The proof is finished.
\end{proof}

\begin{lemma}\label{lemma:conditional OLS limit}
    Conditionally on the data, $P_*\{\cA_n^c\}\overset{P}\to 0$ and $X_n^*$ converges weakly in probability to $\mathcal{N}\left(0,4(\tau^{(r,a)})^2\right).$
\end{lemma}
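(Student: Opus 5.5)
The plan is to work conditionally on $\cD_n$ and treat the bootstrap world as a triangular array drawn from $Q_n=\hat P_n$. The argument then mirrors the proof of Theorem \ref{theorem:combined}, with Lemmas \ref{lemma:signed sums}, \ref{lemma:bootstrap matrix} and \ref{lemma:empirical residual moments} replacing the population-level tools.

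\textbf{Step 1: acceptance event.} By Lemma \ref{lemma:bootstrap matrix}, $\vQ_n^*-\vQ_n^0=o_{P_*}(1)$ and $\vQ_n^0\overset{P}\to\vOmega^{(a)}$. Since $\epsilon_0<\lambda_{\min}(\vOmega^{(a)})$ and $\lambda_{\min}$ is continuous, we get $P_*\{\lambda_{\min}(\vQ_n^*)<\epsilon_0\}\overset{P}\to0$, i.e. $P_*(\cA_n^c)\overset P\to0$.

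\textbf{Step 2: linearization.} On $\cA_n$, the bootstrap responses are $Y_{J_i^*}=(\vG_i^*)^\top\vbeta_n^\dag+e_{n,J_i^*}$. This holds because $\vbeta_n^\dag$ assigns equal intercepts to both arms, so $(\vG_i^*)^\top\vbeta_n^\dag$ does not depend on $I_i^*$. Hence
\[
X_n^*=\vL(\vQ_n^*)^{-1}\frac{1}{\sqrt n}\sum_{i=1}^n\vG_i^*e_{n,J_i^*}.
\]
Lemma \ref{lemma:bootstrap matrix} gives $(\vQ_n^*)^{-1}-(\vQ_n^0)^{-1}=o_{P_*}(1)$ on $\cA_n$, and it also makes the score term $O_{P_*}(1)$. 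So $X_n^*=\vL(\vQ_n^0)^{-1}n^{-1/2}\sum_i\vG_i^*e_{n,J_i^*}+o_{P_*}(1)$. Using $\vL(\vQ_n^0)^{-1}=(2,-2,0,\ldots,0)$ from Lemma \ref{lemma:empirical product projection}, this reduces to
\[
X_n^*=\frac{2}{\sqrt n}\sum_{i=1}^n(2I_i^*-1)e_{n,J_i^*}+o_{P_*}(1).
\]
Unlike the original-sample proof, no $R_{n,1}$-type correction involving $D_n$ appears. The reason is that $\hat P_ne_n=0$ and $\hat P_n(\vW e_n)=\vZero$, so the empirical analogue of $\vv^{(a)}$ vanishes.

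\textbf{Step 3: conditional CLT.} Apply Lemma \ref{lemma:signed sums} with $Q_n=\hat P_n$ and $V_{n,i}=e_{n,J_i^*}$. For DCAR we work on the event $\min_s\hat P_n(S=s)\ge q_0:=\tfrac12\min_sq_s$, whose probability tends to one.
\begin{itemize}
\item Under CR the variance is $\hat P_ne_n^2\overset P\to(\tau^{(\tCR,a)})^2$ by \eqref{eq:residual second moments}.
\item Under DCAR, decompose into the martingale part with $U_{n,i}$ and the imbalance part $\sum_sD_n(\chi_s)\bar V_{n,s}$. The martingale part has predictable variance $v_n=\sum_s\hat q_s\widehat{\var}_n(e_n\mid S=s)\overset P\to(\tau^{(\tDC,a)})^2$, again by \eqref{eq:residual second moments}. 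The imbalance part has fourth moment $O(n^{-2})\hat P_ne_n^4=o_P(1)$ by Assumption \ref{assume:bootstrap}(2) and \eqref{eq:residual fourth moments}, so it is $o_{P_*}(1)$.
\end{itemize}
The Lindeberg conditions follow from $\hat P_n\{e_n^2\mathbbm 1(|e_n|>\epsilon\sqrt n)\}\le\hat P_ne_n^4/(\epsilon^2n)=o_P(1)$, and the same bound holds for $U_n$, whose fourth moment is controlled by that of $e_n$. A subsequence argument (a.s. along subsequences) turns these in-probability statements into conditional weak convergence in probability. Both designs then give the limit $\mathcal N(0,4(\tau^{(r,a)})^2)$, and setting $X_n^*=0$ off $\cA_n$ does not affect it by Step 1.

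\textbf{Main obstacle.} Step 2 is where I expect the difficulty. Controlling $(\vQ_n^*)^{-1}$ uniformly requires the acceptance event, and checking that the local-alternative shift does not reintroduce a bias term needs care. Lemma \ref{lemma:empirical residual moments} already absorbs the shift into $e_n(\delta)$, so I expect this to resolve cleanly.
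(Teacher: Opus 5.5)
Your proposal is correct and follows essentially the same route as the paper. You linearize around $\vQ_n^0$ using $\vL(\vQ_n^0)^{-1}=(2,-2,0,\ldots,0)$ and the centering of $e_n$. You then apply Lemma \ref{lemma:signed sums} conditionally with $Q_n=\hat P_n$ and $V_n=e_n$, take the variance and Lindeberg inputs from Lemma \ref{lemma:empirical residual moments}, and finish with a subsequence argument. The only cosmetic difference concerns the event $\cA_n^c$: the paper bounds $M_n^*\mathbbm{1}\{\cA_n^c\}$ in $L^2$ via conditional H\"older and $E_*|M_n^*|^4=O_P(1)$, whereas your observation that $P_*(\cA_n^c)\overset{P}\to 0$ already suffices for the weak limit.
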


\begin{proof}
    For every resampled observation, $Y_{J_i^*} = (\vG_i^*)^\top\vbeta_n^\dag+e_{n,J_i^*}$ by definitions of $\vG_i^*,\vbeta_n^\dag, e_{n,j}$. By the normal equations for LS, valid on $\cA_n$, and Lemma \ref{lemma:bootstrap matrix},
    \begin{align*}
        \sqrt{n}\vL(\hat{\vbeta}_n^*-\vbeta_n^\dag) &= \vL (\vQ_n^0)^{-1}\frac{1}{\sqrt{n}}\sum_{i=1}^{n}\vG_i^* e_{n,J_i^*} + \vL\{(\vQ_n^*)^{-1}-(\vQ_n^0)^{-1}\}\frac{1}{\sqrt{n}}\sum_{i=1}^{n}\vG_i^* e_{n,J_i^*}\\
        &= \frac{2}{\sqrt{n}}\sum_{i=1}^{n}(2I_i^*-1)e_{n,J_i^*} + o_{P_*}(1)\quad \text{on }\cA_n,
    \end{align*}
    where the leading term uses $\vL(\vQ_n^0)^{-1}=(2,-2,0,\ldots,0)$ from Lemma \ref{lemma:empirical product projection}, and the remainder is $o_{P_*}(1)$ because, on $\cA_n$, $\|(\vQ_n^*)^{-1}-(\vQ_n^0)^{-1}\|_{\text{op}}\leq \epsilon_0^{-1}\|\vQ_n^*-\vQ_n^0\|_{\text{op}}\|(\vQ_n^0)^{-1}\|_{\text{op}}=o_{P_*}(1)$ while $n^{-1/2}\sum_{i}\vG_i^*e_{n,J_i^*}=O_{P_*}(1)$, both by Lemma \ref{lemma:bootstrap matrix}.
    Let $M_n^* = 2n^{-1/2}\sum_i(2I_i^*-1)e_{n,J_i^*}$. It follows by Lemma \ref{lemma:signed sums} applied conditionally with $Q_n=\hat{P}_n$, $V_n=e_n$, together with Lemma \ref{lemma:empirical residual moments}, that $E_*|M_n^*|^4$$= O_P(1)$. By Weyl's inequality, $|\lambda_{\min}(\vQ_n^*)-\lambda_{\min}(\vOmega^{(a)})|\leq \|\vQ_n^*-\vOmega^{(a)}\|_{\text{op}}$. Further, since $\epsilon_0<\lambda_{\min}(\vOmega^{(a)})$, we have $\cA_n^c\subseteq \{\|\vQ_n^*-\vOmega^{(a)}\|_{\text{op}}>C_0\}$ with $C_0:=\lambda_{\min}(\vOmega^{(a)})-\epsilon_0>0$. The triangle inequality gives
    \begin{align*}
        P_*\{\cA_n^c\}&\leq P_*\{\|\vQ_n^*-\vOmega^{(a)}\|_{\text{op}}>C_0\}\\
        &\leq P_*\{\|\vQ_n^*-\vQ_n^0\|_{\text{op}}>C_0/2\} + \mathbbm{1}\{\|\vQ_n^0-\vOmega^{(a)}\|_{\text{op}}>C_0/2\}\\
        &\overset{P}\to 0,
    \end{align*}
    where the last convergence holds because $\vQ_n^*-\vQ_n^0=o_{P_*}(1)$ and $\vQ_n^0-\vOmega^{(a)}=o_P(1)$. Hence, the conditional H\"older inequality gives
    $$E_*\left[|{M_n^*}|^2\mathbbm{1}\{\cA_n^c\}\right]\leq \{E_*|{M_n^*}|^4\}^{1/2}\left[P_*\{\cA_n^c\}\right]^{1/2}=o_P(1).$$
    Combined with the expansion on $\cA_n$, this proves $X_n^*=M_n^*+o_{P_*}(1)$ unconditionally, so it suffices to derive the conditional weak limit of $M_n^*$.
    Under CR and given $\cD_n$, $(2I_i^*-1)$ are i.i.d. independent of the resampled indices, and the relevant conditional variance is $\hat{P}_ne_n^2$. Under DCAR, we decompose
    $$\frac{1}{\sqrt{n}}\sum_i (2I_i^*-1)e_{n,J_i^*} = \frac{1}{\sqrt{n}}\sum_i (2I_i^*-1)\{e_{n,J_i^*}-E_{\hat{P}_n}[e_{n,J_i^*}\mid S_{J_i^*}]\}+ \frac{1}{\sqrt{n}}\sum_{s\in[m]} D_n^*(\chi_s)E_{\hat{P}_n}[e_{n}\mid S=s].$$
    The second term is asymptotically negligible by conditional Markov's inequality and Lemma \ref{lemma:signed sums}. The first term is the martingale part of Lemma \ref{lemma:signed sums}, whose conditional variance is $v_n(\hat{P}_n)=\sum_{s}\hat{q}_s\widehat{\var}_n(e_n\mid S=s)$.

    It remains to let the law $Q_n=\hat{P}_n$ be random. By Lemma \ref{lemma:empirical residual moments},
    $$\hat{P}_ne_n^2\overset{P}\to(\tau^{(\tCR,a)})^2,\qquad v_n(\hat{P}_n)\overset{P}\to(\tau^{(\tDC,a)})^2,\qquad \frac{\hat{P}_n|e_n|^4}{\epsilon^2 n}\overset{P}\to 0\ \ \text{for every }\epsilon>0.$$
    Note that $\tfrac{\hat{P}_n|e_n|^4}{\epsilon^2 n}$ dominates the Lindeberg quantities of Lemma \ref{lemma:signed sums}. Given an arbitrary subsequence, extract a further subsequence along which these convergences hold almost surely. Along it, for almost every realization of the data, the deterministic hypotheses of Lemma \ref{lemma:signed sums} are satisfied by $Q_n=\hat{P}_n$ with $V_n=e_n$, so the conditional law of $n^{-1/2}\sum_i(2I_i^*-1)e_{n,J_i^*}$ converges weakly to $\mathcal{N}(0,(\tau^{(r,a)})^2)$ almost surely. The subsequence criterion for convergence in probability yields
    $${M_n^*}\rightsquigarrow_* \mathcal{N}(0,4(\tau^{(r,a)})^2)\quad\text{in probability},$$
    and therefore $X_n^*=M_n^*+o_{P_*}(1)$ has the same conditional weak limit. The lemma is proved.
\end{proof}

\begin{lemma}[Conditional moments]\label{lemma:conditional moments}
    Recall $X_n^*=\sqrt{n}\vL(\hat{\vbeta}_n^*-\vbeta_n^\dag)\mathbbm{1}\{\cA_n\}$. The following holds:
    \begin{align}
        &X_n^*\mid \mathcal{A}_n \rightsquigarrow_* \mathcal{N}\left(0,4(\tau^{(r,a)})^2\right),\label{eq:bootstrap conditional law}\\
        & E_*(|X_n^*|^4\mid \mathcal{A}_n) = O_P(1),\label{eq:bootstrap conditional 4th moment}\\
        &\var_*(X_n^*\mid \mathcal{A}_n)\overset{P}\to 4(\tau^{(r,a)})^2. \label{eq:bootstrap conditional variance}
    \end{align}
\end{lemma}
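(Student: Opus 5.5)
The plan is to derive all three claims from Lemma \ref{lemma:conditional OLS limit}, using a uniform integrability argument to pass from conditional weak convergence to the variance. The key observation is that $P_*(\cA_n)\overset{P}\to1$, so conditioning on $\cA_n$ barely changes the bootstrap law. Concretely, for any bounded continuous $\phi$,
\[
E_*\{\phi(X_n^*)\mid\cA_n\}-E_*\{\phi(X_n^*)\}=O\bigl(\|\phi\|_\infty P_*(\cA_n^c)/P_*(\cA_n)\bigr)=o_P(1).
\]
Together with the unconditional statement $X_n^*\rightsquigarrow_*\mathcal N(0,4(\tau^{(r,a)})^2)$ from Lemma \ref{lemma:conditional OLS limit}, this gives \eqref{eq:bootstrap conditional law} on the event $\{P_*(\cA_n)\ge 1/2\}$, whose probability tends to one.

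For \eqref{eq:bootstrap conditional 4th moment}, I would use the decomposition $X_n^*=M_n^*+R_n^*$ on $\cA_n$ from the proof of Lemma \ref{lemma:conditional OLS limit}. Here
\[
M_n^*=2n^{-1/2}\sum_i(2I_i^*-1)e_{n,J_i^*}
\]
and $E_*|M_n^*|^4=O_P(1)$ by Lemma \ref{lemma:signed sums} together with Lemma \ref{lemma:empirical residual moments}. The remainder is
\[
R_n^*=\vL\{(\vQ_n^*)^{-1}-(\vQ_n^0)^{-1}\}\,n^{-1/2}\sum_i\vG_i^*e_{n,J_i^*}.
\]
On $\cA_n$ we have $\|(\vQ_n^*)^{-1}\|_{\text{op}}\le\epsilon_0^{-1}$, and $\|(\vQ_n^0)^{-1}\|_{\text{op}}=O_P(1)$ by Lemma \ref{lemma:bootstrap matrix}. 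Hence $|R_n^*|\le C\|n^{-1/2}\sum_i\vG_i^*e_{n,J_i^*}\|$ on $\cA_n$, and its conditional fourth moment is $O_P(1)$, again by Lemma \ref{lemma:bootstrap matrix}. By the $c_r$ inequality,
\[
E_*(|X_n^*|^4\mid\cA_n)\le 8\{E_*|M_n^*|^4+E_*|R_n^*|^4\}/P_*(\cA_n)=O_P(1).
\]
The step I expect to be the main obstacle is this uniform bound on the inverse Gram matrix. It is exactly why replicates are discarded via the $\epsilon_0$ threshold, and I would lean on that truncation rather than try to control $\lambda_{\min}(\vQ_n^*)$ directly.

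Finally, \eqref{eq:bootstrap conditional variance} follows from \eqref{eq:bootstrap conditional law} and \eqref{eq:bootstrap conditional 4th moment}. I would use a subsequence argument: along a further subsequence, almost surely the conditional law converges and the conditional fourth moments stay bounded. Bounded fourth moments give uniform integrability of $(X_n^*)^2$ and $X_n^*$ under the conditional law. The conditional mean therefore converges to $0$ and the conditional second moment to $4(\tau^{(r,a)})^2$, so the conditional variance converges to $4(\tau^{(r,a)})^2$. The subsequence criterion then turns this into convergence in probability along the full sequence.
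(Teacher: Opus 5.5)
Your first two steps follow the paper. The paper also compares $E_*\{f(X_n^*)\mid\cA_n\}$ with $E_*f(X_n^*)$ through $P_*(\cA_n^c)/P_*(\cA_n)$. It also relies on the acceptance threshold $\|(\vQ_n^*)^{-1}\|_{\text{op}}\le\epsilon_0^{-1}$ on $\cA_n$.

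The split $X_n^*=M_n^*+R_n^*$ is unnecessary, however. The bound you derive for $R_n^*$ applies directly to $X_n^*=\vL(\vQ_n^*)^{-1}n^{-1/2}\sum_i\vG_i^*e_{n,J_i^*}$, giving $|X_n^*|\le\|\vL\|\epsilon_0^{-1}\|n^{-1/2}\sum_i\vG_i^*e_{n,J_i^*}\|$ on $\cA_n$. Lemma \ref{lemma:bootstrap matrix} then finishes the fourth-moment bound alone, and the appeal to Lemma \ref{lemma:signed sums} for $M_n^*$ is redundant. Also, $R_n^*$ exists only on $\cA_n$, so that term should read $E_*\{|R_n^*|^4\mathbbm{1}\{\cA_n\}\}$.

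The variance step contains an invalid inference. From $E_*(|X_n^*|^4\mid\cA_n)=O_P(1)$ you cannot extract a further subsequence along which these conditional fourth moments stay bounded almost surely. Boundedness in probability does not give almost sure boundedness along any subsequence. For example, if $Y_n$ are i.i.d.\ nonnegative with unbounded support, then $Y_n=O_P(1)$. Yet for every subsequence $\sup_k Y_{n_k}=\infty$ almost surely, by the second Borel--Cantelli lemma. The subsequence criterion is fine for the bounded-Lipschitz distance between the conditional law and the normal limit, since that tends to zero in probability. It does not work for a quantity that is only $O_P(1)$. Your route could be rescued only by showing the fourth-moment majorant converges in probability to a finite constant, which the available lemmas do not state.

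The fix is to run the uniform-integrability argument directly in probability, as the paper does. For fixed $M$, $0\le E_*\{(X_n^*)^2\mid\cA_n\}-E_*\{(X_n^*)^2\wedge M^2\mid\cA_n\}\le M^{-2}E_*(|X_n^*|^4\mid\cA_n)$. The truncated moment converges in probability to $E(\zeta^2\wedge M^2)$, with $\zeta\sim\mathcal{N}(0,4(\tau^{(r,a)})^2)$, by \eqref{eq:bootstrap conditional law}. Tightness of the fourth moments lets you choose $M$ uniformly in $n$ so that the truncation error is small with high probability; then let $M\to\infty$. The same argument with the clipped identity $(-M)\vee(x\wedge M)$ handles the conditional mean.
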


\begin{proof}
    By Lemma \ref{lemma:conditional OLS limit}, $P_*(\cA_n^c)\overset{P}\to 0$. For any bounded $f$,
    $$\left|E_*\{f(X_n^*)\mid \cA_n\}-E_*f(X_n^*)\right|\leq \frac{2\|f\|_\infty P_*(\cA_n^c)}{P_*(\cA_n)} = o_P(1). $$
    Applying this bound to bounded Lipschitz $f$, the conditional law of $X_n^*$ given $\cA_n$ has the same weak limit in probability as the unconditional law obtained in Lemma \ref{lemma:conditional OLS limit} by Prokhorov's Lemma. On the accepted observations, the exact normal-equation and $\|(\vQ_n^*)^{-1}\|_{\text{op}}\leq \epsilon_0^{-1}$ on $\cA_n$ give $$|X_n^*|\leq \|\vL\|\epsilon_0^{-1}\left\|\frac{1}{\sqrt{n}}\sum_{i=1}^{n}\vG_i^* e_{n,J_i^*}\right\|,$$
    therefore,
    $$E_*(|X_n^*|^4\mid \cA_n) \leq \frac{\|\vL\|^4 \epsilon_0^{-4}}{P_*(\cA_n)}E_*\left\|\frac{1}{\sqrt{n}}\sum_{i=1}^{n}\vG_i^* e_{n,J_i^*}\right\|^4 = O_P(1),$$
    using Lemma \ref{lemma:bootstrap matrix} and $1/P_*(\cA_n)=O_P(1)$. The second statement is proved. To show variance convergence, it suffices to show uniform integrability of the second moment. For $M>0$,
    $$E_*\{(X_n^*)^2\mathbbm{1}\{|X_n^*|>M\}\mid \cA_n\}\leq M^{-2}E_*\{|X_n^*|^4\mid \cA_n\}.$$
    For fixed $M$, the maps $x\to x^2\wedge M^2$ and $x\to(-M)\vee(x\wedge M)$ are bounded and continuous, so \eqref{eq:bootstrap conditional law} gives
    $$E_*\{(X_n^*)^2\wedge M^2\mid \cA_n\}\overset{P}\to E(\zeta^2\wedge M^2),\qquad E_*\{(-M)\vee(X_n^*\wedge M)\mid \cA_n\}\overset{P}\to 0,$$
    where $\zeta\sim\mathcal{N}(0,4(\tau^{(r,a)})^2)$. Letting first $n\to\infty$ and then $M\to\infty$ yields
    $$E_*(X_n^*\mid \cA_n)\overset{P}\to 0, \quad E_*\{(X_n^*)^2\mid \cA_n\}\overset{P}\to 4(\tau^{(r,a)})^2,$$
    so that $\var_*(X_n^*\mid \cA_n)=E_*\{(X_n^*)^2\mid \cA_n\}-\{E_*(X_n^*\mid \cA_n)\}^2\overset{P}\to 4(\tau^{(r,a)})^2$, and Equations \eqref{eq:bootstrap conditional law}-\eqref{eq:bootstrap conditional variance} hold.
\end{proof}

Lemma \ref{lemma:bootstrap variance consistency} shows that the sample variance for bootstrap observations converge in probability to the true variance under the bootstrap resampled distribution. 

\begin{lemma}[Bootstrap consistency]\label{lemma:bootstrap variance consistency}
    Suppose for some fixed $k>2$, $P_*(\mathcal{A}_n) \overset{P}\to 1$, $E_*(|X_n^*|^k\mid \mathcal{A}_n) = O_P(1)$. Then, as $\min\{n,B\}\to\infty$, $$n\{\hat{v}_{*,n,B}-\var_*(\hat{\theta}_1^*-\hat{\theta}_0^*\mid \mathcal{A}_n)\}\overset{P}\to 0.$$
\end{lemma}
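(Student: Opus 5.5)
Conditionally on the data, the $B$ bootstrap replicates retained in $\cA_n$ are i.i.d.\ draws from the conditional law of $X_n^*$ given $\cA_n$. By Lemma \ref{lemma:empirical product projection}, $X_n^*=\sqrt n(\hat\theta_1^*-\hat\theta_0^*)$ on $\cA_n$. Hence $n\hat v_{*,n,B}$ is the ordinary sample variance of $B'$ i.i.d.\ copies $X^*_{n,1},\ldots,X^*_{n,B'}$, where $B'\le B$ is the number of accepted replicates. (If the scheme instead draws until $B$ replicates are accepted, then $B'=B$ and the first step below is unnecessary.) The plan is a conditional weak law of large numbers for the first two sample moments, made uniform over the changing bootstrap law through the $k$-th moment bound, and then converted to an unconditional statement.

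\textbf{Step 1: enough replicates are accepted.} Conditionally on $\cD_n$, $B'\sim\mathrm{Bin}(B,P_*(\cA_n))$. Since $P_*(\cA_n)\overset{P}\to1$, conditional Chebyshev gives $B'/B\to 1$ in probability, so $B'\to\infty$ along any sequence with $B\to\infty$. On $\{B'\ge B/2\}$ we may treat $B'$ as a deterministic count.

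\textbf{Step 2: conditional second-moment bounds.} Write $\mu_n=E_*(X_n^*\mid\cA_n)$, $m_{2,n}=E_*\{(X_n^*)^2\mid\cA_n\}$, and let $\bar X^*$ and $\overline{X^{*2}}$ be the sample moments. By conditional Chebyshev,
\[
P_*\bigl(|\overline{X^{*2}}-m_{2,n}|>\epsilon\mid B'\bigr)\le \frac{E_*\{(X_n^*)^4\mid\cA_n\}}{B'\epsilon^2}.
\]
When $k\ge4$ (the case $k=4$ is the one used in the theorem), the numerator is $O_P(1)$ and the bound is $O_P(1/B)\to0$. For general $2<k<4$, I would instead truncate at level $M$. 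The truncated part is handled by Chebyshev with variance at most $M^{4-k}E_*|X_n^*|^k/B$. The tail part satisfies $E_*\{(X_n^*)^2\mathbbm 1(|X_n^*|>M)\mid\cA_n\}\le M^{2-k}E_*(|X_n^*|^k\mid\cA_n)$, which is $O_P(M^{2-k})$. Letting $B\to\infty$ first and then $M\to\infty$ gives the claim. The same argument, using the second-moment bound, gives $\bar X^*-\mu_n=o_{P_*}(1)$. The sample variance is then $\overline{X^{*2}}-(\bar X^*)^2$ up to the factor $B'/(B'-1)\to1$, and $m_{2,n}$ and $\mu_n$ are $O_P(1)$ by the moment bound. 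Therefore
\[
n\hat v_{*,n,B}-\var_*(X_n^*\mid\cA_n)=o_{P_*}(1).
\]

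\textbf{Step 3: from conditional to unconditional.} This is the main obstacle, because the conditional bounds hold only on data events of high probability. Fix $\epsilon,\eta>0$ and choose $K$ such that the event $E_n=\{E_*(|X_n^*|^k\mid\cA_n)\le K,\ P_*(\cA_n)\ge1/2\}$ has probability at least $1-\eta$ for large $n$. On $E_n$, the bounds of Step 2 are deterministic functions of $K$, $M$ and $B$ that tend to $0$. Then
\[
P(|\cdot|>\epsilon)\le E\bigl[P_*(|\cdot|>\epsilon)\mathbbm 1_{E_n}\bigr]+\eta,
\]
and the first term vanishes as $\min\{n,B\}\to\infty$. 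Since $\eta$ is arbitrary, the result follows.
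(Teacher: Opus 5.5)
Your proof is correct and follows essentially the same route as the paper: a binomial count of accepted replicates with $K_B/B\to1$, conditional laws of large numbers for the first two sample moments of the accepted $X_n^*$'s, and the identity $n\var_*(\hat\theta_1^*-\hat\theta_0^*\mid\mathcal{A}_n)=m_{2,n}-\mu_n^2$. The differences are technical only. The paper handles the squared replicates for all $k>2$ at once with the conditional von Bahr--Esseen inequality ($r_0=\min(k/2,2)$), where you use Chebyshev for $k\ge4$ and truncation for $2<k<4$. You also spell out the conditional-to-unconditional transfer, which the paper leaves implicit.
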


\begin{proof}
    Conditionally on $\cD_n$, the accepted observations $K_B$ follow Binomial$(B,P_*(\cA_n))$. This implies $K_B/B\overset{P}\to 1$. Write $\mu_n = E_*(X_n^*\mid \cA_n), m_{2,n} = E_*[(X^*_n)^2\mid \cA_n]$. Then, $m_{2,n} = O_P(1)$ by Assumption. Given $K_B = K$, conditional Chebyshev's inequality yields
    $$E_*\left[\left|\frac{1}{K}\sum_{{j}\leq K}X^*_{n,{j}}-\mu_n\right|^2\mid K_B=K\right]\leq \frac{1}{K}m_{2,n}.$$
    For second moment, briefly denote $\Xi_{n,j} = (X^*_{n,j})^2$ and let $r_0=\min(k/2,2)$, we have
    $$
        E_*\left[\left|\frac{1}{K}\sum_{{j}\leq K}({\Xi_{n,j}}-m_{2,n})\right|^{r_0}\mid K_{B}=K\right]\leq C_{r_0}K^{1-r_0}E_*\{|\Xi_{n,1}-m_{2,n}|^{r_0}\mid \cA_n\} = O_P(K^{1-r_0}),
    $$
    where the inequality is due to the Conditional von Bahr-Esseen inequality. The above two displays show that, as $B\to\infty$ (i.e., $K_B\to\infty$), the first and second empirical moments of the accepted $X_n^*$'s converge to $\mu_n, m_{2,n}$, respectively. Since $X_n^*=\sqrt{n}(\hat{\theta}_1^*-\hat{\theta}_0^*)$ on $\cA_n$, the quantity $n\hat{v}_{*,n,B}$ is the sample variance of the $K_B$ accepted $X_n^*$'s and $n\var_*(\hat{\theta}_1^*-\hat{\theta}_0^*\mid \cA_n)=m_{2,n}-\mu_n^2$. The proof is finished.
\end{proof}

\subsection{Verification of Assumption \ref{assume:bootstrap}(2)}\label{app:verification}
Let $\bm{\Psi}_s$ be the specific value of the discrete feature map in stratum $s$, put $\bm{\Psi}=(\bm{\Psi}_1,\ldots,\bm{\Psi}_m)$ which will be a vector of $0'$s with only one $1$ entry. Define
$$\vM={\bm{\Psi}^\top\bm{\Psi}}.$$
For the vector of within-stratum imbalances
$\vD_n=(D_n(\chi_1),\ldots,D_n(\chi_m))^\top$, the feature imbalance is
$\Lambda_n={\bm{\Psi}} \vD_n$ and its quadratic criterion is
$\|\Lambda_n\|^2=\vD_n^\top \vM \vD_n$.

Lemma \ref{lem:full-rank-dcar} justifies that Assumption \ref{assume:bootstrap}(2) holds for a large class of designs that balances within-stratum imbalances. The proof strategy follows from a Foster-Lyapunov drift argument for Markov chains.

\begin{lemma}
\label{lem:full-rank-dcar}
Suppose $\vM$ is positive definite and the allocation probability
$\rho\in(1/2,1]$ is fixed.  On the arrival of a unit in stratum $s$, assign
the sign $\varsigma=(2I-1)\in\{-1,1\}$ that produces the smaller value of
$\|\Lambda_n+\varsigma\bm{\Psi}_s\|^2$ with probability $\rho$.  Then, for every ${q_0}>0$, there are constants ${t_{q_0}}>0$ and
${C_{q_0}}<\infty$ such that
\begin{equation}
 \sup_{Q:\min_s Q(S=s)\ge{q_0}}\ \sup_{n\ge0}
 E_Q\exp\!\left\{{t_{q_0}}
 (\vD_n^\top \vM \vD_n)^{1/2}\right\}\le {C_{q_0}}.                         \label{eq:exponential-imbalance}
\end{equation}
Consequently, Assumption~\ref{assume:bootstrap}(2) holds.
\end{lemma}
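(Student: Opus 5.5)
We prove \eqref{eq:exponential-imbalance} with a Foster--Lyapunov drift argument for the Markov chain $\vD_n$ on $\mathbb{Z}^m$, using the Lyapunov function $V(\vd)=(\vd^\top\vM\vd)^{1/2}=\|\bm{\Psi}\vd\|$. Under any $Q$ with $\min_sQ(S=s)\ge q_0$, the chain moves as $\vD_{n+1}=\vD_n+\varsigma\be_{S_{n+1}}$. Hence $\Lambda_{n+1}=\Lambda_n+\varsigma\bm{\Psi}_s$. Because $\|\bm{\Psi}_s\|\le c_\Psi:=\max_s\|\bm{\Psi}_s\|$, we have $|V(\vD_{n+1})-V(\vD_n)|\le c_\Psi$ almost surely. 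So the increments are bounded, and exponential moments of $V$ will follow from a strictly negative mean drift outside a compact set, by the standard result of Hajek (1982).

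\textbf{Step 1 (quadratic drift).} Compute
\[
\|\Lambda_n+\varsigma\bm{\Psi}_s\|^2-\|\Lambda_n\|^2=2\varsigma\langle\Lambda_n,\bm{\Psi}_s\rangle+\|\bm{\Psi}_s\|^2 .
\]
The rule picks $\varsigma=-\operatorname{sgn}\langle\Lambda_n,\bm{\Psi}_s\rangle$ with probability $\rho$. Therefore
\[
E[V^2(\vD_{n+1})-V^2(\vD_n)\mid\vD_n]\le -2(2\rho-1)\sum_s Q(S=s)|\langle\Lambda_n,\bm{\Psi}_s\rangle|+c_\Psi^2 .
\]

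\textbf{Step 2 (key lower bound).} This is the main point, and it is where positive definiteness of $\vM$ enters. Note $\langle\Lambda_n,\bm{\Psi}_s\rangle=(\vM\vD_n)_s$. Thus
\[
\sum_sQ(S=s)|(\vM\vD_n)_s|\ge q_0\|\vM\vD_n\|_1\ge q_0\|\vM\vD_n\|_2\ge q_0\lambda_{\min}(\vM)^{1/2}\,V(\vD_n),
\]
where the last inequality uses $\|\vM\vd\|^2\ge\lambda_{\min}(\vM)\,\vd^\top\vM\vd$. The resulting bound $-\kappa V+c_\Psi^2$ holds with $\kappa=2(2\rho-1)q_0\lambda_{\min}^{1/2}$, which depends only on $q_0$. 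This is what gives uniformity over $Q$. Next, pass to $V$ itself. Since $V$ is a norm, write $V(\vD_{n+1})-V(\vD_n)=\{V^2(\vD_{n+1})-V^2(\vD_n)\}/\{V(\vD_{n+1})+V(\vD_n)\}$. With denominator $2V(\vD_n)+O(c_\Psi)$, this yields
\[
E[V(\vD_{n+1})-V(\vD_n)\mid\vD_n]\le-\kappa/4 \quad\text{whenever } V(\vD_n)\ge K_{q_0},
\]
for a constant $K_{q_0}$ depending only on $q_0,\rho,\vM$. This conversion is the only delicate bookkeeping step. I would handle it directly by the bound $\sqrt{a^2+x}-a\le x/(2a)$, which controls $V(\vD_{n+1})-V(\vD_n)$ by $\{V^2(\vD_{n+1})-V^2(\vD_n)\}/(2V(\vD_n))$ pointwise. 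Taking conditional expectations then gives the drift bound, since the quadratic drift is at most $-\kappa V+c_\Psi^2$.

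\textbf{Step 3 (exponential moments).} Take $t_{q_0}$ small. Bounded increments together with drift $-\kappa/4$ give, via the expansion $e^{tx}\le1+tx+t^2x^2e^{tc_\Psi}$, the estimate
\[
E[e^{tV(\vD_{n+1})}\mid\vD_n]\le(1-\gamma)e^{tV(\vD_n)}+b\,\mathbbm{1}\{V(\vD_n)<K_{q_0}\}.
\]
Here $\gamma>0$ and $b<\infty$ depend only on $(q_0,\rho,\vM,c_\Psi)$. Iterating from $\vD_0=\vZero$ gives $\sup_nE_Qe^{tV(\vD_n)}\le 1+b/\gamma=:C_{q_0}$, uniformly over $Q$.

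\textbf{Step 4 (the consequence).} By positive definiteness of $\vM$, $\max_s|D_n(\chi_s)|\le\|\vD_n\|\le\lambda_{\min}(\vM)^{-1/2}V(\vD_n)$. Since $x^4\le C e^{tx}$ for $x\ge0$, the fourth-moment bound required in Assumption \ref{assume:bootstrap}(2) follows.
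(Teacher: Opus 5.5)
Your proposal is correct and follows essentially the same route as the paper. Both use the same Lyapunov function $(\vd^\top\vM\vd)^{1/2}$, the same chain $\sum_s q_s|(\vM\vd)_s|\ge q_0\|\vM\vd\|_1\ge q_0\|\vM\vd\|_2\ge q_0\lambda_{\min}(\vM)^{1/2}(\vd^\top\vM\vd)^{1/2}$ for uniformity in $Q$, the inequality $\sqrt{u+v}-\sqrt{u}\le v/(2\sqrt{u})$ to pass to the radial drift $-(2\rho-1)q_0\lambda_{\min}(\vM)^{1/2}+\max_s\vM_{ss}/\{2(\vd^\top\vM\vd)^{1/2}\}$, and the bounded-increment exponential expansion iterated from $\vD_0=\vZero$ (you average over strata before converting, the paper conditions on $S_{n+1}=s$ first, which is purely cosmetic).
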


\begin{proof}
Write $R(\vd)=(\vd^\top \vM\vd)^{1/2}$,
$x_s=(\vM\vd)_s$, $\gamma=2\rho-1>0$,
$\lambda_0=\lambda_{\min}(\vM)$, and
$K_0=\max_s\vM_{ss}$. Denote $\ve_s$ the $s$th unit vector, since
$$
 {\|\bm{\Psi}(\vd+\ve_s)\|^2-\|\bm{\Psi}(\vd-\ve_s)\|^2=4x_s},
$$
the allocation rule gives
$$
 E({\varsigma}\mid \vD_n=\vd,S_{n+1}=s)=-\gamma\operatorname{sgn}(x_s).
$$  
Note that $\sqrt{u+v}-\sqrt{n}\leq v/(2\sqrt{u})$. For $R(\vd)>0$, 
$R(\vd+\varsigma\ve_s)^2=R(\vd)^2+2\varsigma x_s+\vM_{ss}$ by definition. Take $u=R(\vd)^2, v=2\varsigma x_s+\vM_{ss}$, substitute into the inequality and take expectation on both sides, we get 
\begin{equation*}
 E\{R(\vD_{n+1})-R(\vd)\mid \vd,S_{n+1}=s\}
 \le\frac{-2\gamma|x_s|+\vM_{ss}}{2R(\vd)}.
\end{equation*}
Because $q_s=Q(S=s)\ge{q_0}$, then
\begin{align*}
 \sum_sq_s|x_s|
 &\ge{q_0}\|\vM\vd\|_1
 \ge{q_0}\|\vM\vd\|_2
 \ge{q_0}\sqrt{\lambda_0}R(\vd).
\end{align*}
It follows that
\begin{equation}
 E\{R(\vD_{n+1})-R(\vd)\mid \vD_n=\vd\}
 \le-\gamma{q_0}\sqrt{\lambda_0}+\frac{K_0}{2R(\vd)}.                  \label{eq:radial-drift}
\end{equation}
Thus, the conditional drift is at most
$-\gamma{q_0}\sqrt{\lambda_0}/2$ outside a fixed ball.

The reverse triangle inequality gives
\begin{equation}\label{eq:delta Rd}
 |R(\vd+{\varsigma}\ve_s)-R(\vd)|\le\sqrt{\vM_{ss}}\le \sqrt{K_0}.
\end{equation}
Choose ${t_{q_0}}>0$ sufficiently small. Note that $\exp(u)\leq 1+u+u^2\exp(|u|)/2$. Take $u={t_{q_0}} [R(\vD_{n+1})-R(\vd)]$, together with \eqref{eq:radial-drift} yield,
$$\exp\left\{{t_{q_0}} [R(\vD_{n+1})-R(\vd)]\right\}\leq 1+{t_{q_0}} [R(\vD_{n+1})-R(\vd)]+\frac{t^2K_0}{2}\exp({t_{q_0}} \sqrt{K_0}).$$
Outside the ball $\{\vd:R(\vd)\leq K_0/(\gamma{q_0}\sqrt{\lambda_0})\}$, the following holds by choosing ${t_{q_0}}$ sufficiently small:
$$
 E\left[
  \exp\{{t_{q_0}}(R(\vD_{n+1})-R(\vd))\}\mid \vD_n=\vd
 \right]\le 1-{t_{q_0}}\frac{\gamma{q_0}\sqrt{\lambda_0}}{4}.
$$
Take $V(\vd)=\exp\{{t_{q_0}} R(\vd)\}$, observe that outside the ball,
$$
 E\{V(\vD_{n+1})\mid \vD_n=\vd\}
 \le{\lambda_{q_0}} V(\vd).
$$
Inside the ball $\{\vd:R(\vd)\leq K_0/(\gamma{q_0}\sqrt{\lambda_0})\}$, because of Equation \eqref{eq:delta Rd},
$$V(\vD_{n+1})\leq \exp\left\{{t_{q_0}}\left(\frac{K_0}{\gamma{q_0}\sqrt{\lambda_0}}+\sqrt{K_0}\right)\right\}.$$
Combining the two scenarios,
$$E\{V(\vD_{n+1})\mid \vD_n=\vd\}\leq {\lambda_{q_0}} V(\vd)+\exp\left\{{t_{q_0}}\left(\frac{K_0}{\gamma{q_0}\sqrt{\lambda_0}}+\sqrt{K_0}\right)\right\}\mathbbm{1}\left\{R(\vd)\leq \frac{K_0}{\gamma{q_0}\sqrt{\lambda_0}}\right\}.$$
Take expectation on both sides, then iterate from $\vD_0=\vZero$ with geometric sums, we prove the first claim \eqref{eq:exponential-imbalance}.

Finally, because $R(\vD_n)^2 \geq \lambda_0\|\vD_n\|^2_2$, so 
$$\max_s|\vD_n(\chi_s)|\leq \|\vD_n\|_2\leq \frac{R(\vD_n)}{\sqrt{\lambda_0}}.$$
Take $4$th power on both sides, Assumption \ref{assume:bootstrap}(2) holds. The proof is finished. 
\end{proof}

\begin{prop}[Verification for commonly used designs]
\label{prop:named-designs}
Suppose the design strata are the analysis strata.  Assumption
\ref{assume:bootstrap} holds for:
(1) STR-PB with a fixed finite block size;
(2) STR-BC with $\rho \in(1/2,1]$;
(3) HH with $\omega_{s}>0$ for each $s\in[m]$.
\end{prop}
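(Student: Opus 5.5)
Assumption~\ref{assume:bootstrap} has two parts, and I will treat them separately. Part (1) is a moment condition on $(\vW^{(a)},e^{(a)})$ and does not depend on the design. Part (2) is a uniform fourth-moment bound on the within-stratum imbalances. I will get part (2) for each named design by reducing it to Lemma~\ref{lem:full-rank-dcar}, or to a direct deterministic bound.

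For Part (1) I would read the statement as ``whenever the data-generating moments are finite.'' Under the null, $e^{(a)}=g(\vZ)-(\vW^{(a)})^\top\bbetaW^{(a)}+\varepsilon$, where $\vW^{(a)}$ is either bounded (stratum indicators) or a vector of covariates. The fourth moments therefore follow from $E g^4<\infty$, $E\varepsilon^4<\infty$ and $E\|\vZ\|^8<\infty$ by the $c_r$ and Cauchy--Schwarz inequalities. This part does not interact with the design, so I would state it as the standing moment hypothesis and spend the effort on Part (2).

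\textbf{Case (1), STR-PB.} Within each stratum the permuted block forces $|D_n(\chi_s)|\le b/2$ deterministically, where $b$ is the fixed block size. This holds for every $n$ and every law $Q$, so $E_Q\max_s|D_n(\chi_s)|^4\le (b/2)^4$ uniformly, and Assumption~\ref{assume:bootstrap}(2) holds with $C_{q_0}=(b/2)^4$.

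\textbf{Case (2), STR-BC.} I would take $\bm{\Psi}_s=\ve_s$, so that $\vM=\mathbf{I}_m$. The biased-coin rule inside stratum $s$ then coincides with the rule in Lemma~\ref{lem:full-rank-dcar}: the sign that decreases $\|\vD_n+\varsigma\ve_s\|^2$ is exactly the sign that decreases $|D_n(\chi_s)|$. The lemma gives the exponential bound \eqref{eq:exponential-imbalance}, which implies the fourth-moment bound. When $\rho=1$ the chain is deterministic given the strata, but the drift argument still applies.

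\textbf{Case (3), HH with all $\omega_s>0$.} Here $\Lambda_n=\bm{\Psi}\vD_n$, where $\bm{\Psi}_s$ stacks $\sqrt{\omega_0}$, the marginal indicators $\sqrt{\omega_{j\ell}}\mathbbm{1}\{\bin_j=\ell\}$ evaluated at stratum $s$, and $\sqrt{\omega_s}\ve_s$. The key computation is
\[
\vM=\bm{\Psi}^\top\bm{\Psi}=\omega_0\bone\bone^\top+\sum_{j,\ell}\omega_{j\ell}\,\bm{a}_{j\ell}\bm{a}_{j\ell}^\top+\operatorname{diag}(\omega_1,\ldots,\omega_m),
\]
where $\bm{a}_{j\ell}$ denotes the indicator vector of the strata whose $j$th margin equals $\ell$. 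This is a sum of positive semidefinite matrices plus a positive diagonal matrix, so $\lambda_{\min}(\vM)\ge\min_s\omega_s>0$. Hu and Hu's assignment minimizes $\|\Lambda_n+\varsigma\bm{\Psi}_s\|^2$ with probability $\rho$, which is exactly the hypothesis of Lemma~\ref{lem:full-rank-dcar}. That lemma then yields Assumption~\ref{assume:bootstrap}(2).

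The main obstacle is the tie-breaking convention. When $x_s=0$ the design assigns with probability $1/2$, and I must check that the drift bound \eqref{eq:radial-drift} still holds. It does, because ties only weaken the bound through the term $\sum_s q_s|x_s|$, which is zero for tied strata anyway; so the lemma's inequality is unaffected. I also need to confirm that the design strata coincide with the analysis strata, which is assumed in the statement, so that $\chi_s$ in Assumption~\ref{assume:bootstrap}(2) refers to the same partition.
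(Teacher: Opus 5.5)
Your proposal is correct and follows the paper's proof essentially step for step: a deterministic within-block bound for STR-PB, then Lemma~\ref{lem:full-rank-dcar} with $\vM=\bI_m$ for STR-BC and with $\vM\succeq\min_s\omega_s\bI_m$ for HH. Your remark on Part~(1), which the paper's proof does not address, is a fair reading of the statement, but the sufficient condition you give is too weak for the ANCOVA and CM residuals: $E\|\vZ\|^4g^4(\vZ)<\infty$ does not follow from $E g^4<\infty$ and $E\|\vZ\|^8<\infty$ via Cauchy--Schwarz, so you would also need something like $E g^8<\infty$.
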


\begin{proof}
For STR-PB, each completed block has zero imbalance, and the imbalance in the single incomplete block is bounded by the maximum block size.  The required bound therefore holds by design.

For STR-BC, take $\bm{\Psi}_s=\ve_s$ the $s$th basis vector, so that $\vM=\bI_m$, the $m$-by-$m$ identity matrix. Lemma \ref{lem:full-rank-dcar} applies.

For HH method, let $u_{j\ell}\in\{0,1\}^m$ indicate which joint strata have
level $\ell$ of margin $j$. Recall Example \ref{example: HuHu}. On the state $\vD_n$, its quadratic
criterion has matrix
\begin{equation*}
 \vM=\omega_0{\bone}{\bone}^\top
   +\sum_{j,\ell}{\omega_{j\ell}}u_{j\ell}u_{j\ell}^\top
   +\operatorname{diag}(\omega_{1},\ldots,
                         \omega_{m}).                  
\end{equation*}
If $\omega_{s}>0$ for each stratum $s$, then
$M\succeq\min_s\omega_s I_m$ is positive definite. Lemma~\ref{lem:full-rank-dcar}
applies and thus the proof is finished. 
\end{proof}

\section{Additional numerical studies}\label{app:additional sim}

\subsection{Numerical studies for variance inflation}
Recall the discussion on CCAR variance inflation in Example \ref{ex:gaussian-inflation} in Section \ref{sec:instability}. Although it is clear that CCAR with scalar, non-constant $\psi$ may result in inflated variability, the impact on corresponding hypothesis test is not directly assessed. Here, we present both theoretically and numerically the inflation exists.

\begin{examplenonumber}[Example \ref{ex:gaussian-inflation} continued]
    Consider the data generating model $Z_i\overset{d}=\xi$, $$Y_i=g(Z_i)+\varepsilon_i, \quad g(z) = C \sqrt{\frac{\pi}{2}}\operatorname{sgn}(z)-z$$ 
    where constant $C\in\{1,2\}, \varepsilon_i\overset{d}=\xi$ with no treatment effect. The treatment rule uses scalar CCAR with $\psi(Z)=Z$. The theoretical, unadjusted test size is therefore
    $$2\Phi\left[-\Phi^{-1}(0.975)\sqrt{\frac{1+C^2(\frac{\pi}{2}-1)}{1+C^2\frac{\pi}{2}(\frac{\pi}{2}-1)}}\right].$$
    We also conduct Monte-Carlo simulations with different $n$, repeated $10^4$ times each. The results are listed in Table \ref{tab:gaussian-inflation}. There are significant size inflations both theoretically and numerically.
    \begin{table}[t!]
        \centering
        \begin{tabular}{l|c|ccc}
        \toprule
             & Theoretical size & $n=500$ & $n=2000$& $n=8000$\\\midrule
         $C=1$  & 7.45 & 7.44 & 7.58 & 7.34\\
         $C=2$ & 9.73 & 10.22 & 9.90 & 9.64\\\bottomrule
        \end{tabular}
        \caption{The theoretical and simulated type I error of $T_n$ under Example \ref{ex:gaussian-inflation}. The units are $10^{-2}$.}
        \label{tab:gaussian-inflation}
    \end{table}
\end{examplenonumber}

The variance inflation phenomenon under non-scalar $\psi$ does not admit closed form evaluation. Hence, in this section, we complement the discussion by providing another empirical example showing CCAR need not out perform CR. The consistency of the Monte-Carlo variance estimator is justified in Appendix \ref{app:monte carlo}. 

\begin{example}\label{example:instability}
    Consider $({Z_{i,1},Z_{i,2}})_{i=1}^{n}$ following $\text{Gumbel}(0,2),\text{Laplace}(-1,1)$ which are independent. Suppose the two variables are discretized at $-1$ and $1$, resulting in $9$ strata. After the assignments, we compute the within-stratum imbalances $D_n^{(r)}(\stind{s}) = \sum_{i=1}^{n}(2I_i-1)\mathbbm{1}\{S_i=s\}$ for $s\in[9]$. Since the explicit form of ${\pi_{\Lambda}(\zeta_{\stind{s}})}$ is hard to obtain for CCAR, we propose to consistently estimate its limiting variance by Monte-Carlo simulation, see Appendix \ref{app:monte carlo}. In Table \ref{tab:instability}, we report the theoretical standard deviations for CR and Monte-Carlo standard deviations for CCAR of the first $6$ strata. It is shown that at least the variability of the $5$th stratum under CCAR may \textit{significantly} exceed that under CR in this scenario.

\begin{table}[t!]
    \centering
    \begin{tabular}{l|cccccc}
    \toprule
    Imbalance &$D_n^{(r)}(\stind{1})$ &$D_n^{(r)}(\stind{2})$&$D_n^{(r)}(\stind{3})$&$D_n^{(r)}(\stind{4})$&$D_n^{(r)}(\stind{5})$&$D_n^{(r)}(\stind{6})$ \\
    \midrule
    {$\sigma^{(\tCR)}_{\stind{s}}$}& 0.31&0.28&0.11&0.42&0.39&0.15 \\
    $\hat{\sigma}^{(\text{COV})}_{\stind{s}}(\text{se})$ &0.31 (0.005)&0.35 (0.005)& 0.11 (0.002)&0.48 (0.008)&\textbf{0.63} (0.009)&0.17 (0.002)\\
    \bottomrule
    \end{tabular}
    \caption{(Estimated) Standard deviations of the within-strata imbalance (and its Monte-Carlo standard error). For CCAR, we consider COV method with weights $(1,2,1)$. The experiment is repeated $2000$ times with sample size $n=2000$.}
    \label{tab:instability}
    \end{table}
\end{example}

\subsection{Numerical studies for empirical power curves}
Simulation results on the adjusted empirical power of the Wald-test under different cases (Case 1,2) studied in Section \ref{sec:numerical} are provided in Figures \ref{fig:case 12}, \ref{fig:case 36}. 

\begin{figure}[t!]
    \centering
    
    \begin{subfigure}[b]{\textwidth}
        \centering
        \includegraphics[width=\linewidth]{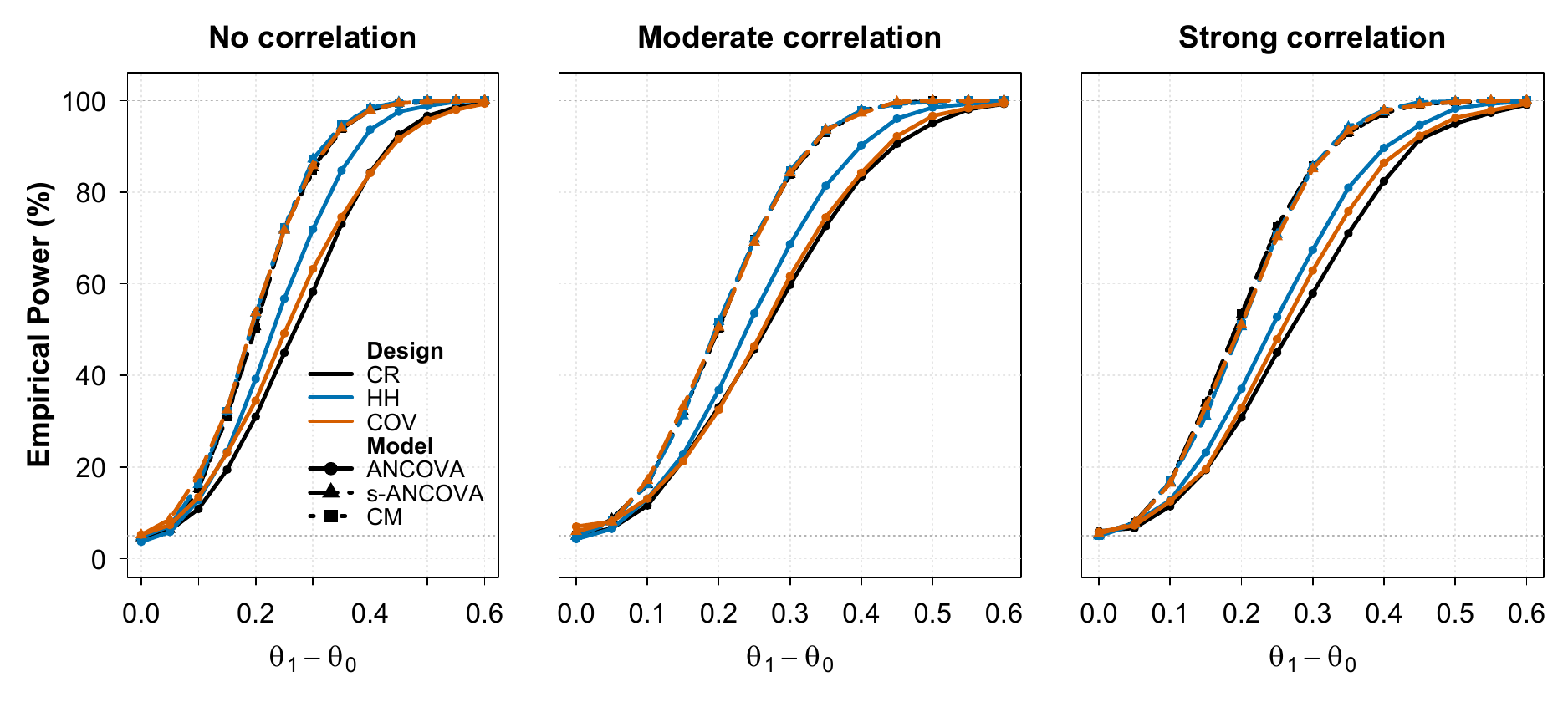}
        
    \end{subfigure}
    
    \bigskip
    
    \begin{subfigure}[b]{\textwidth}
        \centering
        \includegraphics[width=\linewidth]{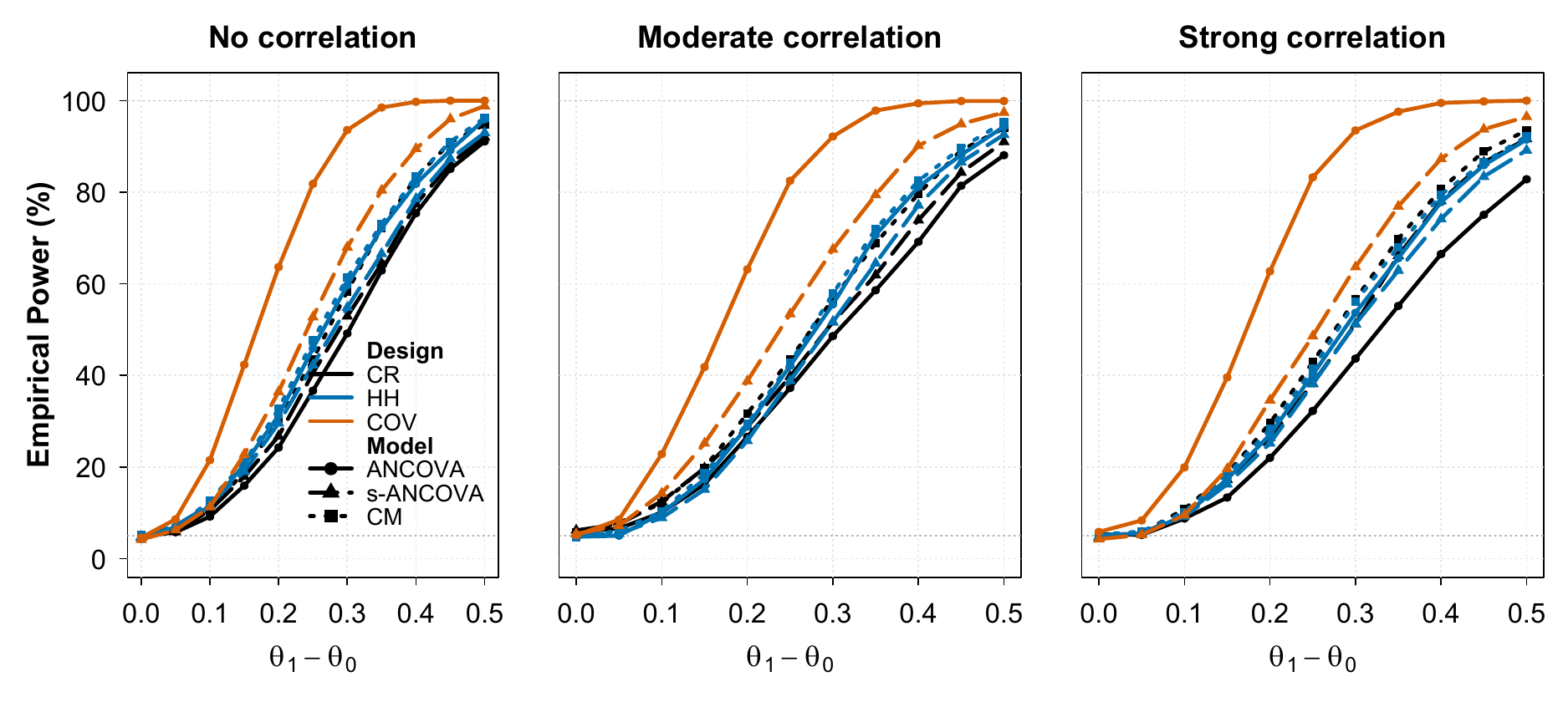}
        
    \end{subfigure}

    \caption{Bootstrap adjusted empirical power curves for \textbf{Case} 1 (Top) and \textbf{Case} 2 (Bottom). }
    \label{fig:case 12}
\end{figure}

\begin{figure}[t!]
    \centering
    
    \begin{subfigure}[b]{\textwidth}
        \centering
        \includegraphics[width=\linewidth]{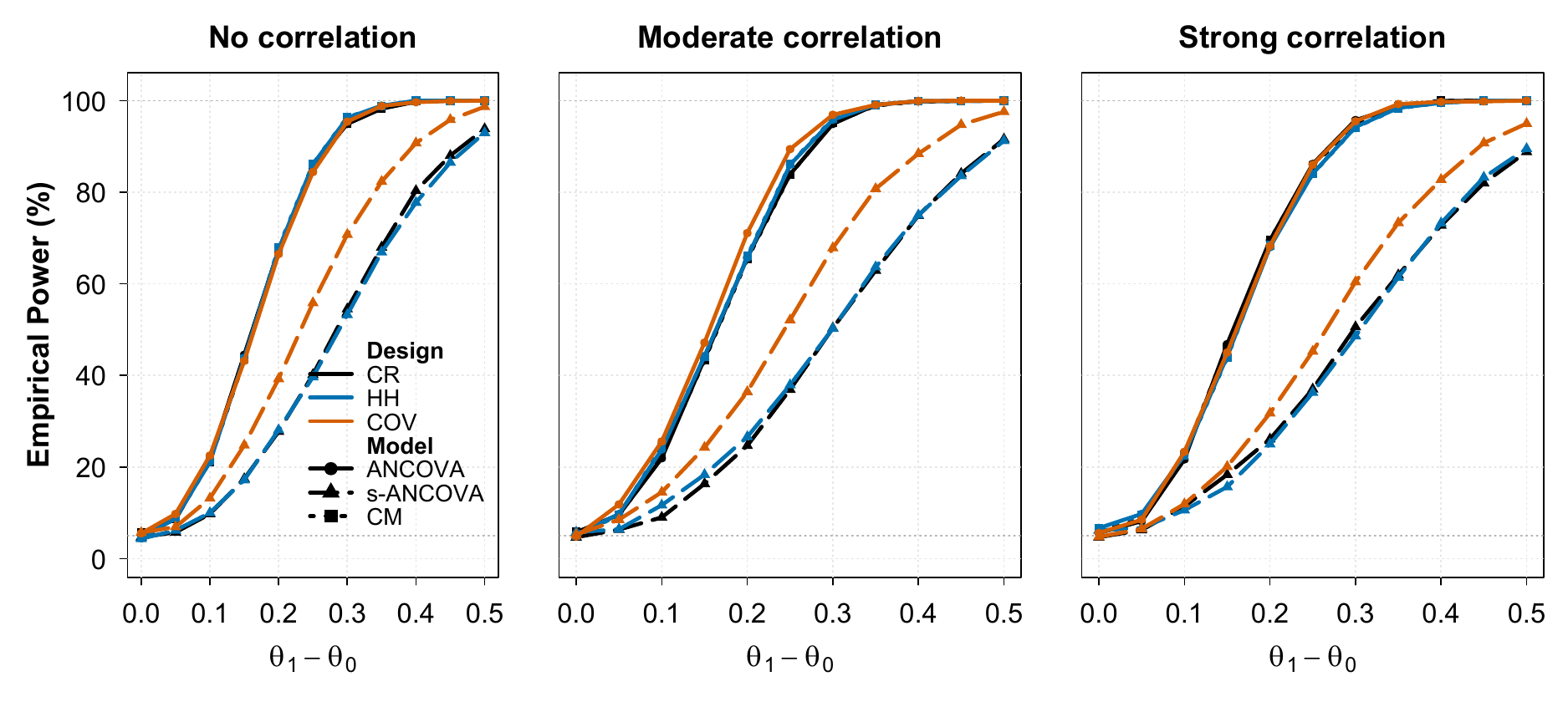}
        
    \end{subfigure}
    
    \bigskip
    
    \begin{subfigure}[b]{\textwidth}
        \centering
        \includegraphics[width=\linewidth]{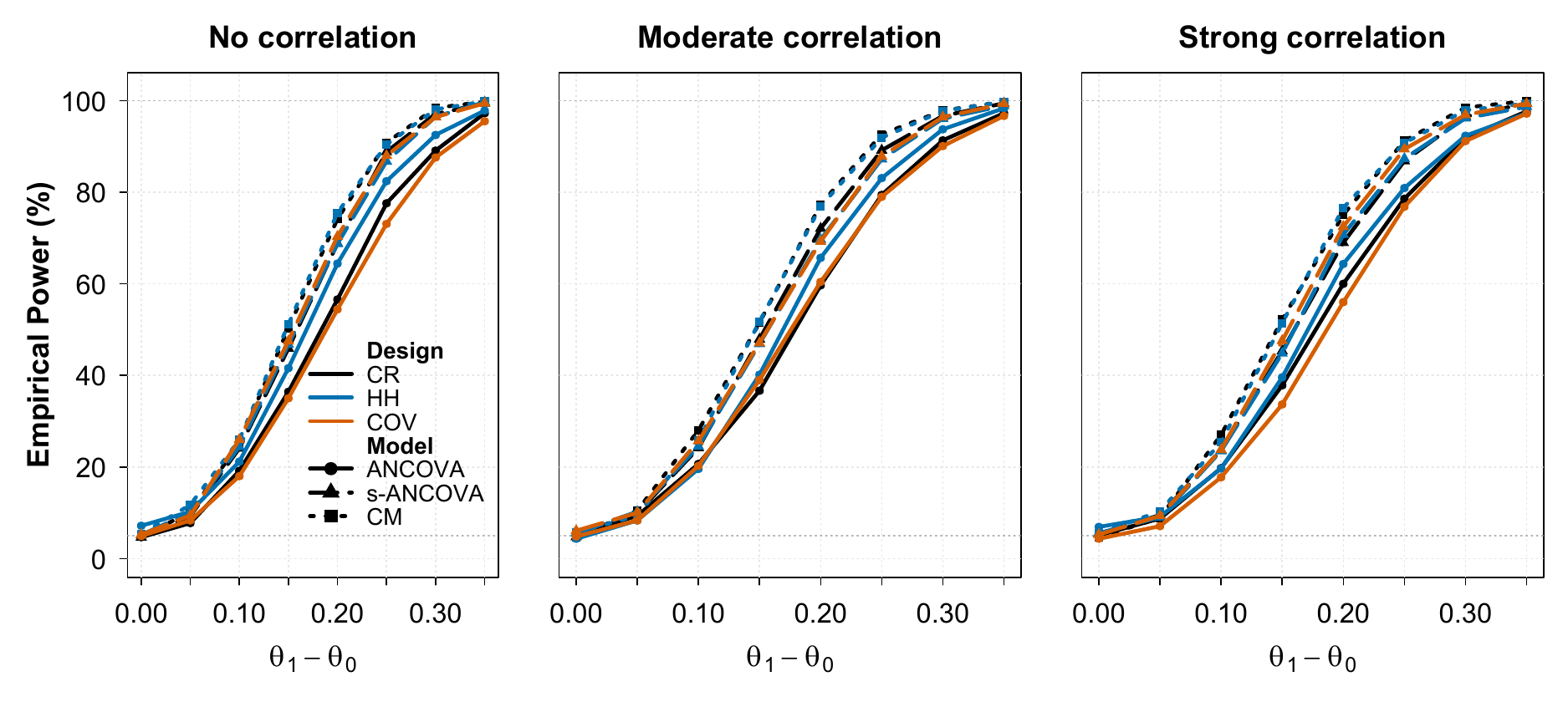}
        
    \end{subfigure}

    \caption{Bootstrap adjusted empirical power curves for \textbf{Case} 5 (Top) and \textbf{Case} 6 (Bottom). }
    \label{fig:case 36}
\end{figure}

In addition to Cases $3,4$ presented in the main paper, Figure \ref{fig:case 45 fewer} considers the scenario where only a small to moderate sample of subjects is recruited and the analysis may not support the fitting of all variables (equivalently, $d_Z<p, d_S<m-1$). The major conclusions are similar. Importantly, when not all variables are taken into account, DCAR achieves higher statistical power than CR consistently. This observation again corroborates Theorem \ref{theorem:compare tests}.

\begin{figure}[t!]
    \centering
    
    \begin{subfigure}[b]{\textwidth}
        \centering
        \includegraphics[width=\linewidth]{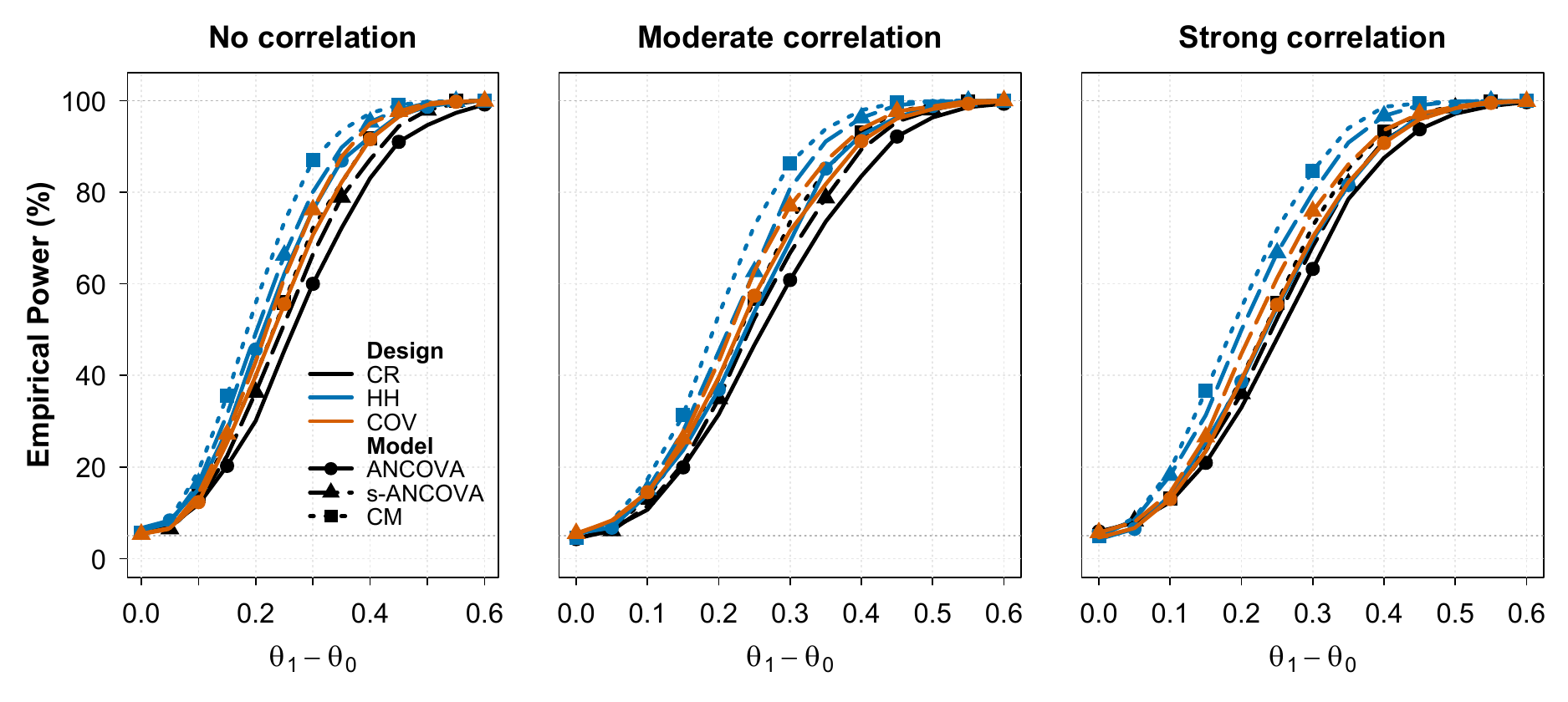}
    \end{subfigure}
    
    \bigskip
    
    \begin{subfigure}[b]{\textwidth}
        \centering
        \includegraphics[width=\linewidth]{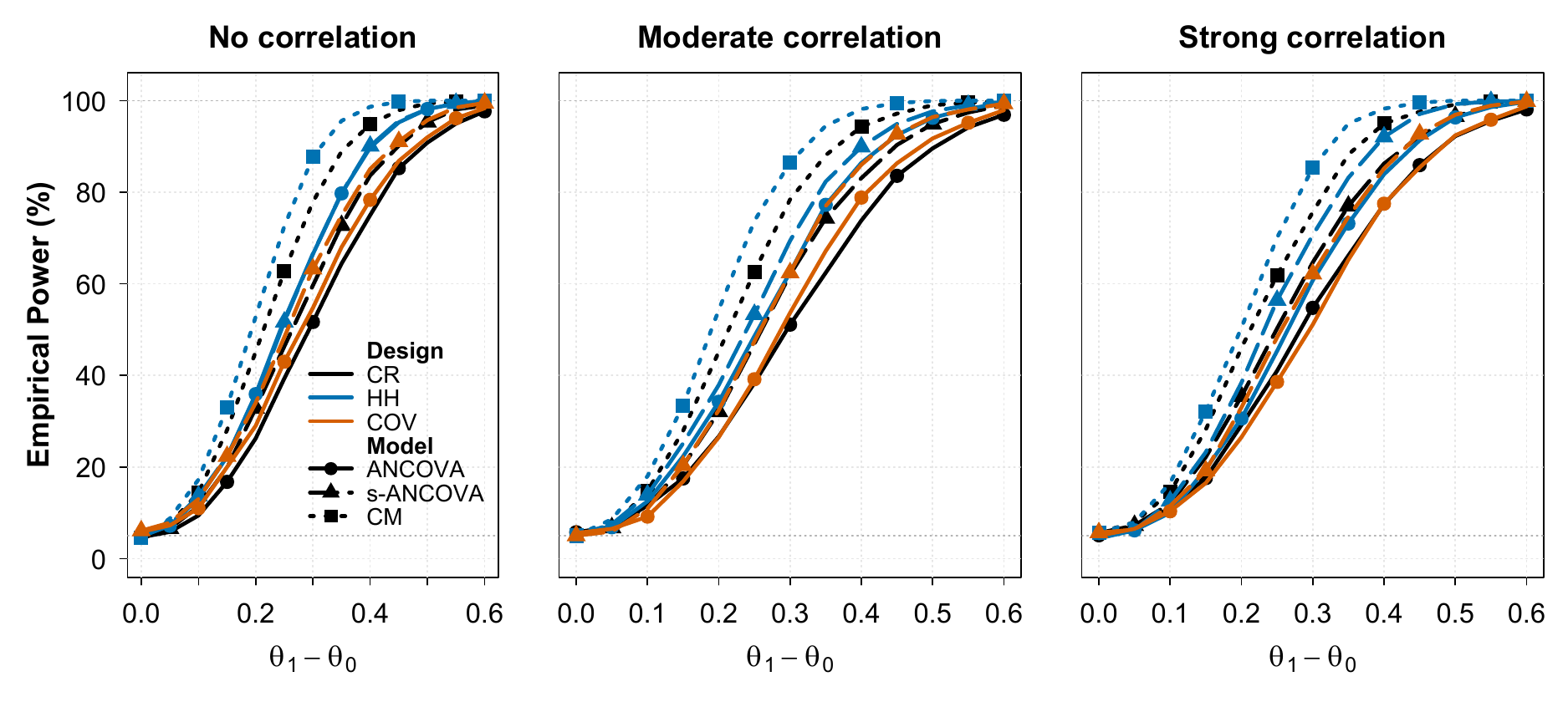}
        
    \end{subfigure}

    \caption{Bootstrap adjusted empirical power curves for \textbf{Case} 3 (Top) and \textbf{Case} 4 (Bottom) with \textbf{fewer covariates} fitted ($d_Z=1, d_S=3$ for Case 3; $d_Z=d_S=1$ for Case 4). }
    \label{fig:case 45 fewer}
\end{figure}

Next, we study two additional non-additive, non-linear Case $5,6$ to supplement discussions in Section \ref{sec:numerical}. 

\noindent\textbf{Case 5.} (Monotone smooth model, $n=600$) Consider three continuous covariates where the first is normal $\mathcal{N}(0,1)$, the second is $\text{Exp}(1)-1$ and the last is Uniform$(-3/2,3/2)$. The variables are discretized at their marginal medians $0, \log(2)-1,0$, respectively, so that $S:\mathbb{R}^3\to\{1,\ldots,8\}$. The response-covariate relationship $g(\vz) = 1.4z_1 + 0.8\log\{1+\exp[2(z_2+1)]\}+0.9z_3$. For ANCOVA model, $d_Z=3$; for SFE model, $d_S=7$; for CM model, $(d_Z,d_S) = (3,7)$.

\noindent \textbf{Case 6.} (Non-linear non-additive, $n=800$) Consider two continuous covariates $({Z_{i,1}},{Z_{i,2}})$ following a two dimensional multivariate normal distribution $\mathcal{N}_2(\vZero,\vB_{\rho_Z})$. Each ${Z_{i,j}}$ is discretized at $1/3,2/3$ quantiles so that $S:\mathbb{R}^2\to\{1,\ldots,9\}$. Recall that the marginal bin function for $Z_{i,2}$ is $\bin_2:\bR\to \{1,2,3\}$. The response-covariate relationship is $g(\vz) = 1.5[\bin_2(z_2)-2] + 1.5\text{tanh}(z_1)+1.5[\bin_2(z_2)-2]\text{tanh}(z_1z_2)$.

The power table for Case $6$ is provided in Table \ref{fig:case 6}. In this non-additive non-linear case, $a=\aS$ generally performs better than $a=\aZ$, while $a=\aCM$ outperforms both for fixed $d_Z=p,d_S=m-1$. In terms of designs, DCAR and CR are the better choices, achieving the same statistical power because all strata are fitted. The rest conclusions are similar to Case $4$.

\begin{figure}
    \centering
    \includegraphics[width=\linewidth]{Pictures/studyC_caseN6_power_curves.pdf}
    \caption{Bootstrap adjusted empirical power curves for \textbf{Case} 6.}
    \label{fig:case 6}
\end{figure}

\subsection{Empirical gains from the CM model}\label{app:empirical gain from CM}
In order to quantify the gain from using CM model, we simulate $n=1000$ using the assignments via the STR-BC method with $\rho=0.8$ and consider two continuous covariates $({Z_{i,1}},{Z_{i,2}})$ following a two dimensional multivariate normal distribution $\mathcal{N}_2(\vZero,\vB_{\rho_Z})$. Both variables are discretized marginally at $-0.6,0.6$, so that $S:\mathbb{R}^2\to\{1,\ldots,9\}$. The response-covariate relationship $g(\vz) = 1.5 [\bin_2(z_2)-1] + c\{\text{tanh}(z_1)+[\bin_2(z_2)-1]\text{tanh}(z_1z_2)\}$. By construction $\vGamma,\bgamt$ are functions of $c$. Figure \ref{fig:gain from CM} exhibits the variance reduction comparing $a=\aCM$ with $a=\aZ$ or $a=\aS$. The gain is amplified by increasing correlation in CM vs ANCOVA because $\vGamma-\bgamt$ is sensitive to $\rho_Z$; the gain for CM vs SFE is $0$ at $\bgamt=\vZero$ and increases gradually as the magnitude of $\bgamt$ increases via $c$.

\begin{figure}
    \centering
    \includegraphics[width=0.9\linewidth]{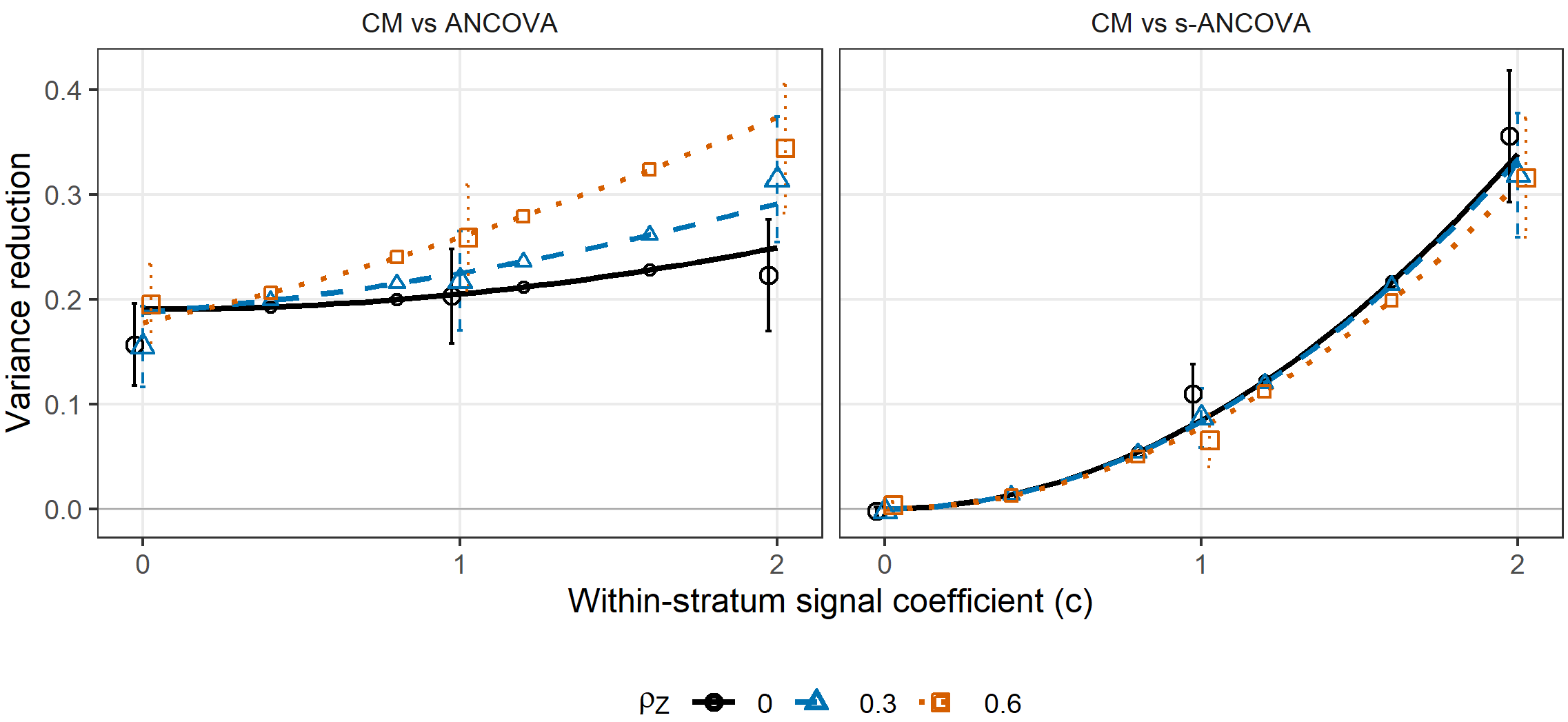}
    \caption{The markers represent variance reduction $(\sigma^{({\text{STR-BC}},a)})^2-(\sigma^{({\text{STR-BC}},\aCM)})^2$ as a function of correlation $\rho_Z$ and $\vGamma(c), \bgamt(c)$, approximated by $10^7$ observations and the vertical bars are $\pm 1.96$ Monte-Carlo Standard Error. The connected lines with markers are theoretical values per Proposition \ref{prop:better model}.}
    \label{fig:gain from CM}
\end{figure}

\subsection{Consistent Monte-Carlo variance estimator}\label{app:monte carlo}
We recall the setup of Example \ref{example:instability}. In this section, we state the Monte-Carlo variance estimator and prove its consistency. For the $k$th repetition, $({Z^{(k)}_{i,1}})_{i=1}^{n},({Z^{(k)}_{i,2}})_{i=1}^{n}$ are generated from $\text{Gumbel}(0,2)$ and $\text{Laplace}(-1,1)$. The assignment vector $(I_i^{(k),(\text{COV})})_{i=1}^{n}$ are then obtained via the COV$(1,2,1)$ procedure in Example \ref{example: COV} and 
$D_n^{(k),(\text{COV})}({\stind{s}}) = \sum_{i=1}^{n}(2I_i^{(k),(\text{COV})}-1)\mathbbm{1}\{S_i^{(k)}=s\},$
are calculated. After $K$ repetitions, the variance estimator is given by $$(\hat{\sigma}_{{\stind{s}}}^{(\text{COV})})^2 = \frac{1}{K-1}\sum_{k=1}^{K}\left(\frac{D_n^{(k),(\text{COV})}({\stind{s}})}{\sqrt{n}}-\hat{\mu}_n^{K,(\text{COV})}\right)^2,$$
where $\hat{\mu}_n^{K,(\text{COV})}= n^{-1/2}K^{-1}\sum_{k=1}^{K}D_n^{(k),(\text{COV})}({\stind{s}})$ is the empirical mean. It is shown in Lemma \ref{lemma:boot} that $(\hat{\sigma}_{{\stind{s}}}^{(\text{COV})})^2=(\sigma_{{\stind{s}}}^{(\text{COV})})^2+o_P(1)$ as $\min(n,K)\to\infty$. 

\begin{lemma}\label{lemma:boot}
    Consider the Monte-Carlo simulation setup in Subsection \ref{example:instability}. As $\min(n,K)\to\infty$, we have $(\hat{\sigma}_{{\stind{s}}}^{(\text{COV})})^2=(\sigma_{{\stind{s}}}^{(\text{COV})})^2+o_P(1)$ for each $s\in[m]$.
\end{lemma}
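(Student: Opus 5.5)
The Monte-Carlo replicates are i.i.d. across $k$ for fixed $n$, so the plan is a triangular-array law of large numbers in $k$, combined with the distributional limit of Theorem \ref{theorem: design property} and a uniform integrability argument in $n$. Fix $s$ and write $X_{n}^{(k)} := n^{-1/2}D_n^{(k),(\text{COV})}(\stind{s})$. For each $n$ the variables $X_n^{(1)},\ldots,X_n^{(K)}$ are i.i.d. copies of $X_n:=n^{-1/2}D_n^{(\text{COV})}(\stind{s})$. Theorem \ref{theorem: design property} with $f=\stind{s}$ gives $X_n\rightsquigarrow \sigma^{(\text{COV})}_{\stind{s}}\xi$, since $E[\stind{s}^2(\vZ)]=q_s<\infty$.

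The first step is to upgrade this weak convergence to convergence of the first two moments, $E X_n\to 0$ and $E X_n^2\to (\sigma^{(\text{COV})}_{\stind{s}})^2$. For this I would establish $\sup_n E|X_n|^{4}<\infty$, which gives uniform integrability of $X_n^2$. The natural route is the Poisson-equation decomposition of Lemma \ref{lemma:previous result1}(iv) with $h=\stind{s}$ and no error term:
\[
D_n(\stind{s}) = M_{h,n} + \hat\kappa_h(\Lambda_0)-\hat\kappa_h(\Lambda_n).
\]
The remainder has bounded fourth moments because $\sup_n E|\hat\kappa_h(\Lambda_n)|^k<\infty$ by Lemma \ref{lemma:previous result1}(iii). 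For the martingale part I would apply Burkholder/Rosenthal \cite[Theorem 2.12]{hall2014martingale}. Its increments satisfy $|\Delta M_{h,i}|\le 1+|\hat\kappa_h(\Lambda_i)|+|\hat\kappa_h(\Lambda_{i-1})|$, which have uniformly bounded fourth moments, so $E|M_{h,n}|^4\le C n^2$. Dividing by $n^2$ gives the bound. Symmetry of the design under the sign flip $I_i\mapsto 1-I_i$ also gives $EX_n=0$ exactly, although this is not needed.

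The second step handles $K\to\infty$ jointly with $n$. Let $\mu_n=EX_n$ and $m_{2,n}=EX_n^2$. By Chebyshev's inequality,
\[
P\Bigl(\bigl|K^{-1}\textstyle\sum_k X_n^{(k)}-\mu_n\bigr|>\epsilon\Bigr)\le m_{2,n}/(K\epsilon^2),
\]
and
\[
P\Bigl(\bigl|K^{-1}\textstyle\sum_k (X_n^{(k)})^2-m_{2,n}\bigr|>\epsilon\Bigr)\le E X_n^4/(K\epsilon^2).
\]
Both bounds go to zero uniformly in $n$ by the first step, so they vanish along any sequence with $\min(n,K)\to\infty$. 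Since $(\hat\sigma^{(\text{COV})}_{\stind{s}})^2=\frac{K}{K-1}\bigl[K^{-1}\sum_k (X_n^{(k)})^2-(\hat\mu_n^{K,(\text{COV})})^2\bigr]$, the estimator is $m_{2,n}-\mu_n^2+o_P(1)$, and this tends to $(\sigma^{(\text{COV})}_{\stind{s}})^2$ by the first step.

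The main obstacle is the uniform fourth-moment bound. It requires that the moment bounds of Lemma \ref{lemma:previous result1}(iii) hold for the COV chain started at $\Lambda_0=\vZero$ under the Gumbel/Laplace covariates. I would take this from \cite{ma2024new}, after checking that Assumption \ref{assume:bounded moments} holds for $\psi$ containing $\operatorname{vec}(\vZ\vZ^\top)$; it does, since all moments of Gumbel and Laplace are finite. If only $\sup_n E|X_n|^{2+\upsilon}<\infty$ were available, I would replace the second Chebyshev bound with a von Bahr–Esseen bound of order $K^{-\upsilon/2}$, as in the proof of Lemma \ref{lemma:bootstrap variance consistency}.
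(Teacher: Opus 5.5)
Your proposal is correct and uses the same two-piece decomposition as the paper. The first piece is the Monte-Carlo error $(\hat{\sigma}_{\stind{s}}^{(\text{COV})})^2-\var(X_n)$, handled by a law of large numbers over the $K$ replicates. The second is the deterministic gap $\var(X_n)-(\sigma_{\stind{s}}^{(\text{COV})})^2$, handled via Theorem \ref{theorem: design property}.

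The bound $\sup_n E X_n^4<\infty$ that you obtain from the Poisson decomposition of Lemma \ref{lemma:previous result1} and Burkholder's inequality supplies what the paper's short proof leaves implicit. It makes the law of large numbers uniform along the triangular array as $n$ varies. It also gives the uniform integrability needed to pass from $X_n\rightsquigarrow\sigma_{\stind{s}}^{(\text{COV})}\xi$ to $\var(X_n)\to(\sigma_{\stind{s}}^{(\text{COV})})^2$.
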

\begin{proof}[Proof of Lemma \ref{lemma:boot}]
    Fix stratum $s\in[\prod_jm_j]$, note the following decomposition
    $$(\hat{\sigma}_{{\stind{s}}}^{(\text{COV})})^2-(\sigma_{{\stind{s}}}^{(\text{COV})})^2= (\hat{\sigma}_{{\stind{s}}}^{(\text{COV})})^2- \var[D_n^*({\stind{s}})/\sqrt{n}]+ \var[D_n^*({\stind{s}})/\sqrt{n}]-(\sigma_{{\stind{s}}}^{(\text{COV})})^2,$$
    where $D_n^{*}({\stind{s}})$ is the generic random variable having the same distribution as $D_n^{(k)}({\stind{s}})$ for each $k\in[K]$. 
    
    Suppose for the original data $\{\vZ_i,I_i:i\in[n]\}$ has the same distribution as the generic vector $(\vZ,I)$. Since we generate $\vZ_i^{(k)}, I_i^{(k)}$ from $(\vZ,I)$ directly, $n^{-1/2}D_n^*({\stind{s}})\rightsquigarrow \Theta_s$ with $\Theta_s$ following $\mathcal{N}(0,(\sigma_{{\stind{s}}}^{(\text{COV})})^2)$ as defined in Theorem \ref{theorem: design property}. The first difference is $o_P(1)$ due to weak law of large numbers and the second difference is $o(1)$ as $n\to\infty$. The statement is proved.
\end{proof}

\begin{remark}
    The Monte-Carlo approach considered is similar to the bootstrap approach in \cite{zhao2024estimation}, where they estimate the limiting variance of the within-stratum imbalances under the Minimization method because the limiting variance does not have closed form as well. In \cite{zhao2024estimation}, $\vZ_i$ is discrete, the distribution of $\vZ_i$ is unknown and is estimated by the empirical CDF first. Then for each repetition, $\vZ_i^{(k)}$ is generated via the empirical CDF. In our case, since the distribution for $\vZ$ is assumed to be known, this is a standard Monte-Carlo approach and the proofs will be greatly simplified.
\end{remark}

\section{Further discussions}
\subsection{Discussion on MixCAR}\label{app:discussions on MixCAR}
In practice, it is often the case that $\vZ$ is mixed. In this section, we consider $\vZ = (\vZ_D^\top,\vZ_C^\top)^\top= ({Z_1,\ldots, Z_{p_d},Z_{p_d+1},\ldots, Z_{p}})$, that is, the first $p_d$ entries of $\vZ$ are discrete and the last $p-p_d$ entries are continuous random variables. There are subtle differences in the theoretical results under MixCAR. First, the discretization function is a two-step mapping 
$$S:{\bigtimes_{j=1}^{p_d}}[m_j]\times \bR^{p-p_d}\to {\bigtimes_{j=1}^{p}}[m_j]\to [m].$$
The last $p-p_d$ entries are discretized and merged into the first $p_d$ variables. On the other hand, let
$${S^{\text{dis}}}:{\bigtimes_{j=1}^{p_d}}[m_j]\to [m]$$
be the function that maps $p_d$ margins into strata for the discrete variables only. The following modified assumption (condition) are assumed for MixCAR designs:

\begin{assumption}\label{assume:independent copy2}
    The covariates $\vZ_i$ are independent copies of the generic vector $\vZ$ such that $E(\vZ_C)=\vZero$, and ${\bm{\Sigma}_{\widetilde{Z}_C}} = E\{\var(\vZ_C\mid {S})\}$ is positive definite.
\end{assumption}
\begin{condition}\label{cond:dgp MixCAR}
The overall imbalance $D_n^{(\tMC)} = o_P(\sqrt{n})$. Additionally, $\stind{k}$ and $f(Z_{p_d+1},\ldots, Z_p) = (Z_{p_d+1},\ldots, Z_p)^\top$ will be strongly balanced for any $k\in[m]$.
\end{condition}
A typical example of MixCAR is given below:
\begin{example}[MIX procedure \cite{liu2025}]\label{example: mixed}
    Consider a procedure which uses a mixed covariate map $\psi(\vZ_i) = (\sqrt{\omega_0},\sqrt{\omega_1}\stind{1}(\vZ_i),...,\sqrt{\omega_m}\stind{m}(\vZ_i), \sqrt{\omega_{1}^*}\vZ_i^\top, \sqrt{\omega_{2}^*}{\operatorname{vec}}(\vZ_i\vZ_i^\top)^\top)^\top,$
    with $\omega_0,\omega_1,\ldots,\omega_m,\omega_1^*,\omega_2^*\geq 0$ as weights. The procedure is an MixCAR.
\end{example}
MIX design \ref{example: mixed} satisfy Condition \ref{cond:dgp MixCAR} if $\omega_k>0$ for $k=1,\ldots,m$. Under Assumption \ref{assume:independent copy2} Theorem \ref{theorem: design property} still holds with $({\sigma_f^{(\tMC)}})^2=E[f^2(\vZ)]+2{\pi_{\Lambda^{(\tMC)}}(\zeta_f)}$. The proof is identical to the CCAR case in Theorem \ref{theorem: design property}. For statistical inference, \eqref{eq:true model} is assumed. However, due to the mixed nature of $\vZ_i$, the treatment effect is estimated via the following working model
$$E(Y_i\mid I_i,\vZ_i) = \vG_{i,d_a}^{(\tMC,a)}\vbeta_{d_a}^{(a)}$$
whose fitted covariates and parameters are given by
\begin{align*}
    &\vG_{i,d_Z}^{(\tMC,\aZ)} = (I_i^{(\tMC)},1-I_i^{(\tMC)},\mathbbm{1}\{S^{\text{dis}}_{i}=1\},\ldots, \mathbbm{1}\{S^{\text{dis}}_{i}=d_{Z1}\},{Z_{i,p_d+1},\ldots,Z_{i,p_d+d_{Z2}}}),\\
    &\vG_{i,d_S}^{(\tMC,\aS)} = (I_i^{(\tMC)},1-I_i^{(\tMC)},\mathbbm{1}\{S_{i}=1\},\ldots, \mathbbm{1}\{S_{i}=d_{S}\}),
\end{align*}
where $d_{Z1}\leq m-1, d_{Z2}\leq p-p_d, d_S\leq m-1, d_Z=d_{Z1}+d_{Z2}$. The ANCOVA model under MixCAR fits covariates used in the randomization design unchanged (with both discrete and continuous covariates), while the SFE Model fits all \textbf{discretized} covariates. The construction of the test statistic is analogous to \eqref{eq:test statistic}. Corollary \ref{cor: test under Rand} still follows with 
$(\tau^{(\tMC,a)})^2 = \sigma^2_{\varepsilon}+(\sigma^{(\tMC,a)})^2$,
\begin{align*}
    &(\sigma^{(\tMC,\aZ)})^2 = E[g^2(\vZ)]+2{\pi_{\Lambda^{(\tMC)}}(\zeta_g)},\\
    &(\sigma^{(\tMC,\aS)})^2 = E[g_{d_S}^2(\vZ)]+2{\pi_{\Lambda^{(\tMC)}}(\zeta_{g_{d_S}})}.
\end{align*}
where $\zeta_h$ is defined in the proof of Lemma \ref{lemma:find var}, $g,g_{d_S}$ are defined similarly as in the proof of Theorem \ref{theorem:combined}, $\pi_{\Lambda^{(\tMC)}}$ is the invariant distribution induced by the chain $(\Lambda^{(\tMC)}_i)_{i\geq 1}$. The proof strategy is similar to the CCAR case. The bootstrap adjustment is also identical to Section \ref{app:bootstrap}. We note that as long as continuous covariates are used in the design (e.g., CCAR, MixCAR), the variance inflation issue in Subsection \ref{example:instability} may exist and the design may give rise to test inflation. Therefore, the covariates are suggested to be discretized fully during the design for robustness.

\begin{remark}\label{remark:remark on MixCAR partially known}
    One special case is when $g$ is partially known. Then, according to discussion in \ref{subsec: other functions}, the known components should be strongly balanced in the design, while the unknown components should be suitably discretized. Then, in the analysis, a combined model \eqref{eq:combined working model} should be implemented for the covariates related to the unknown components. Since the discussion will be repetitive, we do not explicitly derive the results here.
\end{remark}

\subsection{Discussion on Condition \ref{assume: dgp}}\label{app:discuss on dgp}

    Our theoretical results relies on Condition \ref{assume: dgp} and can apply to any design that satisfies it. A commonly used case is STR-PB which does not follow assignment \textbf{Step 1.-4.} but is a DCAR that satisfies Condition \ref{assume: dgp}. On the contrary, PS follows \textbf{Step 1.-4.} but does not satisfy Condition \ref{assume: dgp} because the within-stratum imbalances converge slower than $o_P(\sqrt{n})$ \cite[Theorem 1]{zhao2024estimation}. CCARs that balance covariate means or Mahalanobis distance only (with $\psi_1\equiv 0$) \cite[Example (2.1),(2.2)]{ma2024new} need not satisfy Condition \ref{assume: dgp}, as the overall imbalance converges slower than $o_P(\sqrt{n})$. In the literature, this can be overcome via the pairwise sequential approach \cite{qin2022adaptive} or simply by setting $\psi_1>0$.

\subsection{Discussion on additional covariates}\label{sec:additional covariates}
In practice, the true model might also depend on covariates not used in randomization. Mathematically speaking, the true model might be
$$Y_i = \theta_1I_i+\theta_0(1-I_i)+g(\vZ_i)+h(\vX_i)+\varepsilon_i$$
where $\vX_i\in \bR^{p_x}$ are additional covariates not used in randomization and $h:\bR^{p_x}\to \bR$. Usually, $h$ is assumed to be a linear function of $\vX$, i.e., suppose
$$Y_i = \theta_1I_i+\theta_0(1-I_i)+g(\vZ_i)+\sum_{j=1}^{p_x}\iota_j X_{i,j}+\varepsilon_i.$$
The additional variables introduce variability to the estimation of treatment effect. Specifically, if no $\vX_i$ are fitted in the working models \eqref{eq:working model 1}, then the limiting variance in Theorem \ref{theorem:combined} for the new model will be $(\tilde{\sigma}^{(r,a)})^2+\sigma^2_{\varepsilon}+E[\var(\sum_j \iota_j X_j\mid \vZ)]$ where $\tilde{\sigma}^{(r,a)}$ are the same as $\sigma^{(r,a)}$ in Equations \eqref{eq:varCR}-\eqref{eq:varsACC} in Appendix \ref{app:proof inference} with $\tilde{g}(\vZ) = g(\vZ) + E(\sum_j \iota_j X_j\mid \vZ)$ in places of $g(\vZ)$. On the other hand, if all $\vX_i$'s are added to working models \eqref{eq:working model 1}, then no additional variance will be introduced. This case is a generalization of the results in \cite{ma2015testing}, in which they only consider linear relationships in $\vX,\vZ$ and a subset of DCAR. It is also possible to consider $h$ to be non-linear, in any case, the inclusion and exclusion of $\vX$ will not affect our recommendations on the discretization strategy of $\vZ$. Since such extension is relatively straightforward following our discussion, we do not go deeper in this paper for simplicity. We also want to mention that additional covariates $\vX$ is a crucial building block for efficient estimation under model-robust approach. The utilization of $\vX$ is mainly for more efficient estimation and has readily been discussed in a series of works \citep{ye2021inference,ye2023toward,bannick2023general,tu2024unified,ma2022regression,liu2023lasso}. Detailed discussion on model-robust approach is in Section \ref{sec:model robust}. 

\subsection{Discussion on the real-data}\label{app:counter}

Table \ref{tab:realdata_trial6} contains an entry that appears paradoxical.
Under the COV design, the stratum-adjusted test based on all twelve strata is
\emph{less} powerful ($88.6\%$) than the same test with the adiponectin stratum omitted ($94.7\%$). Standard practice suggests that adding prognostic baseline covariates to a working model reduces the residual
variance, and the added stratum should not make the precision worse. The purpose of this appendix is to explain why the principle fails once the CAR design uses the
covariates, and to exhibit a minimal example in which every quantity can be computed exactly.

The question can be formulated as follows. Fix a randomization rule $r$ and two nested working models. Both models yield consistent estimators of
$\theta_1-\theta_0$, and the larger model has the smaller population residual. Then, is the larger model guaranteed to be asymptotically no less efficient? 
Recall from the proofs of our inference results in Appendix \ref{app:proof inference} that, for a design $r$ and a
working model $a$ with population residual $e^{(a)}$, the estimator
admits the expansion
$$
\sqrt{n}\bigl(\hat\theta_1-\hat\theta_0-(\theta_1-\theta_0)\bigr)
=\frac{2}{\sqrt n}\sum_{i=1}^n (2I_i-1)\bigl\{e^{(a)}_i-2\nu_1^{(a)}\bigr\}
+o_P(1).
$$
By Theorem \ref{theorem: design property}, the general imbalance measure $n^{-1/2}D_n^{(r)}(f)$ converge in distribution with limiting variance $(\sigma^{(r)}_f)^2$. Define the design-induced norm
$$
\|f\|_{(r)}^2=(\sigma^{(r)}_f)^2.
$$
Then, Theorem \ref{theorem:combined} shows that the asymptotic variance of the estimator is essentially $$4\{\sigma_{\varepsilon}^2+\|e^{(a)}\|_{(r)}^2\}.$$ In contrast, the consensus of `no worse models when more covariates are fitted' is defined via $$R^2= 1-\frac{\|e^{(a)}\|^2}{\text{Total sum of squares}}$$ which is a monotone function of $\|e^{(a)}\|^2$. The observation originates from difference between these two norms:
\begin{center}
\emph{model information orders by $\|e^{(a)}\|$} $\overset{?}=$
\emph{efficiency orders by $\|e^{(a)}\|_{(r)}$.}
\end{center}
Under CR there is no distinction between the two, since $I_i$ is i.i.d. centered and $\|f\|_{(\tCR)}=\|f\|$ for every
centered $f$. Under CAR the seminorm is anisotropic, in the sense that directions that are strongly balanced are removed, while directions `orthogonal' to the balanced features are not balanced, see discussions in Section \ref{sec:instability}. Therefore, adding more strata (covariates) \textbf{reduces the $L^2$ norm} of the residual, but not the design induced semi-norm of the residual. The latter determines the estimator's precision. 

We provide a theoretical explanation for the observation with the following covariate-mean balancing CCAR procedure as a concrete example.
\begin{condition}\label{cond:app ideal}
The design has feature map $\psi(z)=(1,z)^{\top}$ and satisfies, for equal
allocation: (i) $n^{-1/2}D_n^{(r)}(f)\overset{P}\to 0$ for $f\in\{1,z\}$;
(ii) $n^{-1/2}D_n^{(r)}(f)\rightsquigarrow N(0,E[f^2(Z)]+2\pi_{\Lambda}(\zeta_f))$ for every fixed $f$ with $E[f^2(Z)]<\infty$ not in the span of $\psi$.
\end{condition}

Part (i) is strong balance of the features; part (ii) is essentially satisfied by using Theorem \ref{theorem: design property}.

\begin{example}
 Let $Z_i\sim N(0,1)$ and
$Y_i=\theta_1I_i+\theta_0(1-I_i)+\beta Z_i+\varepsilon_i$ with
$\varepsilon_i$ independent of $Z_i$, $E[\varepsilon_i]=0$ and
$\mathrm{Var}(\varepsilon_i)=\sigma_{\varepsilon}^2$. For $m\ge2$ let
$S_i\in[m]$ be the equiprobable discretization of $Z_i$, with cut points
$\Phi(c_s)=s/m$. Consider the following two working models fitted by least squares:
(1) the unadjusted model $a=\varnothing$, regressing $Y$ on $(I,1-I)$, (2) $a=\aS$, additionally fitting $\stind{1},\ldots,\stind{m-1}$. By some calculations, it can be shown that
$$
v_m=E\bigl[\mathrm{Var}(Z\mid S)\bigr]
=1-m\sum_{s=1}^{m}\bigl\{\phi(c_{s-1})-\phi(c_s)\bigr\}^2,
$$
where $\phi(\cdot)$ is the density of standard normal distribution. The asymptotic variances of $\sqrt n\{\hat\theta_1-\hat\theta_0-(\theta_1-\theta_0)\}$ are
$$
\begin{array}{l@{\qquad}l@{\qquad}l}
 & a=\varnothing & a=\aS\\[2pt]
\tCR: & 4(\sigma_{\varepsilon}^2+\|e^{(\phi)}\|^2) = 4\bigl(\sigma_{\varepsilon}^2+\beta^2\bigr)
      &4(\sigma_{\varepsilon}^2+\|e^{(\aS)}\|^2)=  4\bigl(\sigma_{\varepsilon}^2+\beta^2 v_m\bigr)\\[2pt]
\text{CCAR}: & 4(\sigma_{\varepsilon}^2+\|e^{(\phi)}\|^2_{(\tCC)}) = 4\sigma_{\varepsilon}^2
      & 4(\sigma_{\varepsilon}^2+\|e^{(\aS)}\|^2_{(\tCC)})=4\bigl(\sigma_{\varepsilon}^2+\beta^2 v_m+\pi_{\Lambda}(\zeta_{e})\bigr),
\end{array}
$$
where $\zeta_e$ is defined as in Equation \eqref{eq:pi_Lambda g} with $e = \beta(Z-E[Z\mid S])$ being the residual of the working model. In particular, if
$$4\beta^2 v_m+\pi_{\Lambda}(\zeta_e)>0,$$
then adjustment by strata hurts precision under CCAR. It has been shown in, e.g., Section \ref{sec:instability} Example \ref{ex:gaussian-inflation} that $\pi_{\Lambda}(\zeta_e)$ can be positive to make the above inequality to hold. The observation is corroborated with a simulation study in Table \ref{tab:paradox-simulation}.

\begin{table}[t]
\centering
\caption{Paradox simulation with $n=400$, two equiprobable strata, and $5000$ repetitions. Power is oracle variance-calibrated at $\tau=0.36$. CCAR used: covariate mean balancing CAR; DCAR used: stratified biased-coin method.}
\label{tab:paradox-simulation}
\begin{tabular}{llrrrr}
\toprule
Design & Analysis & $\|e\|^2$ & $\|e\|_{(r)}^2$ & $n\operatorname{Var}(\hat\theta)$ & Power (\%) \\
\midrule
CR & Unadjusted & 1.000 & 1.009 & 7.829  &73.3 \\
CR & Strata & 0.364 & 0.361 & 5.451 & 87.2 \\
\addlinespace
CCAR & Unadjusted & 1.000 & 0.008 & 4.113 & \textbf{94.4} \\
CCAR & Strata & 0.364 & 0.260 & 5.119 & \textbf{88.8} \\
\addlinespace
DCAR & Unadjusted & 1.000 & 0.365 & 5.418   & 87.6 \\
DCAR & Strata & 0.364 & 0.360 & 5.394 & 87.7 \\
\bottomrule
\end{tabular}
\end{table}
\end{example}

\section{Extension to multi-arm trials}
\label{app:multiarm}

This appendix gives a rigorous equal-allocation extension for CR, DCAR, CCAR procedures satisfying the balance condition below. In multi-arm setup, CR refers to the randomization procedure such that the assignment vectors $\bT_i$ are i.i.d., independent of $\{(\vZ_i,\varepsilon_i):i\geq1\}$, and satisfy $P(\bT_i=\be_k)=1/A$, $A\geq 3$. 

\subsection{Multi-arm setup}
\label{app:multiarm-setup}
Let the number of arms $A\geq2$ be fixed as $n\to\infty$, let $\bT_i=(T_i^{(1)},\ldots,T_i^{(A)})^\top$ be the one-hot assignment vector, and put $\vc=A^{-1}\bone$. Write $\be_k$ for the $k$th standard basis vector and
\begin{equation}
\label{eq:ma-projection}
\bP_A=\bI_A-A^{-1}\bone\bone^\top.
\end{equation}
For a scalar $f$, define
$$\vD_n^{(r)}(f)=\sum_{i=1}^n(\bT_i-\vc)f(\vZ_i).$$
If $\psi$ is $p'$-dimensional, the full feature imbalance is the $A\times p'$ matrix with $k$th row $\sum_i(T_i^{(k)}-A^{-1})\psi(\vZ_i)^\top$. When $A=2$ and $\bell=(1,-1)^\top$, $\bell^\top\vD_n^{(r)}(f)$ is exactly the binary imbalance in \eqref{eq:general imbalance}.

\begin{condition}[Multi-arm balance]
\label{cond:ma-balance}
The strata have fixed probabilities $q_s>0$. Let $\mathcal F_{i-1}$ denote the assignment history and define $\bpi_i=P(\bT_i=\be_k\mid\mathcal F_{i-1},S_i)$. Conditional on $(\mathcal F_{i-1},S_i)$, the assignment draw is independent of the remainder of $\vZ_i$ and of $\varepsilon_i$; the equal allocation target is imposed through the following balance condition, not by requiring $\bpi_i=\vc$ pointwise. For every $s\in[m]$, $\vD_n^{(\tDC)}(\chi_s)=o_P(\sqrt n).$ For CCAR, $\vD_n^{(\tCC)}(f) = o_P(\sqrt{n})$, where $f(\vz) = z_j$.
\end{condition}

\subsection{Multi-arm covariate imbalance}
\label{app:multiarm-imbalance}

\begin{lemma}
\label{lem:ma-pi}
For a probability vector $\bpi=(\pi^{(1)},\ldots,\pi^{(A)})^\top$, define
\[
\bPi(\bpi)=\sum_{k=1}^A\pi^{(k)}(\be_k-\vc)(\be_k-\vc)^\top.
\]
Then
\begin{equation}
\label{eq:ma-pi}
\bPi(\bpi)=\operatorname{diag}(\bpi)-A^{-1}(\bpi\bone^\top+\bone\bpi^\top)+A^{-2}\bone\bone^\top,
\end{equation}
so $\bPi$ is affine and $\bPi(\vc)=A^{-1}\bP_A$.
\end{lemma}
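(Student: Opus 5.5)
The identity \eqref{eq:ma-pi} is a direct algebraic expansion, so I will prove it by expanding the outer product $(\be_k-\vc)(\be_k-\vc)^\top$ termwise and summing against the weights $\pi^{(k)}$. Writing $\vc=A^{-1}\bone$, I will expand
\[
(\be_k-\vc)(\be_k-\vc)^\top=\be_k\be_k^\top-A^{-1}\be_k\bone^\top-A^{-1}\bone\be_k^\top+A^{-2}\bone\bone^\top .
\]
Multiplying by $\pi^{(k)}$ and summing over $k\in[A]$, I will use three facts. First, $\sum_k\pi^{(k)}\be_k\be_k^\top=\operatorname{diag}(\bpi)$. Second, $\sum_k\pi^{(k)}\be_k=\bpi$, so the two cross terms become $-A^{-1}(\bpi\bone^\top+\bone\bpi^\top)$. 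Third, $\sum_k\pi^{(k)}=1$ because $\bpi$ is a probability vector, so the last term contributes exactly $A^{-2}\bone\bone^\top$. Collecting these gives \eqref{eq:ma-pi}.

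Affineness then follows immediately from \eqref{eq:ma-pi}. The map $\bpi\mapsto\operatorname{diag}(\bpi)-A^{-1}(\bpi\bone^\top+\bone\bpi^\top)$ is linear, and $A^{-2}\bone\bone^\top$ is a constant. The constant term appears only because the normalization $\sum_k\pi^{(k)}=1$ was used; on the simplex, the original defining sum is also affine in $\bpi$.

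Finally I will evaluate at $\bpi=\vc=A^{-1}\bone$. This gives $\operatorname{diag}(\vc)=A^{-1}\bI_A$ and $\vc\bone^\top=\bone\vc^\top=A^{-1}\bone\bone^\top$. Hence
\[
\bPi(\vc)=A^{-1}\bI_A-2A^{-2}\bone\bone^\top+A^{-2}\bone\bone^\top=A^{-1}\bI_A-A^{-2}\bone\bone^\top=A^{-1}\bP_A,
\]
using \eqref{eq:ma-projection}. There is no real obstacle here. The only point needing care is the constant term: one must invoke $\sum_k\pi^{(k)}=1$ rather than expanding blindly, otherwise that term would read $A^{-2}(\bone^\top\bpi)\bone\bone^\top$.
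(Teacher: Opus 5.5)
Your proposal is correct and follows the paper's own argument: expand $(\be_k-\vc)(\be_k-\vc)^\top$, weight by $\pi^{(k)}$, sum over $k$ using $\sum_k\pi^{(k)}=1$, and then substitute $\bpi=\vc$. The only addition is that you carry out the final evaluation explicitly, which the paper leaves to the reader.
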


\begin{proof}
Expand $(\be_k-\vc)(\be_k-\vc)^\top$, multiply by $\pi^{(k)}$, and sum over $k$. Substitution of $\bpi=\vc$ gives the final identity.
\end{proof}

\begin{prop}[Multi-arm covariate imbalance]
\label{prop:ma-imbalance}
Suppose $E\{f^2(\vZ)\}<\infty$. Under CR,
\[
n^{-1/2}\vD_n^{(\tCR)}(f)\rightsquigarrow
\mathcal N_A\!\left(\vZero,A^{-1}E\{f^2(\vZ)\}\bP_A\right).
\]
Under a DCAR satisfying Condition \ref{cond:ma-balance},
\[
n^{-1/2}\vD_n^{(\tDC)}(f)\rightsquigarrow
\mathcal N_A\!\left(\vZero,A^{-1}E\{\var(f(\vZ)\mid S)\}\bP_A\right).
\]
Under a CCAR satisfying Condition \ref{cond:ma-balance},
\[
n^{-1/2}\vD_n^{(\tDC)}(f)\rightsquigarrow
\mathcal N_A\!\left(\vZero,A^{-1}(\sigma_f^{(\tCC)})^2\bP_A\right),
\]
where $\sigma_f^{(\tCC)}$ may not have a closed form.
\end{prop}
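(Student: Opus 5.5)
The plan is to reduce each case to a scalar limit via the Cramér--Wold device, using $\bone^\top\vD_n^{(r)}(f)=\sum_i(\bone^\top\bT_i-1)f(\vZ_i)=0$. Every limit lives on the hyperplane $\bone^\perp$, so it suffices to treat linear combinations $\bell^\top\vD_n^{(r)}(f)$ with $\bell\in\bR^A$. The predicted limits are then $A^{-1}\sigma^2\,\bell^\top\bP_A\bell$ with the stated $\sigma^2$.

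Write $\bell^\top\vD_n^{(r)}(f)=\sum_i\bell^\top(\bT_i-\vc)f(\vZ_i)$. Under CR the summands are i.i.d. with mean zero, since $E(\bT_i)=\vc$ and $\bT_i$ is independent of $\vZ_i$. By Lemma \ref{lem:ma-pi} their variance is $E\{f^2(\vZ)\}\,\bell^\top\bPi(\vc)\bell=A^{-1}E\{f^2(\vZ)\}\,\bell^\top\bP_A\bell$, and the classical CLT finishes this case.

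For DCAR I mimic the proof of Theorem \ref{theorem: design property} and split
\[
\vD_n^{(\tDC)}(f)=\sum_i(\bT_i-\vc)\{f(\vZ_i)-E[f(\vZ)\mid S_i]\}+\sum_{s}\vD_n^{(\tDC)}(\chi_s)E[f(\vZ)\mid S=s].
\]
The second term is $o_P(\sqrt n)$ by Condition \ref{cond:ma-balance}. In the first term, conditional on $(\mathcal F_{i-1},S_i)$ the assignment is independent of $\vZ_i$, so each summand has conditional mean zero and forms a martingale difference. Its conditional covariance is $\var(f\mid S_i)\,\bPi(\bpi_i)$, which is affine in $\bpi_i$ by Lemma \ref{lem:ma-pi}. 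The predictable variation is therefore
\[
n^{-1}\sum_i\var(f\mid S_i)\,\bPi(\bpi_i)=\bPi\bigl(n^{-1}\textstyle\sum_i\var(f\mid S_i)\bpi_i\bigr)+\text{(constant terms)}.
\]

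The main obstacle is to show that $n^{-1}\sum_i\chi_s(\vZ_i)\bpi_i\to q_s\vc$ in probability for each $s$. I would write $\sum_i\chi_s(\vZ_i)(\bT_i-\bpi_i)$ as a martingale of order $O_P(\sqrt n)$. Then $\sum_i\chi_s(\vZ_i)(\bpi_i-\vc)=\vD_n(\chi_s)-O_P(\sqrt n)=O_P(\sqrt n)$ by the balance condition, and dividing by $n$ gives the limit. This yields predictable variation $A^{-1}E\{\var(f\mid S)\}\bP_A$. The conditional Lindeberg condition follows from square integrability and $\|\bT_i-\vc\|\le 2$. The martingale CLT of \cite[Corollary 3.1]{hall2014martingale} then concludes the DCAR case.

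For CCAR I would apply the two-arm argument of Lemma \ref{lemma:find var} to each contrast $\bell^\top\vD_n$. That argument uses the Poisson-equation decomposition of Lemma \ref{lemma:previous result1} for the imbalance chain $\Lambda$, giving a martingale plus a negligible telescoping remainder, and invokes the ergodic theorem for its predictable variation. The exchangeability of arms under equal allocation, together with the symmetry of $\pi_\Lambda$, should force the limiting covariance to be proportional to $\bP_A$. Its scale is defined as $(\sigma_f^{(\tCC)})^2$, which is consistent with the statement allowing it no closed form.
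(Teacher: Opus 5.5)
Your CR and DCAR arguments coincide with the paper's. In both, the CR case is the i.i.d.\ CLT with covariance from Lemma \ref{lem:ma-pi}. The DCAR case combines:
\begin{itemize}
\item the stratum decomposition;
\item a martingale whose predictable covariance $\bPi(\bpi_i)\var(f\mid S_i)$ is computed given $(\mathcal F_{i-1},S_i)$;
\item affineness of $\bPi$;
\item the estimate $\sum_i\chi_s(\vZ_i)(\bpi_i-\vc)=\vD_n(\chi_s)-\sum_i\chi_s(\vZ_i)(\bT_i-\bpi_i)=O_P(\sqrt n)$.
\end{itemize}
Conditioning on $(\mathcal F_{i-1},S_i)$ is the paper's interleaved filtration $\mathcal H_{2i-1}\subset\mathcal H_{2i}$. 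You should name it explicitly, so that this is the conditional covariance the martingale CLT actually uses.

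The CCAR case, however, has a genuine gap. Lemma \ref{lemma:previous result1} and the argument of Lemma \ref{lemma:find var} concern the two-arm chain $\Lambda_n=\sum_i(2I_i-1)\psi(\vZ_i)$ with drift $(1-2\rho)E[h(\vZ)\operatorname{sgn}\langle\Lambda,\psi(\vZ)\rangle]$. The invariant law, the Poisson solution with its quadratic growth bound, and the uniform moment bounds are all imported from the two-arm analysis of \cite{ma2024new}.

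With $A\geq 3$ arms, the state is the $A\times p'$ imbalance matrix, and the allocation depends on it through comparisons among $A$ candidate assignments. A contrast $\bell^\top\vD_n^{(\tCC)}(f)$ is therefore not the imbalance of any two-arm CCAR, and you cannot ``apply the two-arm argument to each contrast.'' The following must all be re-established for the multi-arm chain:
\begin{itemize}
\item its recurrence and the existence of its invariant law;
\item solvability of the Poisson equation with polynomial bounds;
\item the ergodic theorem for the predictable variation.
\end{itemize}
Condition \ref{cond:ma-balance}, which only gives $o_P(\sqrt n)$ balance of the $z_j$, supplies none of this. The paper closes exactly this step by invoking the multi-arm CLT \cite[Theorem 4.3]{zhang2026asymptoticpropertiesmultitreatmentcovariate}. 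You need that result or a proof of its ingredients. Note also that the resulting scale is a functional of the multi-arm invariant law, not the two-arm quantity $E[f^2(\vZ)]+2\pi_\Lambda(\zeta_f)$.

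Your shape argument is fine once a CLT is in hand, provided you assume the allocation rule is equivariant under arm permutations. The limiting covariance $\bm{\Sigma}$ then satisfies $\bm{\Sigma}\bone=\vZero$ and $\mathbf{R}\bm{\Sigma}\mathbf{R}^\top=\bm{\Sigma}$ for every permutation matrix $\mathbf{R}$, which forces $\bm{\Sigma}=c\,\bP_A$. It is this permutation symmetry, not the sign symmetry of the two-arm $\pi_\Lambda$, that does the work.
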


\begin{proof}
Under independent uniform randomization the summands are i.i.d., centered, and have covariance $A^{-1}E(f^2)\bP_A$ by Lemma \ref{lem:ma-pi}; the multivariate CLT applies.

For DCAR, put $U_i=f(\vZ_i)-E(f\mid S_i)$. The stratum decomposition gives
\[
\vD_n^{(\tDC)}(f)=\sum_{i=1}^n(\bT_i-\vc)U_i+
\sum_{s=1}^m\vD_n^{(\tDC)}(\chi_s)E(f\mid S=s),
\]
and the second term is $o_P(\sqrt n)$. Let $\mathcal H_{2i-1}$ contain the past and the newly revealed $S_i$, and let $\mathcal H_{2i}$ additionally contain the randomization draw and the remainder of $\vZ_i$. With $\bpi_i=P(\bT_i=\be_k\mid\mathcal H_{2i-1})_{k\leq A}$, the first sum is a martingale and its predictable covariance increment is
\[
E\{[(\bT_i-\vc)U_i]^{\otimes2}\mid\mathcal H_{2i-1}\}
=\bPi(\bpi_i)\var(f\mid S_i).
\]
For every $s,k$, $\sum_{i:S_i=s}(T_i^{(k)}-\pi_i^{(k)})=O_P(\sqrt n)$. Together with $N_{s,k}-N_s/A=o_P(\sqrt n)$, $N_s/n\to q_s$, and the affineness of $\bPi$, this yields
\[
\frac1n\sum_{i=1}^n\bPi(\bpi_i)\var(f\mid S_i)
\overset P\to A^{-1}E\{\var(f\mid S)\}\bP_A.
\]
Finally, $\|\bT_i-\vc\|\leq1$, and the conditional Lindeberg remainder is bounded in expectation by $E[U^2\mathbbm1\{|U|>\epsilon\sqrt n\}]\to0$. The martingale CLT and Cram\'er--Wold device complete the proof for DCAR. For CCAR, the results can be found directly in \cite[Theorem 4.3]{zhang2026asymptoticpropertiesmultitreatmentcovariate}.
\end{proof}

\subsection{Inference for multi-arm CR and DCAR}
\label{app:multiarm-inference}
Let
\[
Y_i=\sum_{k=1}^A\theta_kT_i^{(k)}+g(\vZ_i)+\varepsilon_i,
\qquad
\vG_i^{(r,a)}=(\bT_i^\top,\bW_i^{(a)\top})^\top,
\]
where $a\in\{\aZ,\aS,\aCM\}$, $E\{g^2(\vZ)\}<\infty$, $\var(\bW^{(a)})\succ0$, and the errors obey the independence, mean-zero, and $0<\sigma_\varepsilon^2<\infty$ conditions in \eqref{eq:true model}. The population projection includes an intercept; explicitly,
\[
\bbetaW^{(a)}=\var(\bW^{(a)})^{-1}\cov\{\bW^{(a)},g(\vZ)+\varepsilon\}.
\]
Define
\[
e_i^{(a)}=g(\vZ_i)-(\bW_i^{(a)})^\top\bbetaW^{(a)}+\varepsilon_i,
\qquad \mu^{(a)}=E(e_i^{(a)}).
\]
Then $E[\bW^{(a)}\{e^{(a)}-\mu^{(a)}\}]=\vZero$.
Fix a nonzero contrast $\bell\in\bR^A$ with $\bone^\top\bell=0$ and set $\vL=(\bell^\top,\vZero^\top)$. Pairwise comparison of arms $k$ and $j$ corresponds to $\bell=\be_k-\be_j$.

\begin{lemma}[Design matrix limit]
\label{lem:ma-omega}
Under independent uniform randomization, or under a DCAR satisfying Condition \ref{cond:ma-balance},
\[
n^{-1}(\vG^{(r,a)})^\top\vG^{(r,a)}\overset P\to
\vOmega^{(a)}=
\begin{pmatrix}
A^{-1}\bI_A&A^{-1}\bone\vu^{(a)\top}\\
A^{-1}\vu^{(a)}\bone^\top&E\{\bW^{(a)}\bW^{(a)\top}\}
\end{pmatrix},
\qquad \vu^{(a)}=E\bW^{(a)},
\]
and $\vOmega^{(a)}\succ0$.
\end{lemma}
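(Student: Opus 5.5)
The plan is to follow the two-arm argument of Lemma~\ref{lemma:block matrix}. First partition the Gram matrix into blocks, identify the limit of each block with a law of large numbers, and then show that the limit matrix is positive definite. Write
\[
n^{-1}(\vG^{(r,a)})^\top\vG^{(r,a)}=\begin{pmatrix} n^{-1}\sum_i\bT_i\bT_i^\top & n^{-1}\sum_i\bT_i(\bW_i^{(a)})^\top\\ n^{-1}\sum_i\bW_i^{(a)}\bT_i^\top & n^{-1}\sum_i\bW_i^{(a)}(\bW_i^{(a)})^\top\end{pmatrix}.
\]
Since $\bT_i$ is one-hot, $\bT_i\bT_i^\top=\operatorname{diag}(\bT_i)$, so the upper-left block is $\operatorname{diag}(N_1,\ldots,N_A)/n$. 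The lower-right block does not involve the assignments and converges to $E\{\bW^{(a)}\bW^{(a)\top}\}$ by the weak law of large numbers.

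For the blocks involving $\bT_i$, the key device is a multi-arm analogue of Lemma~\ref{lemma:wlln for car}. For any square-integrable $f$, I will show that $n^{-1}\sum_i T_i^{(k)}f(\vZ_i)\overset{P}\to A^{-1}E f(\vZ)$. The decomposition is $T_i^{(k)}f(\vZ_i)=A^{-1}f(\vZ_i)+(T_i^{(k)}-A^{-1})f(\vZ_i)$. The first part converges by the law of large numbers. The second part is the $k$th coordinate of $n^{-1}\vD_n^{(r)}(f)$, which is $O_P(n^{-1/2})$ by Proposition~\ref{prop:ma-imbalance} and hence $o_P(1)$. This holds both under CR and under DCAR satisfying Condition~\ref{cond:ma-balance}. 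Taking $f\equiv 1$ gives $N_k/n\to A^{-1}$; for this case one can note directly that $\sum_i(\bT_i-\vc)=\sum_s\vD_n(\chi_s)=o_P(\sqrt n)$. Applying the same argument componentwise with $f$ equal to each coordinate of $\bW^{(a)}$ gives the cross block $A^{-1}\bone\vu^{(a)\top}$. Since the dimensions are fixed, entrywise convergence gives convergence of the whole matrix.

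For positive definiteness, take the Schur complement of the upper-left block $A^{-1}\bI_A\succ0$:
\[
E\{\bW\bW^\top\}-A^{-1}\vu\bone^\top(A\bI_A)\bone\vu^\top A^{-1}=E\{\bW\bW^\top\}-\vu\vu^\top=\var(\bW^{(a)}),
\]
using $\bone^\top\bone=A$. This matrix is positive definite by assumption, so $\vOmega^{(a)}\succ0$.

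The main obstacle is justifying that $n^{-1}\vD_n(f)\to0$ for a generic $f$ under DCAR, since Proposition~\ref{prop:ma-imbalance} requires square integrability. Coordinates of $\bW^{(a)}$ satisfy this: $\vZ$ has finite second moments because $\var(\bW)$ exists, and the stratum indicators are bounded. If one prefers to avoid the CLT, a direct fallback is available. The martingale part $\sum_i(\bT_i-\vc)\{f(\vZ_i)-E(f\mid S_i)\}$ has increments with conditional variance bounded by $E f^2$, so its second moment is $O(n)$. The stratum part equals $\sum_s\vD_n(\chi_s)E(f\mid S=s)=o_P(\sqrt n)$ by Condition~\ref{cond:ma-balance}.
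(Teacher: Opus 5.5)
Your proposal is correct and follows essentially the same route as the paper. The blockwise limits come from the law of large numbers together with Proposition~\ref{prop:ma-imbalance}, applied through the split $T_i^{(k)}=A^{-1}+(T_i^{(k)}-A^{-1})$. Positive definiteness comes from the Schur complement $E\{\bW^{(a)}\bW^{(a)\top}\}-\vu^{(a)}\vu^{(a)\top}=\var(\bW^{(a)})$, which is just the block-matrix form of the paper's completed-square identity $A^{-1}\|x+\bone(\vu^{(a)\top}y)\|^2+y^\top\var(\bW^{(a)})y$. Your $L^2$ fallback for the martingale part is a valid, slightly more self-contained alternative to invoking the CLT in Proposition~\ref{prop:ma-imbalance}.
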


\begin{proof}
The treatment block converges because $n_k/n\to1/A$. For each component of $\bW$, Proposition \ref{prop:ma-imbalance} and the ordinary law of large numbers imply $n^{-1}\sum_iT_i^{(k)}\bW_i\to A^{-1}E\bW$; the lower-right block follows from the law of large numbers. To show positive definiteness, for $x\in\bR^A$ and $y$ conformable with $\bW$,
\[
(x^\top,y^\top)\vOmega^{(a)}(x^\top,y^\top)^\top
=A^{-1}\|x+\bone(\vu^{(a)\top}y)\|^2+y^\top\var(\bW^{(a)})y,
\]
which is positive for every nonzero $(x,y)$.
\end{proof}

\begin{lemma}\label{lem:ma-contrast}
For every $\bell$,
\[
\vL(\vOmega^{(a)})^{-1}=(A\bell^\top,\vZero^\top),
\qquad
\vL(\vOmega^{(a)})^{-1}\vL^\top=A\|\bell\|^2.
\]
\end{lemma}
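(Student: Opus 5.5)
The plan is to guess the candidate row vector $\bm{x}^\top=(A\bell^\top,\vZero^\top)$ and verify directly that $\bm{x}^\top\vOmega^{(a)}=\vL$, mirroring the two-arm computation in the proof of Theorem \ref{theorem:combined}. There, solving $\vOmega^{(a)}\bm c=\vL^\top$ gave $\bm c=(2,-2,\vZero^\top)^\top$. Since $\vOmega^{(a)}$ is symmetric and positive definite by Lemma \ref{lem:ma-omega}, it is invertible. Hence any solution of $\vOmega^{(a)}\bm c=\vL^\top$ is unique, and $\vL(\vOmega^{(a)})^{-1}=\bm c^\top$.

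For the verification, take $\bm c=(A\bell^\top,\vZero^\top)^\top$ and multiply blockwise using the form of $\vOmega^{(a)}$ in Lemma \ref{lem:ma-omega}. The first block row gives $A^{-1}\bI_A\cdot A\bell+A^{-1}\bone\vu^{(a)\top}\vZero=\bell$. The second block row gives $A^{-1}\vu^{(a)}\bone^\top A\bell+E\{\bW^{(a)}\bW^{(a)\top}\}\vZero=\vu^{(a)}(\bone^\top\bell)=\vZero$. The last equality is the step where the contrast condition $\bone^\top\bell=0$ is used, and it is the only place it enters. Together these give $\vOmega^{(a)}\bm c=(\bell^\top,\vZero^\top)^\top=\vL^\top$, so the first identity follows by uniqueness.

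The second identity follows by right-multiplying by $\vL^\top$: $(A\bell^\top,\vZero^\top)(\bell^\top,\vZero^\top)^\top=A\|\bell\|^2$.

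The only potential obstacle is ensuring that $\vOmega^{(a)}$ is invertible, so that the guessed solution is the unique one. This is already supplied by the positive-definiteness in Lemma \ref{lem:ma-omega}, so no further work is needed. As a sanity check, when $A=2$ and $\bell=(1,-1)^\top$ the formula recovers $(2,-2,\vZero^\top)$ and $\vL(\vOmega^{(a)})^{-1}\vL^\top=4$, in agreement with Theorem \ref{theorem:var limit}.
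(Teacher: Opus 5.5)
Your proposal is correct and follows essentially the paper's argument. The paper also verifies by direct block multiplication that $(A\bell^\top,\vZero^\top)\vOmega^{(a)}=\vL$, using $\bone^\top\bell=0$, and then multiplies by $\vL^\top$; your column-form check, with the explicit appeal to symmetry and the invertibility from Lemma \ref{lem:ma-omega}, simply spells out the uniqueness step the paper leaves implicit.
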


\begin{proof}
Direct multiplication shows that $(A\bell^\top,\vZero^\top)\vOmega^{(a)}=\vL$, because $\bone^\top\bell=0$. The second identity follows by multiplying by $\vL^\top$.
\end{proof}

\begin{lemma}[Multi-arm residual imbalance]\label{lem:ma-residual}
For $r=\tCR$ or DCAR, CCAR satisfying Condition \ref{cond:ma-balance},
\[
\frac1{\sqrt n}\sum_{i=1}^n(\bT_i-\vc)\{e_i^{(a)}-\mu^{(a)}\}
\rightsquigarrow
\begin{cases}
\mathcal N_A\!\left(\vZero,A^{-1}\var(e^{(a)})\bP_A\right),&r=\tCR,\\
\mathcal N_A\!\left(\vZero,A^{-1}E\{\var(e^{(a)}\mid S)\}\bP_A\right),&r=\tDC.\\
\mathcal N_A\!\left(\vZero,A^{-1}(\sigma^{(\tCC)}_{e^{(a)}})^2\bP_A\right),&r=\tCC.
\end{cases}
\]
\end{lemma}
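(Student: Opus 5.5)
The plan is to reduce the vector statement to Proposition \ref{prop:ma-imbalance}, applied with the scalar function $f=e^{(a)}-\mu^{(a)}$. First I will remove the error term. Write
\[
e_i^{(a)}-\mu^{(a)}=h^{(a)}(\vZ_i)+\varepsilon_i, \qquad h^{(a)}=g-(\bW^{(a)})^\top\bbetaW^{(a)}-\mu^{(a)}.
\]
Since $\varepsilon_i$ is independent of $(\mathcal F_{i-1},\vZ_i,\bT_i)$ and centered, the sum $n^{-1/2}\sum_i(\bT_i-\vc)\varepsilon_i$ is a martingale. Its predictable covariance is $n^{-1}\sum_i\bPi(\bpi_i)\sigma_\varepsilon^2$. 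Under CR this equals $A^{-1}\sigma_\varepsilon^2\bP_A$. Under DCAR and CCAR, the argument inside the proof of Proposition \ref{prop:ma-imbalance}, using the affineness of $\bPi$ from Lemma \ref{lem:ma-pi} together with $n_k/n\to1/A$, gives the same limit.

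Next I treat the joint sum with the $h^{(a)}$ part. Its cross terms have zero conditional expectation, so the martingale CLT together with the Cram\'er--Wold device combines the two pieces. For each design, the limiting covariance then equals $A^{-1}\sigma_\varepsilon^2\bP_A$ plus the covariance contributed by $h^{(a)}$.

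\textbf{CR.} The summands are i.i.d., so the multivariate CLT directly gives covariance $A^{-1}E\{(h^{(a)}+\varepsilon)^2\}\bP_A=A^{-1}\var(e^{(a)})\bP_A$.

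\textbf{DCAR.} I apply Proposition \ref{prop:ma-imbalance} with $f=h^{(a)}$, which is square integrable because $E g^2<\infty$ and $\bW^{(a)}$ has second moments. This gives covariance $A^{-1}E\{\var(h^{(a)}\mid S)\}\bP_A$. Alternatively, I can rerun its stratum decomposition with $U_i=e_i^{(a)}-E(e^{(a)}\mid S_i)$, which absorbs $\varepsilon$ directly. Because $\varepsilon$ is independent of $(\vZ,S)$, we have $\var(e^{(a)}\mid S)=\var(h^{(a)}\mid S)+\sigma_\varepsilon^2$, and the two contributions add to the stated form.

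\textbf{CCAR.} I invoke the CCAR part of Proposition \ref{prop:ma-imbalance}, that is, \cite[Theorem 4.3]{zhang2026asymptoticpropertiesmultitreatmentcovariate}, with $f=h^{(a)}$. I then add the independent-noise contribution. Here $(\sigma^{(\tCC)}_{e^{(a)}})^2$ should be read as $\sigma_\varepsilon^2+(\sigma^{(\tCC)}_{h^{(a)}})^2$, mirroring Lemma \ref{lemma:find var}, where $\varepsilon$ enters additively through $\kappa_h$.

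The main obstacle is the CCAR case. There the assignment probabilities depend on the whole imbalance Markov chain, so the conditional independence of $\varepsilon_i$ must be verified inside the Poisson-equation martingale decomposition, and the cross term $\pi_\Lambda(\zeta)$ must not pick up an $\varepsilon$ contribution. I would handle this exactly as in Lemma \ref{lemma:previous result1}(iv), where $\varepsilon$ drops out of $\kappa_h$ because it has zero mean and is independent of $\mathcal F_{i-1}$ and $\vZ_i$. The DCAR predictable-covariance convergence is routine given the proof of Proposition \ref{prop:ma-imbalance}.
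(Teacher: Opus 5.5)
Your proposal is correct and follows the paper's proof: the i.i.d.\ multivariate CLT for CR; for DCAR, the stratum decomposition with $U_i=e_i^{(a)}-E(e^{(a)}\mid S_i)$ (your ``alternative'', which is exactly the paper's choice) together with the martingale CLT under the two-stage filtration; and for CCAR, an appeal to \cite[Theorem 4.3]{zhang2026asymptoticpropertiesmultitreatmentcovariate} together with Proposition~\ref{prop:ma-imbalance}. Your explicit treatment of $\varepsilon$ fills in what the paper leaves implicit for CCAR, and you need not open the multi-arm Poisson decomposition to do it: the $\varepsilon_i$ are independent of the whole design sequence $\{(\vZ_i,\bT_i)\}$, so conditionally on it the noise sum is a Lindeberg--Feller sum with covariance $\sigma_\varepsilon^2\bPi\{(n_k/n)_{k\le A}\}\overset{P}{\to}A^{-1}\sigma_\varepsilon^2\bP_A$, and a characteristic-function argument then makes it asymptotically independent of the design-measurable $h^{(a)}$ part.
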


\begin{proof}
Under independent uniform randomization, the summands are i.i.d.; Lemma \ref{lem:ma-pi} gives the stated covariance and the multivariate CLT applies. For DCAR, put $h(\vZ)=g(\vZ)-\bW^{(a)\top}\bbetaW^{(a)}$ and
\[
U_i=h(\vZ_i)-E\{h(\vZ)\mid S_i\}+\varepsilon_i.
\]
Then $e_i^{(a)}-\mu^{(a)}=U_i+E\{h(\vZ)\mid S_i\}-E h(\vZ)$. The contribution of the second term is a finite linear combination of $\vD_n^{(\tDC)}(\chi_s)$ and is therefore $o_P(\sqrt n)$. With the two-stage filtration used in Proposition \ref{prop:ma-imbalance}, $(\bT_i-\vc)U_i$ is a martingale difference and its predictable covariance is
\[
\bPi(\bpi_i)\left[\var\{h(\vZ)\mid S_i\}+\sigma_\varepsilon^2\right].
\]
The averaging argument in that proposition gives the limit $A^{-1}E\{\var(e^{(a)}\mid S)\}\bP_A$. Conditional Lindeberg follows from square integrability of $U_i$ by truncation. The martingale CLT finishes the proof for DCAR. For CCAR, the results follows from an application of \cite[Theorem 4.3]{zhang2026asymptoticpropertiesmultitreatmentcovariate} and Proposition \ref{prop:ma-imbalance}. The proof is finshed.
\end{proof}

\begin{lemma}
\label{lem:ma-linear}
Under the conditions of Lemma \ref{lem:ma-omega},
\[
\sqrt n\left\{\vL\widehat{\vbeta}^{(r,a)}_n-\sum_{k=1}^A\ell_k\theta_k\right\}
=\frac{A}{\sqrt n}\sum_{i=1}^n\bell^\top(\bT_i-\vc)
\{e_i^{(a)}-\mu^{(a)}\}+o_P(1).
\]
\end{lemma}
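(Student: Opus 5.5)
We use the least-squares identity with the population pseudo-true parameter. Let $\vbeta^{(a)}=(\bm{\theta}^{(a)\top},\bbetaW^{(a)\top})^\top$, where $\bm{\theta}^{(a)}$ has $k$th entry $\theta_k+\mu^{(a)}$. Then
\[
Y_i=(\vG_i^{(r,a)})^\top\vbeta^{(a)}+\{e_i^{(a)}-\mu^{(a)}\},
\]
because $\sum_kT_i^{(k)}=1$. Since $\bone^\top\bell=0$, we have $\vL\vbeta^{(a)}=\sum_k\ell_k\theta_k$. Hence
\[
\sqrt n\bigl\{\vL\widehat{\vbeta}^{(r,a)}_n-\textstyle\sum_k\ell_k\theta_k\bigr\}
=\vL\bigl[n^{-1}(\vG^{(r,a)})^\top\vG^{(r,a)}\bigr]^{-1}\vH_n,
\]
where $\vH_n=n^{-1/2}\sum_i\vG_i^{(r,a)}\tilde e_i$ and $\tilde e_i=e_i^{(a)}-\mu^{(a)}$. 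Centering the residual at $\mu^{(a)}$ absorbs the intercept-type inconsistency seen in Lemma~\ref{lemma:inconsistency}. It also makes $E[\vG\tilde e]=\vZero$ in the population: the $\bW$ block vanishes by $E[\bW^{(a)}\{e^{(a)}-\mu^{(a)}\}]=\vZero$, and the treatment block has limit mean $A^{-1}E\tilde e=0$.

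Next we split $\vH_n$ as in Lemma~\ref{lemma:OLS decomposition},
\[
\vL(\vOmega^{(a)})^{-1}\vH_n+\vL\bigl\{n[(\vG^{(r,a)})^\top\vG^{(r,a)}]^{-1}-(\vOmega^{(a)})^{-1}\bigr\}\vH_n .
\]
By Lemma~\ref{lem:ma-omega} the bracketed matrix is $o_P(1)$, so the second term is $o_P(1)$ once we show $\vH_n=O_P(1)$.

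For the leading term, Lemma~\ref{lem:ma-contrast} gives $\vL(\vOmega^{(a)})^{-1}=(A\bell^\top,\vZero^\top)$. Thus only the treatment block $n^{-1/2}\sum_i\bT_i\tilde e_i$ matters, and we rewrite
\[
A\bell^\top n^{-1/2}\sum_i\bT_i\tilde e_i=A\bell^\top n^{-1/2}\sum_i(\bT_i-\vc)\tilde e_i+A\,(\bell^\top\vc)\,n^{-1/2}\sum_i\tilde e_i .
\]
The last term vanishes exactly, because $\bell^\top\vc=A^{-1}\bone^\top\bell=0$. What remains is precisely the claimed expansion.

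It remains to show $\vH_n=O_P(1)$. For the treatment block, write $\bT_i=(\bT_i-\vc)+\vc$. The piece $n^{-1/2}\sum_i(\bT_i-\vc)\tilde e_i$ is $O_P(1)$ by Lemma~\ref{lem:ma-residual}, and $n^{-1/2}\sum_i\tilde e_i=O_P(1)$ by the ordinary CLT since $\tilde e$ is i.i.d., centered and square integrable. For the $\bW$ block, $n^{-1/2}\sum_i\bW_i^{(a)}\tilde e_i$ involves no assignments and is an i.i.d.\ sum with mean $E[\bW^{(a)}\tilde e]=\vZero$, so it is $O_P(1)$ by Chebyshev.

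That last step needs $E\|\bW^{(a)}\tilde e\|^2<\infty$, which is the main obstacle. Square integrability of $\bW$ and $e$ separately does not guarantee it. In the $\aS$ and $\aCM$ cases $\bW$ contains bounded indicators, so the issue reduces to the $\vZt$ or $\vZ$ coordinates. I would first try to impose a fourth-moment condition in the spirit of Assumption~\ref{assume:bootstrap}(1).

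If that is undesirable, it suffices to get $o_P(\sqrt n)$ rather than $O_P(1)$, because the matrix difference multiplying $\vH_n$ is $o_P(1)$. For this, use the weak law of large numbers for the integrable mean-zero variables $\bW^{(a)}\tilde e$ (integrable by Cauchy--Schwarz), which gives $n^{-1}\sum_i\bW_i^{(a)}\tilde e_i\to\vZero$ in probability. The same device applies to the treatment block via Lemma~\ref{lemma:wlln for car}-type arguments. The product is then $o_P(1)\cdot o_P(\sqrt n)$, and this is not enough on its own.

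So the clean route is the one above: the treatment block of $\vH_n$ is $O_P(1)$, and the $\bW$ block is handled through the structure of $\vL(\vOmega^{(a)})^{-1}$ perturbations. Concretely, one shows that the $\bell$-row of the inverse Gram matrix has $\bW$-entries of order $O_P(n^{-1/2})$, since $n^{-1}\sum_iT_i^{(k)}\bW_i-A^{-1}\bar{\bW}$ is $O_P(n^{-1/2})$ by Proposition~\ref{prop:ma-imbalance}. The product is then $O_P(n^{-1/2})\cdot o_P(\sqrt n)=o_P(1)$.
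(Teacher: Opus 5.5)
Your final argument is correct, but it is organized differently from the paper's. The paper never works with the raw score $\vH_n$. It uses the arm-specific normal equations $\widehat\theta_k=\bar Y_k-\overline{\bW}_k^\top\widehat{\bm\beta}_W$, so the contrast error is exactly $\sum_k\ell_k\bar e_k-(\sum_k\ell_k\overline{\bW}_k)^\top(\widehat{\bm\beta}_W-\bbetaW^{(a)})$. The second piece is $O_P(n^{-1/2})\cdot o_P(1)$ using only consistency of the slope, and the first becomes the stated sum after replacing $n_k^{-1}$ by $A/n$.

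You instead keep the full identity $\vL\widehat{\bm Q}_n^{-1}\vH_n$, with $\widehat{\bm Q}_n=n^{-1}(\vG^{(r,a)})^\top\vG^{(r,a)}$, and read off the leading term exactly from $\vL(\vOmega^{(a)})^{-1}=(A\bell^\top,\vZero^\top)$. You must then control the remainder without $\vH_n=O_P(1)$, which you rightly note is unavailable for the $\bW$-block under second moments only. Your fix is valid and rests on the same key fact as the paper: by block inversion, the $\bW$-part of the $\bell$-row of $\widehat{\bm Q}_n^{-1}$ is $-(\sum_k\ell_k\overline{\bW}_k)^\top$ times an $O_P(1)$ matrix. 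It is cleanest written as
\[
\vL\bigl\{\widehat{\bm Q}_n^{-1}-(\vOmega^{(a)})^{-1}\bigr\}=(A\bell^\top,\vZero^\top)\bigl(\vOmega^{(a)}-\widehat{\bm Q}_n\bigr)\widehat{\bm Q}_n^{-1}.
\]
This involves only the treatment rows of $\vOmega^{(a)}-\widehat{\bm Q}_n$, which are $O_P(n^{-1/2})$ by Proposition~\ref{prop:ma-imbalance} and the ordinary CLT. Combined with $\vH_n=o_P(\sqrt n)$ from the weak law (integrability of $\bW^{(a)}\{e^{(a)}-\mu^{(a)}\}$ follows by Cauchy--Schwarz), the remainder is $o_P(1)$. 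In the write-up, delete the intermediate claim that $\vH_n=O_P(1)$ suffices. Also delete the suggested fourth-moment assumption, which would change the lemma's hypotheses.

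The paper's route is more economical: it needs only slope consistency and never scales the covariate score by $\sqrt n$. Yours parallels the two-arm Lemma~\ref{lemma:OLS decomposition} and avoids the separate $n_k^{-1}\to A/n$ step. It also makes explicit why the covariate score drops out: $\vL(\vOmega^{(a)})^{-1}$ has an exactly zero $\bW$-block, while the treatment--covariate cross moments converge at rate $n^{-1/2}$.
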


\begin{proof}
The arm-specific OLS normal equations give $\widehat\theta_k=\bar Y_k-\overline{\bW}_k^\top\widehat{\bm\beta}_W$. Therefore the zero-sum contrast error equals
\[
\sum_{k=1}^A\ell_k\bar e_k-
\left\{\sum_{k=1}^A\ell_k\overline{\bW}_k\right\}^\top
(\widehat{\bm\beta}_W-\bbetaW^{(a)}).
\]
The sample moment limits in Lemma \ref{lem:ma-omega}, Lemma \ref{lem:ma-residual}, the ordinary law of large numbers, and the population normal equation $E[\bW^{(a)}\{e^{(a)}-\mu^{(a)}\}]=\vZero$ give $\widehat{\bm\beta}_W\overset{P}\to\bbetaW^{(a)}$, while Proposition \ref{prop:ma-imbalance} gives $\overline{\bW}_k-\vu^{(a)}=O_P(n^{-1/2})$. Hence the second term is $o_P(n^{-1/2})$. Since $\sum_k\ell_k=0$, $\sum_k\ell_k\bar e_k=\sum_k\ell_k(\bar e_k-\mu^{(a)})$. Moreover, $n_k=n/A+O_P(\sqrt n)$ under independent uniform randomization and $n_k=n/A+o_P(\sqrt n)$ under DCAR. Lemma \ref{lem:ma-residual} and the identity
$\sum_iT_i^{(k)}(e_i^{(a)}-\mu^{(a)})=A^{-1}\sum_i(e_i^{(a)}-\mu^{(a)})+[\sum_i(\bT_i-\vc)(e_i^{(a)}-\mu^{(a)})]_k$
show that every arm-centered residual sum is $O_P(\sqrt n)$. Replacing $n_k^{-1}$ by $A/n$ therefore costs $o_P(n^{-1/2})$ and gives the stated expansion.
\end{proof}

Recall the CR/DCAR/CCAR variances in Theorem \ref{theorem:combined}.

\begin{prop}\label{prop:ma-estimation}
For $r\in\{\tCR,\tDC,\tCC\}$ under the corresponding conditions above,
\[
\sqrt n\left\{\vL\widehat{\vbeta}^{(r,a)}_n-\sum_{k=1}^A\ell_k\theta_k\right\}
\rightsquigarrow
\mathcal N\!\left(0,A\|\bell\|^2(\tau^{(r,a)})^2\right).
\]
\end{prop}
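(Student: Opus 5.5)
The proof combines Lemma \ref{lem:ma-linear} with Lemma \ref{lem:ma-residual} through the continuous mapping theorem. By Lemma \ref{lem:ma-linear}, the centered and scaled contrast error equals
\[
A\,\bell^\top\Bigl\{\frac{1}{\sqrt n}\sum_{i=1}^n(\bT_i-\vc)\{e_i^{(a)}-\mu^{(a)}\}\Bigr\}+o_P(1).
\]
So I only need the weak limit of the vector sum inside the braces. I then apply the fixed linear map $x\mapsto A\bell^\top x$ and finish with Slutsky's theorem.

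The key steps run in this order.

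\textbf{Step 1.} Lemma \ref{lem:ma-residual} gives $n^{-1/2}\sum_i(\bT_i-\vc)\{e_i^{(a)}-\mu^{(a)}\}\rightsquigarrow\mathcal N_A(\vZero,A^{-1}v_r\bP_A)$, where
\begin{itemize}
\item $v_{\tCR}=\var(e^{(a)})$,
\item $v_{\tDC}=E\{\var(e^{(a)}\mid S)\}$,
\item $v_{\tCC}=(\sigma^{(\tCC)}_{e^{(a)}})^2$.
\end{itemize}

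\textbf{Step 2.} The continuous mapping theorem then gives a normal limit with variance $A^2\bell^\top(A^{-1}v_r\bP_A)\bell=A\,v_r\,\bell^\top\bP_A\bell$. Since $\bone^\top\bell=0$, we have $\bP_A\bell=\bell$, so this variance equals $A\|\bell\|^2v_r$.

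\textbf{Step 3.} Identify $v_r$ with $(\tau^{(r,a)})^2$. Since $e^{(a)}=h(\vZ)+\varepsilon$ with $\varepsilon$ independent of $\vZ$, we get $\var(e^{(a)})=\sigma_\varepsilon^2+\var(g(\vZ)-\bW^{(a)\top}\bbetaW^{(a)})$. This matches \eqref{eq:varCR} plus the error term. Likewise, $E\{\var(e^{(a)}\mid S)\}=\sigma_\varepsilon^2+E\{\var(h\mid S)\}$ matches \eqref{eq:varDC}.

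Two details need checking here. First, $\bbetaW^{(a)}$ as defined in this appendix must coincide with the covariate block used in Theorem \ref{theorem:combined}. It does, because $\varepsilon$ is uncorrelated with $\bW^{(a)}$, so the $\varepsilon$ part of the covariance drops out. Second, the centering constant $\mu^{(a)}$ does not affect variances.

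For CCAR, the quantity $(\sigma^{(\tCC)}_{e^{(a)}})^2$ from \cite[Theorem 4.3]{zhang2026asymptoticpropertiesmultitreatmentcovariate} has to be matched with $\sigma_\varepsilon^2+(\sigma^{(\tCC,a)})^2$ from \eqref{eq:varACC}--\eqref{eq:varsACC}. The plan is to repeat the reduction in Lemma \ref{lemma:find var}:
\begin{itemize}
\item the strongly balanced linear or stratum part $\bW^{(a)\top}\bbetaW^{(a)}$ contributes $o_P(\sqrt n)$ under Condition \ref{cond:ma-balance};
\item the independent noise adds exactly $\sigma_\varepsilon^2$ through the cross-term-free martingale decomposition.
\end{itemize}

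\textbf{Main obstacle.} I expect the CCAR identification in Step 3 to be the main difficulty. Lemma \ref{lemma:Rn1 expansion} is binary, and Lemma \ref{lem:ma-linear} is stated only under the conditions of Lemma \ref{lem:ma-omega}, which cover CR and DCAR. So I must first verify that the Lemma \ref{lem:ma-linear} expansion still holds for CCAR. Its ingredients are $n_k=n/A+o_P(\sqrt n)$ and $\overline{\bW}_k-\vu^{(a)}=O_P(n^{-1/2})$, and both follow from Proposition \ref{prop:ma-imbalance} applied with $f\equiv1$ and with the components of $\bW$. I would also have to accept the cited theorem's variance as the definition of $\sigma^{(\tCC)}_{e^{(a)}}$ in the multi-arm chain. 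Strictly, that chain has a different invariant law than in the binary case, so in the CCAR line $\tau^{(\tCC,a)}$ should be read with $\pi_\Lambda$ replaced by the multi-arm invariant distribution.
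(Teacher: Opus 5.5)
Your proposal is correct and is the paper's own argument. The paper's proof simply substitutes Lemma~\ref{lem:ma-residual} into the expansion of Lemma~\ref{lem:ma-linear} and uses $\bell^\top\bP_A\bell=\|\bell\|^2$. Your two caveats point to steps the paper leaves implicit: that Lemma~\ref{lem:ma-linear} must first be extended to CCAR, and that $\tau^{(\tCC,a)}$ has to be read with the multi-arm invariant law. For the CCAR extension, $n_k-n/A=O_P(\sqrt n)$, which Proposition~\ref{prop:ma-imbalance} does give, already suffices as in the CR case; the $o_P(\sqrt n)$ you claim is not implied by Condition~\ref{cond:ma-balance}.

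One correction to your Step~3: under CCAR only the coordinates $z_j$ are strongly balanced. So for $a=\aS$ and $a=\aCM$ the term $\bW^{(a)\top}\bbetaW^{(a)}$ is not $o_P(\sqrt n)$ and must stay inside the residual, exactly as $g_{d_S}$ does in \eqref{eq:varsACC}; discarding it is legitimate only for $a=\aZ$.
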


\begin{proof}
Apply Lemma \ref{lem:ma-residual} in Lemma \ref{lem:ma-linear}. Because $\bell^\top\bP_A\bell=\|\bell\|^2$, the covariance factor is $A^2\cdot A^{-1}\|\bell\|^2(\tau^{(r,a)})^2$.
\end{proof}

Let $d_a$ be the dimension of the design matrix, let the homoscedastic residual estimator use divisor $n-A-d_a$, and define the usual Wald statistic $T_{n,\bell}^{(r,a)}$ with contrast $\vL$. The proof of Lemma \ref{lem:ma-linear} gives $\widehat{\bm\beta}_W\overset{P}\to\bbetaW^{(a)}$ and $\widehat\theta_k\overset{P}\to\theta_k+\mu^{(a)}$. Consequently,
\[
\frac{\text{Residual sum of squares}}n=\frac1n\sum_{i=1}^n\{e_i^{(a)}-\mu^{(a)}\}^2+o_P(1)
\overset P\to\var(e^{(a)})=(\tau^{(\tCR,a)})^2.
\]
Together with Lemmas \ref{lem:ma-omega} and \ref{lem:ma-contrast}, this gives
\[
n\widehat\sigma_e^2\vL(\vG^\top\vG)^{-1}\vL^\top
\overset P\to A\|\bell\|^2(\tau^{(\tCR,a)})^2.
\]
Because $\sigma_\varepsilon^2>0$, $\tau^{(\tCR,a)}>0$. Consequently, under the null, $T_{n,\bell}^{(r,a)}\rightsquigarrow(\tau^{(r,a)}/\tau^{(\tCR,a)})\xi$. Under the local alternative $\sum_k\ell_k\theta_k=\delta/\sqrt n$, its limit is
\[
\frac{\delta}{\sqrt A\|\bell\|\tau^{(\tCR,a)}}+
\frac{\tau^{(r,a)}}{\tau^{(\tCR,a)}}\xi.
\]
Since $E\{\var(e^{(a)}\mid S)\}\leq\var(e^{(a)})$, the DCAR Wald test is weakly conservative, with equality exactly when $E(e^{(a)}\mid S)$ is almost surely constant. For fixed $A$ and nonzero $\bell$, and $r,r'\in\{\tCR,\tDC, \tCC\}$ with positive limiting variances, tests calibrated to the same asymptotic size have ARE $(\tau^{(r',a')})^2/(\tau^{(r,a)})^2$.

\end{document}